\documentclass[11pt]{article}
\usepackage[ruled]{algorithm2e} 
\usepackage{natbib} 

\usepackage{fullpage}

\usepackage{amssymb}
\usepackage{graphicx} 
\usepackage{subcaption}
\usepackage{latexsym, amsfonts, amsmath, mathrsfs, float, amsthm}
\usepackage{thmtools, thm-restate}
\usepackage{xcolor}

\usepackage{hyperref, cleveref}

\newcommand{\Var}{\mathrm{Var}}

\DeclareMathOperator*{\prob}{\mathbb{P}}
\DeclareMathOperator*{\expect}{\mathbb{E}}
\DeclareMathOperator*{\E}{\expect}
\renewcommand{\Pr}{\prob}

\usepackage{accents}
\newcommand{\vect}[1]{\accentset{\rightharpoonup}{#1}}

\newcommand{\transpose}[1]{{#1}^{\mathsf{T}}}

\newcommand{\iid}{\overset{\textrm{i.i.d.}}{\sim}}

\newcommand{\numactions}[0]{n}
\newcommand{\numactionscolumn}[0]{m}
\newcommand{\numsamples}[0]{k}
\newcommand{\randomrow}[1]{I_{#1}}
\newcommand{\randomcolumn}[1]{J_{#1}}

\newcommand{\randomsample}[1]{X_{#1}}

\newcommand{\estimator}[0]{g}
\newcommand{\estimatorcolumn}[0]{h}

\newcommand{\mutualinformation}[1]{G_{#1}}

\newcommand{\mutualinformationabstract}[0]{G}

\newcommand{\tallymatrix}[0]{q}
\newcommand{\tallymatrixrandom}[0]{Q}
\newcommand{\tallymatrixentry}[2]{q_{#1, #2}}
\newcommand{\tallymatrixrandomentry}[2]{Q_{#1, #2}}

\newcommand{\jointdistribution}[0]{F}
\newcommand{\jointdistributionentry}[2]{F_{#1, #2}}

\newcommand{\tallyset}[3]{C(#1, #2 \times #3)}
\newcommand{\multinomialdistribution}[2]{\mathrm{Mult}(#1, #2)}

\newcommand{\indexpermutation}[0]{\pi}

\newcommand{\alphabetpermutation}[0]{P}

\newcommand{\timeindex}[0]{t}

\newcommand{\probvectorone}[0]{\vect{v}}
\newcommand{\probvectoroneentry}[1]{v_{#1}}

\newcommand{\probvectortwo}[0]{\vect{w}}
\newcommand{\probvectortwoentry}[1]{w_{#1}}

\newcommand{\diagonalmatrix}[0]{D}

\newcommand{\convexfunction}[0]{\varphi}

\newcommand{\rowsum}[1]{r_{#1}}
\newcommand{\columnsum}[1]{c_{#1}}

\newcommand{\xparallelograminit}[0]{\varepsilon_{0}}
\newcommand{\yparallelograminit}[0]{\delta_{0}}
\newcommand{\xparallelogram}[0]{\varepsilon}
\newcommand{\yparallelogram}[0]{\delta}
\newcommand{\diagtopleft}[0]{p}

\newcommand{\mgfparam}[2]{t_{#1, #2}}
\newcommand{\mgfexp}[0]{\exp(\vect{t} \cdot \vect{Q})}
\newcommand{\mgfA}[0]{A}
\newcommand{\mgfB}[0]{B}
\newcommand{\mgfC}[0]{C}
\newcommand{\mgfD}[0]{D}
\newcommand{\mgfsum}[1]{S_{#1}}

\DeclareMathOperator*{\rad}{\mathrm{rad}}

\newcommand{\ideal}{J}
\newcommand{\vanishinglocus}[1]{V(#1)}
\newcommand{\vanishingideal}[1]{I(#1)}

\newcommand{\scoringrule}{R}
\newcommand{\VoI}[2]{\mathrm{VoI}_{#2}(#1)}
\newcommand{\MI}{\mathcal{I}}
\newcommand{\logscore}{\ell}
\newcommand{\quadscore}{\mathrm{quad}}

\newcommand{\norm}[1]{\left|\left|#1\right|\right|_2}
\newcommand{\stoppingtime}{T}
\newcommand{\DMI}{\mathrm{DMI}}

\newcommand{\R}{\mathbb{R}}

\newcommand{\indicate}[2][]{\text{\bf 1}\ifthenelse{\not\equal{}{#1}}{_{#1}}{}\!\left[{\def\givenn{\middle|}#2}\right]}

\newtheorem{theorem}{Theorem}
\newtheorem{lemma}[theorem]{Theorem}
\newtheorem{proposition}[theorem]{Proposition}
\newtheorem{corollary}[theorem]{Corollary}
\newtheorem{definition}[theorem]{Definition}

\newtheorem{example}[theorem]{Example}

\newtheorem{innercustomthm}{Theorem}
\newenvironment{manualtheorem}[2][]
  {%
    \renewcommand\theinnercustomthm{#2}%
    \if\relax\detokenize{#1}\relax
      \innercustomthm
    \else
      \innercustomthm[#1]%
    \fi
  }
  {\endinnercustomthm}

\newtheorem{innercustomprop}{Proposition}
\newenvironment{manualprop}[1]
  {\renewcommand\theinnercustomprop{#1}\innercustomprop}
  {\endinnercustomprop}

\setcitestyle{authoryear}

\title{Sample Complexity of Peer Prediction\thanks{
This work was initiated while all authors were at the 2025 Winter School on Data Economics at the Moroccan Center for Game Theory at UM6P.  Special thanks to Nicole Immorlica for many conversations and guidance on the early stages of the work, and to Jacob Urisman for guidance on the nullstellensatz framing.}}

 \author{Abdellah Aznag\thanks{Columbia University. Emails: \texttt{\{aa4693, rac2239\}@columbia.edu}. Supported in part by NSF grant IIS-2147361.} \and Robin Bowers\thanks{University of Colorado Boulder. Emails: \texttt{\{robin.bowers, bwag\}@colorado.edu}.} \and Rachel Cummings\footnotemark[2]
  \and Jason Hartline\thanks{Northwestern University. Emails: \texttt{\{hartline, matthewvonallmen2026\}@northwestern.edu}.} \and Matthew vonAllmen\footnotemark[4] \and Bo Waggoner\footnotemark[3]
}

\begin{document}

\maketitle

\begin{abstract}
    Peer prediction seeks to incentivize agents to truthfully report an observed signal by rewarding joint sets of reports without observing a ground truth. Following the generalization of information-theoretic mutual information introduced in \citet{KS-19}, we call a function of a joint distribution over signals a \emph{mutual information} when it is non-negative and disincentivizes garbling reports for all information structures. An \emph{unbiased estimator} for a mutual information takes some number of samples from the distribution and returns rewards for both agents, such that the expected reward is equal to the mutual information. We seek to characterize the set of mutual informations with unbiased estimators for a given number of samples.
    
    We show that for three or fewer sampled report pairs, the only mutual information with an unbiased estimator is trivially zero, and for four or five samples with a binary report space, the Determinant Mutual Information (DMI) of \citet{K-24} is the unique mutual information (up to a scalar multiple). We further show that DMI ceases to be unique at six samples. We provide an improved estimator of DMI for any given number of samples and characterize its convergence rate.

    We also examine mutual information estimators that accept a randomized number of samples. First, we show that mutual information estimators on an ex-ante bounded number of samples (termed ``stop-short estimators'') can achieve a lower variance than an equivalent fixed-sample estimator (for DMI).  Second, we introduce the class of \emph{scoring-rule-based} mutual informations and identify in this family a mutual information that can be estimated with under three samples in expectation.
\end{abstract}

\section{Introduction}\label{sec:introduction}

{\em Peer prediction} \citep{MRZ-05} attempts to elicit truthful reports of observed signals or beliefs from several agents by comparing their reports to one another, rather than by comparing to a subsequently revealed ground truth.  The peer prediction setting is relevant when there is an absence of ground truth (e.g., if agents are reporting personal preferences), or when ground truth is difficult or impossible to observe (e.g., predicting far-future events, or a ground truth that is costly to observe).  This paper considers the \emph{multi-task} setting~\citep{DG-13}, in which each agent is assigned multiple tasks (for example, providing a rating of multiple restaurants or books).  Despite significant literature (see \Cref{sec:related-works}), there are few mechanisms that satisfy good incentive properties for this problem without assumptions on the information structure. A promising direction is the {\em mutual information} framework \citep{KS-19,K-24}. Here, the observations of the two agents are viewed as dependent random variables with some unknown joint distribution. A strategy of an agent maps their observation to a report, so a strategy is a \emph{garbling} of their observed random variable.

The mutual information framework, which assigns payments to agents to incentivize truthful reporting of observed signals, is based on two definitions: a) a \emph{mutual information} measure for joint distributions and b) an \emph{unbiased estimator} for the mutual information, as a function of samples. For a given set of sampled joint signals for agents, an unbiased estimator determines payments to agents such that the expected payment over possible samples is exactly the mutual information measure of the joint distribution of agent reports. 

A mutual information measure must satisfy two key properties: 
\begin{enumerate}
    \item the \textit{data processing inequality}: if either agent chooses to garble their reports, the expected reward becomes smaller, and 
    \item \textit{zero on independent play}: if the agents make independently distributed reports, their expected reward must be $0$.
\end{enumerate}

The data processing inequality removes the strategic incentives of agents to attempt to increase their expected reward by garbling their reports; requiring a value of zero on independent play restricts our attentions to a canonical representative for every function satisfying the data processing inequality up to addition. The classic Shannon mutual information satisfies both these properties (see \Cref{app:entropy}). 

The Shannon mutual information, however, does not have a finite-sample unbiased estimator \citep{P-03}. The existence of an unbiased estimator is crucial in our ability to port the strategic guarantees of the data processing inequality to the peer prediction setting where only samples of the distribution can be observed. Thus, the focus of this paper is on mutual informations with finite-sample unbiased estimators. 

\citet{K-24} proposed \emph{Determinant Mutual Information} (DMI) as a mutual information with a finite-sample unbiased estimator. Motivated by this result, we seek to understand the space of mutual informations with finite-sample unbiased estimators.

We also pose three questions about the \emph{sample complexity} of mutual informations:
\begin{enumerate}
    \item When are there other mutual informations besides DMI that have unbiased finite-sample estimators?
    \item What is the minimum number of samples needed to construct an unbiased estimator of a mutual information?
    \item Given a mutual information with a finite-sample unbiased estimator, how does the variance decrease as the number of samples is increased?
\end{enumerate}

For an unbiased estimator, minimizing variance is equivalent to minimizing mean-squared error. Thus, it is desirable to minimize variance as much as possible for a given number of samples. Furthermore, for practical reasons, agents cannot be asked to answer infinite number of tasks; understanding mutual informations with finite-sample unbiased estimators allows us to choose appropriate algorithms for concrete settings.

\subsection{Results}\label{sec:results}

This paper focuses on understanding the sample complexity of estimators for mutual informations. Particularly, for a small numbers of samples, we seek to understand the set of mutual informations which have an unbiased estimator using that number of samples. 

We then relax our restriction on number of samples and consider mutual informations with unbiased estimators that require an \emph{expected} number of samples, but for a single instance may require many more or much fewer samples. 

\subsubsection{Fixed-Sample Estimators for the $2 \times 2$ Setting}

We start in \Cref{sec:fixed-sample-mutual-information-estimators} by focusing on mutual informations for $2 \times 2$ action spaces which have an unbiased estimator for a fixed number of samples. Our main result of this section states that when agents submit reports in a binary space, the Determinant Mutual Information (DMI) of \citet{K-24} is the \textit{unique} mutual information that possesses an unbiased estimator when constrained to sufficiently few samples. This result holds up to arbitrary rescaling of DMI.

\begin{manualtheorem}[Uniqueness of DMI]{\ref{thm:dmi-is-unique} (Informal)}
    For a $2 \times 2$ report space and 4 or 5 samples, $\DMI$ is the unique mutual information with an unbiased estimator.
\end{manualtheorem}

At $3$ or fewer samples, it is not possible to unbiasedly estimate any non-trivial mutual information, which we show separately in \Cref{thm:three-or-fewer-samples-implies-trivial-mutual-information}. Thus, DMI has an unbiased estimator with strictly fewer samples than any other MI.

Furthermore, we also show in \Cref{thm:dmi-is-not-unique-at-six-samples} that with a sufficiently large number of samples, DMI is no longer the unique mutual information with an unbiased estimator. In a $2 \times 2$ action space, once the number of samples $\numsamples$ is $6$ or above, there are other mutual informations with unbiased estimators.

We next observe that the estimator of DMI provided by \citet{K-24} is not \emph{sample-order-invariant}: if the order of observed samples is altered, the mechanism may return a different payment. In \Cref{thm:convex-minimal-dmi-estimator-with-binary-alphabet}, we derive a sample-order-invariant estimator of DMI in a $2 \times 2$ action space and arbitrary $k \geq 4$ samples, and provide a closed form for the estimator.

\begin{manualtheorem}[$\numsamples$-sample optimal DMI estimator on binary alphabet]{\ref{thm:convex-minimal-dmi-estimator-with-binary-alphabet} (Informal)}
    For agents with a binary report space, the $k$-sample sample-order-invariant unbiased estimator for DMI has a closed form.
\end{manualtheorem}

We prove that this estimator is the unique sample-order-invariant estimator of DMI in \Cref{cor:output-of-permutation-variance-grinder-is-unique-and-convex-minimal}, and that it achieves the minimal variance among all possible fixed-sample DMI unbiased estimators.\footnote{The DMI estimator we provide is not just minimal variance, but also is \textit{convex-dominated} by all other unbiased DMI estimators. For more details, see the results presented in \Cref{cor:output-of-permutation-variance-grinder-is-unique-and-convex-minimal}.} We also characterize its convergence rate in \Cref{thm:convex-minimal-dmi-estimator-with-binary-alphabet-variance}.

\subsubsection{Ex-Ante Bounded-Sample Estimators}

Next, in \Cref{sec:ex-ante-bounded-sample-mutual-information-estimators} we consider mutual informations with estimators that may require a variable number of samples. In \Cref{sec:stop-short-estimators}, we introduce the notion of \textit{stop-short estimators}, which are non-fixed-sample estimators with even lower variance than any fixed-sample estimator with the same mean, with a focus on the $2 \times 2$ action space setting. We showcase an example of such a stop-short estimator in \Cref{thm:stop-short-estimator}.

In the remainder of \Cref{sec:ex-ante-bounded-sample-mutual-information-estimators}, we explore estimators for arbitrary $\numactions\times \numactionscolumn$ action spaces with finite samples \emph{in expectation}. We refer to these as \emph{ex-ante bounded-sample} estimators. \Cref{sec:scoring-rule-based} introduces \emph{value of information-based mutual information}, which draws from the rich literature on proper scoring rules to construct general mutual informations. We formalize the relationship between proper scoring rules and mutual informations, stated informally below.

\begin{manualtheorem}{\ref{lemma:scoring-rule-based-MI} (Informal)}
    For any proper scoring rule, there is an associated value of information-based mutual information. Moreover, if the scoring rule is strict, the associated mutual information is also strict. 
\end{manualtheorem}

The data processing inequality allows the mutual information to not change when the signal is garbled.  In fact, this is necessary as some garblings do not change the signal structure. An extreme example is when the initial signal structure is independent and the mutual information is zero, then garblings do not strictly reduce the mutual information.  On the other hand, DMI is zero whenever the joint distribution is not full rank. From such a joint distribution, further garblings cannot reduce DMI because it is already zero.  Given a joint distribution of signals we say a garbling of one player is {\em strict} if the posterior beliefs on the signal of the other players before and after garbling are distinct.  A mutual information is {\em strict} (\Cref{def:strict-MI}) if it is strictly reduced by any strict garbling. Note that for $2 \times 2$ signal structures, any nontrivial mutual information is inherently strict. 

We provide an ex-ante bounded-sample value of information-based mutual information estimator based on the quadratic scoring rule.

\begin{manualprop}{\ref{prop:better-collision} (Informal)}
    The quadratic score-based mutual information on an $\numactions\times \numactionscolumn$ action space has an ex-ante bounded $3$-sample unbiased estimator and is strict.
\end{manualprop}

Lastly, we show in \Cref{thm:no-fixed-sample-scoring-rule-mi} that no scoring-rule-based mutual information has a fixed-sample unbiased estimator, implying that scoring-rule-based mutual informations and DMI belong to altogether different classes of reward systems in peer prediction.

\subsection{Related Work}\label{sec:related-works}

\paragraph{Peer prediction.}
Peer prediction has been well-studied under a number of models. We consider the multiple task setting with two agents, where we are allowed multiple samples from the joint signal distribution with which to evaluate the agents' performance and (only) ask agents to directly report their signals. This setting was pioneered by \citet{DG-13} under some assumptions on the information structure. Further results were obtained by \citet{SAFP-16} and \citet{KS-19}. The latter work inspired the information-theoretic perspective on multi-task peer prediction underlying this work. It was the first to use the term ``mutual information'' to describe functions that (1) decrease when signals are garbled and (2) are $0$ on independent signals, which is one way to generalize the Shannon mutual information. \citet{KS-19} also includes a third axiom in the definition of a mutual information, requiring the function to be symmetric in the joint signal distribution; we do not impose this requirement on mutual informations.

\citet{K-24} introduced the first fixed-sample information-theoretic reward scheme as \emph{determinant mutual information (DMI)}, which we explore in depth. \citet{K-22} generalized the DMI approach to Volume Mutual Information, where a higher correlation or ``mutual information'' corresponds to larger volume of joint distributions by some volume measure. In contrast to these works, we seek a characterization of all possible mutual informations in certain settings. We also introduce the concept of \emph{ex-ante bounded} sample mutual information estimators, a middle ground between the deterministic number of samples used by DMI and payment schemes which are treated only in the limit.

In recent follow up work, \citet{K-26} resolves several open questions from this paper.  These are discussed in more detail in \Cref{sec:conclusions-and-future-work}.

Another line of work on peer prediction asks agents to report both their signals and their beliefs about others reports \citep{P-04}.  For a more thorough treatment of existing peer prediction mechanisms, we refer the reader to the recent survey of \citet{F-23}.

\paragraph{Proper scoring rules.}
We also build on proper scoring rules for eliciting truthful beliefs, referring to \citet{S-71,GR-07}.
Proper scoring rules are commonly used in peer prediction, including the original peer prediction~\citep{MRZ-05} and Bayesian Truth Serum~\citep{P-04} works, but have been less utilized for multi-task peer prediction.  Our proper-scoring-rule-based mutual informations define a quantity called the value of information.  Several recent works have considered optimizing scoring rules for this quantity \citep{LHSW-22,PW-22}.

\paragraph{Information Theory}
Our work draws on existing concepts from information theory, particularly regarding mutual information \citep{S-48}. In \Cref{app:entropy}, we explore the connections between our scoring-rule-based mutual informations (\Cref{sec:ex-ante-bounded-sample-mutual-information-estimators}) and the traditional notions of Shannon mutual information and entropy. Shannon mutual information does not have an unbiased estimator as we require for this work \citep{P-03}, but there is a line of work on approximate estimators (e.g. \citet{VYK-07,APP-14}). 

\paragraph{Bernoulli Factories.}
Unbiased estimators are conceptually very similar to Bernoulli factories. These protocols take in some number of (arbitrarily biased) Bernoulli random variables, and by repeatedly (and possibly unboundedly) sampling from them, return a new random variable which is $1$ with probability exactly some function of the input biases and $0$ otherwise (e.g., \citet{NLS-21}). Our unbiased estimators are less restricted in their output values, but have the same goal of having an expectation exactly equal to some function of their inputs.

\paragraph{Stochastic matrix decompositions.} Our \Cref{lma:stochastic-matrix-decomposition} is a constructive variant of a proposition from \citet[Proposition 2]{VS-25}, which explores generating sets for column stochastic matrices. The topic of stochastic matrix decompositions beyond the case of $2 \times 2$ column stochastic matrices, is not currently well-explored; \citet{VS-25} provides a decomposition result for $3 \times 3$ column-stochastic matrices, but the dimensionality of the generating set provided is too high to use in our own analyses of mutual informations.

\section{Model and Preliminaries}\label{sec:model}

We work in the standard multi-task peer prediction setting \citep{DG-13,SAFP-16,KS-19,K-24}. The primitives are as follows:
\begin{itemize}
    \item There are two agents, called the row player and the column player.
    \item There are $\numsamples$ tasks, indexed by $\timeindex \in [\numsamples] \triangleq \{ 1, \ldots, \numsamples \}$.
    \item The row player's signal alphabet is $[\numactions] \triangleq \{1, \ldots, \numactions \}$, and the column player's signal alphabet is $[\numactionscolumn] \triangleq \{1, \ldots, \numactionscolumn \}$.
    \item For each task $\timeindex$, a signal pair $(\randomrow{\timeindex},\randomcolumn{\timeindex})$ is drawn i.i.d. from an unknown joint distribution $\jointdistribution \in \Delta([\numactions] \times [\numactionscolumn])$. We use the notation $\Delta([\numactions] \times [\numactionscolumn])$ to denote the set of all $\numactions \times \numactionscolumn$ matrices whose entries are probabilities which sum to $1$.
    \item The agents use possibly randomized strategies, fixed across tasks, mapping each signal to a (possibly random) action, also called a report. A row strategy is a possibly randomized mapping $[\numactions] \to \Delta([\numactions])$, equivalently represented as a column-stochastic matrix $S\in \R^{\numactions\times \numactions}$. Similarly, a column strategy is a possibly randomized mapping $[\numactionscolumn]\to\Delta([\numactionscolumn])$, equivalently represented as a row-stochastic matrix $T\in \R^{\numactionscolumn\times \numactionscolumn}$. Together with $\jointdistribution$, $S$ and $T$ describe a new joint distribution $S\jointdistribution T$ of actions submitted to the mechanism.
    \item A mechanism observes the $\numsamples$ reported action pairs $(\randomrow{1}', \randomcolumn{1}'), \dots, (\randomrow{\numsamples}', \randomcolumn{\numsamples}') \overset{\textrm{i.i.d.}}{\sim} S \jointdistribution T$ and computes row and column player payments via $\estimator : ([n] \times [m])^{\numsamples} \to \mathbb{R}$ and $\estimatorcolumn : ([n] \times [m])^{\numsamples} \to \mathbb{R}$, respectively.
\end{itemize}

We study mechanisms which ask each agent to report only their own observed signal without auxiliary information such as beliefs about the other agent's report, cf. \citet{P-04}. Thus, each player's action alphabet is the same as their signal alphabet.

A mechanism is \emph{truthful} if the identity strategy maximizes each agent's expected payment.

A focal class of mechanisms are those with \emph{common payoff} $\estimator = \estimatorcolumn$. A focal class of environments are those with \emph{symmetric alphabet sizes} $\numactions = \numactionscolumn$.

As a notational convention, random variables are denoted with capital letters, and possible realizations of those random variables are denoted with lowercase letters. Matrices are also denoted with capital letters.

\subsection{Joint Distributions \& Agent Strategies}

Joint distributions of play where the agents' actions are uncorrelated should be given a low expected reward, and ones that are highly correlated should be given high expected reward. This section gives a precise definition of correlated and uncorrelated joint distributions. It will be useful to express joint distributions of play as the product of an initial, ungarbled signal, followed by a garbling. \Cref{prop:stochastic-diagonal-decomposition} gives this decomposition.

\begin{definition}[Independent Play]\label{def:independent-play}
    A joint distribution $\jointdistribution$ \textit{results from independent play} if there exist probability vectors $\probvectorone \in \Delta_{\numactions}, \probvectortwo \in \Delta_{\numactionscolumn}$ such that $\jointdistribution = \probvectorone \; \transpose{\probvectortwo}$.
\end{definition}

\Cref{def:independent-play} represents joint distribution matrices where the two players's signals are completely uncorrelated. Any matrix that satisfies \Cref{def:independent-play} is rank $1$.

\begin{definition}[Diagonal]
    A joint distribution $\jointdistribution$ is  \textit{diagonal} when the probability of playing any non-diagonal action pair is zero, i.e. when, for all indices $i \in [\numactions], j \in [\numactionscolumn]$ where $i \neq j$, we have $\jointdistributionentry{i}{j} = 0$.
\end{definition}

Diagonal joint distributions represent perfect correlation between agents receiving signals in $[\min(\numactions, \numactionscolumn)]$, where the probability of each signal value is given by each diagonal entry of the joint distribution matrix. This interpretation is justified by \Cref{prop:stochastic-diagonal-decomposition}, below.

\begin{restatable}[Stochastic-diagonal decomposition]{proposition}{stochasticdiagonal}
\label{prop:stochastic-diagonal-decomposition}
    Let $\jointdistribution \in \Delta([\numactions] \times [\numactionscolumn])$ be a joint distribution. If $\numactions \geq \numactionscolumn$, then there exists a unique diagonal matrix $\diagonalmatrix \in \Delta([\numactionscolumn] \times [\numactionscolumn])$ and a (potentially non-unique) column-stochastic matrix $S$ such that $\jointdistribution = S \diagonalmatrix$. If $\numactions \leq \numactionscolumn$, then there exists a unique diagonal matrix $\diagonalmatrix \in \Delta([\numactions] \times [\numactions])$ and a (potentially non-unique) row-stochastic matrix $S$ such that $\jointdistribution = \diagonalmatrix S$.
\end{restatable}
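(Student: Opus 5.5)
The plan is to read off the diagonal matrix from the marginals of $\jointdistribution$, then divide out to get the stochastic factor. I treat the case $\numactions \geq \numactionscolumn$ in full; the case $\numactions \leq \numactionscolumn$ follows by transposition.

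\textbf{Uniqueness.} Suppose $\jointdistribution = S\diagonalmatrix$ with $\diagonalmatrix = \mathrm{diag}(d_1,\dots,d_{\numactionscolumn})$ and $S \in \R^{\numactions\times\numactionscolumn}$ column-stochastic, meaning nonnegative with every column summing to $1$. Entrywise, $\jointdistributionentry{i}{j} = S_{i,j} d_j$. Summing over $i$ and using that each column of $S$ sums to $1$ gives
\[
d_j = \sum_{i} \jointdistributionentry{i}{j} = \columnsum{j},
\]
the $j$-th column marginal of $\jointdistribution$. So any valid $\diagonalmatrix$ must equal $\mathrm{diag}(\columnsum{1},\dots,\columnsum{\numactionscolumn})$, which proves uniqueness. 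This matrix is a legitimate element of $\Delta([\numactionscolumn]\times[\numactionscolumn])$: its entries are nonnegative, it is diagonal, and $\sum_j \columnsum{j} = 1$.

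\textbf{Existence.} Keep $\diagonalmatrix = \mathrm{diag}(\columnsum{1},\dots,\columnsum{\numactionscolumn})$ and build $S$ column by column.
\begin{itemize}
\item If $\columnsum{j} > 0$, set $S_{i,j} = \jointdistributionentry{i}{j}/\columnsum{j}$. This column is nonnegative, sums to $1$, and satisfies $S_{i,j}\columnsum{j} = \jointdistributionentry{i}{j}$.
\item If $\columnsum{j} = 0$, then column $j$ of $\jointdistribution$ is identically zero. Any probability vector works for column $j$ of $S$, for instance the first standard basis vector, since it is multiplied by $0$.
\end{itemize}
Then $S\diagonalmatrix = \jointdistribution$. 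The freedom in the zero-marginal columns is exactly why $S$ may be non-unique. When all $\columnsum{j} > 0$, the formula forces $S$, so non-uniqueness arises only in that degenerate case.

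\textbf{Case $\numactions \leq \numactionscolumn$.} Apply the same argument to $\transpose{\jointdistribution} \in \Delta([\numactionscolumn]\times[\numactions])$. This gives $\transpose{\jointdistribution} = S'\diagonalmatrix'$ with $\diagonalmatrix' = \mathrm{diag}(\rowsum{1},\dots,\rowsum{\numactions})$, the row marginals of $\jointdistribution$. Transposing yields $\jointdistribution = \diagonalmatrix'\transpose{S'}$, where $\transpose{S'}$ is row-stochastic. Uniqueness of $\diagonalmatrix'$ transfers the same way.

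\textbf{Main obstacle.} There is no real obstacle here. The only care needed is handling zero marginals, which is what permits $S$ to be non-unique while $\diagonalmatrix$ stays unique.
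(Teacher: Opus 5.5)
Your proof is correct and follows essentially the same construction as the paper: $D$ is the diagonal of column marginals, $S_{i,j} = F_{i,j}/D_{j,j}$ on columns with positive mass and an arbitrary probability vector on zero-mass columns, and the case $n \le m$ follows by transposition. You also give an explicit uniqueness argument for $D$ (summing $F_{i,j} = S_{i,j} d_j$ over $i$ forces $d_j = c_j$); the paper asserts this uniqueness but never actually proves it, so your version is slightly more complete.
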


In other words, any joint distribution can be obtained by both agents perfectly observing a common signal and then a single one of the agents independently applying a garbling to the observed signal. We prove this proposition in \Cref{app:prelims}.

\subsection{Mutual Informations \& Estimators}\label{s.truthfulmechdesign}

We are interested in designing mechanisms which incentivize \emph{truthful} reporting of signals as actions, i.e., we want selecting the strategy $S = I$ to maximize the agents' expected reward,
\begin{align*}
    \E_{(\randomrow{\timeindex}, \randomcolumn{\timeindex}) \overset{\mathrm{i.i.d.}}{\sim} S \jointdistribution} [\estimator((\randomrow{1}, \randomcolumn{1}), \dots, (\randomrow{k}, \randomcolumn{k}))],
\end{align*}
for all joint distributions $\jointdistribution$.

\citet[Theorem 3.3]{KS-19} showed that a mechanism is truthful (in fact, satisfies stronger notions such as \emph{dominant truthfulness}) as long as each agent is paid according to a \emph{mutual information}. In this paper, we therefore focus on when and how mutual information payment rules can be constructed.

\begin{definition}[Mutual Information]\label{def:mutual-information}
    A mutual information $\mutualinformationabstract : \Delta([\numactions] \times [\numactionscolumn]) \rightarrow \mathbb{R}$ is a mapping from the set of joint distributions over $[\numactions] \times [\numactionscolumn]$ into the real numbers, which satisfies the following three properties:\footnote{\citet[Definition 3.1]{KS-19} defines an ``information monotone'' expected payoff function, a precursor to our definition of a mutual information, as one that satisfies zero on independent play, non-negativity, the row data processing inequality, and symmetry in the joint distribution, i.e., that $\mutualinformationabstract(\jointdistribution) = \mutualinformationabstract(\transpose{\jointdistribution})$. This last axiom is not well-defined if $\numactions \neq \numactionscolumn$, so we make the more general assumption of satisfying both row and column data processing inequalities. Furthermore, either data processing inequality and the zero on independent play axiom implies non-negativity, so we drop this axiom as well.}
    \begin{enumerate}
        \item \textbf{The (row) data processing inequality.} For all joint distribution matrices $\jointdistribution$, $\numactions \times \numactions$ column-stochastic matrices $S$, 
        \begin{align*}
            \mutualinformationabstract(S \jointdistribution) \leq \mutualinformationabstract(\jointdistribution) \text{.}
        \end{align*}
        \item \textbf{The (column) data processing inequality.} For all joint distribution matrices $\jointdistribution$, $\numactions \times \numactions$ row-stochastic matrices $T$, 
        \begin{align*}
            \mutualinformationabstract(\jointdistribution T) \leq \mutualinformationabstract(\jointdistribution) \text{.}
        \end{align*}
        \item \textbf{Zero on independent play.} For all rank $1$ joint distribution matrices $\jointdistribution$, $\mutualinformationabstract(\jointdistribution) = 0$.
    \end{enumerate}
    A \textit{row mutual information} satisfies axioms (1) and (3), while a \textit{column mutual information} satisfies axioms (2) and (3).
\end{definition}

A row mutual information, paired with an unbiased estimator, provides a strategy-proof payment rule for a row player. In general, when $\numactions = \numactionscolumn$, we can obtain a column mutual information payment rule for the column player by flipping the agents' roles and evaluating $\mutualinformationabstract(\transpose{\jointdistribution})$, which provides a strategy-proof payment rule for the column player. A mutual information satisfying both axioms enforces the stronger requirement that \emph{neither} player can increase the payment of the row player via strategic behavior. Satisfying all three properties makes a single mutual information $\mutualinformationabstract$ an appropriate, strategy-proof payment rule to reward both row and column players with the same reward on an instance-by-instance basis.

A row (respectively, column) mutual information ignores the strategy of the second (respectively, first) player, and focuses on designing a dominantly truthful payment rule for only one player.

\paragraph{Estimators}
We note that mutual informations are functions of the underlying joint distribution, which we cannot observe. Therefore, we are interested in mutual informations which have \emph{unbiased estimators}, such that they can be estimated with a finite number of samples.

\begin{definition}[Unbiased Estimator]\label{def:unbiased-estimator}
    The function $\estimator:([\numactions]\times[\numactionscolumn])^k\rightarrow \R$ for is a $k$-sample unbiased estimator for a function $\mutualinformationabstract:\Delta([\numactions] \times [\numactionscolumn]) \rightarrow \mathbb{R}$  if,
    \begin{align*}
        \mutualinformationabstract(\jointdistribution) &= \expect_{(\randomrow{\timeindex}, \randomcolumn{\timeindex}) \overset{\textrm{i.i.d.}}{\sim} \jointdistribution}[\estimator((\randomrow{1}, \randomcolumn{1}), ..., (\randomrow{\numsamples}, \randomcolumn{\numsamples}))].
    \end{align*}
\end{definition}

We also refer to these as fixed-sample unbiased estimators, in contrast to a later generalization in which a variable number of samples may be drawn (Section \ref{sec:ex-ante-bounded-sample-mutual-information-estimators}). Throughout this paper we may drop the term "unbiased" and simply refer to estimators, which are implicitly restricted to be unbiased.

In general, a mutual-information-based mechanism pays each agent according to an unbiased estimator of some mutual information. The agents' expected utility is the mutual information of their report random variables. The data processing inequality therefore implies that truthful reporting is an agent's optimal strategy \citep[see][]{KS-19}.

In particular, \citet{K-24} introduces the \emph{determinant mutual information} (DMI), along with an unbiased estimator of DMI. She defines DMI of a joint distribution $\jointdistribution\in \Delta([\numactions] \times [\numactions])$ as $\det(\jointdistribution)^2$, and introduce the following estimator.

\begin{definition}[\citet{K-24}'s DMI Estimator]\label{def:kong-estimator}
    The DMI of a joint distribution $\jointdistribution\in \Delta([\numactions] \times [\numactions])$ can be estimated as follows:
    \begin{enumerate}
        \item Take a sufficiently large number of samples $\numsamples \geq 2\numactions$ of pairs of the agents' reports $(\randomrow{1}, \randomcolumn{1}), ..., (\randomrow{\numsamples}, \randomcolumn{\numsamples})$, where $\randomrow{\timeindex}$ is the $\timeindex^{\text{th}}$ report of the first agent and $\randomcolumn{\timeindex}$ is the $\timeindex^{\text{th}}$ report of the second agent.
        \item Split the samples into two equal-sized groups. Tally the number of report pairs of each type $(i, j)$ in each group in two separate $\numactions \times \numactions$ matrices $\tallymatrix^{1}$ and $\tallymatrix^{2}$, where $\tallymatrixentry{i}{j}^{1}$ and $\tallymatrixentry{i}{j}^{2}$ are the number of times the pair of reports $(i, j)$ was observed in the first and second group, respectively.
        \item Return $\det(\tallymatrix^{1}) \cdot \det(\tallymatrix^{2})$, scaled by a function of the number of samples.
    \end{enumerate}
\end{definition}

\subsection{Simplifying Observations \& Normalization}

We make several observations which simplify our approach, and determine canonical forms of mutual informations up to scalar multiplication. At several points throughout, we will work with sample-order-invariant fixed-sample estimators.

\begin{definition}[Sample-order-invariance]
    A $\numsamples$-sample unbiased estimator $g$ is \emph{sample-order-invariant} if, for any realized sequence of $\numsamples$ samples $(\randomrow{1}, \randomcolumn{1}), \dots, (\randomrow{\numsamples}, \randomcolumn{\numsamples})$ and any permutation $\indexpermutation$,
    \begin{align*}
        \estimator((\randomrow{1}, \randomcolumn{1}), \dots, (\randomrow{\numsamples}, \randomcolumn{\numsamples})) = \estimator((\randomrow{\indexpermutation(1)}, \randomcolumn{\indexpermutation(1)}), \dots, (\randomrow{\indexpermutation(\numsamples)}, \randomcolumn{\indexpermutation(\numsamples)})) \text{.}
    \end{align*}
\end{definition}

Note that the estimator of DMI from \citet{K-24} described in \Cref{s.truthfulmechdesign} is not sample-order-invariant, as the division into $\tallymatrix^{1}$ and $\tallymatrix^{2}$ may affect the value returned. However, we show that any fixed $k$-sample unbiased estimator can be turned into a sample-order-invariant $k$-sample unbiased estimator (\Cref{obs:make-new-estimators}), and in \Cref{sec:optimal-estimator-tools} we provide justification for focusing on these estimators.

This simplification gives rise to an easier way to express a sequence of reports $(\randomrow{1}, \randomcolumn{1}), \dots, (\randomrow{k}, \randomcolumn{k})$, by tracking only the number of occurrences of each pair $(i, j) \in [\numactions] \times [\numactionscolumn]$ in a \textit{tally matrix}.

\begin{definition}[Tally Matrix]\label{def:tally-matrix}
    A \textit{tally matrix} of $\numsamples \in \mathbb{N}$ samples from support $[\numactions] \times [\numactionscolumn]$ is a $\numactions \times \numactionscolumn$ matrix with non-negative integer entries that sum to $\numsamples$. We use $\tallyset{\numsamples}{\numactions}{\numactionscolumn}$ to denote the set of tally matrices of $\numsamples \in \mathbb{N}$ samples from support $[\numactions] \times [\numactionscolumn]$.
\end{definition}

We will draw random tally matrices from a \textit{multinomial distribution} over the action pairs of the row and column player.

\begin{definition}[Multinomial Distribution over Action Pairs]\label{def:multinomial-distribution}
    A random $\numactions \times \numactionscolumn$ tally matrix $\tallymatrixrandom$ follows a \textit{multinomial distribution} over the joint distribution on action pairs $\jointdistribution$ with $\numsamples$ samples when the probability that $\tallymatrixrandom$ is any particular tally matrix $\tallymatrix \in \tallyset{\numsamples}{\numactions}{\numactionscolumn}$ is given by:
    \begin{align*}
       \textstyle \prob(\tallymatrixrandom = \tallymatrix) = \frac{\numsamples!}{\prod_{i = 1}^{\numactions} \prod_{j = 1}^{\numactions} (\tallymatrixentry{i}{j}!)} \prod_{i = 1}^{\numactions} \prod_{j = 1}^{\numactions} \jointdistributionentry{i}{j}^{\tallymatrixentry{i}{j}} \text{.}
    \end{align*}
    We use $\tallymatrixrandom \sim \multinomialdistribution{\numsamples}{\jointdistribution}$ to denote that $\tallymatrixrandom$ follows a multinomial distribution.
\end{definition}

Through light abuse of notation, a sample-order-invariant $\numsamples$-sample unbiased estimator $\estimator$ for a mutual information $\mutualinformation{\estimator}$ is a function $\estimator : \tallyset{\numsamples}{\numactions}{\numactionscolumn} \rightarrow \mathbb{R}$ on the set of tally matrices such that
\begin{align*}
     \expect_{\tallymatrixrandom \sim \multinomialdistribution{\numsamples}{\jointdistribution}}[\estimator(\tallymatrixrandom)] = \mutualinformation{\estimator}(\jointdistribution) \text{.}
\end{align*}
We contrast this to the possibly sample-order-dependent definition of unbiased estimators provided in \Cref{def:unbiased-estimator}.

\paragraph{Normalization}  Any positive rescaling of a mutual information is itself a mutual information. Therefore, it is sometimes useful to \textit{normalize} mutual informations so that they output values between $0$ and $1$, as described in \Cref{def:unit-normalization}.

\begin{definition}[Unit Normalization]\label{def:unit-normalization}
    A mutual information $\mutualinformationabstract$ is \textit{unit-normalized} if $\mutualinformationabstract(\jointdistribution) \leq 1$ for all joint distributions $\jointdistribution$, and where this inequality is tight for some (diagonal) joint distribution $\jointdistribution$.
\end{definition}

\section{Fixed-Sample Mutual Information Estimators}\label{sec:fixed-sample-mutual-information-estimators}

This section considers mutual informations with fixed-sample unbiased estimators. Fixed-sample estimators are desirable as they place a concrete cap on the number of samples needed from each player. Fixed-sample estimators are also required for practical implementability:  players need their effort in terms of number of tasks to be related to the payoff, and the mechanism operator only has a finite number of tasks. \Cref{sec:ex-ante-bounded-sample-mutual-information-estimators} will relax this requirement to only cap the number of samples taken \textit{in expectation}.

We begin with a general characterization of the structure that mutual informations with fixed-sample unbiased estimators must take: \Cref{sec:determinant-mutual-information-tools} gives a polynomial form of such mutual informations. These results are leveraged in \Cref{sec:determinant-mutual-information-results} and \Cref{sec:non-uniqueness-results} in eliminating possible mutual informations and characterizing new ones. \Cref{sec:determinant-mutual-information-results} finds that DMI is the unique mutual information for a $2 \times 2$ action space and $\numsamples\in\{4,5\}$ samples, and that no mutual information has a fixed-sample estimator for $\numsamples < 4$. For $\numsamples = 6$ samples, \Cref{sec:non-uniqueness-results} finds a new mutual information with a $6$-sample estimator, showing that DMI is no longer unique.

In \Cref{sec:optimal-estimator-tools}, we fix any mutual information and consider its estimators. A given mutual information either has no $\numsamples$-sample estimators or an infinite family of them, and we propose that the best estimator is the unique \emph{convex-minimal} estimator. We provide tools for finding the unique convex-minimal fixed-sample estimator for a mutual information, and in \Cref{sec:optimal-estimator-results} we use these tools to explicitly state the convex-minimal estimator for DMI on the $2 \times 2$ action space and any number of samples.

All omitted proofs in this section are presented in \Cref{app.section3proofs}.

\subsection{Polynomial Characterization of Mutual Informations with Fixed-Sample Estimators}\label{sec:determinant-mutual-information-tools}

We begin by presenting several general results about the structure of mutual informations with fixed-sample estimators.  We will use these results in \Cref{sec:determinant-mutual-information-results} to show that DMI is unique for small action spaces and numbers of samples. First, we observe that all mutual informations with fixed-sample unbiased estimators can be written as polynomials.

\begin{restatable}[Fixed-sample unbiased estimators are polynomials]{lemma}{MIpolynomials}
\label{lma:mutual-informations-are-polynomials}
    Any function $\mutualinformationabstract$ on joint distribution $\jointdistribution$ has a $\numsamples$-sample unbiased estimator if and only if $\mutualinformationabstract(\jointdistribution)$ is a multivariate polynomial in the entries of the joint distribution matrix $\jointdistribution \in \Delta([\numactions] \times [\numactionscolumn])$ with degree at most $\numsamples$.
\end{restatable}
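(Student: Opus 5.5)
We prove the two directions separately; the forward direction is a direct computation and the reverse direction is a construction.

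\emph{Forward direction.} Suppose $\estimator$ is a $\numsamples$-sample unbiased estimator. Then by \Cref{def:unbiased-estimator},
\begin{align*}
\mutualinformationabstract(\jointdistribution)=\sum_{(i_1,j_1),\dots,(i_\numsamples,j_\numsamples)} \estimator((i_1,j_1),\dots,(i_\numsamples,j_\numsamples))\prod_{\timeindex=1}^{\numsamples}\jointdistributionentry{i_\timeindex}{j_\timeindex}.
\end{align*}
The sum is finite, and each summand is a constant times a monomial of degree exactly $\numsamples$ in the entries of $\jointdistribution$. Hence $\mutualinformationabstract$ is a polynomial of degree at most $\numsamples$, and in fact homogeneous of degree $\numsamples$.

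\emph{Reverse direction.} Suppose $\mutualinformationabstract$ is a polynomial of degree at most $\numsamples$.
\begin{enumerate}
\item \textbf{Homogenize.} On the simplex we have $\sum_{i,j}\jointdistributionentry{i}{j}=1$. So each monomial $M$ of degree $d<\numsamples$ can be replaced by $M\cdot(\sum_{i,j}\jointdistributionentry{i}{j})^{\numsamples-d}$ without changing its value on $\Delta([\numactions]\times[\numactionscolumn])$. After expanding, $\mutualinformationabstract$ agrees on the simplex with a homogeneous polynomial of degree exactly $\numsamples$.
\item \textbf{Match monomials to tally matrices.} A degree-$\numsamples$ monomial $\prod_{i,j}\jointdistributionentry{i}{j}^{\tallymatrixentry{i}{j}}$ with $\sum_{i,j}\tallymatrixentry{i}{j}=\numsamples$ corresponds to a tally matrix $\tallymatrix\in\tallyset{\numsamples}{\numactions}{\numactionscolumn}$. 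By \Cref{def:multinomial-distribution}, this monomial equals $\prob(\tallymatrixrandom=\tallymatrix)\cdot\prod_{i,j}\tallymatrixentry{i}{j}!/\numsamples!$.
\item \textbf{Define the estimator.} Write $\mutualinformationabstract(\jointdistribution)=\sum_{\tallymatrix} a_\tallymatrix \prod_{i,j}\jointdistributionentry{i}{j}^{\tallymatrixentry{i}{j}}$ and set
\begin{align*}
\estimator(\tallymatrix)=a_\tallymatrix\prod_{i,j}\tallymatrixentry{i}{j}!/\numsamples!,
\end{align*}
evaluated at the tally matrix of the realized samples. This function is sample-order-invariant.
\item \textbf{Verify unbiasedness.} By linearity of expectation, $\expect[\estimator(\tallymatrixrandom)]=\sum_\tallymatrix \estimator(\tallymatrix)\prob(\tallymatrixrandom=\tallymatrix)$, which recovers $\mutualinformationabstract(\jointdistribution)$ by step 2.
\end{enumerate}

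No step is hard; the only subtlety is the homogenization in step 1. It requires $\mutualinformationabstract$ to be equal to a polynomial only on the simplex, which is where the identity $\sum_{i,j}\jointdistributionentry{i}{j}=1$ holds.
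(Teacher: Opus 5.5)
Your proof is correct. The forward direction is the same expansion the paper uses, but the reverse direction takes a different route.

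The paper never homogenizes. It treats each monomial $a_\ell\prod_{r=1}^{d}F_{i_r j_r}$ with $d\le k$ separately, assigning it the estimator $a_\ell\prod_{r=1}^{d}\mathbf{1}[(I_r,J_r)=(i_r,j_r)]$. This uses only the first $d$ samples, ignores the remaining $k-d$, and the pieces are summed by linearity.

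What the paper's route buys:
\begin{itemize}
\item It is more elementary and never uses the simplex identity $\sum_{i,j}F_{ij}=1$.
\item It makes transparent that a degree-$d$ term needs only $d$ samples.
\item Its cost is that the resulting estimator is sample-order-dependent.
\end{itemize}

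What your route buys:
\begin{itemize}
\item Homogenizing and then dividing each coefficient by the multinomial coefficient directly produces a sample-order-invariant estimator.
\item By \Cref{thm:any-mutual-information-has-at-most-one-sample-order-invariant-estimator} and \Cref{cor:output-of-permutation-variance-grinder-is-unique-and-convex-minimal}, this is the unique convex-minimal estimator. It is exactly what the permutation variance grinder would output when applied to the paper's estimator.
\item For example, for $16\det(F)^2$ with $k=4$ your formula gives $16\cdot 2!\,2!/4!=8/3$ on the two diagonal-type tallies and $-32/4!=-4/3$ on the all-ones tally. This reproduces \Cref{cor:four-sample-convex-minimal-dmi-estimator-with-binary-alphabet} with no hypergeometric averaging.
\item Your remark that the expectation of any $k$-sample estimator is homogeneous of degree exactly $k$ is the form the paper implicitly relies on later, e.g.\ in the proof of \Cref{thm:dmi-is-unique}.
\end{itemize}
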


Using the polynomial characterization of mutual informations with fixed-sample estimators, we apply Hilbert's nullstellensatz theorem for polynomials to characterize functions which respect the zero on independent play axiom. 

\begin{restatable}[Characterization of zero on independent play]{theorem}{independentnullstellensatz}\label{thm:independent-characterization}
    In an $n \times m$ action space, any polynomial $\mutualinformationabstract$ on joint distribution $\jointdistribution$ which satisfies the zero on independent play axiom must be of the form,
    \begin{align*}
        \sum_{a\neq c, b\neq d}p_{abcd}(\jointdistribution) \det\nolimits_{abcd}(\jointdistribution) \text{,}
    \end{align*}
    where $p_{abcd}(\jointdistribution)$ is some polynomial in the entries of $\jointdistribution$ and $\det_{abcd}(\jointdistribution) = \jointdistribution_{ab}\jointdistribution_{cd} - \jointdistribution_{ad}\jointdistribution_{bc}$, i.e., the determinant of the $2 \times 2$ submatrix of $\jointdistribution$ restricted to rows $a,c$ and columns $b,d$.
\end{restatable}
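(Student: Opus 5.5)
The plan is to pass from the semialgebraic domain $\Delta([\numactions]\times[\numactionscolumn])$ to an algebraic variety, use Hilbert's Nullstellensatz there, and then transfer the conclusion back. The statement only concerns $\mutualinformationabstract$ as a function on joint distributions. So it suffices to produce a polynomial of the displayed form that agrees with $\mutualinformationabstract$ on $\Delta([\numactions]\times[\numactionscolumn])$.

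\textbf{Step 1 (homogenize).} Write $\mutualinformationabstract=\sum_{d=0}^{\numsamples}\mutualinformationabstract_d$, where $\mutualinformationabstract_d$ is the homogeneous part of degree $d$. Let $s(\jointdistribution)=\sum_{i,j}\jointdistributionentry{i}{j}$ and set
\begin{align*}
H(\jointdistribution)=\sum_{d=0}^{\numsamples}\mutualinformationabstract_d(\jointdistribution)\,s(\jointdistribution)^{\numsamples-d}.
\end{align*}
Then $H$ is homogeneous of degree $\numsamples$, and $H=\mutualinformationabstract$ on the simplex because $s=1$ there.

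\textbf{Step 2 (extend vanishing to the whole rank-$\le 1$ variety).} Take $\probvectorone\in\R^{\numactions}_{>0}$ and $\probvectortwo\in\R^{\numactionscolumn}_{>0}$. The matrix $\probvectorone\transpose{\probvectortwo}/s(\probvectorone\transpose{\probvectortwo})$ is a rank-one joint distribution, so zero on independent play gives $\mutualinformationabstract=0$ there. By homogeneity, $H(\probvectorone\transpose{\probvectortwo})=0$. Hence the polynomial $(\probvectorone,\probvectortwo)\mapsto H(\probvectorone\transpose{\probvectortwo})$ vanishes on a nonempty open subset of $\R^{\numactions+\numactionscolumn}$, so it is identically zero, including over $\mathbb{C}$. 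Every complex matrix of rank at most one has the form $\probvectorone\transpose{\probvectortwo}$. Therefore $H$ vanishes on the whole determinantal variety $V$ of $\numactions\times\numactionscolumn$ matrices of rank at most one.

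\textbf{Step 3 (Nullstellensatz).} The Nullstellensatz gives $H\in\rad(\ideal)$, where $\ideal$ is the ideal generated by all $2\times 2$ minors $\det_{abcd}$. The main obstacle is removing the radical. For this I would invoke the classical fact that the ideal of $2\times 2$ minors of a generic matrix is prime (equivalently, $\ideal$ is the kernel of the ring map $x_{ij}\mapsto v_iw_j$, i.e. the Segre ideal). If I want a self-contained argument, I would prove directly that $\ker(x_{ij}\mapsto v_iw_j)\subseteq\ideal$. Modulo $\ideal$, every monomial can be rewritten into a standard form whose row indices and column indices are each sorted. Distinct standard monomials map to distinct monomials $v^\alpha w^\beta$. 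Hence a polynomial in the kernel reduces to $0$ modulo $\ideal$.

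Either route gives $\ideal=\vanishingideal{V}$, so $H=\sum_{a\neq c,\,b\neq d}p_{abcd}\det_{abcd}$ for some polynomials $p_{abcd}$. Since $H=\mutualinformationabstract$ on $\Delta([\numactions]\times[\numactionscolumn])$, this is the claimed form. If $\mutualinformationabstract$ is to be matched as a formal polynomial, one adds the multiple of $(s-1)$ that separates $\mutualinformationabstract$ from $H$; I read the theorem as a statement about functions on the simplex, where that term is zero.
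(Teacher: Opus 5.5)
Your proof is correct and uses the same skeleton as the paper: apply the Nullstellensatz to the ideal $\ideal$ of $2\times 2$ minors. It also supplies two steps that the paper's proof only asserts.

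The first is the transfer to all of $\vanishinglocus{\ideal}$. The paper claims that the polynomials satisfying zero on independent play are exactly those vanishing on $\vanishinglocus{\ideal}$, citing only that real rank-one matrices lie in $\vanishinglocus{\ideal}$. This fails for non-homogeneous polynomials. For example, $s-1$ with $s=\sum_{i,j}\jointdistributionentry{i}{j}$ vanishes on every joint distribution but not at the zero matrix, so it is not in $\ideal$. Your homogenization $H$, together with the scaling and open-set argument of Step 2, is what makes the transfer legitimate. It also shows the theorem holds only as a statement about functions on the simplex; formally, $\mutualinformationabstract$ lies in the ideal generated by $\ideal$ and $s-1$, as you note.

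The second is removing the radical. The paper goes from $\vanishingideal{\vanishinglocus{\ideal}}=\rad(\ideal)$ to the displayed form without comment, so it silently uses $\rad(\ideal)=\ideal$. Primality of the Segre ideal, or your standard-monomial argument, is exactly the missing justification.

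Two small remarks:
\begin{itemize}
\item Once Step 2 shows that $(\probvectorone,\probvectortwo)\mapsto H(\probvectorone\transpose{\probvectortwo})$ is the zero polynomial, $H$ already lies in the kernel of $x_{ij}\mapsto v_iw_j$. The standard-monomial argument then gives $H\in\ideal$ directly, and the Nullstellensatz becomes unnecessary.
\item That argument runs over $\mathbb{R}$ and so gives real $p_{abcd}$. If you go via $\mathbb{C}$ instead, add that taking real parts of the multipliers preserves the identity, because $H$ and the minors have real coefficients.
\end{itemize}
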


For the $2 \times 2$ action space, we further refine this result in \Cref{lma:mutual-informations-are-divisible-by-determinant-squared-of-joint-distribution-matrix}, placing stronger restrictions on the polynomials which can form mutual informations in the binary action space.

\begin{restatable}[Factorization by squared determinant]{lemma}{determinantfactorization}\label{lma:mutual-informations-are-divisible-by-determinant-squared-of-joint-distribution-matrix}
    In a $2 \times 2$ action space, any polynomial $\mutualinformationabstract$ on joint distribution $\jointdistribution$ which is non-negative and satisfies the zero on independent play axiom must be divisible by the determinant squared of the joint distribution matrix $\det(\jointdistribution)^{2}$.
\end{restatable}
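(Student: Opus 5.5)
The plan is to combine \Cref{thm:independent-characterization} with a sign-change argument across the determinantal hypersurface. In the $2 \times 2$ case there is only one $2\times 2$ minor up to sign, so \Cref{thm:independent-characterization} gives $\mutualinformationabstract(\jointdistribution) = p(\jointdistribution)\det(\jointdistribution)$ for some polynomial $p$. It then suffices to show that $\det(\jointdistribution)$ also divides $p$, because then $\det(\jointdistribution)^2$ divides $\mutualinformationabstract$.

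\textbf{Step 1: remove the simplex constraint.} A polynomial on $\Delta([2]\times[2])$ is only determined modulo $(\sum_{ij}\jointdistribution_{ij} - 1)$. To work with honest polynomial identities, I homogenize. Multiply each degree-$j$ monomial of $\mutualinformationabstract$ by $(\sum_{ij}\jointdistribution_{ij})^{k-j}$ to get a homogeneous polynomial $\tilde{\mutualinformationabstract}$ of degree $k$ on $\R^4$. It agrees with $\mutualinformationabstract$ on the simplex. By homogeneity, $\tilde{\mutualinformationabstract}$ is non-negative on the open positive orthant, and it vanishes on rank-one positive matrices. Apply \Cref{thm:independent-characterization} (or its proof) to $\tilde{\mutualinformationabstract}$ to write $\tilde{\mutualinformationabstract} = \tilde p \det$. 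Divisibility of $\tilde{\mutualinformationabstract}$ by $\det^2$ then gives the claim on the simplex after setting the sum to $1$.

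\textbf{Step 2: $\tilde p$ vanishes on the positive part of $\{\det = 0\}$.} Fix a positive rank-one matrix $\jointdistribution_0$, for instance $\probvectorone\transpose{\probvectortwo}$ with all entries positive. Since $\partial\det/\partial \jointdistribution_{11} = \jointdistribution_{22} > 0$ there, $\det$ takes both signs in every neighborhood of $\jointdistribution_0$ inside the positive orthant. Suppose $\tilde p(\jointdistribution_0) \neq 0$. By continuity $\tilde p$ has constant sign near $\jointdistribution_0$, so $\tilde p\det$ takes both signs there. This contradicts non-negativity. Hence $\tilde p$ vanishes on the whole set $\{\det = 0\}\cap\R^4_{>0}$.

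\textbf{Step 3: upgrade pointwise vanishing to divisibility.} I expect this to be the main obstacle, since it is a real-algebraic issue and we cannot simply invoke the complex Nullstellensatz. I would use the explicit structure of $\det$, which is linear in $\jointdistribution_{11}$. Treat $\tilde p$ as a polynomial in $\jointdistribution_{11}$ over $\R[\jointdistribution_{12},\jointdistribution_{21},\jointdistribution_{22}]$ and divide by $\det = \jointdistribution_{22}\jointdistribution_{11} - \jointdistribution_{12}\jointdistribution_{21}$. Pseudo-division gives
\[
\jointdistribution_{22}^{d}\,\tilde p = \det\cdot q + r ,
\]
where $r$ does not depend on $\jointdistribution_{11}$. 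Now substitute $\jointdistribution_{11} = \jointdistribution_{12}\jointdistribution_{21}/\jointdistribution_{22}$ for arbitrary positive $\jointdistribution_{12},\jointdistribution_{21},\jointdistribution_{22}$. Such points lie in $\{\det=0\}\cap\R^4_{>0}$, so Step 2 forces $r = 0$ on an open subset of $\R^3$, and hence $r \equiv 0$.

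Finally, $\det$ is irreducible (it is a nondegenerate quadratic form) and is coprime to $\jointdistribution_{22}$. Since $\det$ divides $\jointdistribution_{22}^d\tilde p$, unique factorization gives $\det \mid \tilde p$. Therefore $\det^2 \mid \tilde{\mutualinformationabstract}$, which proves the statement.
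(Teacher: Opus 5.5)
Your proof is correct and has the same skeleton as the paper's. You factor $G=q\det(F)$ via \Cref{thm:independent-characterization}. You then use non-negativity to force $q$ to vanish where $\det(F)=0$, since perturbing $F_{11}$ at a rank-one point flips the sign of $\det$; the paper phrases this as $\partial G/\partial F_{11}=F_{22}\,q(F)\neq 0$ contradicting $G\ge 0$ at a zero of $G$. Finally you pull a second factor of $\det$ out of $q$.

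The differences are in execution, and they favor your version. Homogenizing first makes non-negativity hold on the whole open orthant, so moving a single coordinate is legitimate. The paper perturbs $F_{11}$ directly. That leaves the simplex, where a polynomial is only meaningful modulo $\sum_{ij}F_{ij}-1$, and it cannot move downward at boundary points with $F_{11}=0$.

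For the final step, the paper simply re-applies \Cref{thm:independent-characterization} to $q$, i.e.\ goes back through the complex Nullstellensatz. Your pseudo-division plus irreducibility of $\det$ needs only vanishing on the positive real part of $\{\det=0\}$ and is elementary. The same Step 3 applied to $\tilde G$ itself would also replace your first appeal to \Cref{thm:independent-characterization}, so the lemma needs no Nullstellensatz at all.

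A further benefit: your homogenized conclusion $\tilde G=\det(F)^2\,h$, with $h$ homogeneous of degree $k-4$, is exactly the form the degree-counting arguments in \Cref{thm:three-or-fewer-samples-implies-trivial-mutual-information} and \Cref{thm:dmi-is-unique} implicitly rely on.
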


We note that \Cref{lma:mutual-informations-are-divisible-by-determinant-squared-of-joint-distribution-matrix} applies to functions $\mutualinformationabstract$ with a fixed-sample unbiased estimator that satisfy the mutual information axioms, because non-negativity is implied by a combination of the zero on independent play axiom and the data processing inequality. Non-negativity is thus a weaker assumption than the full data processing inequality.

We will use \Cref{lma:mutual-informations-are-divisible-by-determinant-squared-of-joint-distribution-matrix} to later prove the uniqueness of DMI for small numbers of samples in \Cref{sec:determinant-mutual-information-results}.

\subsection{Uniqueness of DMI with Small Samples}\label{sec:determinant-mutual-information-results}

Using the characterizations of \Cref{sec:determinant-mutual-information-tools}, we find that no nontrivial mutual information has a fixed-sample estimator that takes strictly fewer than 4 samples. For $4$ and $5$ samples and a $2 \times 2$ action space, we find DMI is the only mutual information with an unbiased fixed-sample estimator.

Our first main result in this vein is that it takes at least $4$ samples to estimate any non-trivial mutual information. We note that \Cref{thm:three-or-fewer-samples-implies-trivial-mutual-information} holds regardless of the alphabet sizes $\numactions, \numactionscolumn$ of the row and column players.

\begin{theorem}[3 or fewer samples implies trivial mutual information]
\label{thm:three-or-fewer-samples-implies-trivial-mutual-information}
    No nontrivial mutual information has a fixed-sample unbiased estimator with $\numsamples \leq 3$ samples.
\end{theorem}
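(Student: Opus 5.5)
The plan is to combine \Cref{lma:mutual-informations-are-polynomials} with a degree count against the vanishing order that non-negativity forces on the rank-$1$ locus. Let $\mutualinformationabstract$ be a mutual information with a $\numsamples$-sample estimator, $\numsamples\le 3$. By \Cref{lma:mutual-informations-are-polynomials}, $\mutualinformationabstract$ is a polynomial of degree at most $3$ in the entries of $\jointdistribution$. Since $\sum_{i,j}\jointdistributionentry{i}{j}=1$ on the simplex, I first homogenize: multiply each lower-degree monomial by the appropriate power of $\sum_{i,j}\jointdistributionentry{i}{j}$. This gives a homogeneous cubic $\tilde G$ that agrees with $\mutualinformationabstract$ on $\Delta([\numactions]\times[\numactionscolumn])$ and is non-negative on the cone of entrywise non-negative matrices. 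Because the simplex has nonempty relative interior, showing $\mutualinformationabstract\equiv 0$ on the simplex is the same as showing $\tilde G\equiv 0$ as a polynomial.

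\emph{Step 1: the $2\times 2$ case.} By \Cref{lma:mutual-informations-are-divisible-by-determinant-squared-of-joint-distribution-matrix}, $\det(\jointdistribution)^2$ divides $\mutualinformationabstract$. I will check that this divisibility passes to the homogenization: the relevant ideal is homogeneous and $\det^2$ is homogeneous, so it does. Then a homogeneous cubic is divisible by a nonzero homogeneous quartic, which forces $\tilde G=0$. This settles $\numactions=\numactionscolumn=2$.

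\emph{Step 2: reduction for general $\numactions\times\numactionscolumn$.} First, permutation matrices are stochastic and so are their inverses. Applying the row and column data processing inequalities in both directions shows $\mutualinformationabstract(P\jointdistribution Q)=\mutualinformationabstract(\jointdistribution)$ for all permutation matrices $P,Q$.

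Second, the zero-on-independent-play axiom says $\tilde G$ vanishes on the rank-$\le 1$ cone. Together with $\tilde G\ge 0$, every interior rank-$1$ point is a local minimum of $\tilde G$ along every feasible direction. Hence $\tilde G$ vanishes to order $2$ there, i.e.\ $\tilde G$ lies in the symbolic square of the ideal of $2\times 2$ minors (via \Cref{thm:independent-characterization} and a first-order Taylor argument at smooth points of the Segre variety).

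I then plan to invoke the known structure of this symbolic square: its degree-$2$ part is zero and its degree-$3$ part is spanned by the $3\times 3$ minors. So $\tilde G=\sum_M c_M M$ over $3\times3$ minors $M$. If $\min(\numactions,\numactionscolumn)\le 2$ there are no such minors, and $\tilde G=0$ directly, consistent with Step 1.

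\emph{Step 3: killing the $3\times 3$ minors.} Let $\tau$ be a transposition of two rows. It maps each minor to $\pm$ another minor, and it negates every minor whose row set contains both swapped rows. Invariance $\tilde G(\tau\jointdistribution)=\tilde G(\jointdistribution)$, combined with linear independence of the $3\times3$ minors (a standard fact from Plücker/straightening relations, at least for the linear span I need), yields a sign-consistency system on the coefficients $c_M$. I expect this system to force all $c_M=0$. The simplest instance is a single $3\times 3$ block, where $\tilde G=c\det$ and the row swap gives $c\det=-c\det$, so $c=0$. If the combinatorics of the general system get messy, I will instead use non-negativity directly: $\tilde G$ is odd under $\jointdistribution\mapsto\tau\jointdistribution$ in the $\det$-component. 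So I will evaluate at an interior point where some nonzero summand dominates, and compare with its row-swapped image, which is also in the simplex; this contradicts the combination of $\tilde G\ge 0$ and invariance.

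The main obstacle is Step 2's claim that second-order vanishing on the rank-$1$ locus places a cubic in the span of $3\times3$ minors. The inward-pointing feasibility of directions at boundary rank-$1$ points needs care, and I will restrict to interior rank-$1$ points, which are Zariski dense in the Segre variety. The symbolic-power fact itself I would cite or verify by a short lowest-order-term computation at $P=\probvectorone\transpose{\probvectortwo}$ with full support. There the quadratic term of $\tilde G(P+tE)$ restricted to the normal space must be a PSD form, and a cubic vanishing to order two has a normal-quadratic part depending linearly on the base point.
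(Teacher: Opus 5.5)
Your argument is correct, and outside the $2\times 2$ case it takes a genuinely different route from the paper. The paper's proof is exactly your Step~1: a polynomial of degree at most $3$ that is divisible by the quartic $\det(F)^2$ must be zero.

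However, the divisibility lemma the paper cites is stated and proved only for the $2\times 2$ action space. So although the paper remarks that the theorem holds for all $n,m$, its written proof covers only $2\times 2$. Restricting to $2\times 2$ sub-blocks would not close this gap: for $n=m=3$ the cubic $\det(F)$ vanishes on independent play and on every matrix supported on two rows.

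Your Steps~2--3 supply the general case:
\begin{itemize}
\item Non-negativity at full-support rank-one points forces the gradient of $\tilde G$ to vanish there.
\item Zariski density together with Zariski--Nagata then places $\tilde G$ in the symbolic square $I_2^{(2)}$.
\item The De Concini--Eisenbud--Procesi description $I_2^{(2)}=I_3+I_2^2$ (characteristic $0$) identifies the degree-$3$ part of $I_2^{(2)}$ with the span of the $3\times 3$ minors.
\end{itemize}

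Step~3 is then immediate, not merely something you ``expect.'' A row transposition $\tau=(a\,b)$ sends $M_{R,C}$ to $\pm M_{\tau(R),C}$, with $\tau(R)=R$ and sign $-1$ when $\{a,b\}\subseteq R$. Distinct minors have disjoint monomial supports. So comparing coefficients of $M_{R,C}$ in $\tilde G(\tau F)=\tilde G(F)$ gives $c_{R,C}=-c_{R,C}$, and every row triple contains such a pair. Your non-negativity fallback is therefore unnecessary.

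The trade-off is that you import a nontrivial theorem on symbolic powers of determinantal ideals, where the paper needs only the principal case $(\det)^{(2)}=(\det^2)$. In exchange, your argument:
\begin{itemize}
\item actually covers all alphabet sizes;
\item makes Step~1 redundant, since for $\min(n,m)\le 2$ there are no $3\times 3$ minors;
\item uses only the row data processing inequality (for non-negativity and permutation invariance), so it also applies to row mutual informations.
\end{itemize}
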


\begin{proof}
    Any mutual information with an unbiased estimator of $3$ or fewer samples must be expressible as a polynomial of degree at most $3$ (by \Cref{lma:mutual-informations-are-polynomials}). However, $\det(\jointdistribution)^{2}$ is a polynomial of degree $4$ on a $2 \times 2$ action space. Any mutual information with a fixed-sample unbiased estimator must be divisible by $\det(\jointdistribution)^{2}$ (by \Cref{lma:mutual-informations-are-divisible-by-determinant-squared-of-joint-distribution-matrix}). The only such polynomial is identically zero.
\end{proof}

At $4$ samples on a binary action space, it becomes possible to unbiasedly estimate a non-trivial mutual information. This first non-trivial mutual information is the Determinant Mutual Information (DMI) from \citet{K-24}, and it is unique when the number of samples is at most $5$. We state this result in \Cref{thm:dmi-is-unique}.

\begin{restatable}[DMI is unique for $\numactions = \numactionscolumn = 2$ and $\numsamples = 4$ or $\numsamples = 5$]{theorem}{dmiunique}
\label{thm:dmi-is-unique}
    When there are $\numsamples = 4$ or $\numsamples = 5$ samples on a $2 \times 2$ action space, the unique mutual information (up to a scalar multiple) is:
    \begin{align*}
        \mutualinformationabstract(\jointdistribution) = 16 \det(\jointdistribution)^{2} \text{.}
    \end{align*}
\end{restatable}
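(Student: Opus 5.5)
By \Cref{lma:mutual-informations-are-polynomials}, a mutual information with a $\numsamples$-sample estimator, $\numsamples\in\{4,5\}$, is a polynomial of degree at most $5$ in the entries $F_{11},F_{12},F_{21},F_{22}$. By \Cref{lma:mutual-informations-are-divisible-by-determinant-squared-of-joint-distribution-matrix} it factors as $\mutualinformationabstract(\jointdistribution)=\det(\jointdistribution)^2\,p(\jointdistribution)$, where $p$ has degree at most $1$. On the simplex we may homogenize using $\sum_{i,j}F_{ij}=1$. So we may write
\[
p(\jointdistribution)=\sum_{i,j} c_{ij}F_{ij},
\]
with four real coefficients $c_{11},c_{12},c_{21},c_{22}$. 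For $\numsamples=4$, $p$ is a constant, which is the same as $c_{11}=c_{12}=c_{21}=c_{22}$. The claim is then that the data processing inequalities force all four coefficients to be equal, and that the common value is nonnegative. Once that holds, $\mutualinformationabstract$ is a nonnegative multiple of $\det(\jointdistribution)^2$. The constant $16$ is only the unit normalization: on the diagonal distribution with entries $1/2$ we get $\det=1/4$, so $16\det^2=1$, and we will check that this is the maximum. Conversely, DMI is a mutual information with a $4$-sample estimator by \citet{K-24}, and the same estimator works for $5$ samples by ignoring one sample. So existence is already known and only uniqueness needs proof.

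To pin down the $c_{ij}$, we first use the alphabet-relabeling garblings. Let $P$ be the swap permutation. Then $P\jointdistribution$ is a row garbling of $\jointdistribution$ and $\jointdistribution=P(P\jointdistribution)$, so the row DPI applied both ways gives $\mutualinformationabstract(P\jointdistribution)=\mutualinformationabstract(\jointdistribution)$. The same argument for columns gives $\mutualinformationabstract(\jointdistribution P)=\mutualinformationabstract(\jointdistribution)$. Since $\det^2$ is invariant under these swaps, $p$ must be invariant on the open set where $\det\neq 0$, hence everywhere. Row-swap invariance gives $c_{11}=c_{21}$ and $c_{12}=c_{22}$. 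Column-swap invariance gives $c_{11}=c_{12}$ and $c_{21}=c_{22}$. Together these force all four coefficients to be equal, so $p\equiv c\sum F_{ij}=c$ on the simplex. The second step is sign: evaluating at a diagonal distribution gives $\mutualinformationabstract=c\det^2$, and non-negativity (implied by the axioms) gives $c\ge 0$. If $c=0$ the mutual information is trivial. Otherwise we rescale to $c=16$, which is unit-normalized, since $|\det \jointdistribution|\le F_{11}F_{22}+F_{12}F_{21}\le 1/4$.

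The step most likely to be delicate is the reduction of $p$ to a linear form. The argument has to take care that the homogenization uses only the simplex constraint, and that equality of polynomials on the simplex (a set with nonempty relative interior) pins down the affine function $p$ uniquely. I would handle this by treating $p$ as an affine function of the three free coordinates $F_{11},F_{12},F_{21}$. Because $\det^2$ is nonzero on an open dense subset of the simplex, the factorization determines $p$ there and hence everywhere. The swap symmetries must then be stated for $p$ written in homogeneous form, where the symmetry computation is transparent. Beyond this bookkeeping, the swap argument is short. It is worth noting as a sanity check that it requires neither the full DPI nor any stochastic matrix decomposition: only invariance under permutations is used. This is consistent with the paper's claim that uniqueness first fails at $\numsamples=6$, where $p$ becomes quadratic and the swap symmetries leave more than one degree of freedom.
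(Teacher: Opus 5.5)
Your proposal is correct and follows essentially the same route as the paper's proof. You write $G$ as a polynomial of degree at most $k$, factor out $\det(F)^2$ via the divisibility lemma, reduce the quotient to a linear form $\sum_{i,j} c_{ij}F_{ij}$ using $\sum_{i,j}F_{ij}=1$, and use row and column relabeling invariance to force all four coefficients to be equal. You are also more careful than the paper on several points it leaves implicit: you derive swap invariance from both data processing inequalities, pass it from $G$ to the quotient via density of $\{\det F\neq 0\}$ in the simplex, and check that the common coefficient is nonnegative (the paper silently assumes a positive constant when rescaling to $16$).
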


Since $16 \det(\jointdistribution)^{2}$ is unique up to a scalar multiple, the factor of $16$ in $16 \det(\jointdistribution)^{2}$ was chosen to make $16 \det(\jointdistribution)^{2}$ \textit{unit-normalized} (\Cref{def:unit-normalization}). Observe that
\begin{align*}
    16 \det\begin{pmatrix}
        1/2 & 0 \\
        0 & 1/2
    \end{pmatrix}^{2} = 1,
\end{align*}
and that any $\jointdistribution \neq \begin{pmatrix}
    1/2 & 0 \\
    0 & 1/2
\end{pmatrix}$ will produce a weakly smaller value of $16 \det(\jointdistribution)^{2}$.

\subsection{Non-uniqueness of DMI}\label{sec:non-uniqueness-results}

While $16 \det(\jointdistribution)^{2}$ of DMI (\Cref{thm:dmi-is-unique}) is the unique unit-normalized mutual information with an unbiased fixed-sample estimator on a binary action space for $4$ or $5$ samples, this uniqueness property disappears once $6$ samples have been collected. We state this fact in \Cref{thm:dmi-is-not-unique-at-six-samples}, and devote the rest of this section to showcasing two lemmas (\Cref{lma:stochastic-matrix-decomposition} and \Cref{lma:binary-alphabet-data-processing-inequality-for-mutual-informations}) used to prove it.

\begin{restatable}[DMI is not the unique mutual information at 6 samples]{theorem}{dminotunique}
\label{thm:dmi-is-not-unique-at-six-samples}
    In a $2 \times 2$ action space, the function
    \begin{align*}
        \mutualinformationabstract(\jointdistribution) = 16 \det(\jointdistribution)^{2} (1 + (\jointdistributionentry{1}{1} + \jointdistributionentry{1}{2} - \jointdistributionentry{2}{1} - \jointdistributionentry{2}{2})^{2} + (\jointdistributionentry{1}{1} + \jointdistributionentry{2}{1} - \jointdistributionentry{1}{2} - \jointdistributionentry{2}{2})^{2}),
    \end{align*}
    is a mutual information, has a $6$-sample unbiased estimator, and is not equal to DMI up to a scalar multiple.
\end{restatable}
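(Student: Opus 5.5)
The plan is to verify the three claims separately: existence of the estimator, the mutual information axioms, and that $\mutualinformationabstract$ is not a multiple of DMI. Throughout, write $\rowsum{1},\rowsum{2}$ and $\columnsum{1},\columnsum{2}$ for the row and column sums of $\jointdistribution$, and set $x=\rowsum{1}-\rowsum{2}$ and $y=\columnsum{1}-\columnsum{2}$. Then
\[
\mutualinformationabstract(\jointdistribution)=16\det(\jointdistribution)^2\,(1+x^2+y^2).
\]

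\textbf{Estimator.} $\mutualinformationabstract$ is a polynomial of degree $4+2=6$ in the entries of $\jointdistribution$. By \Cref{lma:mutual-informations-are-polynomials} it therefore has a $6$-sample unbiased estimator. For an explicit construction, homogenize the constant $1$ as $(\sum_{i,j}\jointdistributionentry{i}{j})^2$, so that $\mutualinformationabstract$ is a homogeneous sextic. Each monomial $\prod \jointdistributionentry{i}{j}^{a_{ij}}$ is then estimated by the indicator that the first six samples realize that multiset pattern, divided by the multinomial coefficient.

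\textbf{Not a multiple of DMI.} Evaluate the ratio $\mutualinformationabstract/(16\det^2)=1+x^2+y^2$ at two points. At $\mathrm{diag}(1/2,1/2)$ it equals $1$. At $\mathrm{diag}(p,1-p)$ with $p\ne 1/2$ it equals $1+2(2p-1)^2$. Since the ratio is not constant, $\mathrm{Mutual}$ information $\mutualinformationabstract$ is not a scalar multiple of DMI.

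\textbf{Zero on independent play.} This is immediate, since $\det(\jointdistribution)=0$ for every rank-one $\jointdistribution$.

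\textbf{Row data processing inequality.} Let $S$ be a $2\times2$ column-stochastic matrix, written as $S=\begin{pmatrix}1-a & b\\ a & 1-b\end{pmatrix}$. Set $s=\det S=1-a-b$ and $d=b-a$. Three facts drive the argument:
\begin{itemize}
\item $\det(S\jointdistribution)=s\det(\jointdistribution)$.
\item The column sums are unchanged, so $y$ is unchanged.
\item The new row difference is $x'=(1-2a)\rowsum{1}-(1-2b)\rowsum{2}=sx+d$.
\end{itemize}
Since $a,b\in[0,1]$, we have $|d|\le 1-|s|$. Because $s^2\le1$, the $y^2$ term can be dropped from the comparison. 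It therefore suffices to show
\[
s^2\bigl(1+(sx+d)^2\bigr)\le 1+x^2 .
\]
With $t=|s|$ and $u=|x|\le 1$, it is enough to prove
\[
t^2\bigl(1+(tu+1-t)^2\bigr)\le 1+u^2 \quad\text{for } t,u\in[0,1].
\]

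\textbf{Main obstacle: the two-variable inequality.} This elementary inequality is the step I expect to need the most care. It holds with equality at $t=1$. I would first try to show that the left side is nondecreasing in $t$ on $[0,1]$. Each factor $t^2$ and $1+(tu+1-t)^2$ is nonnegative, but the second factor decreases in $t$ when $u<1$, so monotonicity needs an actual derivative bound. The derivative is
\[
2t\bigl(1+(tu+1-t)^2\bigr)-2t^2(1-u)(tu+1-t).
\]
Using $tu+1-t\le1$ and $t(1-u)\le1$, the subtracted term is at most $2t(tu+1-t)$. Since $1+z^2\ge z$ for all $z$, the derivative is nonnegative. Hence the left side is maximized at $t=1$, where it equals $1+u^2$.

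If instead one wants to follow the paper's route, the same computation can be organized with \Cref{lma:stochastic-matrix-decomposition}, which decomposes $S$ into simple garblings, together with \Cref{lma:binary-alphabet-data-processing-inequality-for-mutual-informations}. That reduces the check to one-parameter garblings, where the inequality above specializes directly.

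\textbf{Column data processing inequality.} The roles of $x$ and $y$ swap under transposition, so the column case is symmetric. A row-stochastic $T$ fixes the row sums, multiplies the determinant by $\det T$, and sends $y\mapsto sy+d$. The same inequality then applies.
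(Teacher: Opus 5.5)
Your proof is correct, but it verifies the data processing inequality by a different route from the paper.

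The paper never handles a general garbling directly. It invokes \Cref{lma:binary-alphabet-data-processing-inequality-for-mutual-informations}, which rests on the decomposition in \Cref{lma:stochastic-matrix-decomposition}. That lemma reduces the row inequality to two checks: invariance under row permutations, and a first-order condition along the one-parameter garbling $\begin{pmatrix}1&\lambda\\0&1-\lambda\end{pmatrix}$. The paper then computes all four partial derivatives of $G$. The condition becomes $32\det(F)^2\bigl(-1-x^2-y^2+x(1-x)\bigr)\le 0$, which holds because the quadratic $-1+x-2x^2$ is always negative.

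You instead treat an arbitrary column-stochastic $S$ in one step. You use three facts: $\det(SF)=\det(S)\det(F)$, the column marginal is unchanged, and the row contrast updates affinely as $x\mapsto sx+d$ with $|s|+|d|\le 1$. Together these reduce everything to $t^2\bigl(1+(tu+1-t)^2\bigr)\le 1+u^2$ on $[0,1]^2$, and your monotonicity argument for that inequality is valid.

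What your approach buys:
\begin{itemize}
\item It is self-contained and global. It needs neither the decomposition lemma nor the step of integrating an infinitesimal condition along a path.
\item It shows why multiplying $\det(F)^2$ by a suitable function of the two marginal contrasts keeps the mutual information axioms.
\end{itemize}

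What the paper's approach buys:
\begin{itemize}
\item It is a generic first-order test that applies to any polynomial candidate without exploiting special structure, which suits searching for such $G$.
\item It comes with an explicit estimator: a $4$-sample DMI estimator multiplied by a $2$-sample estimator of $1+x^2+y^2$. You instead cite \Cref{lma:mutual-informations-are-polynomials} and homogenize.
\end{itemize}

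Finally, your non-multiplicity argument explicitly covers the ``up to a scalar multiple'' clause: the ratio $G/(16\det^2)=1+x^2+y^2$ is nonconstant where $\det\neq 0$. The paper only compares the homogenized polynomials for equality.
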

Intuitively, the example mutual information in \Cref{thm:dmi-is-not-unique-at-six-samples} is DMI multiplied by an additional term. The additional term is based on the marginals of the row and column players taking a given action: the first term, $(\jointdistributionentry{1}{1} + \jointdistributionentry{1}{2} - \jointdistributionentry{2}{1} - \jointdistributionentry{2}{2})^2$, is equivalently $(\Pr[\randomrow{} = 1] - \Pr[\randomrow{} = 2])^2$, a measure of the marginal difference in the frequency of the row player's actions. Similarly, the second term is a measure of the difference in frequency for the column player. Thus, the example $6$-sample mutual information provides a slightly higher payoff than DMI if the \textit{marginal} action probabilities of the players are non-uniform. We construct an estimator for this additional term and use it, along with existing DMI estimators, to construct an estimator for the function $\mutualinformationabstract$ in \Cref{def:six-sample-estimator}.

To prove \Cref{thm:dmi-is-not-unique-at-six-samples}, we make use much of a much more tractable variant of the data processing inequality which holds on a $2 \times 2$ action space. We derive it from \Cref{lma:stochastic-matrix-decomposition}, which states that any $2 \times 2$ column stochastic matrix can be written as the product of four ``simple'' stochastic matrices.

\begin{restatable}[Stochastic matrix decomposition]{lemma}{stochmatrixdecomp}
\label{lma:stochastic-matrix-decomposition}
    For any $2 \times 2$ column-stochastic matrix $S$, there exist $\lambda_{1}, \lambda_{2} \in [0, 1]$ such that $S$ can be written as:
    \begin{align*}
    S = \left( \lambda_{2} \begin{pmatrix}
        1 & 1 \\
        0 & 0
    \end{pmatrix} + (1 - \lambda_{2}) \begin{pmatrix}
        1 & 0 \\
        0 & 1
    \end{pmatrix} \right) \begin{pmatrix}
        0 & 1 \\
        1 & 0
    \end{pmatrix} \left( \lambda_{1} \begin{pmatrix}
        1 & 1 \\
        0 & 0
    \end{pmatrix} + (1 - \lambda_{1}) \begin{pmatrix}
        1 & 0 \\
        0 & 1
    \end{pmatrix} \right),
    \end{align*}
    up to row permutation.
\end{restatable}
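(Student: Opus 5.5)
The plan is a direct computation: multiply out the right-hand side as a function of $(\lambda_1,\lambda_2)$, compare it with a general $2\times 2$ column-stochastic matrix, and solve for the parameters. I use the row-permutation freedom to land in the case where the solution lies in $[0,1]$.

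Write $E = \begin{pmatrix} 1 & 1 \\ 0 & 0 \end{pmatrix}$ and $A(\lambda) = \lambda E + (1-\lambda) I = \begin{pmatrix} 1 & \lambda \\ 0 & 1-\lambda \end{pmatrix}$, and let $P$ be the swap matrix. First I compute $P A(\lambda_1) = \begin{pmatrix} 0 & 1-\lambda_1 \\ 1 & \lambda_1 \end{pmatrix}$. Multiplying on the left by $A(\lambda_2)$ then gives
\begin{align*}
A(\lambda_2) P A(\lambda_1) = \begin{pmatrix} \lambda_2 & 1-\lambda_1+\lambda_1\lambda_2 \\ 1-\lambda_2 & \lambda_1(1-\lambda_2) \end{pmatrix}.
\end{align*}
As a sanity check, this is column-stochastic, as it must be.

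A general column-stochastic $S$ has the form $\begin{pmatrix} a & b \\ 1-a & 1-b \end{pmatrix}$ with $a,b\in[0,1]$. Matching the first column forces $\lambda_2 = a$. Matching the bottom-right entry forces $\lambda_1(1-a) = 1-b$. Since the columns sum to one, the remaining entries then match automatically.

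If $a<1$, the bottom-right equation gives $\lambda_1 = (1-b)/(1-a)$. This is nonnegative, and it is at most $1$ exactly when $b \ge a$. If $a = 1$, the equation $\lambda_1(1-a) = 1-b$ requires $b = 1$, and then any $\lambda_1$ works (say $\lambda_1 = 0$). So the representation exists whenever $b \ge a$.

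The only real obstacle is the case $b < a$, and this is exactly where the row-permutation clause is used. Swapping the rows of $S$ gives $PS = \begin{pmatrix} 1-a & 1-b \\ a & b \end{pmatrix}$. This has parameters $a' = 1-a$ and $b' = 1-b$, and $b' \ge a'$ holds precisely when $a \ge b$. So whenever $b<a$, the matrix $PS$ falls into the case already handled. We obtain $\lambda_2 = 1-a$ and $\lambda_1 = b/a$, where $a>0$ because $a > b \ge 0$.

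Thus every $S$, possibly after one row swap, equals $A(\lambda_2)PA(\lambda_1)$ for some $\lambda_1,\lambda_2\in[0,1]$, which proves the lemma.
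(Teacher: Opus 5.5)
Your proof is correct and follows the paper's argument. Like the paper, you multiply out the product, read off $\lambda_2$ as the top-left entry and $\lambda_1$ as the bottom-right entry divided by the bottom-left entry, and use a single row swap to normalize.

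Your normalization $b \ge a$ (top-right entry at least the top-left entry) is the right one, since it is exactly the condition for $\lambda_1 \le 1$. The paper instead normalizes by $1-\varepsilon \le \varepsilon$, which does not guarantee $\lambda_1 = (1-\delta)/\varepsilon \le 1$: for example, $\varepsilon = 0.6$, $\delta = 0$ gives $\lambda_1 = 5/3$, and only the row-swapped matrix admits a valid decomposition.
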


This result has the elegant implication that any row strategy decomposes into an alternating sequence of applying strategies of a) moving some $\lambda$ probability from action $2$ to action $1$ and b) swapping the labels of rows, possibly skipping the final row swap. Since the data processing inequality must hold for any $\jointdistribution$, it must hold for each application of a single strategy in the sequence, and we can simply show that it holds for arbitrary $\jointdistribution$ and both matrix types. We will use this approach to simplify the binary-alphabet data processing inequality (\Cref{lma:binary-alphabet-data-processing-inequality-for-mutual-informations}).

\paragraph{Geometric Interpretation of \Cref{lma:stochastic-matrix-decomposition}}
We can interpret this strategy decomposition geometrically. For a given initial joint distribution matrix $\jointdistribution$, \Cref{prop:stochastic-diagonal-decomposition} tells us that we may decompose $\jointdistribution$ into a diagonal matrix $\diagonalmatrix$ and an initial column-stochastic matrix $T$. The diagonal matrix $\diagonalmatrix$ can be interpreted as an initial weighting on the frequency with which the row player sees signal $1$ versus $2$, with $T$ being some initial garbling the player cannot remove. The combined strategy and decomposition terms $ST$ determine the final garbling applied to the signal observed:
\begin{align*}
    T = \begin{pmatrix}
        1 - \xparallelograminit & \yparallelograminit \\
        \xparallelograminit & 1 - \yparallelograminit
    \end{pmatrix} \quad \text{ and } \quad S T = \begin{pmatrix}
        1 - \xparallelogram & \yparallelogram \\
        \xparallelogram & 1 - \yparallelogram
    \end{pmatrix} \text{.}
\end{align*}
We note that not every column-stochastic matrix can be written as $ST$ for every $T$: in fact, we can characterize the reachable area as a parallelogram in the $(\varepsilon, \delta)$ plane with corners at $(0, 1)$, $(1 - \delta, 1 - \varepsilon)$, $(1, 0)$, and $(\varepsilon, \delta)$ (\Cref{fig:valid-stochastic-region}).

\Cref{lma:stochastic-matrix-decomposition} says that any point within the feasible $(\varepsilon, \delta)$ parallelogram, characterized by the starting point $(\varepsilon_0, \delta_0)$ from $T$, is reachable through a series of at most four moves:
\begin{enumerate}
    \item Move toward the corner $(0, 1)$
    \item Switch current coordinates $(\varepsilon_1, \delta_1)$ to $(1 - \varepsilon_1, 1 - \delta_1)$ (i.e., permute rows)
    \item Move toward the corner $(0, 1)$
    \item (optionally) Switch current coordinates $(\varepsilon_2, \delta_2)$ to $(1 - \varepsilon_2, 1 - \delta_2)$ (i.e., permute rows)
\end{enumerate}
The step of moving toward $(0, 1)$ is equivalent to shifting weight toward action 1. Two visualizations of this process using three and four moves are given in \Cref{fig:procedure-one} and \Cref{fig:procedure-two}, respectively.

\begin{figure}[ht]
    \centering
    
    \begin{subfigure}[t]{0.31\textwidth}
        \centering
        \includegraphics[width=\textwidth]{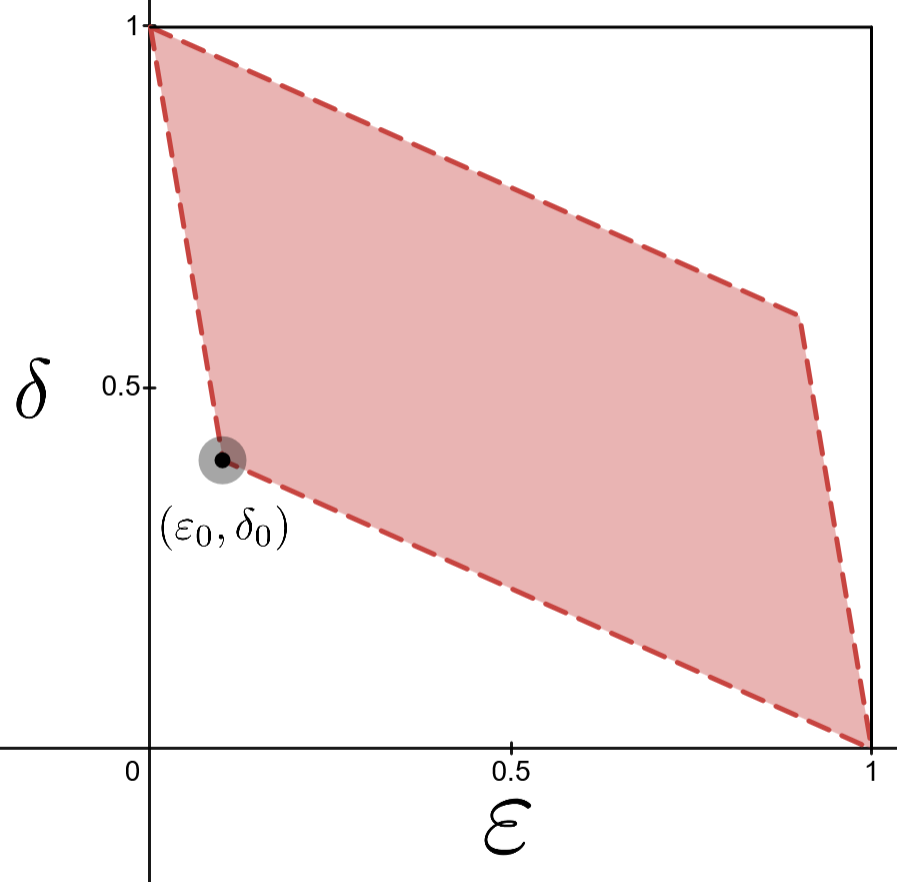}
        \caption{The set of valid $(\xparallelogram, \yparallelogram)$ that correspond to $S T$ when $(\xparallelograminit, \yparallelograminit)$ correspond to $T$}
        \label{fig:valid-stochastic-region}
    \end{subfigure}
    \hfill
    \begin{subfigure}[t]{0.31\textwidth}
        \centering
        \includegraphics[width=\textwidth]{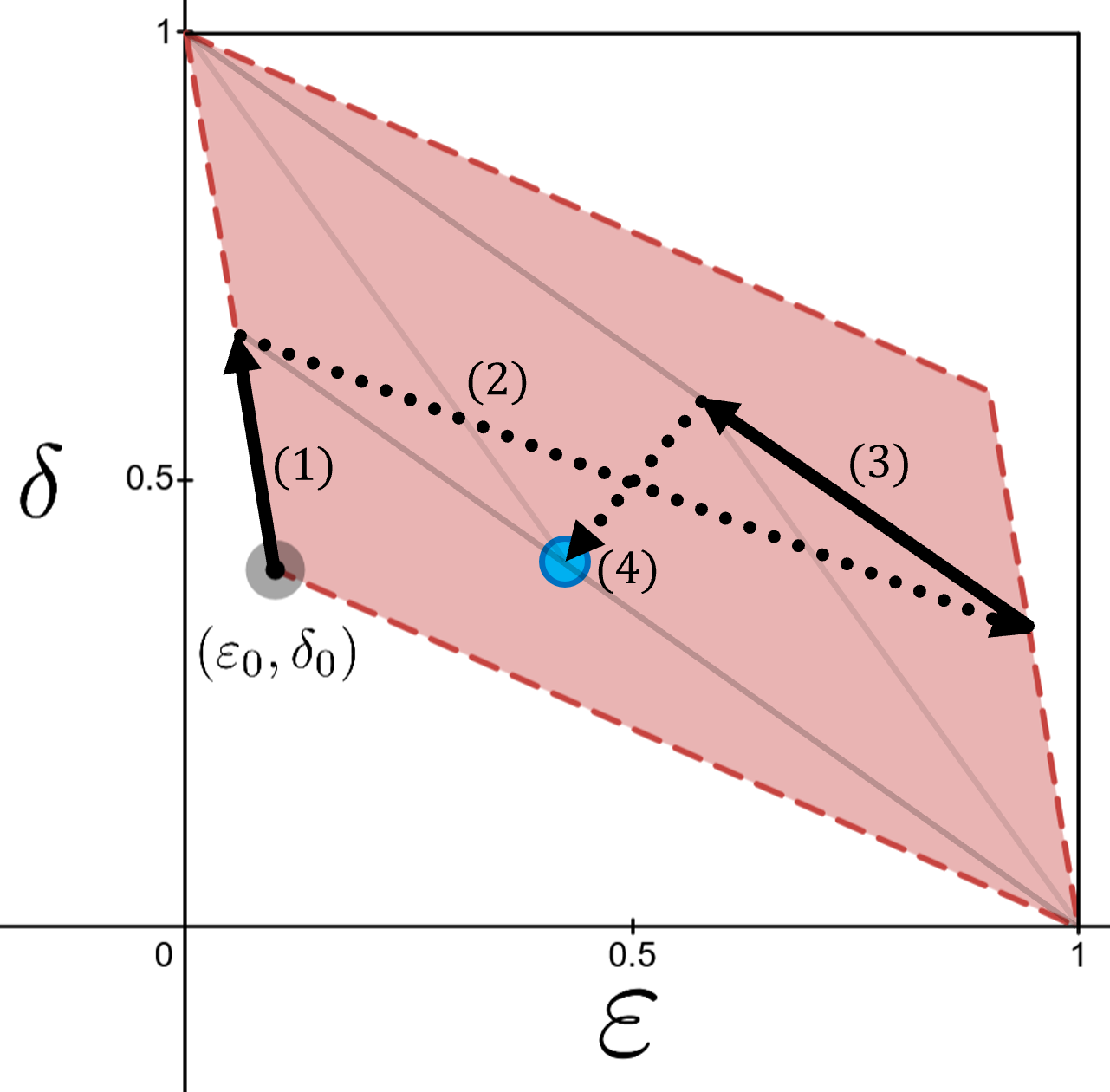}
        \caption{Moving from $(\xparallelograminit, \yparallelograminit)$ to an example $(\xparallelogram, \yparallelogram)$ (the blue dot) by a series of 4 steps: moving toward $(0, 1)$, permuting, moving toward $(0, 1)$, and permuting one final time. Each new parallelogram induced is displayed as transparent grey lines.}
        \label{fig:procedure-one}
    \end{subfigure}
    \hfill
    \begin{subfigure}[t]{0.31\textwidth}
        \centering
        \includegraphics[width=\textwidth]{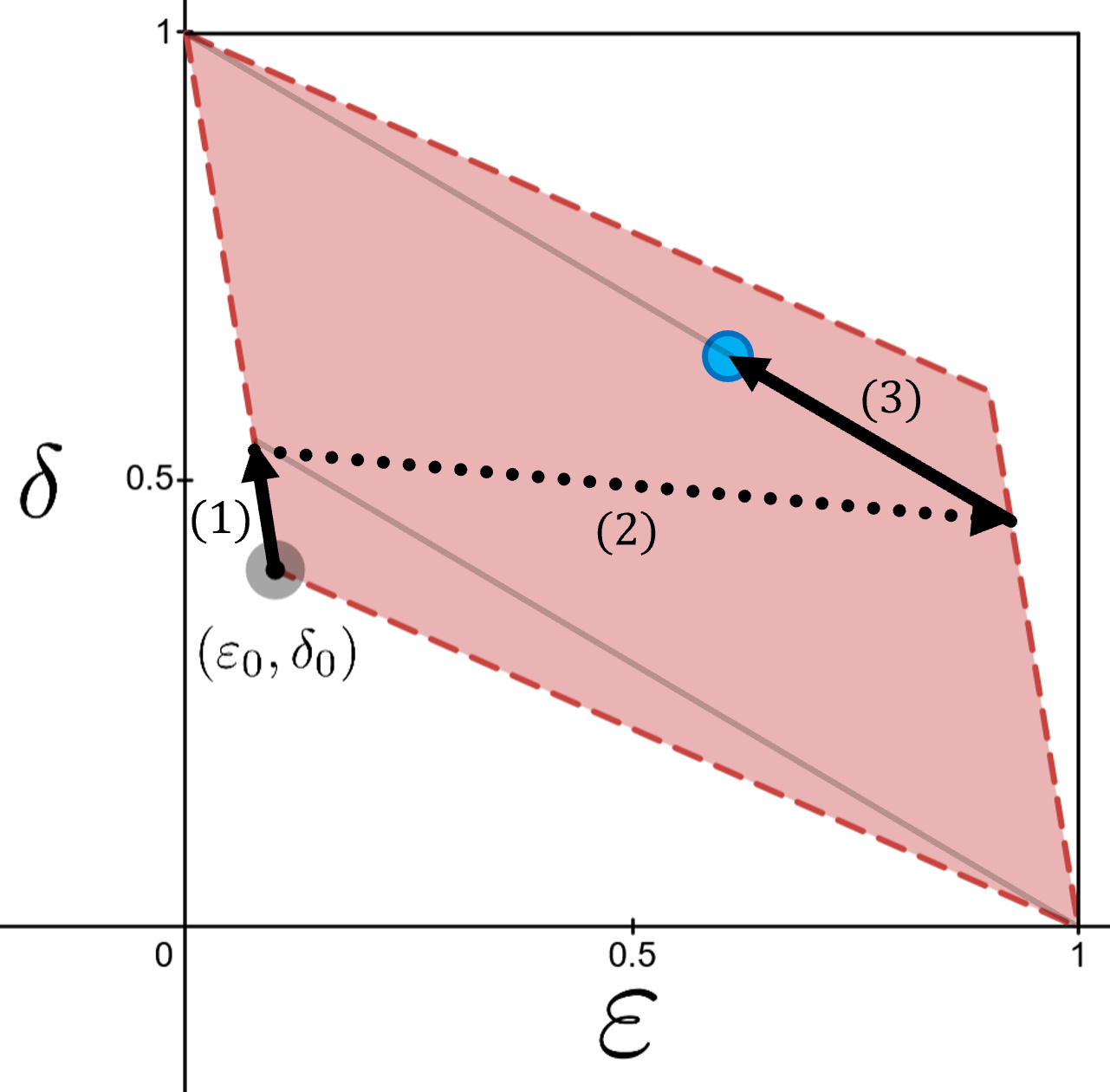}
        \caption{Moving from $(\xparallelograminit, \yparallelograminit)$ to an example $(\xparallelogram, \yparallelogram)$ (the blue dot) by a series of 3 steps: moving toward $(0, 1)$, permuting, and then moving toward $(0, 1)$ a second time. Each new parallelogram induced is displayed as transparent grey lines.}
        \label{fig:procedure-two}
    \end{subfigure}
    
    \caption{Illustrations of geometric interpretation of \Cref{lma:stochastic-matrix-decomposition}.}
    \label{fig:overall}
\end{figure}

A variant of \Cref{lma:stochastic-matrix-decomposition} can be found in \citet{VS-25}, although they provide neither an explicit procedure for creating the decomposition nor our geometric interpretation.

Using \Cref{lma:stochastic-matrix-decomposition}, it suffices to check the data processing inequality on the two ``simple'' matrices used in the decomposition, i.e., the matrix which moves some $\lambda$ probability from action $2$ to action $1$, and the matrix which swaps the labels of rows. By evaluating the data processing inequality exclusively on these two kinds of matrices, we obtain \Cref{lma:binary-alphabet-data-processing-inequality-for-mutual-informations}.

\begin{restatable}[Binary alphabet data processing inequality for MIs]{lemma}{binaryalphabet}
\label{lma:binary-alphabet-data-processing-inequality-for-mutual-informations}
    For a $2 \times 2$ action space, a function of the joint distribution $\mutualinformationabstract : \Delta([2] \times [2]) \rightarrow \mathbb{R}$ satisfies the row data processing inequality, i.e.,
    \begin{align*}
        \mutualinformationabstract(S \jointdistribution) \leq \mutualinformationabstract(\jointdistribution) \text{ for all joint distributions $\jointdistribution$ and column-stochastic $S$,}
    \end{align*}
    if and only if it satisfies both row permutation invariance, i.e.,
    \begin{align*}
        \mutualinformationabstract(\alphabetpermutation \jointdistribution) = \mutualinformationabstract(\jointdistribution) \text{ for all joint distributions $\jointdistribution$ and permutation matrices $\alphabetpermutation$,}
    \end{align*}
    and the inequality
    \begin{align*}
        0 \geq \left[ \frac{\partial \mutualinformationabstract}{\partial \jointdistributionentry{1}{1}} (\jointdistribution) - \frac{\partial \mutualinformationabstract}{\partial \jointdistributionentry{2}{1}} (\jointdistribution) \right] \jointdistributionentry{2}{1} + \left[ \frac{\partial \mutualinformationabstract}{\partial \jointdistributionentry{1}{2}} (\jointdistribution) - \frac{\partial \mutualinformationabstract}{\partial \jointdistributionentry{2}{2}} (\jointdistribution) \right] \jointdistributionentry{2}{2} \text{,}
    \end{align*}
    for all joint distributions $\jointdistribution$.
\end{restatable}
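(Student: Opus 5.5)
The plan is to prove both directions using \Cref{lma:stochastic-matrix-decomposition}, which reduces every row garbling to compositions of two kinds of elementary matrices. The first is the swap $\alphabetpermutation = \begin{pmatrix} 0 & 1 \\ 1 & 0 \end{pmatrix}$. The second is the ``shift'' $M_\lambda = \lambda \begin{pmatrix} 1 & 1 \\ 0 & 0 \end{pmatrix} + (1-\lambda) I$, which moves a $\lambda$ fraction of row $2$ onto row $1$. Concretely,
\begin{align*}
M_\lambda \jointdistribution = \begin{pmatrix} \jointdistributionentry{1}{1} + \lambda \jointdistributionentry{2}{1} & \jointdistributionentry{1}{2} + \lambda \jointdistributionentry{2}{2} \\ (1-\lambda)\jointdistributionentry{2}{1} & (1-\lambda)\jointdistributionentry{2}{2} \end{pmatrix}.
\end{align*}
Throughout I will treat $\mutualinformationabstract$ as differentiable, as in the polynomial setting of \Cref{lma:mutual-informations-are-polynomials}. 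The partial derivatives in the statement are then those of a polynomial extension; the displayed combination is a derivative along a direction inside the simplex, so it is well defined.

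\emph{Necessity.} Suppose the row data processing inequality holds. Since $\alphabetpermutation$ is column-stochastic and $\alphabetpermutation^2 = I$, we get $\mutualinformationabstract(\alphabetpermutation\jointdistribution) \le \mutualinformationabstract(\jointdistribution) = \mutualinformationabstract(\alphabetpermutation\alphabetpermutation\jointdistribution) \le \mutualinformationabstract(\alphabetpermutation \jointdistribution)$, which is permutation invariance. Next, consider $\phi(\lambda) = \mutualinformationabstract(M_\lambda \jointdistribution)$ on $[0,1]$. Since $M_0 = I$, the data processing inequality gives $\phi(\lambda) \le \phi(0)$, hence $\phi'(0) \le 0$ (a one-sided derivative). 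By the chain rule, $\phi'(0)$ is exactly the bracketed expression in the statement, which gives the inequality.

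\emph{Sufficiency.} First check that the shifts compose as a semigroup: $M_\mu M_\lambda = M_{1-(1-\mu)(1-\lambda)}$. This follows because row $2$ of $M_\lambda\jointdistribution$ is $(1-\lambda)$ times row $2$ of $\jointdistribution$. Hence $\phi'(\lambda) = \frac{d}{d\mu}\big|_{\mu=0}\mutualinformationabstract(M_\mu M_\lambda\jointdistribution)\cdot\frac{1}{1-\lambda}$ for $\lambda<1$. This equals the assumed derivative inequality applied at the joint distribution $M_\lambda \jointdistribution$, divided by $1-\lambda>0$, so $\phi'(\lambda) \le 0$. Therefore $\phi$ is nonincreasing on $[0,1)$, and by continuity $\mutualinformationabstract(M_\lambda\jointdistribution) \le \mutualinformationabstract(\jointdistribution)$ for every $\lambda\in[0,1]$.

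Now take an arbitrary column-stochastic $S$. By \Cref{lma:stochastic-matrix-decomposition}, $S = P' M_{\lambda_2}\alphabetpermutation M_{\lambda_1}$ with $P'$ a row permutation. Apply the monotonicity and invariance steps one factor at a time, each at the current joint distribution:
\begin{align*}
\mutualinformationabstract(S\jointdistribution) = \mutualinformationabstract(M_{\lambda_2}\alphabetpermutation M_{\lambda_1}\jointdistribution) \le \mutualinformationabstract(\alphabetpermutation M_{\lambda_1}\jointdistribution) = \mutualinformationabstract(M_{\lambda_1}\jointdistribution) \le \mutualinformationabstract(\jointdistribution).
\end{align*}

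The main obstacle is the sufficiency step: turning the derivative inequality at $\lambda=0$ into monotonicity along the whole path. The semigroup identity is what makes this work, because it shows that the derivative at any intermediate point $\lambda$ is a positive multiple of the hypothesis applied at the point $M_\lambda\jointdistribution$. Two minor points also need care. At the endpoint $\lambda=1$, I handle it by continuity. On the boundary of the simplex, the stated hypothesis still gives the one-sided derivative bound I need, since it holds for all joint distributions.
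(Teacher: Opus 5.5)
Your proof is correct and follows essentially the same route as the paper. Necessity comes from $P^2=I$ and the one-sided derivative of $\lambda\mapsto\mutualinformationabstract(M_\lambda\jointdistribution)$ at $0$. Sufficiency combines the decomposition of \Cref{lma:stochastic-matrix-decomposition} with the semigroup reparametrization $M_\mu M_\lambda=M_{1-(1-\mu)(1-\lambda)}$, which makes the derivative at any intermediate $\lambda$ equal to $\tfrac{1}{1-\lambda}$ times the hypothesis evaluated at $M_\lambda\jointdistribution$. The differences are cosmetic: you compute $M_\lambda\jointdistribution$ directly instead of going through the stochastic-diagonal decomposition, and you note explicitly that the bracketed expression is a tangential directional derivative, which the paper leaves implicit.
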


\Cref{lma:binary-alphabet-data-processing-inequality-for-mutual-informations} is important because it decreases the dimensionality of the space of inequalities implied by the data processing inequality.\footnote{Ordinarily, the data processing inequality when the row and column alphabet sizes $\numactions = \numactionscolumn = 2$ states that $\mutualinformationabstract(S \jointdistribution), \mutualinformationabstract(\jointdistribution \transpose{S}) \leq \mutualinformationabstract(\jointdistribution)$ for any $2 \times 2$ column-stochastic $S$ and $2 \times 2$ joint distribution $\jointdistribution$. Since the columns of $S$ sum to $1$, specifying $S$ requires $2$ free parameters, and since the entries of $\jointdistribution$ must sum to $1$, specifying $\jointdistribution$ requires $3$ free parameters. Therefore, to check whether $\mutualinformationabstract$ is a mutual information, one must search a $5$-dimensional space of parameters to identify potential violations of the data processing inequality. With \Cref{lma:binary-alphabet-data-processing-inequality-for-mutual-informations}, only the joint distribution $\jointdistribution$ must be searched over, reducing a $5$-dimensional parameter space to a $3$-dimensional parameter space, which is much more tractable.} \Cref{lma:binary-alphabet-data-processing-inequality-for-mutual-informations} is used in the proof of \Cref{thm:dmi-is-not-unique-at-six-samples} to verify that the function $G(F)$ given in \Cref{thm:dmi-is-not-unique-at-six-samples} with a $6$-sample estimator satisfies the data processing inequality.

\subsection{Convex-Minimal Estimators}\label{sec:optimal-estimator-tools}

So far, we have focused our attention on directly finding mutual informations which have fixed-sample estimators without considering what those estimators might be. A single mutual information, if it has any unbiased estimator, will in fact have many such estimators (\Cref{obs:make-new-estimators}). This section deals with characterizing the \textit{best} fixed-sample estimator for a mutual information for general alphabet sizes $\numactions, \numactionscolumn$ and a general number of samples $\numsamples$.

We argue that a better mutual information estimator will reduce the spread of payments to agents across sample instances, making it more predictable to agents. Formally, we consider the best estimator for a fixed number of samples to be \textit{convex-minimal}. For a pair of unbiased estimators of the same function of the joint distribution, we define convex dominance to compare the spread of the two estimators.

\begin{definition}[Convex Dominance]\label{def:convex-domiance}
    An unbiased estimator $\estimator$ \textit{convex-dominates} an unbiased estimator $\estimator'$ for the same function if, for any joint distribution $\jointdistribution$ and convex function $\convexfunction:\R\to\R$,
    \begin{align*}
        \expect_{(\randomrow{\timeindex}, \randomcolumn{\timeindex}) \iid \jointdistribution}[\convexfunction(\estimator'((\randomrow{1}, \randomcolumn{1}), \dots, (\randomrow{\numsamples}, \randomcolumn{\numsamples})))] \leq \expect_{(\randomrow{\timeindex}, \randomcolumn{\timeindex}) \iid \jointdistribution}[\convexfunction(\estimator((\randomrow{1}, \randomcolumn{1}), \dots, (\randomrow{\numsamples}, \randomcolumn{\numsamples})))] \text{.}
    \end{align*}
    We say $\estimator$ \textit{strictly} convex-dominates $\estimator'$ if $\estimator$ convex-dominates $\estimator'$ and there is some $\jointdistribution$ and $\convexfunction$ such that the inequality is strict.
\end{definition}

An estimator which is convex-dominated by \textit{all} other estimators of the same function of the joint distribution is the best estimator for minimizing spread.

\begin{definition}[Convex-Minimal]\label{def:convex-minimal}
    A $\numsamples$-sample unbiased estimator $\estimator^*$ is \textit{convex-minimal} if $\estimator^*$ is convex-dominated by every other $\numsamples$-sample estimator $\estimator$ for the same function. 
\end{definition}
If a convex-minimal unbiased estimator is strictly convex-dominated by every other estimator, then it is the \textit{unique} convex-minimal unbiased estimator for the function of the joint distribution which it estimates.

For the specific choice of $\convexfunction(x) = x^{2}$, \Cref{def:convex-minimal} enforces a minimum-variance unbiased estimator. It is thus a strong requirement for ``minimal spread,'' as it requires the expectation of \textit{any} convex function of a convex-minimal estimator be smaller than that of any other estimator.\footnote{The convex order is similar to second-order stochastic dominance when constrained to random variables with the same mean, but reverses the order of elements and does not require that the convex function $\convexfunction : \mathbb{R} \rightarrow \mathbb{R}$ in the definition be weakly increasing.}

We now seek to understand the form of convex-minimal fixed-sample estimators. We will show that any mutual information with a $\numsamples$-sample estimator has a unique $\numsamples$-sample sample-order-invariant estimator, which is also the unique $\numsamples$-sample convex-minimal estimator. 

We begin by introducing a procedure by which any $\numsamples$-sample estimator $\estimator$ can be converted into a new, sample-order-invariant $\numsamples$-sample estimator $\estimator^{*}$.

\begin{definition}[Permutation variance grinder]\label{def:permutation-variance-grinder}
    Given a $\numsamples$-sample unbiased estimator $\estimator:[\numactions]\times[\numactionscolumn]\to \R$, the \textit{permutation variance grinder} returns a new sample-order-invariant $\numsamples$-sample unbiased estimator $\estimator^{*}$ with the same expectation as $\estimator$ on any joint distribution $\jointdistribution$. Letting $\pi$ be a random variable denoting a permutation of the set of sample indices $[\numsamples]$ drawn uniformly at random, we define the new unbiased estimator as
    \begin{align*}
        \estimator^{*}((\randomrow{1}, \randomcolumn{1}), \dots, (\randomrow{\numsamples}, \randomcolumn{\numsamples})) := \expect_{\indexpermutation}[\estimator((\randomrow{\indexpermutation(1)}, \randomcolumn{\indexpermutation(1)}), \dots, (\randomrow{\indexpermutation(\numsamples)}, \randomcolumn{\indexpermutation(\numsamples)}))] \text{.}
    \end{align*}
\end{definition}

As $\estimator^{*}$ maintains the mean of $\estimator$ for any $\jointdistribution$, it is clear that any mutual information with a fixed-sample estimator in fact has a sample-order-invariant estimator.\footnote{This observation goes the other way, as well: given a sample-order-invariant estimator, we can symmetrically increase and decrease the payments for certain permutations to obtain a new estimator. As a result, every mutual information has either no nontrivial fixed-sample estimators or an infinite class of them.} For completeness, a proof of the following observation is provided in \Cref{app.section3proofs}.

\begin{restatable}{observation}{makenewestimators}\label{obs:make-new-estimators}
    If $\estimator$ is an unbiased estimator for some mutual information $\mutualinformationabstract$, then the function $\estimator^{*}$ produced by the permutation variance grinder is a sample-order-invariant unbiased estimator for $\mutualinformationabstract$.
\end{restatable}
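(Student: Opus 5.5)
The observation has two parts: $\estimator^{*}$ is sample-order-invariant, and it is an unbiased estimator for $\mutualinformationabstract$. I will prove each directly from \Cref{def:permutation-variance-grinder}. Write $X_t = (\randomrow{t}, \randomcolumn{t})$ and let $\Pi$ be the symmetric group on $[\numsamples]$, so that $\estimator^{*}(X_1,\dots,X_\numsamples) = \frac{1}{\numsamples!}\sum_{\sigma\in\Pi}\estimator(X_{\sigma(1)},\dots,X_{\sigma(\numsamples)})$.

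\emph{Sample-order-invariance.} Fix a permutation $\indexpermutation$ and set $Y_t = X_{\indexpermutation(t)}$. Then
\[
\estimator^{*}(Y_1,\dots,Y_\numsamples) = \frac{1}{\numsamples!}\sum_{\sigma\in\Pi}\estimator(X_{\indexpermutation(\sigma(1))},\dots,X_{\indexpermutation(\sigma(\numsamples))}).
\]
The map $\sigma\mapsto\indexpermutation\circ\sigma$ is a bijection of $\Pi$. Reindexing the sum by this bijection gives $\estimator^{*}(X_1,\dots,X_\numsamples)$. The one point needing care is the direction of composition: the $t$-th argument is $Y_{\sigma(t)} = X_{\indexpermutation(\sigma(t))}$. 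Either order of composition still yields a bijection of the group, so the conclusion holds regardless.

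\emph{Unbiasedness.} Fix any joint distribution $\jointdistribution$. The samples $X_1,\dots,X_\numsamples$ are i.i.d.\ from $\jointdistribution$, so they are exchangeable. Hence for each fixed $\sigma$, the vector $(X_{\sigma(1)},\dots,X_{\sigma(\numsamples)})$ has the same law as $(X_1,\dots,X_\numsamples)$, namely the product measure $\jointdistribution^{\otimes \numsamples}$. It follows that $\expect[\estimator(X_{\sigma(1)},\dots,X_{\sigma(\numsamples)})] = \expect[\estimator(X_1,\dots,X_\numsamples)] = \mutualinformationabstract(\jointdistribution)$.

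The sum defining $\estimator^{*}$ is a finite average, so linearity of expectation applies with no integrability issues: the sample space is finite and $\estimator$ is real-valued. Therefore $\expect[\estimator^{*}] = \frac{1}{\numsamples!}\sum_{\sigma}\mutualinformationabstract(\jointdistribution) = \mutualinformationabstract(\jointdistribution)$, and $\estimator^{*}$ is an unbiased estimator of $\mutualinformationabstract$.

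There is no genuine obstacle here; the whole argument rests on exchangeability of i.i.d.\ samples together with the group structure of $\Pi$. Note that nothing in the argument uses the mutual information axioms, so the same proof shows $\estimator^{*}$ is unbiased for any function that $\estimator$ estimates.
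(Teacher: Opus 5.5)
Your proof is correct and follows essentially the same route as the paper, which also exchanges the expectation over the uniformly random permutation with the expectation over the samples and uses that i.i.d.\ samples are exchangeable, so each permuted term has mean $\mutualinformationabstract(\jointdistribution)$. Your group-reindexing check of sample-order-invariance is a small addition: the paper treats that property as built into \Cref{def:permutation-variance-grinder} and only verifies unbiasedness.
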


It is a standard result in convex optimization that there is always an optimal solution that respects the symmetries of the problem \citep[e.g.,][Exercise 4.4]{BV-04}.  For estimates of a mutual information, it turns out that this standard construction, given by the permutation variance grinder and from \textit{any} $\numsamples$-sample estimator of a given mutual information, results in the \textit{unique} sample-order-invariant estimator for that mutual information.

\begin{restatable}[Mutual informations have at most one sample-order-invariant estimator]{proposition}{MIsampleorderinvariant}
\label{thm:any-mutual-information-has-at-most-one-sample-order-invariant-estimator}
    For any mutual information $\mutualinformationabstract$ on action space $[\numactions]\times[\numactionscolumn]$ and any fixed number of samples $\numsamples$, there exists at most one sample-order-invariant unbiased estimator of $\mutualinformationabstract$.
\end{restatable}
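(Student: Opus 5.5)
Suppose $\estimator_1$ and $\estimator_2$ are two sample-order-invariant $\numsamples$-sample unbiased estimators of the same $\mutualinformationabstract$. We will show they coincide. Both can be regarded as functions on the set of tally matrices $\tallyset{\numsamples}{\numactions}{\numactionscolumn}$. Their difference $h = \estimator_1 - \estimator_2$ then satisfies
\begin{align*}
    0 = \expect_{\tallymatrixrandom \sim \multinomialdistribution{\numsamples}{\jointdistribution}}[h(\tallymatrixrandom)] = \sum_{\tallymatrix \in \tallyset{\numsamples}{\numactions}{\numactionscolumn}} h(\tallymatrix) \frac{\numsamples!}{\prod_{i,j} \tallymatrixentry{i}{j}!} \prod_{i,j} \jointdistributionentry{i}{j}^{\tallymatrixentry{i}{j}}
\end{align*}
for every $\jointdistribution \in \Delta([\numactions] \times [\numactionscolumn])$. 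The goal is to conclude that every coefficient $h(\tallymatrix)$ vanishes.

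The key observation is that the right-hand side is a polynomial $P$ in the $\numactions\numactionscolumn$ entries of $\jointdistribution$. It is homogeneous of degree exactly $\numsamples$, and its monomials $\prod_{i,j} \jointdistributionentry{i}{j}^{\tallymatrixentry{i}{j}}$ are in bijection with tally matrices, so distinct $\tallymatrix$ give distinct monomials. Hence it suffices to show that $P$ is the zero polynomial; then every coefficient $h(\tallymatrix)\,\numsamples!/\prod \tallymatrixentry{i}{j}!$ is zero, and since the multinomial factor is positive, $h \equiv 0$.

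The one step requiring care, and the only real obstacle, is that $P$ is known to vanish only on the simplex, not on all of $\R^{\numactions \times \numactionscolumn}$. A polynomial vanishing on the simplex need not be zero in general; for example, $\sum_{i,j}\jointdistributionentry{i}{j} - 1$ vanishes there. Homogeneity resolves this. For any entrywise nonnegative matrix $X$ with positive total mass $s$, the matrix $X/s$ lies in the simplex, and
\begin{align*}
    P(X) = s^{\numsamples} P(X/s) = 0.
\end{align*}
So $P$ vanishes on the open positive orthant $\R_{>0}^{\numactions\numactionscolumn}$, which is a nonempty open set. A polynomial vanishing on a nonempty open set is identically zero; this follows by induction on the number of variables, or because all its partial derivatives vanish at an interior point. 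Therefore $P$ is the zero polynomial.

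Combining these steps gives $h \equiv 0$ on tally matrices. By sample-order-invariance, $\estimator_1$ and $\estimator_2$ agree on every sample sequence, since each sequence is determined up to order by its tally matrix. Thus at most one such estimator exists. Note that the mutual information axioms are not used: the argument shows that every function has at most one sample-order-invariant $\numsamples$-sample unbiased estimator.
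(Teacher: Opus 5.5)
Your proof is correct and takes the same route as the paper: write the difference of the two expectations as a homogeneous degree-$\numsamples$ polynomial in the entries of $\jointdistribution$, whose coefficients are the multinomial-weighted differences $h(\tallymatrix)$, and conclude that every coefficient vanishes. Your homogeneity argument, which extends vanishing from the simplex to the open positive orthant, makes explicit a step the paper leaves implicit when it says the polynomial is ``identically $0$ on all inputs.''
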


We will show that the estimator produced by the permutation variance grinder is the unique convex-minimal $\numsamples$-sample estimator for that mutual information. Since the sample-order-invariant $\numsamples$-sample estimator for a given mutual information is unique, the permutation variance grinder must produce the same estimator $\estimator^{*}$ for any two estimators $\estimator$, $\estimator'$ for the same mutual information.

We show that the resulting $\estimator^{*}$ is convex-dominated by any $\estimator$ which produced it, and is strictly convex-dominated if $\estimator$ is sample-order-dependent.

\begin{restatable}[Permutation variance grinder reduces spread]{lemma}{sampleorderinvariance}
\label{lma:sample-order-invariance-implies-convex-minimal}
    The output $\estimator^{*}$ of the permutation variance grinder is convex-dominated by $\estimator$, and is strictly convex-dominated if $\estimator$ is not sample-order-invariant.
\end{restatable}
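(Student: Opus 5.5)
The plan is a conditional Jensen argument. The key structural fact is that, for i.i.d. samples, the random vector $X = ((\randomrow{1}, \randomcolumn{1}), \dots, (\randomrow{\numsamples}, \randomcolumn{\numsamples}))$ is exchangeable. Write $X_\indexpermutation$ for the permuted sequence $((\randomrow{\indexpermutation(1)}, \randomcolumn{\indexpermutation(1)}), \dots, (\randomrow{\indexpermutation(\numsamples)}, \randomcolumn{\indexpermutation(\numsamples)}))$. Then $X_\indexpermutation$ has the same distribution as $X$ for every fixed permutation $\indexpermutation$. Hence, if $\indexpermutation$ is uniform and independent of $X$, the random variable $\estimator(X_\indexpermutation)$ has the same law as $\estimator(X)$. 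Consequently, for any convex $\convexfunction$, $\E[\convexfunction(\estimator(X_\indexpermutation))] = \E[\convexfunction(\estimator(X))]$.

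By \Cref{def:permutation-variance-grinder}, $\estimator^{*}(X) = \E_\indexpermutation[\estimator(X_\indexpermutation) \mid X]$. Conditional Jensen then gives
\begin{align*}
\convexfunction(\estimator^{*}(X)) \leq \E_\indexpermutation[\convexfunction(\estimator(X_\indexpermutation)) \mid X].
\end{align*}
Taking expectations over $X$ and using the equality above yields $\E[\convexfunction(\estimator^{*}(X))] \leq \E[\convexfunction(\estimator(X))]$ for every $\jointdistribution$ and every convex $\convexfunction$. Since both estimators have the same mean by \Cref{obs:make-new-estimators}, this is exactly the statement that $\estimator^{*}$ is convex-dominated by $\estimator$. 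No integrability issues arise, because all sample spaces are finite.

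For the strict part, I will exhibit a single pair $(\jointdistribution, \convexfunction)$ with strict inequality, taking $\convexfunction(x) = x^2$. By the law of total variance,
\begin{align*}
\E[\estimator(X_\indexpermutation)^2] - \E[\estimator^{*}(X)^2] = \E\left[\Var_\indexpermutation(\estimator(X_\indexpermutation) \mid X)\right].
\end{align*}
If $\estimator$ is not sample-order-invariant, there is a realized sequence $x$ and a permutation $\indexpermutation_0$ with $\estimator(x) \neq \estimator(x_{\indexpermutation_0})$. Then the conditional variance given $X = x$ is strictly positive.

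It therefore suffices to choose $\jointdistribution$ with all entries strictly positive, for instance the uniform distribution on $[\numactions] \times [\numactionscolumn]$. Every sequence then has positive probability, in particular $\Pr(X = x) > 0$, so the expected conditional variance is strictly positive and the inequality is strict.

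The only step needing care is the identity $\E[\convexfunction(\estimator(X_\indexpermutation))] = \E[\convexfunction(\estimator(X))]$, i.e. exchangeability. It follows because the product measure $\jointdistribution^{\otimes \numsamples}$ is invariant under coordinate permutations, together with independence of $\indexpermutation$ from $X$. I do not expect a real obstacle here; the strictness argument merely requires remembering that convex dominance needs only one witnessing $\jointdistribution$, and that a full-support $\jointdistribution$ suffices.
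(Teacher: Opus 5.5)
Your proof is correct and follows the paper's argument: Jensen's inequality over the uniformly random permutation, then independence of $\indexpermutation$ from the i.i.d.\ samples to remove the permutation, and for strictness a full-support (uniform) $\jointdistribution$ so that the order-dependent sequence has positive probability. The only difference is that you handle the strict case with $\convexfunction(x) = x^2$ and the law of total variance, which is a cleaner, fully explicit version of the paper's appeal to strict Jensen for a strictly convex $\convexfunction$.
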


Combining these three facts shows that the unique sample-order-invariant mutual information produced by the permutation variance grinder is the unique convex-minimal $\numsamples$-sample estimator for a given mutual information. 

\begin{restatable}[Permutation variance grinder returns unique convex-minimal estimator]{theorem}{uniqueconvexminimal}
\label{cor:output-of-permutation-variance-grinder-is-unique-and-convex-minimal}
    Any mutual information $\mutualinformationabstract$ with a $\numsamples$-sample unbiased estimator has a unique convex-minimal $\numsamples$-sample unbiased estimator, produced by the permutation variance grinder.
\end{restatable}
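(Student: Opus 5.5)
The plan is to combine \Cref{obs:make-new-estimators}, \Cref{thm:any-mutual-information-has-at-most-one-sample-order-invariant-estimator}, and \Cref{lma:sample-order-invariance-implies-convex-minimal}; no new machinery should be needed.

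\emph{Existence of a candidate.} Fix a mutual information $\mutualinformationabstract$ with some $\numsamples$-sample unbiased estimator $\estimator_0$. Apply the permutation variance grinder to get $\estimator^{*} := \estimator_0^{*}$. By \Cref{obs:make-new-estimators}, $\estimator^{*}$ is a sample-order-invariant unbiased estimator of $\mutualinformationabstract$.

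\emph{Independence of the starting estimator.} Let $\estimator$ be any $\numsamples$-sample unbiased estimator of $\mutualinformationabstract$. Its grinder output $\estimator^{*}_{\estimator}$ is also a sample-order-invariant unbiased estimator of $\mutualinformationabstract$. By \Cref{thm:any-mutual-information-has-at-most-one-sample-order-invariant-estimator}, this forces $\estimator^{*}_{\estimator} = \estimator^{*}$. So the grinder sends every estimator of $\mutualinformationabstract$ to the same function.

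\emph{Convex-minimality.} \Cref{lma:sample-order-invariance-implies-convex-minimal} says $\estimator^{*}_{\estimator}$ is convex-dominated by $\estimator$. Since $\estimator^{*}_{\estimator} = \estimator^{*}$, the estimator $\estimator^{*}$ is convex-dominated by every $\numsamples$-sample estimator $\estimator$ of $\mutualinformationabstract$, i.e. for every $\jointdistribution$ and convex $\convexfunction$,
\begin{align*}
\expect[\convexfunction(\estimator^{*})] \le \expect[\convexfunction(\estimator)].
\end{align*}
This is exactly \Cref{def:convex-minimal}.

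\emph{Uniqueness.} Suppose $\hat\estimator$ is another convex-minimal estimator. There are two cases.
\begin{itemize}
\item If $\hat\estimator$ is sample-order-invariant, then $\hat\estimator = \estimator^{*}$ by \Cref{thm:any-mutual-information-has-at-most-one-sample-order-invariant-estimator}.
\item If $\hat\estimator$ is not sample-order-invariant, then by the strict part of \Cref{lma:sample-order-invariance-implies-convex-minimal}, $\estimator^{*}$ is strictly convex-dominated by $\hat\estimator$. So there exist $\jointdistribution$ and convex $\convexfunction$ with $\expect[\convexfunction(\estimator^{*})] < \expect[\convexfunction(\hat\estimator)]$. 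But convex-minimality of $\hat\estimator$ requires $\expect[\convexfunction(\hat\estimator)] \le \expect[\convexfunction(\estimator^{*})]$ for all such $\jointdistribution, \convexfunction$. This is a contradiction.
\end{itemize}
Hence $\estimator^{*}$ is the unique convex-minimal estimator.

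\emph{Main obstacle.} The delicate step is uniqueness. Convex dominance is only a preorder, so distinct estimators with identical convex expectations are not ruled out by convex-minimality alone. The strictness clause of \Cref{lma:sample-order-invariance-implies-convex-minimal} closes this gap in the non-invariant case, and \Cref{thm:any-mutual-information-has-at-most-one-sample-order-invariant-estimator} closes it in the invariant case. Uniqueness should be understood as equality as functions on sample sequences; it is worth checking that \Cref{thm:any-mutual-information-has-at-most-one-sample-order-invariant-estimator} gives equality on all tally matrices, not merely in distribution.
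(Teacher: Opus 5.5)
Your proposal is correct and follows essentially the same route as the paper. The paper also combines \Cref{obs:make-new-estimators} and \Cref{thm:any-mutual-information-has-at-most-one-sample-order-invariant-estimator} to identify the grinder output as the unique sample-order-invariant estimator, then uses \Cref{lma:sample-order-invariance-implies-convex-minimal}, including its strict part, for both convex-minimality and uniqueness; your write-up just spells out the case split in the uniqueness step that the paper compresses into a single sentence.
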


\begin{proof}
    By \Cref{obs:make-new-estimators} and \Cref{thm:any-mutual-information-has-at-most-one-sample-order-invariant-estimator}, the permutation variance grinder returns the only sample-order-invariant $\numsamples$-sample estimator for a given mutual information. By \Cref{lma:sample-order-invariance-implies-convex-minimal}, this unique estimator is strictly convex-dominated by any other (sample-order-dependent) estimator.
\end{proof}

\subsection{Convex-Minimal Estimation of DMI}\label{sec:optimal-estimator-results}

We now implement the results of \Cref{sec:optimal-estimator-tools} to obtain a closed form of the convex-minimal estimator for DMI in the $[2]\times[2]$ action space setting, with $\numsamples \geq 4$. The original estimator given for DMI by \citet{K-24} (see \Cref{def:kong-estimator}) gives variable payments depending on how the samples are divided into tally matrices $\tallymatrix^1$ and $\tallymatrix^2$.  This dependence on sample order results in less consistent payments to agents, particularly as the number of samples increases. This motivates our closed form of the convex-minimal estimator for DMI.

Since the convex-minimal estimator is the unique sample order invariant estimator (\Cref{thm:any-mutual-information-has-at-most-one-sample-order-invariant-estimator}), we can focus on estimators defined on tally matrices of the form $\estimator:\tallyset{\numsamples}{\numactions}{\numactionscolumn}\to \R$, rather than those defined on sequences of samples.

\begin{restatable}[$\numsamples$-sample convex-minimal DMI estimator on binary alphabet]{theorem}{convexminimalbinary}
\label{thm:convex-minimal-dmi-estimator-with-binary-alphabet}
    Fix alphabet sizes $\numactions = \numactionscolumn = 2$ and number of samples $\numsamples \geq 4$. Then, the convex-minimal $\numsamples$-sample unbiased estimator $\estimator$ of DMI, i.e. of $16 \det(\jointdistribution)^{2}$, is:
    \begin{align*}
        \estimator(\tallymatrix) = \frac{16 (\numsamples - 4)!}{\numsamples!} ( \tallymatrixentry{1}{1} \tallymatrixentry{2}{2} ((\tallymatrixentry{1}{1} - 1) (\tallymatrixentry{2}{2} - 1) - \tallymatrixentry{2}{1} \tallymatrixentry{1}{2})
     + \tallymatrixentry{2}{1} \tallymatrixentry{1}{2} ((\tallymatrixentry{2}{1} - 1) (\tallymatrixentry{1}{2} - 1) - \tallymatrixentry{1}{1} \tallymatrixentry{2}{2}) ) \text{,}
    \end{align*}
    for any tally matrix $\tallymatrix \in \tallyset{\numsamples}{2}{2}$.
\end{restatable}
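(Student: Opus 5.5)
By \Cref{cor:output-of-permutation-variance-grinder-is-unique-and-convex-minimal} and \Cref{thm:any-mutual-information-has-at-most-one-sample-order-invariant-estimator}, the convex-minimal $\numsamples$-sample estimator of $16\det(\jointdistribution)^2$ is the unique sample-order-invariant one. So the claim follows once we check two things: the displayed $\estimator$ is a function of the tally matrix only (immediate, so it is sample-order-invariant), and $\expect_{\tallymatrixrandom\sim\multinomialdistribution{\numsamples}{\jointdistribution}}[\estimator(\tallymatrixrandom)] = 16\det(\jointdistribution)^2$ for every $\jointdistribution$. The whole proof therefore reduces to an expectation computation.

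Expanding $\det(\jointdistribution)^2 = (F_{11}F_{22}-F_{12}F_{21})^2$ gives
\[
F_{11}^2F_{22}^2 + F_{12}^2F_{21}^2 - 2F_{11}F_{22}F_{12}F_{21}.
\]
Each monomial has degree $4$. To estimate them I will use falling factorial moments of the multinomial distribution. For nonnegative integers $a_{ij}$ with $\sum a_{ij}=r\le\numsamples$, the standard identity is
\[
\expect\Big[\prod_{i,j} (Q_{ij})_{a_{ij}}\Big] = (\numsamples)_r \prod_{i,j} F_{ij}^{a_{ij}},
\]
where $(x)_a = x(x-1)\cdots(x-a+1)$. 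I will justify this in one line via the factorial moment generating function $\expect[\prod z_{ij}^{Q_{ij}}] = (\sum F_{ij}z_{ij})^{\numsamples}$, differentiating $a_{ij}$ times in each $z_{ij}$ and setting all $z=1$. With $r=4$ and $(\numsamples)_4 = \numsamples!/(\numsamples-4)!$, the identity says that
\[
\frac{(\numsamples-4)!}{\numsamples!}\Big[Q_{11}(Q_{11}-1)Q_{22}(Q_{22}-1) + Q_{12}(Q_{12}-1)Q_{21}(Q_{21}-1) - 2Q_{11}Q_{22}Q_{12}Q_{21}\Big]
\]
is unbiased for $\det(\jointdistribution)^2$.

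It remains to match this with the displayed formula. Distributing the products there gives
\[
q_{11}q_{22}(q_{11}-1)(q_{22}-1) - q_{11}q_{22}q_{21}q_{12} + q_{21}q_{12}(q_{21}-1)(q_{12}-1) - q_{21}q_{12}q_{11}q_{22},
\]
which is exactly the bracket above. Multiplying by $16$ gives the stated estimator, and linearity of expectation completes the argument. The hypothesis $\numsamples\ge 4$ is needed so that $(\numsamples-4)!$ is defined and the degree-$4$ factorial moments exist.

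The only real obstacle is stating the multinomial factorial-moment identity cleanly. The rest is bookkeeping, together with the observation that uniqueness, and hence convex-minimality, is inherited from the earlier results rather than proved directly.
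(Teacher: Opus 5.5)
Your proof is correct, but it runs in the opposite direction from the paper's.

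The paper \emph{derives} the formula. It takes the $4$-sample lookup table of \Cref{cor:four-sample-convex-minimal-dmi-estimator-with-binary-alphabet} and applies it to the first four of $k$ samples. It then runs the permutation variance grinder and evaluates the resulting average with multivariate hypergeometric probabilities, for instance weight $\binom{q_{11}}{2}\binom{q_{22}}{2}/\binom{k}{4}$ on the table value $8/3$, before collecting terms.

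You instead \emph{verify} the stated candidate in three steps. It depends only on the tally matrix. The multinomial falling-factorial moment identity gives unbiasedness. Then \Cref{thm:any-mutual-information-has-at-most-one-sample-order-invariant-estimator} together with \Cref{cor:output-of-permutation-variance-grinder-is-unique-and-convex-minimal} force it to be the convex-minimal estimator.

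The paper's route explains where the formula comes from: it is the $4$-sample estimator averaged over all $4$-subsets of the samples. However, it needs unbiasedness of the $4$-sample table as an input, and that must be checked independently. As written, the paper's proof cites that table from the very theorem being proved.

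Your route needs no base case and treats every $k \geq 4$ uniformly. It also generalizes at once: for any homogeneous polynomial $\sum_a c_a \prod_{i,j} F_{ij}^{a_{ij}}$ of degree $d \leq k$, the unique sample-order-invariant $k$-sample estimator is $\frac{(k-d)!}{k!}\sum_a c_a \prod_{i,j} (q_{ij})_{a_{ij}}$, and by the same grinder argument it is convex-minimal. The cost is that you must know the formula in advance rather than discovering it.
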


The expression for the estimator $\estimator$ in \Cref{thm:convex-minimal-dmi-estimator-with-binary-alphabet} can also be written as
\begin{align*}
    \estimator(\tallymatrix) = \frac{(\numsamples - 4)! \numsamples^{4}}{\numsamples!} \left( 16 \det(\widehat{\jointdistribution})^{2} + \frac{16}{\numsamples^{4}}(\tallymatrixentry{1}{1} \tallymatrixentry{2}{2} (1 - \tallymatrixentry{1}{1} - \tallymatrixentry{2}{2}) + \tallymatrixentry{1}{2} \tallymatrixentry{2}{1} (1 - \tallymatrixentry{1}{2} - \tallymatrixentry{2}{1})) \right) \text{,}
\end{align*}
i.e., as the plug-in estimator for DMI (DMI calculated with respect to the empirical joint distribution $\widehat{\jointdistribution}$), but rescaled and shifted to make it unbiased.\footnote{The plug-in estimator of DMI is a consistent estimator, but is not unbiased. It is a common practice in statistics to produce consistent \textit{and} unbiased estimators of statistical quantities by rescaling and shifting the plug-in estimator. For other examples of estimators produced in this manner, consider the sample covariance, or the unbiased estimator for the maximum of a uniform distribution over any finite range of integers (i.e., the German tank problem).}

To provide an intuitive picture of kinds of patterns of play that the estimator from \Cref{thm:convex-minimal-dmi-estimator-with-binary-alphabet} rewards and punishes, we examine the payoffs it provides for the minimum $\numsamples = 4$ samples required. These payoffs can be implemented using a lookup table, described in \Cref{cor:four-sample-convex-minimal-dmi-estimator-with-binary-alphabet}.

\begin{restatable}[$4$-sample convex-minimal DMI estimator on binary alphabet]{corollary}{foursample}
\label{cor:four-sample-convex-minimal-dmi-estimator-with-binary-alphabet}
    In a $2 \times 2$ action space with $\numsamples = 4$ samples, the convex-minimal unbiased estimator of the determinant mutual information $16 \det(\jointdistribution)^{2}$ is:
    \begin{align*}
        \estimator \left( \begin{bmatrix}
            2 & 0 \\
            0 & 2
        \end{bmatrix} \right)
        =
        \estimator \left( \begin{bmatrix}
            0 & 2 \\
            2 & 0
        \end{bmatrix} \right)
        =
        \frac{8}{3}, && 
        \estimator \left( \begin{bmatrix}
            1 & 1 \\
            1 & 1
        \end{bmatrix} \right)
        =
        -\frac{4}{3}, \text{ and }&&
        \estimator(\text{anything else}) &= 0 \text{.}
    \end{align*}
\end{restatable}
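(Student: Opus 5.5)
Since \Cref{cor:four-sample-convex-minimal-dmi-estimator-with-binary-alphabet} is the special case $\numsamples = 4$ of \Cref{thm:convex-minimal-dmi-estimator-with-binary-alphabet}, the plan is to substitute $\numsamples = 4$ into the closed form and evaluate it on every tally matrix in $\tallyset{4}{2}{2}$. The prefactor becomes $16 \cdot 0!/4! = 2/3$. It then remains to evaluate
\begin{align*}
    E(\tallymatrix) = \tallymatrixentry{1}{1}\tallymatrixentry{2}{2}\big((\tallymatrixentry{1}{1}-1)(\tallymatrixentry{2}{2}-1) - \tallymatrixentry{2}{1}\tallymatrixentry{1}{2}\big) + \tallymatrixentry{2}{1}\tallymatrixentry{1}{2}\big((\tallymatrixentry{2}{1}-1)(\tallymatrixentry{1}{2}-1) - \tallymatrixentry{1}{1}\tallymatrixentry{2}{2}\big)
\end{align*}
over all $35$ tally matrices with entries summing to $4$.

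First, I will use the structure of $E$ to avoid enumerating all $35$ cases. The first summand vanishes unless $\tallymatrixentry{1}{1}, \tallymatrixentry{2}{2} \geq 1$. The second vanishes unless $\tallymatrixentry{2}{1}, \tallymatrixentry{1}{2} \geq 1$.

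If both diagonal entries are positive and some off-diagonal entry is zero, then only $\tallymatrixentry{1}{1}\tallymatrixentry{2}{2}(\tallymatrixentry{1}{1}-1)(\tallymatrixentry{2}{2}-1)$ survives. This requires $\tallymatrixentry{1}{1}, \tallymatrixentry{2}{2} \geq 2$, which together with the sample budget of $4$ forces $\tallymatrix = \mathrm{diag}(2,2)$. There $E = 4$, so $\estimator = 8/3$. The anti-diagonal case is symmetric and yields $8/3$ for the anti-diagonal tally.

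If all four entries are positive, the budget of $4$ forces every entry to equal $1$. Then $E = 1\cdot(0-1) + 1\cdot(0-1) = -2$, so $\estimator = -4/3$.

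Every remaining tally has a zero on the diagonal and a zero on the off-diagonal, so both summands vanish and $\estimator = 0$.

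As a sanity check, I will verify directly that
\begin{align*}
    \E[\estimator(\tallymatrixrandom)] = \tfrac{8}{3}\cdot 6\,(F_{11}^2F_{22}^2 + F_{12}^2F_{21}^2) - \tfrac{4}{3}\cdot 24\,F_{11}F_{12}F_{21}F_{22} = 16\det(\jointdistribution)^2,
\end{align*}
using the multinomial coefficients $4!/(2!\,2!) = 6$ and $4! = 24$. This confirms unbiasedness independently of the theorem. Convex-minimality then follows because the table is sample-order-invariant, combined with \Cref{cor:output-of-permutation-variance-grinder-is-unique-and-convex-minimal}. There is no real obstacle here; the only care needed is making the case split exhaustive.
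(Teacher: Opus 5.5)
Your proposal is correct, and its main line is the paper's own proof: substitute $\numsamples = 4$ into the closed form of \Cref{thm:convex-minimal-dmi-estimator-with-binary-alphabet} (prefactor $2/3$) and evaluate it on $\tallyset{4}{2}{2}$. Your case split on whether $\tallymatrixentry{1}{1}\tallymatrixentry{2}{2}$ and $\tallymatrixentry{2}{1}\tallymatrixentry{1}{2}$ vanish is a cleaner, visibly exhaustive version of the enumeration the paper leaves to the reader.

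The step you call a sanity check should be the actual proof. As written, the paper's proof of \Cref{thm:convex-minimal-dmi-estimator-with-binary-alphabet} starts from this very $4$-sample table without verifying it; for $\numsamples = 4$ the estimator it grinds is the table itself. So reading the table off the theorem is circular. Your direct computation $\frac{8}{3}\cdot 6\,(\jointdistributionentry{1}{1}^2\jointdistributionentry{2}{2}^2 + \jointdistributionentry{1}{2}^2\jointdistributionentry{2}{1}^2) - \frac{4}{3}\cdot 24\,\jointdistributionentry{1}{1}\jointdistributionentry{1}{2}\jointdistributionentry{2}{1}\jointdistributionentry{2}{2} = 16\det(\jointdistribution)^2$ avoids this. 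Combine it with the fact that any function of the tally matrix is sample-order-invariant, and with \Cref{thm:any-mutual-information-has-at-most-one-sample-order-invariant-estimator} and \Cref{cor:output-of-permutation-variance-grinder-is-unique-and-convex-minimal}. That establishes the corollary without using the closed form at all, and it also supplies the base case the general-$\numsamples$ theorem needs.
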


As the lookup table demonstrates, high positive rewards are provided to patterns of play that appear highly correlated or anticorrelated, while negative rewards are provided when play looks maximally uncorrelated.

To characterize the convergence rate of the estimator from \Cref{thm:convex-minimal-dmi-estimator-with-binary-alphabet}, we provide a closed form for its variance in \Cref{thm:convex-minimal-dmi-estimator-with-binary-alphabet-variance}, presented in \Cref{app:convergence-rate}.

\section{Ex-ante bounded-sample Mutual Information Estimators}\label{sec:ex-ante-bounded-sample-mutual-information-estimators}

Thus far, we have required that an estimator take \textit{exactly} $\numsamples$ samples. This section relaxes that assumption, and proposes several mutual informations which rely on unbiased estimators which take a \textit{variable} number of samples.

We begin by offering a definition of \textit{ex-ante bounded-sample estimators}, which take at most $\numsamples$ samples in expectation. We then use this definition to reduce the number of samples required by DMI in \Cref{sec:stop-short-estimators}, using \textit{stop-short estimators}. Furthermore, we construct estimators for DMI in the $2 \times 2$ setting with strictly lower variance than the standard $4$-sample estimator, while maintaining the sample complexity \textit{in expectation}. Next, \Cref{sec:scoring-rule-based} introduces a new class of mutual informations based on the broad existing literature of \textit{proper scoring rules} and their connections to information theory (see \Cref{sec:related-works}). We show that DMI is not scoring-rule-based, and prove that no scoring-rule-based mutual information has a finite-sample estimator. \Cref{sec:collision-estimators} proposes two new ex-ante bounded-sample estimators for the quadratic score (\Cref{def:quadratic-score}) based mutual information. Finally, \Cref{sec:arbitrarily-few-expected-sample-estimators} considers the possibility and consequences of arbitrarily scaling down the expected number of samples. We conclude that reducing the expected number of samples directly increases the variance of expected payments. All omitted proofs and additional facts for this section are provided in \Cref{app:ex-ante-lemmas}.

We now formally define an ex-ante-bounded sample estimator, which takes a variable number of samples according to some stopping time rule. We contrast this definition to that of fixed-sample estimators in \Cref{sec:fixed-sample-mutual-information-estimators}. This definition generalizes the finite-sample unbiased estimator of \Cref{def:unbiased-estimator} to require $\numsamples$ samples only \emph{in expectation}.
\begin{definition}[Ex-Ante Bounded-Sample Estimator]\label{def:ex-ante-bounded-sample-estimator}
    An ex-ante $k$-sample unbiased estimator for a function $\mutualinformationabstract:\Delta([\numactions]\times [\numactionscolumn])\to \mathbb{R}$ comprises a \emph{stopping rule} $\stoppingtime(\randomsample{})$ and an \emph{estimator function} $\estimator:([\numactions]\times [\numactionscolumn])^*\to \R$ defined on any number of samples such that for any $\jointdistribution\in \Delta([\numactions]\times [\numactionscolumn])$,
    \begin{align*}
        \E_{\randomsample{1}, \dots \overset{\textrm{i.i.d.}}{\sim} F}[\estimator(\randomsample{1}, \dots, \randomsample{\stoppingtime(\randomsample{})})] = \mutualinformationabstract(\jointdistribution) \text{,} \text{ and }
        \E_{\randomsample{1}, \dots \overset{\textrm{i.i.d.}}{\sim} F}[\stoppingtime(\randomsample{})] = k \text{,}
    \end{align*}
    where $\stoppingtime(\randomsample{})$ denotes the number of samples taken before the stopping rule is satisfied and a single sample $\randomsample{\timeindex} = (\randomrow{\timeindex}, \randomcolumn{\timeindex})$.
\end{definition}
This definition replaces the fixed sample number parameter $\numsamples$ of \Cref{def:unbiased-estimator} with a random variable stopping time which depends on the observed samples. It then uses an estimator function $\estimator$ defined on arbitrary numbers of samples to choose a reward. The expectation of $\estimator$ over all sequences of samples and implied stopping times must be equal to the value of the function $\mutualinformationabstract$ for which $\estimator$ is an unbiased estimator.

Note that any $k$-sample mutual information estimator is also an ex-ante $k$-sample estimator for the same mutual information, with stopping rule $\stoppingtime = k$.

There is an important use case for ex-ante bounded sample estimators: it is sometimes possible to take a fixed-sample unbiased estimator of a mutual information, then convert it into an ex-ante bounded sample estimator with the same expected number of samples, but with a lower variance. We call these kinds of estimators \textit{stop-short estimators}.

\subsection{Stop-short Estimators}\label{sec:stop-short-estimators}

Sometimes, for a $\numsamples$-sample estimator, we can tell what value the of the estimator must be before observing all $\numsamples$ samples. In this case we can \textit{stop-short}, reducing the number of samples used in expectation by choosing specific sequences of samples where we do not observe all $\numsamples$ samples. This subsection explores sequences of samples for which the convex-minimal fixed-sample DMI estimator can stop early, as well as ways to leverage these techniques to design more desirable mutual information estimators.

\begin{example}
    Consider the $4$-sample DMI unbiased estimator given in \Cref{cor:four-sample-convex-minimal-dmi-estimator-with-binary-alphabet} for the $2 \times 2$ action space which sees samples $(1, 1), (1, 1), (1, 1)$. For any fourth sample the resulting tally matrix $\tallymatrix$ has either a 3 or 4 in coordinate $(1, 1)$ and all such tally matrices have estimator value 0. Therefore, when the observed sequence begins with three identical report pairs, we can always stop-short and determine the agents' payoffs without asking for a fourth sample.

    Some entry in $\jointdistribution$ must always be at least $1 / 4$, so by stopping short whenever we encounter the same pair three times in a row we can reduce the expected sample complexity to at most $4 - (1 / 4)^{3}$.
\end{example}

By relaxing our requirement that an estimator take exactly a fixed $\numsamples$ samples, we beat the triviality result of \Cref{thm:three-or-fewer-samples-implies-trivial-mutual-information}, which states that any fixed $\numsamples$-sample estimator for $\numsamples < 4$ must be the trivial estimator.

It is also possible to keep the expected number of samples the same and instead decrease estimator variance with stop-short estimators. This can be accomplished by mixing over stop-short estimators. \Cref{thm:stop-short-estimator} constructs such a stop-short estimator, which maintains an expected $4$ samples while reducing the variance of payments below that of DMI.

\begin{restatable}[Stop-short estimator reduces variance]{theorem}{stopshortestimator}
\label{thm:stop-short-estimator}
    Let the row and column alphabet sizes be $\numactions = \numactionscolumn = 2$. Consider the unbiased estimator that, with probability $1/4$ runs the convex-minimal $4$-sample DMI estimator, and with probability $3/4$ runs the convex-minimal $5$-sample DMI estimator, and in either case stops short once the estimator has collected enough samples to exactly determine its output. This mixed estimator of DMI has:
    \begin{enumerate}
        \item at most $4$ samples in expectation (and this bound is tight over different possible mixing probabilities),
        \item a pointwise variance below that of the convex-minimal $4$-sample DMI estimator, and
        \item a worst-case variance strictly below that of the convex-minimal $4$-sample DMI estimator.
    \end{enumerate}
\end{restatable}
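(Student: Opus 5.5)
We would prove the three claims by writing both component estimators as explicit stopping rules on the sample sequence and then computing expected samples and variances as polynomials in the entries of $\jointdistribution$.

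\textbf{Step 1: stopping rules for the two components.} From \Cref{thm:convex-minimal-dmi-estimator-with-binary-alphabet} with $\numsamples = 4$ and $\numsamples = 5$, we tabulate the estimator on $\tallyset{4}{2}{2}$ and $\tallyset{5}{2}{2}$. \Cref{cor:four-sample-convex-minimal-dmi-estimator-with-binary-alphabet} already gives the $4$-sample table. For the $5$-sample estimator we use the formula: $\estimator(\tallymatrix)$ is a nonzero multiple of
\begin{align*}
\tallymatrixentry{1}{1}\tallymatrixentry{2}{2}\bigl((\tallymatrixentry{1}{1}-1)(\tallymatrixentry{2}{2}-1)-\tallymatrixentry{1}{2}\tallymatrixentry{2}{1}\bigr)+\tallymatrixentry{1}{2}\tallymatrixentry{2}{1}\bigl((\tallymatrixentry{1}{2}-1)(\tallymatrixentry{2}{1}-1)-\tallymatrixentry{1}{1}\tallymatrixentry{2}{2}\bigr),
\end{align*}
and this vanishes unless at least two of the four cells are nonzero in a ``crossing'' pattern.

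For each prefix of samples we decide whether every completion yields the same value. If so, the estimator stops there and outputs that value. Stopping at such a point preserves the output pathwise, so each stopped estimator is still unbiased with the same output distribution. Mixing the two components with probabilities $1/4$ and $3/4$ then gives an unbiased estimator of $16\det(\jointdistribution)^2$.

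The key stopping events are:
\begin{enumerate}
\item For $\numsamples=4$: the prefix has three identical pairs, or it contains a pattern that cannot complete to $\mathrm{diag}(2,2)$, $\mathrm{antidiag}(2,2)$ or the all-ones matrix. For example, three distinct pairs with one row or column repeated, or two equal pairs plus one in the same row.
\item For $\numsamples=5$: the analogous events where the zero set of the polynomial is forced, for example four samples all lying in one row or one column.
\end{enumerate}

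\textbf{Step 2: expected number of samples.} We write $\E[\stoppingtime_k]=k-\sum_{j<k}\Pr[\text{stopped at } j]$ as an explicit polynomial in $\jointdistribution$. We then show that $\tfrac14\E[\stoppingtime_4]+\tfrac34\E[\stoppingtime_5]\le 4$ for all $\jointdistribution$. The natural extremal case is the uniform $\jointdistribution$ with all entries $1/4$, by symmetry and because stopping is least likely there. Tightness over mixing weights would follow from showing that the bound is attained at that $\jointdistribution$: increasing the weight on the $5$-sample estimator beyond $3/4$ pushes the expectation above $4$.

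\textbf{Step 3: variance comparison.} Both components are unbiased for the same mean, so the mixture variance is $\tfrac14\Var_4+\tfrac34\Var_5$. Stopping short does not change the output distribution, so it does not change variance. Hence it suffices to show $\Var_5(\jointdistribution)\le \Var_4(\jointdistribution)$ pointwise, i.e.\ $\E[\estimator_5^2]\le\E[\estimator_4^2]$. We would obtain this from the closed-form variance (\Cref{thm:convex-minimal-dmi-estimator-with-binary-alphabet-variance}), or by a convex-order argument. In the latter, we take the $5$-sample estimator built by averaging the $4$-sample estimator over the $5$ leave-one-out subsets; this is sample-order-invariant, so by \Cref{thm:any-mutual-information-has-at-most-one-sample-order-invariant-estimator} it equals $\estimator_5$. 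By Jensen, $\estimator_5$ is then convex-dominated by $\estimator_4$, which gives the pointwise claim.

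For the strict worst-case claim, we identify the maximizer of $\Var_4$ over the simplex. We expect it to be near the diagonal distribution $\jointdistribution=\mathrm{diag}(1/2,1/2)$ or the uniform one, and we verify that strict inequality holds there, since Jensen is strict whenever the leave-one-out values are non-constant with positive probability.

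\textbf{Main obstacle.} The main obstacle is Step 2: enumerating the stopping events of the $5$-sample estimator and proving the polynomial inequality $\le 4$ uniformly over the $3$-dimensional simplex. We would first try to reduce this by symmetry to a one- or two-parameter family and check the resulting polynomial directly.
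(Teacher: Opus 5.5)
Your Step 3 works and is more careful than the paper, which only asserts that the $5$-sample estimator has smaller variance. You write $\estimator_5$ as the leave-one-out average of $\estimator_4$, identify it via \Cref{thm:any-mutual-information-has-at-most-one-sample-order-invariant-estimator}, and apply Jensen, which gives $\Var(\estimator_5)\le\Var(\estimator_4)$ pointwise. For claim 3, check strictness at a maximizer of the \emph{mixture's} variance, or at every maximizer of $\Var(\estimator_4)$, not just at one.

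\textbf{The gap is Step 2.} With the weights as stated ($1/4$ on the $4$-sample estimator, $3/4$ on the $5$-sample one), the inequality you plan to prove is false, and the uniform distribution is the wrong candidate for the extremal case. Take $\jointdistribution=\mathrm{diag}(1/2,1/2)$, where only the cells $(1,1)$ and $(2,2)$ occur.
\begin{itemize}
\item The $4$-sample estimator stops after three samples only if all three coincide, so $\mathbb{E}[\stoppingtime_4]=15/4$.
\item The $5$-sample estimator can never stop before four samples, and it stops at four only if all four coincide. After diagonal counts $(\tallymatrixentry{1}{1},\tallymatrixentry{2}{2})=(3,1)$, a fifth $(1,1)$ gives $0$ while a fifth $(2,2)$ gives $8/5$. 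After $(2,2)$, a fifth diagonal sample gives $8/5$ while an off-diagonal one gives $8/15$. Hence $\mathbb{E}[\stoppingtime_5]=4+7/8$.
\end{itemize}
Your mixture therefore uses $147/32\approx 4.59$ samples at this $\jointdistribution$. Even with the paper's figure of $4.75$ for the $5$-sample estimator it would use $4.5$.

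The uniform distribution is not extremal: there $\mathbb{E}[\stoppingtime_4]=3+9/16$, because spreading mass over four cells makes early stopping more likely, not less. The paper's proof actually uses the transposed weights, $3/4$ on the $4$-sample estimator and $1/4$ on the $5$-sample one, and puts the worst case at the diagonal distribution. Note, however, that with the value $39/8$ computed above, even those weights give $129/32>4$ at this $\jointdistribution$. The weight on the $4$-sample estimator must be at least $7/9$ for claim 1 to hold at all.

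\textbf{Step 1 also has an enumeration error} that would corrupt your polynomial away from the diagonal.
\begin{itemize}
\item For $\numsamples=4$, three distinct pairs \emph{never} determine the output. Adding the missing fourth cell produces the all-ones tally (value $-4/3$), while any other completion gives $0$. After three samples you continue exactly when the three cells are distinct, or the multiset has the form $\{x,x,y\}$ with $\{x,y\}$ equal to $\{(1,1),(2,2)\}$ or $\{(1,2),(2,1)\}$.
\item For $\numsamples=5$ you miss a stopping event that is not a forced zero: four distinct cells after four samples, where every completion gives $-8/15$. The full rule is to continue after four samples exactly when three distinct cells appear, or the two cells present are both diagonal or both anti-diagonal.
\end{itemize}
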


\begin{proof}
    It can be shown that the $4$-sample stop-short estimator requires $3.75$ samples in expectation in the worst case, which occurs when the joint distribution $\jointdistribution = \begin{pmatrix}
        1/2 & 0 \\
        0 & 1/2
    \end{pmatrix}$.

    Similarly, it can be shown that the $5$-sample stop-short estimator requires $4.75$ samples in expectation in the worst case, which also occurs when the joint distribution $\jointdistribution = \begin{pmatrix}
        1/2 & 0 \\
        0 & 1/2
    \end{pmatrix}$.

    Mixing over these with probability $3/4$ and $1/4$ respectively gives a worst-case expected number of samples of $3.75 * \frac{3}{4} + 4.75 * \frac{1}{4} = 4$.

    This estimator has, for a fixed joint distribution, a lower variance than the fixed $4$-sample DMI estimator, because its variance is a convex combination of the variance of the fixed $4$-sample DMI estimator and the fixed $5$-sample DMI estimator, and the fixed $5$-sample DMI estimator has a lower variance than the fixed $4$-sample DMI estimator.
\end{proof}

The stop short estimators constructed above randomize over a fixed set of finite sample estimators.  Thus, they have an ex-post finite bound on the number of samples required. Ex-ante bounded-sample estimators in general need not have an upper cap on the ex-post number of samples used. This flexibility allows us to consider constructing estimators for a much broader class of mutual informations, constructed from \emph{proper scoring rules}.

\subsection{Scoring-Rule-Based Mutual Informations}\label{sec:scoring-rule-based}

We now construct ex ante bounded mutual information estimators from proper scoring rules: reward functions that incentivize truthful predictions~\citep{GR-07}.

\begin{definition}[Proper Scoring Rule]\label{def:scoring-rule}
    A proper scoring rule over some feasible outcome space $\mathcal{O}$ is a function $\scoringrule:\Delta(\mathcal{O})\times \mathcal{O}\to \mathbb{R}$, such that for any belief $p\in \Delta(\mathcal{O})$,
    \begin{align*}
        \E_{o\sim p}[\scoringrule(p, o)]&\geq \E_{o\sim p}[\scoringrule(r, o)], ~~~\forall r\in \Delta(\mathcal{O}).
    \end{align*}
    A scoring rule is strict if the inequality is strict for all $r\neq p$.
\end{definition}

A scoring rule assigns a payout to a reporting agent based on their report and an observed outcome. A proper scoring rule is one in which an agent maximizes their expected reward by reporting their true belief. Scoring rules implicitly define a \emph{value of information}: receiving a signal about the likely outcome can improve an agent's expected score.

In our setting, we consider the row player's ability to predict the column player's action. Given a joint distribution $\jointdistribution$, a sample $(\randomrow{},\randomcolumn{})$ is drawn. The row player observes $\randomrow{}$, and we imagine that they are asked to predict $\randomcolumn{}$. We define the value of observing the random variable $\randomrow{}$ as the row player's expected improvement in $\scoringrule$ over predicting the marginal distribution of the column player's action without observing $\randomrow{}$. It will turn out that this improvement can sometimes be estimated even with only sample reports and not prediction reports.

Let $p$ denote the marginal distribution of the column player's sample in $\jointdistribution$, with the $j$th entry $\sum_{i=1}^\numactions \jointdistributionentry{i}{j}$. Similarly, denote the posterior after observing the row $\randomrow{}$ by $p_{\randomrow{}}$, where the $j$th entry is $\jointdistributionentry{\randomrow{}}{j} / \sum_{k = 1}^\numactionscolumn \jointdistributionentry{\randomrow{}}{k}$. 

\begin{definition}[Value of Information]\label{def:value-of-info}
    The value of information relative to a scoring rule $\scoringrule:\Delta([\numactionscolumn])\times [\numactionscolumn]\to \mathbb{R}$ of joint distribution $\jointdistribution$ is,
    \begin{align*}
        \VoI{\jointdistribution}{\scoringrule}&= \expect_{(\randomrow{}, \randomcolumn{})\sim F}[\scoringrule(p_{\randomrow{}}, \randomcolumn{})] - \expect_{(\randomrow{}, \randomcolumn{})\sim F}[\scoringrule(p, \randomcolumn{})].
    \end{align*}
\end{definition}

We show that for a proper scoring rule $R$, $\VoI{\jointdistribution}{R}$ can be used to construct a row mutual information (\Cref{def:mutual-information}).

\begin{theorem}\label{lemma:scoring-rule-based-MI}
    For a proper scoring rule $\scoringrule:\Delta([\numactionscolumn])\times [\numactionscolumn] \to \mathbb{R}$, the function $\mutualinformationabstract(\jointdistribution) = \VoI{\jointdistribution}{\scoringrule}$ is a row mutual information.
\end{theorem}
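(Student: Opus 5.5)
We have to verify two axioms for $\mutualinformationabstract(\jointdistribution) = \VoI{\jointdistribution}{\scoringrule}$: zero on independent play, and the row data processing inequality $\mutualinformationabstract(S\jointdistribution) \le \mutualinformationabstract(\jointdistribution)$ for every column-stochastic $S$. The natural tool is the expected-score function of a proper scoring rule, $H(q) := \E_{o\sim q}[\scoringrule(q,o)]$. Properness makes $H$ convex: it is the pointwise supremum over reports $r$ of the affine functions $q \mapsto \E_{o\sim q}[\scoringrule(r,o)]$, and the supremum is attained at $r = q$.

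\textbf{Rewriting the value of information.} Let $\rho_i = \sum_j \jointdistributionentry{i}{j}$ be the row marginal. Conditioning on $\randomrow{} = i$ gives
\[
\E[\scoringrule(p_{\randomrow{}},\randomcolumn{})] = \sum_i \rho_i H(p_i).
\]
Since $\sum_i \rho_i p_i = p$, we also have $\E[\scoringrule(p,\randomcolumn{})] = H(p)$. Hence
\[
\VoI{\jointdistribution}{\scoringrule} = \sum_i \rho_i H(p_i) - H\Big(\sum_i \rho_i p_i\Big),
\]
the Jensen gap of $H$ over the distribution of posteriors. Rows with $\rho_i = 0$ contribute nothing, so undefined posteriors cause no trouble.

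\textbf{Zero on independent play.} If $\jointdistribution = \probvectorone\,\transpose{\probvectortwo}$, then every posterior with $\rho_i > 0$ equals $\probvectortwo = p$. The Jensen gap is therefore $H(p) - H(p) = 0$.

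\textbf{Row data processing inequality.} Let $\jointdistribution' = S\jointdistribution$. The column marginal is unchanged, because $\transpose{\mathbf 1} S = \transpose{\mathbf 1}$, so the subtracted term $H(p)$ is the same for both. Write row $a$ of $\jointdistribution'$ as $\sum_i S_{ai}\jointdistribution_{i\cdot}$. Its mass is $\rho'_a = \sum_i S_{ai}\rho_i$ and its posterior is
\[
p'_a = \sum_i \frac{S_{ai}\rho_i}{\rho'_a}\, p_i,
\]
a convex combination of the original posteriors. Convexity of $H$ gives
\[
\rho'_a H(p'_a) \le \sum_i S_{ai}\rho_i H(p_i).
\]
Summing over $a$ and using $\sum_a S_{ai} = 1$ yields
\[
\sum_a \rho'_a H(p'_a) \le \sum_i \rho_i H(p_i),
\]
which is exactly $\mutualinformationabstract(S\jointdistribution) \le \mutualinformationabstract(\jointdistribution)$.

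\textbf{Main obstacle.} The step needing the most care is establishing convexity of $H$ from properness alone, including the possibility that $H$ takes value $+\infty$ or that $\scoringrule$ is unbounded. I would handle this with the supremum-of-affine-functions argument above, which needs no regularity assumptions. Degenerate zero-mass rows are handled by dropping them, as noted earlier. Strictness, if needed later, would follow from strict convexity of $H$ for strictly proper $\scoringrule$, but it is not required for this statement.
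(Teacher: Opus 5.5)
Your proof is correct and follows the paper's approach: zero on independent play holds because every posterior with positive row mass equals the prior, and the row data processing inequality is Blackwell monotonicity. The paper simply cites Blackwell monotonicity, whereas you prove it by writing $\VoI{\jointdistribution}{\scoringrule}$ as a Jensen gap of the convex expected-score function and noting that garbled posteriors are convex combinations of the original ones; this is the same argument the paper uses for \Cref{prop:scoring-MI-strict}. Your concern that $H$ might be $+\infty$ does not arise, since $\scoringrule$ is real-valued and the supremum is attained at $r = q$.
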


\begin{proof}
    $G$ satisfies zero on independent play because, in this case, $p=p_i$ for every row $i$ with positive marginal probability and $\VoI{\jointdistribution}{\scoringrule} = 0$ trivially. $G$ satisfies the row data processing inequality by Blackwell monotonicity, i.e., a garbling always provides less expected utility than the original signal in a decision problem such as a proper scoring rule.
\end{proof}

$\VoI{\jointdistribution}{\scoringrule}$ generally will not also satisfy the column data processing inequality for arbitrary scoring rules $\scoringrule$.\footnote{In particular, the quadratic score (\Cref{def:quadratic-score}) does not satisfy the column data processing inequality.}  Thus, to use such mutual informations to construct peer prediction mechanisms, the agents need to be rewarded with distinct functions $G$ and $H$.

Throughout \Cref{sec:fixed-sample-mutual-information-estimators}, we used DMI as an example of a mutual information with an unbiased estimator. One natural question is whether DMI belongs to the class of scoring-rule-based mutual informations. We find that this is not the case. More generally, no scoring-rule-based mutual information has a fixed-sample estimator, meaning that scoring-rule-based mutual informations are in an entirely different class of mutual informations from DMI.

\begin{restatable}{theorem}{fixedsamplescoringruletrivial}\label{thm:no-fixed-sample-scoring-rule-mi}
    No non-trivial scoring-rule-based row mutual information has a fixed-sample unbiased estimator.
\end{restatable}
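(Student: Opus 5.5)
The plan is to use the Savage representation of proper scoring rules to write the value of information as a sum of $1$-homogeneous "perspective" terms. I will then show that polynomiality of $\VoI{\jointdistribution}{\scoringrule}$ forces those terms to be linear, which makes the mutual information identically zero.

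\textbf{Step 1 (reformulation).} Let $S(q) = \expect_{o\sim q}[\scoringrule(q,o)]$ be the expected-score function on $\Delta([\numactionscolumn])$; it is convex by properness. Let $\widehat S(x) = (\mathbf{1}\cdot x)\, S(x/(\mathbf{1}\cdot x))$ be its $1$-homogeneous extension to the nonnegative orthant, with $\widehat S(0)=0$. Write $\jointdistribution_i$ for the $i$-th row of $\jointdistribution$, with mass $\rowsum{i}$. Since $p_i = \jointdistribution_i/\rowsum{i}$ and $p=\sum_i \jointdistribution_i$, and by linearity of $\expect_{o}[\scoringrule(p,o)]$ in the distribution of $o$, we get
\begin{align*}
\VoI{\jointdistribution}{\scoringrule} = \sum_i \widehat S(\jointdistribution_i) - \widehat S\Big(\sum_i \jointdistribution_i\Big).
\end{align*}
Suppose $\mutualinformationabstract = \VoI{\cdot}{\scoringrule}$ has a $\numsamples$-sample estimator. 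By \Cref{lma:mutual-informations-are-polynomials} it equals a polynomial $P$ of degree at most $\numsamples$ on the simplex. Since the entries of $\jointdistribution$ sum to $1$, we may homogenize $P$ using powers of $\sum_{i,j}\jointdistributionentry{i}{j}$.

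\textbf{Step 2 (localize to two rows).} Restrict to $\jointdistribution$ supported on rows $1,2$, with row $2$ equal to a fixed vector $b$ whose posterior $b/(\mathbf{1}\cdot b)$ lies in the interior of $\Delta([\numactionscolumn])$, and row $1$ equal to a small vector $a$. Rescaling to keep total mass $1$ is harmless because everything is homogeneous of degree $1$ after homogenization. Then
\begin{align*}
\widehat S(a) = \Phi(a) - \widehat S(b) + \widehat S(a+b),
\end{align*}
where $\Phi$ is (the homogenized) $P$ and is polynomial in $a$. The first task is to show that $\widehat S(a+b)$ is polynomial (or at least real-analytic) in $a$ near $0$. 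Varying $b$ itself inside the same identity (the $\varepsilon$-segment argument $\jointdistribution=[(1-\varepsilon)q;\varepsilon q']$) shows that $S$ restricted to every segment equals an affine function minus a polynomial. From there I would bootstrap to $S$ being a polynomial on the interior of the simplex, for instance by taking $a=\varepsilon q$ and reading off $S(q)$ as a polynomial coefficient.

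\textbf{Step 3 (homogeneity kills nonlinearity).} Granting Step 2, $\widehat S(a)$ agrees, on a neighborhood of $0$ in the orthant, with a function analytic in $a$ at $0$. But $\widehat S$ is positively $1$-homogeneous. A function analytic at $0$ that is $1$-homogeneous on an open cone must equal its degree-$1$ Taylor term, so it is linear. Hence $\widehat S$ is linear, i.e. $S$ is affine on the simplex. Then $\sum_i \widehat S(\jointdistribution_i) = \widehat S(\sum_i\jointdistribution_i)$, so $\mutualinformationabstract\equiv 0$, which is the trivial mutual information.

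The main obstacle is Step 2: establishing enough regularity of $S$ from polynomiality of the combination alone, since a priori $S$ is only convex. I would first try the direct route. Setting $a=\varepsilon q$ in the identity gives, for each fixed interior $b$, that $\varepsilon\mapsto \widehat S(b+\varepsilon q)$ is a polynomial of degree at most $\numsamples$ plus $\varepsilon S(q)$. Fixing $b$, polynomiality jointly in the entries of $a=\varepsilon q$ then shows that $\widehat S(b+a)-\widehat S(a)$ is a polynomial $Q_b(a)$. Consequently $\widehat S(a) = \widehat S(b+a) - Q_b(a)$, and substituting two different base points $b,b'$ should let me solve for $\widehat S$ near $0$ as a polynomial. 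If that circularity is delicate, the fallback is a degree-counting argument on rational functions: show $S$ is a rational function on the simplex, and note that the perspective $r\,S(x/r)$ of a non-affine polynomial has a genuine pole along $\mathbf{1}\cdot x = 0$, which is incompatible with $P$ being a polynomial.
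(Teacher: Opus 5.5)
Your Steps 1 and 3 are sound. The identity $\VoI{\jointdistribution}{\scoringrule}=\sum_i \widehat S(\jointdistribution_i)-\widehat S(\sum_i \jointdistribution_i)$ is correct, and a $1$-homogeneous function that is regular at $0$ must be linear. Step 2, which you yourself flag, is a genuine gap, and the routes you sketch do not close it.

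Once you fix $b$ and vary $a$, you have left the simplex. The degree-$1$ extension of $P$ is then $P_{\numsamples}(a,b)/(\mathbf{1}\cdot(a+b))^{\numsamples-1}$, where $P_{\numsamples}$ is the degree-$\numsamples$ homogenization. So $\Phi$ and $Q_b$ are rational functions (analytic near $a=0$), not polynomials, and there is no basis for solving for $\widehat S$ ``as a polynomial.'' Substituting a second base point only gives $\widehat S(b+a)-\widehat S(b'+a)=Q_b(a)-Q_{b'}(a)$. That relates $\widehat S$ near $b$ to $\widehat S$ near $b'$ but never yields regularity anywhere, so the circularity remains. The fallback assumes $S$ is rational, which is exactly what is unknown.

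The missing observation is that Step 3 needs only differentiability, not analyticity. Set $g(a)=\widehat S(b+a)-\widehat S(b)+\Phi(a)$ on a full neighborhood of $0$. Your identity says $g=\widehat S$ on the orthant near $0$, so $g(0)=\widehat S(0)=0$. If $g$ is differentiable at $0$, then for $a$ in the orthant,
$\widehat S(a)=\widehat S(ta)/t=g(ta)/t\to\nabla g(0)\cdot a$ as $t\downarrow 0$. Hence $\widehat S$ is linear and $S$ is affine.

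The function $g$ is differentiable at $0$ as soon as $\widehat S$ is differentiable at the single point $b$. You may choose such a $b$: $\widehat S$ is the perspective of the convex function $S$, so it is convex and therefore differentiable almost everywhere on the open orthant. Your own segment observation also gives this at every interior point. $S$ on any chord through $b$ is a single polynomial, so its directional derivatives at $b$ are two-sided and odd, and an odd sublinear function is linear.

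This is essentially the paper's proof. It differentiates $\delta\varphi(u)+(1-\delta)\varphi(v)-\varphi(v+\delta(u-v))$ at $\delta=0$ along $\jointdistribution_\delta=[\delta u;(1-\delta)v]$. It notes that the polynomial side contributes only a term affine in $u$, since only monomials with exactly one entry of $\delta u$ survive. It then uses $\nabla\varphi(v)$ at one point $v$ to conclude that $\varphi$ is affine. There, the exact linearity of $\delta\varphi(u)$ in $\delta$ plays the role of your homogeneity.
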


The proof of \Cref{thm:no-fixed-sample-scoring-rule-mi} proceeds in two parts. First, we use the polynomial characterization of unbiased estimators (\Cref{lma:mutual-informations-are-polynomials}) to show that any scoring rule that produces a scoring-rule-based mutual information with a fixed-sample unbiased estimator must have an affine expectation in the prior. We then observe that the value of information for a scoring rule with affine expectation must always be zero, implying that no nontrivial scoring-rule-based mutual information has a fixed-sample unbiased estimator. 

Scoring-rule-based estimators have two significant advantages. 
Firstly, they can be \emph{strict}: informally, losses from garbling are strict whenever Shannon mutual information has strict losses from garbling.  
Secondly, there are simple scoring-rule-based estimators with a small number of samples in expectation.

\begin{definition}[Strict mutual information]\label{def:strict-MI}
    Given an information structure $\jointdistribution$, let $(\randomrow{},\randomcolumn{})\sim \jointdistribution$, and let $\randomrow{}'$ be obtained by applying the row garbling $S$ to $\randomrow{}$. The garbling $S$ is {\em strict} for $\jointdistribution$ if the posterior distribution of $\randomcolumn{}$ conditional on $\randomrow{}$ differs with positive probability from the posterior distribution of $\randomcolumn{}$ conditional on $\randomrow{}'$. A row mutual information is {\em strict} if the data processing inequality holds strictly for any strict row garbling.
\end{definition}

Beyond the $2 \times 2$ setting $\DMI$ is not generally strict: the matrix must be of full rank for $\DMI$ to be positive. In essence, if the correlation within $\jointdistribution$ is not across all actions, $\DMI$ does not detect it and will behave as the trivial mutual information. For incentivizing agents to report truthfully, then, a mutual information which is positive for \emph{any} correlation is desirable, and not difficult to achieve with scoring rules.

\begin{restatable}{proposition}{scoringMIstrict}
\label{prop:scoring-MI-strict} 
    For any strictly proper scoring rule $\scoringrule$, the value of information $\VoI{\cdot}{\scoringrule}$ is a strict mutual information.
\end{restatable}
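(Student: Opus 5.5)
By Theorem~\ref{lemma:scoring-rule-based-MI}, $\VoI{\cdot}{\scoringrule}$ is already a row mutual information, so the plan only needs strictness. Fix $\jointdistribution$, a row garbling $S$, and let $\randomrow{}' $ be the garbled signal. The approach is to rewrite the value of information as an expected Bregman-type gap, so that strict propriety turns any change in posteriors into a strictly positive loss.

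First, the baseline term $\E[\scoringrule(p,\randomcolumn{})]$ depends only on the column marginal $p$, and garbling the row signal does not change $p$. Hence
\begin{align*}
\VoI{\jointdistribution}{\scoringrule} - \VoI{S\jointdistribution}{\scoringrule} = \E[\scoringrule(p_{\randomrow{}},\randomcolumn{})] - \E[\scoringrule(p_{\randomrow{}'},\randomcolumn{})],
\end{align*}
where $p_{\randomrow{}'}$ is the posterior on $\randomcolumn{}$ given $\randomrow{}'$. It is convenient to place $(\randomrow{},\randomrow{}',\randomcolumn{})$ on a single probability space. Because $\randomrow{}'$ is produced from $\randomrow{}$ by independent randomization, we have the Markov chain $\randomcolumn{} - \randomrow{} - \randomrow{}'$. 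Consequently, conditional on $(\randomrow{},\randomrow{}')$, the variable $\randomcolumn{}$ has distribution $p_{\randomrow{}}$.

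Second, condition on $(\randomrow{},\randomrow{}')=(i,i')$ and take the expectation over $\randomcolumn{}\sim p_i$. This gives
\begin{align*}
\E[\scoringrule(p_{\randomrow{}},\randomcolumn{}) - \scoringrule(p_{\randomrow{}'},\randomcolumn{})] = \E_{(i,i')}\big[\E_{j\sim p_i}[\scoringrule(p_i,j)] - \E_{j\sim p_i}[\scoringrule(p_{i'},j)]\big].
\end{align*}
Each inner bracket is nonnegative by propriety, which recovers the weak data processing inequality. Under strict propriety, the bracket is strictly positive whenever $p_{i'}\neq p_i$. The garbling is strict exactly when the event $\{p_{\randomrow{}}\neq p_{\randomrow{}'}\}$ has positive probability. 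On that event the integrand is strictly positive, and everywhere it is nonnegative, so the expectation is strictly positive. This gives $\VoI{S\jointdistribution}{\scoringrule}<\VoI{\jointdistribution}{\scoringrule}$.

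The main obstacle is the measure-theoretic bookkeeping in the conditioning step. I must justify that $\E[\randomcolumn{}\mid \randomrow{},\randomrow{}']$ really is governed by $p_{\randomrow{}}$; this is exactly the Markov property of the garbling. I must also handle rows or garbled reports with zero marginal probability, where posteriors are undefined. Those contribute zero weight, so I will simply restrict all expectations to the support. Since all spaces are finite, the positive-probability event consists of finitely many atoms $(i,i')$ with positive mass, and no limiting argument is needed.
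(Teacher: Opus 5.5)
Your proof is correct, but it takes a different route from the paper's.

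\textbf{The paper's route.} It works through the convex representation $\convexfunction(q)=\E_{o\sim q}[\scoringrule(q,o)]$ of \Cref{thm:convex-scoring}. It uses $p_{I'}=\E[p_I\mid I']$ (\Cref{lemma:garbling-posterior}) to view the fine posterior as a mean-preserving spread of the coarse one. It then asserts that strict propriety makes $\convexfunction$ strictly convex, and concludes by strict Jensen together with invariance of the column marginal.

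\textbf{Your route.} You never pass to $\convexfunction$. Using the Markov chain $J - I - I'$, you condition on $(I,I')$ and write the loss as an average of pointwise propriety gaps,
\[
\E_{j\sim p_i}[\scoringrule(p_i,j)]-\E_{j\sim p_i}[\scoringrule(p_{i'},j)],
\]
which is an expected Bregman-type divergence between fine and coarse posteriors. Each gap is nonnegative, and strictly positive exactly when $p_{i'}\neq p_i$.

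\textbf{What your route buys.} It is self-contained, using only the definition of strict propriety. In particular, it does not need the step ``strictly proper $\Rightarrow$ strictly convex $\convexfunction$,'' which \Cref{thm:convex-scoring} as stated does not supply (it only gives convexity). It also avoids arguing that the conditional spread is non-degenerate for strict Jensen. It recovers the weak data processing inequality in the same stroke, rather than appealing to Blackwell monotonicity. Finally, it gives an exact formula for the size of the loss.

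\textbf{What the paper's route buys.} It is shorter once the convex-function characterization is in hand. It also makes the connection to Blackwell's mean-preserving-spread picture explicit.
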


A number of common scoring rules are strictly proper, such as the quadratic score considered in \Cref{sec:collision-estimators} and the log score (discussed in \Cref{app:entropy}). Recall, the value of information corresponding to the log scoring rule is identically the Shannon mutual information.

\subsection{Collision Estimators}\label{sec:collision-estimators}

We construct an ex-ante bounded-sample estimator for the quadratic score-based mutual information. This gives a concrete example of a strict mutual information, with an estimator that requires a reasonable number of samples in expectation.

\begin{definition}[Quadratic Score \citep{B-50}]\label{def:quadratic-score}
    The quadratic, or Brier, score is given by 
    $\quadscore(p, o) = 1 - \sum_{j} (\indicate{j=o} - p(j))^2$
    where $p(o)$ is the probability $p$ places on outcome $o$.
\end{definition}
The quadratic score produces the quadratic score-based mutual information, $\VoI{\jointdistribution}{\quadscore} = \E_{\randomrow{}\sim \jointdistribution}[||p_{\randomrow{}}||^2_2] - ||p||^2_2$, where $p$ is the marginal distribution of $\randomcolumn{}$ and $p_{\randomrow{}}$ is the distribution of $\randomcolumn{}$ conditioned on having seen $\randomrow{}$. 

We construct an ex-ante ($\numactions+1$)-sample asymmetric mutual information estimator for the quadratic score-based mutual information, which we call a \emph{collision estimator}.

\begin{definition}[Quadratic Collision Estimator]\label{def:collision-estimator}
  The quadratic collision estimator is:
  \begin{enumerate}
        \item draw sample $(I_1,J_1)$,
        \item draw samples until $I_{K+1} = I_1$ for some $K+1$ (this is $K$ new samples), and
        \item output $\indicate{J_1 = J_{K+1}} - \indicate{J_1 = J_{2}}$.
  \end{enumerate}
\end{definition}

Intuitively, the Quadratic Collision Estimator estimates the two terms $\E_{\randomrow{}\sim \jointdistribution}[||p_{\randomrow{}}||^2_2]$ and $-||p||^2_2$ and then adds them together. We estimate the latter term, $-||p||^2_2$, by returning  $-1$ if two column samples are the same, unconditioned on the values of the associated rows. We estimate $\E_{\randomrow{}\sim \jointdistribution}[||p_{\randomrow{}}||^2_2]$ by drawing some row ($\randomrow{1})$ and then waiting for two column samples from the same row report ($\randomrow{1}$ and $\randomrow{K+1}$). Returning 1 if the associated row reports are the same will estimate $\E_{\randomrow{}\sim \jointdistribution}[||p_{\randomrow{}}||^2_2]$.

\begin{restatable}{proposition}{collisionestimator}
\label{prop:collision-estimator-is-quad-VoI}
    For a joint distribution $\jointdistribution\in \Delta([\numactions] \times [\numactionscolumn])$, the collision estimator is an ex-ante  $(\numactions+1)$-sample unbiased estimator for $\VoI{\jointdistribution}{\quadscore}$.
\end{restatable}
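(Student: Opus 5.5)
The plan is to verify the two requirements of \Cref{def:ex-ante-bounded-sample-estimator} separately: the expected output equals $\VoI{\jointdistribution}{\quadscore}$, and the expected number of samples is (at most) $\numactions+1$. Throughout, write $\rowsum{i} = \sum_{j} \jointdistributionentry{i}{j}$ for the row marginal, $p$ for the column marginal, and $p_i$ for the posterior of $\randomcolumn{}$ given $\randomrow{} = i$. The key structural fact is that $(\randomrow{1},\randomcolumn{1})$ and the subsequent samples $(\randomrow{2},\randomcolumn{2}),(\randomrow{3},\randomcolumn{3}),\dots$ are i.i.d.\ draws from $\jointdistribution$.

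\textbf{Sample count.} Condition on $\randomrow{1} = i$. This event has probability $\rowsum{i}$, so only rows with $\rowsum{i}>0$ matter. Given this event, $K$ is geometric with success probability $\rowsum{i}$, hence $K$ is finite almost surely and $\E[K \mid \randomrow{1}=i] = 1/\rowsum{i}$. Averaging over $i$ gives
\[
\E[1+K] = 1 + \sum_{i:\rowsum{i}>0} \rowsum{i}\cdot\frac{1}{\rowsum{i}} = 1 + |\{i : \rowsum{i}>0\}| \le \numactions+1,
\]
with equality when every row has positive mass. Since $K\ge 1$, the sample $(\randomrow{2},\randomcolumn{2})$ is always drawn, so the output is well defined.

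\textbf{Unbiasedness.} I would use linearity and compute the two indicator expectations separately.

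For $\Pr[\randomcolumn{1} = \randomcolumn{2}]$: the columns $\randomcolumn{1}$ and $\randomcolumn{2}$ are independent draws from $p$, so this probability equals $\sum_j p(j)^2 = \|p\|_2^2$.

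For $\Pr[\randomcolumn{1} = \randomcolumn{K+1}]$: condition on $\randomrow{1}=i$. Then $\randomcolumn{1}\sim p_i$. The sample with index $K+1$ is the first among the later i.i.d.\ samples whose row equals $i$. Its column is independent of $\randomcolumn{1}$, since it is a function only of samples $2,3,\dots$. The main point to justify carefully is that this column is distributed as $p_i$. The cleanest route is to sum over the value of $K$:
\[
\Pr[\randomcolumn{K+1}=j \mid \randomrow{1}=i] = \sum_{k\ge1} (1-\rowsum{i})^{k-1}\jointdistributionentry{i}{j} = \frac{\jointdistributionentry{i}{j}}{\rowsum{i}} = p_i(j),
\]
and the same factorization shows independence from $\randomcolumn{1}$. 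Hence $\Pr[\randomcolumn{1}=\randomcolumn{K+1} \mid \randomrow{1}=i] = \|p_i\|_2^2$, and unconditionally $\Pr[\randomcolumn{1}=\randomcolumn{K+1}] = \E_{\randomrow{}}\|p_{\randomrow{}}\|_2^2$.

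Subtracting the two probabilities gives $\E_{\randomrow{}}\|p_{\randomrow{}}\|_2^2 - \|p\|_2^2$, which is the stated form of $\VoI{\jointdistribution}{\quadscore}$. To confirm that form from \Cref{def:quadratic-score}, expand $\E_{o\sim q'}[\quadscore(q,o)] = 2\langle q,q'\rangle - \|q\|_2^2$. With $q=q'=p_i$ averaged over $i$, the first term of \Cref{def:value-of-info} becomes $\E\|p_{\randomrow{}}\|_2^2$. With $q=p$, and using $\E_{\randomrow{}} p_{\randomrow{}} = p$, the second term becomes $\|p\|_2^2$.

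\textbf{Main obstacle.} I expect the only real obstacle to be the stopping-time conditioning in the unbiasedness step, namely that $\randomcolumn{K+1}$ has law $p_i$ and is independent of $\randomcolumn{1}$. The explicit geometric-series computation above handles it without invoking the strong Markov property.
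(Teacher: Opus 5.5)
Your proof is correct and follows essentially the same route as the paper: you split the output into the two indicators, show $\Pr[\randomcolumn{1}=\randomcolumn{2}]=\|p\|_2^2$ and $\Pr[\randomcolumn{1}=\randomcolumn{K+1}]=\E_{\randomrow{}}\|p_{\randomrow{}}\|_2^2$, and get the sample count from the geometric waiting time for row $\randomrow{1}$ to recur (the paper does this in a separate lemma and also obtains $\numactions^{*}+1\le\numactions+1$, where $\numactions^{*}$ counts rows of positive mass). Your geometric-series check that $\randomcolumn{K+1}$ has law $p_i$ and is conditionally independent of $\randomcolumn{1}$ given $\randomrow{1}=i$ supplies a step the paper only asserts.
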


While the collision estimator requires a reasonable number of samples to be drawn in expectation, there is an estimator for the same MI which requires even fewer samples. The main idea is that we can look for a collision in the first two draws and then correct for the underestimate due to the unlikeliness of such a collision. This can be done by estimating $1/\Pr[i = I]$ as the expected value of a geometric random variable with probability $\Pr[i=I]$, i.e., by drawing new samples until we see $i=I$. This adjustment only needs to be calculated when there is a collision in the first two samples, which happens with probability $\Pr[i=I]$.

\begin{definition}[Fast Quadratic Collision Estimator]\label{def:fast-collision-estimator}
    The fast quadratic collision estimator is:
    \begin{enumerate}
        \item draw two samples $(I_1,J_1)$ and $(I_2,J_2)$,
        \item if they do not match ($I_1 \neq I_2$ or $J_1 \neq J_2$), stop and output $-\indicate{J_1 = J_2}$, and
        \item if they match, draw samples until $I_{K+2} = I_1$ for some $K+2$ (this is $K$ new samples), and
        output $K - \indicate{J_1 = J_2} = K-1$.
    \end{enumerate}
\end{definition}

\begin{restatable}{proposition}{bettercollision}
\label{prop:better-collision}
  For a joint distribution $\jointdistribution \in \Delta([\numactions] \times [\numactionscolumn])$, the fast collision estimator is an ex-ante $3$-sample unbiased estimator for $\VoI{\jointdistribution}{\quadscore}$.
\end{restatable}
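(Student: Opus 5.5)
We need to show two things: the expected output equals $\VoI{\jointdistribution}{\quadscore} = \sum_i \Pr[I=i]\,\|p_i\|_2^2 - \|p\|_2^2$, and the expected number of samples is at most $3$. Write $r_i = \sum_j \jointdistributionentry{i}{j}$ for the row marginal, so $p_i(j) = \jointdistributionentry{i}{j}/r_i$ and $p(j)=\sum_i \jointdistributionentry{i}{j}$. We restrict attention to rows with $r_i>0$; rows with $r_i = 0$ are never sampled and contribute nothing.

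\emph{Decomposing the output.} The output is $-\indicate{J_1=J_2}$ in every case, plus an extra $K$ exactly when the two initial samples coincide as pairs, i.e.\ $(I_1,J_1)=(I_2,J_2)$. In that matching case $\indicate{J_1 = J_2} = 1$, so the output $K-1$ agrees with this decomposition.

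\emph{The negative term.} Since the first two samples are i.i.d., $\E[\indicate{J_1=J_2}] = \sum_j p(j)^2 = \|p\|_2^2$. This gives the term $-\|p\|_2^2$.

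\emph{The positive term.} Condition on $(I_1,J_1)=(I_2,J_2)=(i,j)$, which has probability $\jointdistributionentry{i}{j}^2$. The subsequent draws are i.i.d. and independent of the first two, so the number of new samples $K$ until a row equal to $i$ appears is geometric with success probability $r_i$. Hence $\E[K \mid \cdot\,] = 1/r_i$. The positive contribution is therefore
$$\sum_{i,j} \jointdistributionentry{i}{j}^2 / r_i = \sum_i r_i \sum_j \left(\jointdistributionentry{i}{j}/r_i\right)^2 = \sum_i r_i \|p_i\|_2^2.$$
This is exactly the first term of the value of information. Adding the two pieces yields unbiasedness. The expectation is finite because $K$ has finite mean, so linearity of expectation applies.

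\emph{Sample count.} The estimator always uses $2$ samples, plus $K$ additional samples in the matching case. So
$$\E[T] = 2 + \sum_{i,j}\jointdistributionentry{i}{j}^2/r_i = 2 + \sum_i r_i\|p_i\|_2^2.$$
Each $\|p_i\|_2^2 \le 1$ and $\sum_i r_i = 1$, so $\E[T] \le 3$. Equality holds exactly when every posterior is a point mass, e.g.\ diagonal $\jointdistribution$, so the worst case over $\jointdistribution$ is exactly $3$. This matches \Cref{def:ex-ante-bounded-sample-estimator} read as a worst-case expected count, the same convention used in \Cref{thm:stop-short-estimator}.

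\emph{Main obstacle.} No single step is hard; the one needing care is the conditioning in the positive term. The geometric count $K$ must be independent of the event that the first two samples match, which holds because the later samples are fresh i.i.d. draws. We also need $r_i > 0$ on that event, which holds automatically since row $i$ was observed.
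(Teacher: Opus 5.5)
Your proof is correct and takes essentially the same approach as the paper. In both, $-\mathbf{1}[J_1=J_2]$ unbiasedly estimates $-\|p\|_2^2$, the full-pair collision has probability $\sum_i r_i^2\|p_i\|_2^2$ (off by a factor $r_i$ per row), and the geometric count $K$ with mean $1/r_i$ supplies the missing factor. For the sample count, you compute $\mathbb{E}[T] = 2 + \sum_i r_i\|p_i\|_2^2 \le 3$ exactly, whereas the paper only bounds the match probability on row $i$ by $r_i^2$ to get at most one extra sample in expectation; both read the ex-ante definition as an "at most $3$" bound.
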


We note that if $K$ is the random number of samples drawn by the collision estimator and $K'$ the fast collision estimator, then $K$ stochastically dominates $K'-1$: both estimators draw an initial sample, then the latter draws another sample, and then both estimators draw until some sample where $I_k$ matches $I_1$.  However, the latter estimator sometimes omits this draw-until-match stage, while the former estimator always undertakes it.

\paragraph{Validity of consistent strategy assumption} 
Throughout this paper, we have assumed that agents follow \textit{consistent} strategies: a single strategy is applied independently to each sample.

It is possible that an agent may benefit from correlated strategic behavior across several samples. This is particularly true in the case of ex-ante bounded-sample estimators where the payment directly scales with the number of samples taken, as in the \Cref{def:fast-collision-estimator}.

\begin{example}
    Consider a row player being paid according to the fast quadratic collision estimator, who has reported $\randomrow{1} = \randomrow{2}$, and is asked to report $\randomrow{3}$. She knows she will continue being asked for reports until she makes some report $\randomrow{K+2}$ such that $\randomrow{K+2} = \randomrow{1}$, at which point she will be paid $K - 1$. She can increase her payment on this instance arbitrarily with her willingness to sit through more questions, reporting $\randomrow{i} \neq \randomrow{1}$ for all $i \geq 3$.
\end{example}

This problem is not present in the quadratic collision estimator of \Cref{def:collision-estimator}, as the number of samples taken does not affect the payment, but it may still admit other correlated strategies across samples. A recent note \citep{F-26} finds that for DMI, truthful strategies are still optimal even when correlated strategies across all $\numsamples$ samples may be considered. Understanding estimators, both fixed-sample and ex-ante bounded-sample, that are robust to these more complex strategies may be of practical interest for future investigation.

\subsection{Arbitrarily Few-Expected Sample Estimators}\label{sec:arbitrarily-few-expected-sample-estimators}

We can lower the expected number of samples required for the collision estimator -- or any estimator -- to be arbitrarily close to $1$, at the cost of increasing the ex-post payments made. We define a general class of scaled estimators which lower the expected number of samples by trading off the variance of the payments made.

\begin{definition}[$(\alpha, k)$-Scaled Mutual Information Estimator]\label{def:scaled-estimator}
    Given an ex-ante $k$-sample mutual information unbiased estimator with stopping rule $\stoppingtime$ and payment rule $\estimator$ and $\alpha\in (0,1]$, an $(\alpha, k)$-scaled unbiased estimator $\estimator_{\alpha}$ either takes one sample and pays $\estimator_{\alpha} = 0$ with probability $\alpha$ or takes $\stoppingtime$ samples and pays $\estimator_{\alpha}(\randomsample{1}, \dots, \randomsample{\stoppingtime{\randomsample{}}}) =\frac{1}{\alpha} \estimator(\randomsample{1}, \dots, \randomsample{\stoppingtime{\randomsample{}}})$ with probability $1-\alpha$.
\end{definition}

The expected number of samples taken by a $(\alpha, k)$-Scaled Mutual Information Estimator will be $1+\alpha(k-1)$. We note that the expected value of the scaled estimator is the same as the expected value of the starting estimator on the same $\jointdistribution$:
\begin{align*}
    \expect_{\randomsample{1}, \dots,\randomsample{\stoppingtime} \sim \jointdistribution}[\estimator_{\alpha}(\randomsample{1}, \dots, \randomsample{\stoppingtime{\randomsample{}}})] = \alpha \expect_{\randomsample{1}, \dots,\randomsample{\stoppingtime} \sim \jointdistribution}[\frac{1}{\alpha}\estimator(\randomsample{1}, \dots, \randomsample{\stoppingtime{\randomsample{}}})] = \expect_{\randomsample{1}, \dots,\randomsample{\stoppingtime} \sim \jointdistribution}[\estimator(\randomsample{1}, \dots, \randomsample{\stoppingtime{\randomsample{}}})] \text{.}
\end{align*} 

Using this approach, \emph{any} mutual information with \emph{any} estimator can be made to take, in expectation, as few samples as desired. This leads to some uncertainty about what makes a ``good'' mutual information estimator, especially with ex-ante bounded samples. If all mutual informations can be made to take the same number of samples, how do we evaluate the relative quality of mutual informations? One possible answer follows \Cref{sec:fixed-sample-mutual-information-estimators}, in seeking lower-variance mutual informations.

\begin{restatable}{proposition}{scalingincreasesvariance}
\label{prop:scaling-increases-variance}
    Let $R$ be the random variable denoting the payment of an unbiased estimator $\estimator$. Then, the variance of the payment of the scaled unbiased estimator $\estimator_{\alpha}$ is:
    \begin{align*}
        \tfrac{1-\alpha}{\alpha}\E[R^2] + \Var(R) \text{.}
    \end{align*}
\end{restatable}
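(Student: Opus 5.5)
The plan is a direct second-moment computation for the two-branch mixture that defines $\estimator_{\alpha}$, written as $\E[\estimator_\alpha^2]-(\E[\estimator_\alpha])^2$. I will read \Cref{def:scaled-estimator} the way the displayed expectation computation after it does. With probability $\alpha$ the scaled estimator runs the original estimator (stopping rule $\stoppingtime$, payment rule $\estimator$) and pays $\frac{1}{\alpha}R$. With the complementary probability it pays $0$. This is the reading under which the identity $\E[\estimator_\alpha]=\E[R]$ displayed after the definition holds. The definition as typeset attaches the probabilities the other way round, so the proof should fix this convention explicitly.

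First, introduce an indicator $B\sim\mathrm{Bernoulli}(\alpha)$ for the event that the original estimator is run. $B$ is drawn independently of the sample sequence $\randomsample{1},\randomsample{2},\dots$, and hence independently of $R$. Then the scaled payment can be written as $\estimator_\alpha = \frac{B}{\alpha}R$. Independence of $B$ and $R$ is the only structural fact needed. It holds because the coin deciding whether to run the estimator is tossed before, and separately from, the samples that determine $R$. The fact that $R$ may depend on a random stopping time plays no role, since $R$ is just some random variable with finite second moment, which I assume so that the variance is defined.

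Next, compute the first two moments. By independence and $\E[B]=\alpha$, the mean is $\E[\estimator_\alpha]=\frac{\alpha}{\alpha}\E[R]=\E[R]$, which matches the computation in the text. Since $B^2=B$, the second moment is $\E[\estimator_\alpha^2]=\frac{\E[B]}{\alpha^2}\E[R^2]=\frac{1}{\alpha}\E[R^2]$. Therefore
\begin{align*}
\Var(\estimator_\alpha)=\tfrac{1}{\alpha}\E[R^2]-\E[R]^2=\left(\tfrac{1}{\alpha}-1\right)\E[R^2]+\left(\E[R^2]-\E[R]^2\right)=\tfrac{1-\alpha}{\alpha}\E[R^2]+\Var(R).
\end{align*}

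There is no real technical obstacle. The only point requiring care is the convention issue already flagged: fixing which branch carries probability $\alpha$, consistently with the unbiasedness computation in the text.
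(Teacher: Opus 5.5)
Your proof is correct and follows essentially the same route as the paper: the paper also computes $\E[R'^2]-\E[R']^2$ for the two-point payment ($R/\alpha$ with probability $\alpha$, $0$ otherwise), obtains $\frac{1}{\alpha}\E[R^2]-\E[R]^2$, and rearranges. You are also right that the typeset definition attaches the probabilities the wrong way round; the paper's proof, its unbiasedness calculation and its expected sample count $1+\alpha(k-1)$ all use the convention you adopt.
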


We propose one possible criterion of minimizing variance for a fixed expected number of samples, but leave the problem of comparing ex-ante bounded-sample mutual information quality open both technically and philosophically.

\section{Conclusions and Future Work}\label{sec:conclusions-and-future-work}

Mutual information unbiased estimators are a powerful peer prediction mechanism. Our work aims to understand the space of mutual informations that admit finite-sample unbiased estimators. In particular, \Cref{sec:fixed-sample-mutual-information-estimators} proves several uniqueness theorems about the Determinant Mutual Information of \citet{K-24}, and characterize its sample complexity for small action spaces. 

In \Cref{sec:ex-ante-bounded-sample-mutual-information-estimators}, we relax $\numsamples$-sample unbiased estimators to take $\numsamples$ samples only in \textit{expectation}. We design stop-short estimators for DMI which leverage this relaxed assumption to reduce the variance in payments of estimators, and formalize connections between scoring rules as a tool for peer prediction and mutual informations. We then present two new unbiased estimators for the quadratic score-based mutual information.

There remain several lines of inquiry to explore mutual information unbiased estimators, some of which have been addressed by recent follow-up work in \citet{K-26}. We left unresolved whether DMI was the unique mutual information with an unbiased estimator on an action space of size $\numactions > 2$. \citet{K-26} answers this, finding that for $\numsamples\in \{2\numactions, 2\numactions+1\}$ samples, DMI is the unique mutual information with an unbiased estimator using $\numsamples$ samples, and for $\numsamples<2\numactions$ there is no nontrivial mutual information with a fixed-sample estimator. \citet{K-26} also finds that there are no mutual informations with fixed-sample estimators for settings where $\numactions \neq \numactionscolumn$, implying that for such settings we \textit{must} use a ex-ante bounded estimator such as those considered in \Cref{sec:ex-ante-bounded-sample-mutual-information-estimators}.

Finally, \Cref{cor:output-of-permutation-variance-grinder-is-unique-and-convex-minimal} allows us to easily find the convex-minimal fixed-sample estimator of a mutual information, but no analogue is yet known for ex-ante bounded-sample estimators. While \Cref{thm:stop-short-estimator} shows that ex-ante bounded-sample estimators can have strictly lower variance than any fixed-sample estimator, there may be even lower variance ex-ante bounded-sample estimators.

\bibliographystyle{ACM-Reference-Format}
\bibliography{bibliography}

\clearpage

\appendix
\crefalias{section}{appendix}
\crefalias{subsection}{subappendix}

\section{Omitted Proof from \Cref{sec:model}}\label{app:prelims}

\stochasticdiagonal*

\begin{proof}
    We examine the case where $\numactions \geq \numactionscolumn$. Cases where $\numactions \leq \numactionscolumn$ can be shown as a corollary of the cases where $\numactions \geq \numactionscolumn$ by considering the transpose of $\jointdistribution$ and following the same proof steps.

    Define the diagonal entries of $\diagonalmatrix$ as
    \begin{align*}
        \diagonalmatrix_{j, j} = \sum_{i = 1}^{\numactions} \jointdistributionentry{i}{j}
    \end{align*}
    and the off-diagonal entries of $\diagonalmatrix$ as $0$. Then, depending on whether $\diagonalmatrix_{j, j}$ is zero, we construct each the $j$th column of $S$ as
    \begin{align*}
        S_{i, j} = \jointdistributionentry{i}{j} / \diagonalmatrix_{j, j}
    \end{align*}
    if $\diagonalmatrix_{j, j} \neq 0$, and instead make the $j$th column of $S$ an arbitrary probability vector if $\diagonalmatrix_{j, j} = 0$.

    The resulting matrices $S$ and $\diagonalmatrix$ are column-stochastic and diagonal, respectively, and the entries of $\diagonalmatrix$ are non-negative and sum to $1$, thus, $\diagonalmatrix \in \Delta([\numactionscolumn] \times [\numactionscolumn])$. Note that if any of the columns of $\jointdistribution$ are zero, then the choice of $S$ is non-unique by construction.
\end{proof}

\section{Omitted Proofs from \Cref{sec:fixed-sample-mutual-information-estimators}}\label{app.section3proofs}

\subsection{Omitted Proofs from \Cref{sec:determinant-mutual-information-tools} (Polynomial Characterization)}\label{app:polynomial-characterization}

\MIpolynomials*

\begin{proof}
    Suppose first that $\mutualinformationabstract$ has a $\numsamples$-sample unbiased estimator $\estimator$. Let
    \begin{align*}
        (\randomrow{1},\randomcolumn{1}),\ldots,(\randomrow{\numsamples},\randomcolumn{\numsamples})
    \end{align*}
    be $\numsamples$ independent samples from $\jointdistribution$. By unbiasedness,
    \begin{align*}
        \mutualinformationabstract(\jointdistribution)
        = \expect\left[\estimator\big((\randomrow{1},\randomcolumn{1}),\ldots,(\randomrow{\numsamples},\randomcolumn{\numsamples})\big)\right].
    \end{align*}
    Expanding this expectation over all possible sample sequences gives
    \begin{align*}
        \mutualinformationabstract(\jointdistribution)
        = \sum_{i_1=1}^{\numactions}\sum_{j_1=1}^{\numactionscolumn}\cdots
        \sum_{i_{\numsamples}=1}^{\numactions}\sum_{j_{\numsamples}=1}^{\numactionscolumn}
        \estimator\big((i_1,j_1),\ldots,(i_{\numsamples},j_{\numsamples})\big)
        \prod_{r=1}^{\numsamples}\jointdistributionentry{i_r}{j_r}.
    \end{align*}
    For each sample sequence, the corresponding value of $\estimator$ is a real number independent of $\jointdistribution$. Thus, this expression is a polynomial of degree at most $\numsamples$ in the entries of $\jointdistribution$.

    Conversely, suppose that $\mutualinformationabstract(\jointdistribution)$ is a polynomial of degree at most $\numsamples$. Let $G_\ell$ be its $\ell$th monomial, with coefficient $a_\ell$. If
    \begin{align*}
        G_\ell(\jointdistribution)
        = a_\ell \prod_{r=1}^{d}\jointdistributionentry{i_r}{j_r},
        \qquad d\leq \numsamples,
    \end{align*}
    define
    \begin{align*}
        \estimator_\ell
        = a_\ell \prod_{r=1}^{d}
        \indicate{(\randomrow{r},\randomcolumn{r})=(i_r,j_r)}.
    \end{align*}
    By independence,
    \begin{align*}
        \expect[\estimator_\ell]
        = a_\ell \prod_{r=1}^{d}\jointdistributionentry{i_r}{j_r}
        = G_\ell(\jointdistribution).
    \end{align*}
    Therefore, summing $\estimator_\ell$ over all monomials of $\mutualinformationabstract$ and applying linearity of expectation gives an unbiased $\numsamples$-sample estimator of $\mutualinformationabstract$.
\end{proof}

\independentnullstellensatz*

Our proof of \Cref{thm:independent-characterization} relies on heavy algebraic machinery, including Hilbert's nullstellensatz theorem \citep{H-1893}. 
We first introduce the notation and concepts necessary for the proof of \Cref{thm:independent-characterization}.

\begin{definition}[ideal; \citet{S-25}]
     Let $R$ be a (commutative) ring. A non-empty subset $\ideal \subset R$ is an ideal if
     \begin{enumerate}
         \item $a, b \in \ideal \implies a + b \in \ideal$, and
         \item $r \in R, a \in \ideal \implies ra \in \ideal$
     \end{enumerate}
    hold.
\end{definition}

\begin{definition}[vanishing locus, vanishing ideal; \citet{S-25}]
     For any ideal $\ideal \subset K[x_1,\dots , x_n]$ its vanishing locus $\vanishinglocus{\ideal}$ is 
    $$ \vanishinglocus{\ideal} = \{a \in K^n ~ | ~  f(a) = 0 ~\forall f \in \ideal\}.$$
    Conversely, for $A \subset K^n$ an arbitrary subset, the vanishing ideal is
    $$\vanishingideal{A} = \{ f \in K[x_1, \dots, x_n] ~ | ~ f (a) = 0 ~\forall a \in A\}.$$    
\end{definition}

\begin{definition}[radical; \citet{S-25}]
    Let $R$ be a ring and $\ideal \subset R$ an ideal.
    The radical of $\ideal$ is the ideal
    $$\rad(J) = \{ f \in R ~ | ~\exists n \in \mathbb{N} \text{ such that }f^n \in J \}.$$
\end{definition}

\begin{theorem}[strong Hilbert’s Nullstellensatz; \citet{S-25}]\label{thm:nullstellensatz}
    Let $K$ be an algebraically closed field and let $\ideal \subset K[x_1,\dots , x_n]$ be an ideal. 
    Then
    $$\vanishingideal{\vanishinglocus{\ideal}} = \rad(\ideal).$$
\end{theorem}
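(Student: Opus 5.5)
This is the classical strong Nullstellensatz, and I would prove it the standard way: first reduce it to the \emph{weak} Nullstellensatz using the Rabinowitsch trick, then prove the weak form from Zariski's lemma.

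\textbf{The easy inclusion.} The inclusion $\rad(\ideal) \subseteq \vanishingideal{\vanishinglocus{\ideal}}$ is immediate. If $f^n \in \ideal$, then for every $a \in \vanishinglocus{\ideal}$ we have $f(a)^n = 0$. Since $K$ is a field, this gives $f(a) = 0$.

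\textbf{The weak Nullstellensatz.} The statement I need is: if $\ideal \subsetneq K[x_1,\dots,x_n]$ is a proper ideal, then $\vanishinglocus{\ideal} \neq \emptyset$.
\begin{enumerate}
\item Enlarge $\ideal$ to a maximal ideal $\mathfrak{m}$; this uses Zorn's lemma, or Noetherianity via the Hilbert basis theorem. Then $L = K[x_1,\dots,x_n]/\mathfrak{m}$ is a field that is finitely generated as a $K$-algebra.
\item Apply Zariski's lemma: a field finitely generated as an algebra over $K$ is a finite, hence algebraic, extension of $K$. Since $K$ is algebraically closed, $L = K$.
\item Let $a_i \in K$ be the image of $x_i$. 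Then $x_i - a_i \in \mathfrak{m}$ for each $i$, so every $f \in \mathfrak{m}$ vanishes at $a = (a_1,\dots,a_n)$. In particular $a \in \vanishinglocus{\ideal}$.
\end{enumerate}

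\textbf{Reverse inclusion via Rabinowitsch.} Let $f \in \vanishingideal{\vanishinglocus{\ideal}}$.
\begin{enumerate}
\item By Hilbert's basis theorem, write $\ideal = (g_1,\dots,g_r)$.
\item Consider the ideal $(g_1,\dots,g_r, 1 - y f) \subseteq K[x_1,\dots,x_n,y]$. It has no common zero: at a common zero of the $g_i$ we have $f = 0$, so $1 - yf = 1 \neq 0$.
\item By the weak form, this ideal is the whole ring, so $1 = \sum_i h_i g_i + h_0 (1 - y f)$.
\item Substitute $y = 1/f$ in the field of fractions and clear denominators by multiplying by $f^N$ with $N$ large. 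This gives $f^N \in \ideal$.
\end{enumerate}

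\textbf{Main obstacle.} The substantive step is Zariski's lemma. I would prove it by induction on the number of generators, using Noether normalization: a finitely generated $K$-algebra $L$ is integral over a polynomial ring $K[y_1,\dots,y_d]$. If $L$ is a field, integrality forces $K[y_1,\dots,y_d]$ to be a field as well, which forces $d = 0$, so $L$ is finite over $K$.

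Alternatively, when $K = \mathbb{C}$ (the case used later in the paper), there is a cardinality shortcut. $L$ has countable dimension over $\mathbb{C}$. If some $t \in L$ were transcendental, the elements $1/(t-c)$ for $c \in \mathbb{C}$ would be uncountably many linearly independent elements, a contradiction. Since the paper only needs $K = \mathbb{C}$, this shortcut may be the cleanest route.
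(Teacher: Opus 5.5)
Your proof is correct and complete. It is the standard route: the weak Nullstellensatz from Zariski's lemma, then the Rabinowitsch trick for the reverse inclusion, with the easy inclusion handled directly.

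The paper gives no proof of this statement. It imports the result as a black box from \citet{S-25} and uses it only with $K=\mathbb{C}$, in the proof of \Cref{thm:independent-characterization}. Your argument is therefore self-contained where the paper relies on citation. Your countable-dimension shortcut for $\mathbb{C}$ suffices for the paper's purposes and avoids Noether normalization entirely.

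A few routine details you passed over:
\begin{enumerate}
\item In the Rabinowitsch step, substituting $y=1/f$ requires $f\neq 0$. The case $f=0$ is trivial, since $0\in\ideal$.
\item In the weak form, concluding that every $f\in\mathfrak{m}$ vanishes at $a$ uses $f-f(a)\in(x_1-a_1,\dots,x_n-a_n)\subseteq\mathfrak{m}$. Combined with $\mathfrak{m}\cap K=\{0\}$ (as $\mathfrak{m}$ is proper), this gives $f(a)=0$.
\item Noether normalization yields Zariski's lemma directly, with no extra induction on generators. A field integral over $K[y_1,\dots,y_d]$ forces that ring to be a field, hence $d=0$. The induction you mention belongs inside the proof of Noether normalization itself.
\end{enumerate}
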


\begin{proof}[Proof of \Cref{thm:independent-characterization}]
Note that any estimator must be expressible as a polynomial. 

To obtain an algebraically closed field, we consider polynomials with complex coefficients and generalize the entries of $\jointdistribution$ to be possibly complex. 
Call these variables $X = \{x_{ij} ~|~ i\in [\numactions], j\in [\numactionscolumn]\}\in \mathbb{C}^{\numactions\times \numactionscolumn}$. 
Thus, the set of possible mutual information estimators is contained in the space $\mathbb{C}[X]$, the polynomial ring over $X$ with complex coefficients. 

Observe that a joint distribution $\jointdistribution$ is generated from independent play if and only if the determinant of every $2\times 2$ minor of $F$ is zero. 

Consider the ideal generated by the determinants of $2\times 2$ minors of $\jointdistribution$, as mapped into $X$: $\ideal = \langle (x_{ab}x_{cd} - x_{ad}x_{bc}) ~|~ a\neq c, b\neq d\rangle$. 
Every element of this ideal is zero when the elements of $X$ equate to a rank-1 real-valued joint distribution $\jointdistribution$, and some element is nonzero when $X$ equates to a higher-rank matrix $\jointdistribution$. 
Thus, $\{X\in \mathbb{R}^{\numactions\times \numactionscolumn} ~|~ (x_{ab}x_{cd} - x_{ad}x_{bc})=0 \forall a\neq c, b\neq d\}\subset \vanishinglocus{\ideal}$ and $\{X\in \mathbb{R}^{\numactions\times \numactionscolumn} ~|~ \exists  a\neq c, b\neq d\text{ for which } (x_{ab}x_{cd} - x_{ad}x_{bc}) > 0\}\cap \vanishinglocus{\ideal}=\emptyset$. 
The set of polynomials which satisfy zero on independent play is then exactly the set of polynomials in $\mathbb{C}[X]$ which are zero on $\vanishinglocus{\ideal}$. 

By Theorem \ref{thm:nullstellensatz}, the set of such polynomials is exactly $\vanishingideal{\vanishinglocus{\ideal}} = \rad(\ideal) = \{f\in \mathbb{C}[X] ~|~ \exists n\in\mathbb{N} \text{ such that }f^n\in \ideal\}$.
We restrict this set to only polynomials with real coefficients, and observe that this is exactly the set of functions of the form 
\begin{align*}
     \sum_{a\neq c, b\neq d}p_{abcd}(X) (x_{ab}x_{cd} - x_{ad}x_{bc}).
\end{align*}
Replacing the complex variables $X$ with entries from $\jointdistribution$ completes the proof.
\end{proof}

\determinantfactorization*

The proof of \Cref{lma:mutual-informations-are-divisible-by-determinant-squared-of-joint-distribution-matrix}, which claims that any mutual information on a $2 \times 2$ action space with a fixed-sample estimator can be divided by the squared determinant of the joint distribution matrix, follows directly from \Cref{thm:independent-characterization}. We demonstrate this as follows:

\begin{proof}[Proof of \Cref{lma:mutual-informations-are-divisible-by-determinant-squared-of-joint-distribution-matrix}]
    For a $2 \times 2$ matrix, there is exactly one $2 \times 2$ subdeterminant: itself. Thus, from \Cref{thm:independent-characterization}, we know that any polynomial $h(\jointdistribution)$ on the matrix $\jointdistribution = \begin{pmatrix}
        a & b \\
        c & d
    \end{pmatrix}$ where $h$ satisfies the mutual information axioms can be written as a product $q(\jointdistribution) \cdot \det(\jointdistribution)$, for another polynomial $q$. It follows that if $\det(F) = 0$, then this implies that $q(\jointdistribution) = 0$.
    
    Pick some $(a, b, c, d)$ which is not the zero matrix and let us assume WLOG that $d \neq 0$, and for which $\det(\jointdistribution) = 0$.
    
    Taking derivatives of both sides wrt $a$,
    \begin{align*}
        \frac{d}{da} h(\jointdistribution) &= [\frac{d}{da} \det(\jointdistribution) ]q(\jointdistribution) + \det(\jointdistribution) [\frac{d}{da}q(\jointdistribution)]
        \\ &= [d] q(\jointdistribution) + [0] [\frac{d}{da}q(\jointdistribution)]
        \\ &= [d] q(\jointdistribution)
    \end{align*}
    We claim that $q(\jointdistribution) = 0$. Assume for a contradiction that $q(\jointdistribution) \neq 0$. Then, since $d$ was not $0$, we must have had $\frac{d}{da} h(\jointdistribution) \neq 0$. That means there is an infinitesimal movement in the direction of increasing $a$ or decreasing $a$ that produces a decrease in the value of $h$. This would violate the data-processing inequality, because $h(\jointdistribution) = 0$, and any such change would make $h$ negative.

    Thus, since $q(\jointdistribution) = 0$, \Cref{thm:independent-characterization} applies to it and we know that $q$ must be divisible by $\det(\jointdistribution)$. Therefore, $h$ is divisible by $\det(\jointdistribution)^{2}$.
\end{proof}

\subsection{Omitted Proofs from \Cref{sec:determinant-mutual-information-results} (Uniqueness of DMI with Small Samples)}

\dmiunique*

\begin{proof}
    By \Cref{lma:mutual-informations-are-polynomials}, we know that any mutual information $\mutualinformationabstract$ on a $2 \times 2$ action space with an unbiased estimator from $\numsamples$ samples must be expressible as a polynomial in the entries of $\jointdistribution = \begin{pmatrix}
        \jointdistributionentry{1}{1} & \jointdistributionentry{1}{2} \\
        \jointdistributionentry{2}{1} & \jointdistributionentry{2}{2}
    \end{pmatrix}$ with degree of exactly $\numsamples$. If we apply \Cref{lma:mutual-informations-are-divisible-by-determinant-squared-of-joint-distribution-matrix} to any such mutual information, we know that we can divide $\mutualinformationabstract$ by $\det(\jointdistribution)^{2}$, which is itself a polynomial in the entries of $\jointdistribution$ of degree exactly $4$. Thus, if $\numsamples = 4$, then we know that $\mutualinformationabstract(\jointdistribution)$ is directly proportional to $\det(\jointdistribution)^{2}$. This demonstrates that if $\numsamples = 4$, the mutual information $\mutualinformationabstract(\jointdistribution) = 16 \det(\jointdistribution)^{2}$ is unique up to a scalar multiple.

    If instead $\numsamples = 5$, then we must be able to write $\mutualinformationabstract(\jointdistribution)$ as a homogeneous polynomial of degree $1$ times $\det(\jointdistribution)^{2}$, and a homogeneous polynomial of degree $1$ is a linear function, i.e.
    \begin{align*}
        \mutualinformationabstract(\jointdistribution) = \det(\jointdistribution)^{2} (a \jointdistributionentry{1}{1} + b \jointdistributionentry{2}{1} + c \jointdistributionentry{1}{2} + d \jointdistributionentry{2}{2}) \text{.}
    \end{align*}
    Since mutual informations are alphabet invariant and $\det(\jointdistribution)^{2}$ is alphabet invariant (i.e. invariant to permutations of the rows or columns of $\jointdistribution$), we know that this linear function must also be alphabet invariant. Thus, the coefficients on the linear function must all be identical, i.e.
    \begin{align*}
        \mutualinformationabstract(\jointdistribution) = \det(\jointdistribution)^{2} (a \jointdistributionentry{1}{1} + a \jointdistributionentry{2}{1} + a \jointdistributionentry{1}{2} + a \jointdistributionentry{2}{2}) \text{.}
    \end{align*}
    Lastly, since any positive rescaling of a mutual information is still a mutual information, we can choose to set the constant $a = 16$. Since $\jointdistributionentry{1}{1} + \jointdistributionentry{2}{1} + \jointdistributionentry{1}{2} + \jointdistributionentry{2}{2} = 1$ for any joint distribution matrix, we conclude that
    \begin{align*}
        \mutualinformationabstract(\jointdistribution) = \det(\jointdistribution)^{2} \text{.}
    \end{align*}
\end{proof}

\subsection{Omitted Proofs from \Cref{sec:non-uniqueness-results} (Non-uniqueness of DMI for larger $\numsamples$)}

\stochmatrixdecomp*

\begin{proof}
We can write an arbitrary column-stochastic matrix $S$, up to row permutation, as
\begin{align*}
	S &= \begin{pmatrix}1-\varepsilon & \delta\\ \varepsilon & 1-\delta\end{pmatrix},  
\end{align*}
where $1-\varepsilon \leq \varepsilon$, as this is without loss of generality up to row permutation. 
We let $\lambda_1 = \tfrac{1-\delta}{\varepsilon}$ and $\lambda_2 = 1-\varepsilon$. 
Note that $\lambda_1$ is always defined by our assumption that the bottom entry of the first column, $\varepsilon$, is always at least the top entry, meaning $\varepsilon > 0$. 

We then evaluate our decomposition with the given values of $\lambda_1$ and $\lambda_2$:
\begin{align*}
\left( \lambda_{2} \begin{pmatrix}
        1 & 1 \\
        0 & 0
    \end{pmatrix} + (1 - \lambda_{2}) \begin{pmatrix}
        1 & 0 \\
        0 & 1
    \end{pmatrix} \right) & \begin{pmatrix}
        0 & 1 \\
        1 & 0
    \end{pmatrix} \left( \lambda_{1} \begin{pmatrix}
        1 & 1 \\
        0 & 0
    \end{pmatrix} + (1 - \lambda_{1}) \begin{pmatrix}
        1 & 0 \\
        0 & 1
    \end{pmatrix} \right)\\
&=\left(\begin{pmatrix}
        \lambda_2 & \lambda_2 \\
        0 & 0
    \end{pmatrix} + \begin{pmatrix}
        1 - \lambda_{2} & 0 \\
        0 & 1 - \lambda_{2}
    \end{pmatrix} \right)
    \left(\begin{pmatrix}
        0 & 0 \\
         \lambda_{1}  &  \lambda_{1} 
    \end{pmatrix} +  \begin{pmatrix}
        0 & 1 - \lambda_{1} \\
        1 - \lambda_{1} & 0
    \end{pmatrix} \right)\\
&=\begin{pmatrix}
        1 & \lambda_2 \\
        0 & 1 - \lambda_{2}
    \end{pmatrix}\begin{pmatrix}
        0 & 1 - \lambda_{1}\\
        1  &  \lambda_{1} 
    \end{pmatrix}\\
&=\begin{pmatrix}
        \lambda_2 & 1 - \lambda_{1}(1 - \lambda_2)\\
        1 - \lambda_{2} & \lambda_1(1-\lambda_{2})\\
    \end{pmatrix}\\
&= \begin{pmatrix}
        1-\varepsilon & 1 - \frac{1-\delta}{\varepsilon}\varepsilon\\
        \varepsilon & \frac{1-\delta}{\varepsilon}\varepsilon\\
    \end{pmatrix}\\
&= \begin{pmatrix}
        1-\varepsilon & \delta\\
        \varepsilon & 1-\delta\\
    \end{pmatrix}, 
\end{align*}
as claimed. 
\end{proof}

\binaryalphabet*

\begin{proof}
    We begin with the backwards direction.

    To show the backwards direction, consider that we know from \Cref{lma:stochastic-matrix-decomposition} that any column stochastic matrix can be written as the product of at most four ``simple'' matrices. Suppose those matrices are $S_{1}, S_{2}, S_{3}, S_{4}$, and let $S = S_{1} S_{2} S_{3} S_{4}$. Then, for any joint distribution $\jointdistribution \in \Delta([2] \times [2])$, if the row data processing inequality could be shown to hold on any of the simple matrices, it would hold on $S$ as follows:
    \begin{align*}
        \mutualinformationabstract(S \jointdistribution) &= \mutualinformationabstract(S_{1} S_{2} S_{3} S_{4} \jointdistribution)
        \\ &\leq \mutualinformationabstract(S_{2} S_{3} S_{4} \jointdistribution)
        \\ &\leq \mutualinformationabstract(S_{3} S_{4} \jointdistribution)
        \\ &\leq \mutualinformationabstract(S_{4} \jointdistribution)
        \\ &\leq \mutualinformationabstract(\jointdistribution)
    \end{align*}
    Thus, to show the row data processing inequality holds, we need merely show that it holds on any of the "simple" matrices in the decomposition; namely, permutation matrices and matrices of the form
    \begin{align*}
        \lambda \begin{pmatrix}
            1 & 1 \\
            0 & 0
        \end{pmatrix} + (1 - \lambda) \begin{pmatrix}
            1 & 0 \\
            0 & 1
        \end{pmatrix}
    \end{align*}
    for some $\lambda \in [0, 1]$.

    It immediately follows from our premise that
    \begin{align*}
        \mutualinformationabstract(\alphabetpermutation \jointdistribution) = \mutualinformationabstract(\jointdistribution) \text{ for all joint distributions $\jointdistribution$ and permutation matrices $\alphabetpermutation$}
    \end{align*}
    that the row data processing inequality $\mutualinformationabstract(\alphabetpermutation \jointdistribution) \leq \mutualinformationabstract(\jointdistribution)$ holds on permutation matrices $\alphabetpermutation$. Thus, all that remains to be shown is that it holds on all matrices of the form
    \begin{align*}
        \lambda \begin{pmatrix}
            1 & 1 \\
            0 & 0
        \end{pmatrix} + (1 - \lambda) \begin{pmatrix}
            1 & 0 \\
            0 & 1
        \end{pmatrix}
    \end{align*}
    for some $\lambda \in [0, 1]$.
    
    Let a joint distribution $\jointdistribution \in \Delta([2] \times [2])$ be given. Then, using \Cref{prop:stochastic-diagonal-decomposition}, we decompose the joint distribution $\jointdistribution$ into a column-stochastic matrix $S = \begin{pmatrix}
        1 - \varepsilon & \delta \\
        \varepsilon & 1 - \delta
    \end{pmatrix}$ and a diagonal matrix $\diagonalmatrix = \begin{pmatrix}
        a & 0 \\
        0 & 1 - a
    \end{pmatrix}$, i.e. we write $S \diagonalmatrix = \jointdistribution$. We then consider an alternative joint distribution of the form
    \begin{align*}
        \widetilde{\jointdistribution}(\lambda) \triangleq \left( \lambda \begin{pmatrix}
            1 & 1 \\
            0 & 0
        \end{pmatrix} + (1 - \lambda) S \right) \diagonalmatrix = \begin{pmatrix}
            a (\lambda + (1 - \lambda)(1 - \varepsilon)) & (1 - a) (\lambda + (1 - \lambda) \delta) \\
            a (1 - \lambda) \varepsilon & (1 - a) (1 - \lambda) (1 - \varepsilon)
        \end{pmatrix}
    \end{align*}
    produced by multiplying $\jointdistribution$ on the left by the "simple" column stochastic matrix
    \begin{align*}
        \lambda \begin{pmatrix}
            1 & 1 \\
            0 & 0
        \end{pmatrix} + (1 - \lambda) \begin{pmatrix}
            1 & 0 \\
            0 & 1
        \end{pmatrix} \text{.}
    \end{align*}
    We observe that the following four identities hold:
    \begin{align*}
        \frac{d\widetilde{\jointdistributionentry{1}{1}}}{d\lambda} &= a \varepsilon = \jointdistributionentry{2}{1}
        \\ \frac{d\widetilde{\jointdistributionentry{2}{1}}}{d\lambda} &= -a \varepsilon = -\jointdistributionentry{2}{1}
        \\ \frac{d\widetilde{\jointdistributionentry{1}{2}}}{d\lambda} &= (1 - a) (1 - \delta) = \jointdistributionentry{2}{2}
        \\ \frac{d\widetilde{\jointdistributionentry{2}{2}}}{d\lambda} &= -(1 - a) (1 - \delta) = -\jointdistributionentry{2}{2}
    \end{align*}
    
    Taking the total derivative of $\mutualinformationabstract(\widetilde{\jointdistribution}(\lambda))$ with respect to $\lambda$ and setting $\lambda = 0$, these four identities let us obtain
    \begin{align*}
        \frac{d(\mutualinformationabstract(\widetilde{\jointdistribution}(\lambda)))}{d\lambda} \vert_{\lambda = 0} &= \frac{\partial \mutualinformationabstract}{\partial \jointdistributionentry{1}{1}} (\jointdistribution) \jointdistributionentry{2}{1} + \frac{\partial \mutualinformationabstract}{\partial \jointdistributionentry{2}{1}} (\jointdistribution) (-\jointdistributionentry{2}{1}) + \frac{\partial \mutualinformationabstract}{\partial \jointdistributionentry{1}{2}} (\jointdistribution) \jointdistributionentry{2}{2} + \frac{\partial \mutualinformationabstract}{\partial \jointdistributionentry{2}{2}} (\jointdistribution) (-\jointdistributionentry{2}{2})
        \\ &= \left[ \frac{\partial \mutualinformationabstract}{\partial \jointdistributionentry{1}{1}} (\jointdistribution) - \frac{\partial \mutualinformationabstract}{\partial \jointdistributionentry{2}{1}} (\jointdistribution) \right] \jointdistributionentry{2}{1} + \left[ \frac{\partial \mutualinformationabstract}{\partial \jointdistributionentry{1}{2}} (\jointdistribution) - \frac{\partial \mutualinformationabstract}{\partial \jointdistributionentry{2}{2}} (\jointdistribution) \right] \jointdistributionentry{2}{2}
    \end{align*}

    Note that one of our assumptions for the backwards direction is that the above expression is negative for all joint distributions $\jointdistribution$.

    Furthermore, let us consider that the simple column stochastic matrix
    \begin{align*}
        \lambda \begin{pmatrix}
            1 & 1 \\
            0 & 0
        \end{pmatrix} + (1 - \lambda) \begin{pmatrix}
            1 & 0 \\
            0 & 1
        \end{pmatrix}
    \end{align*}
    can be written as the product of two simple matrices with associated coefficients $\lambda_{1}, \lambda_{2} \in [0, 1]$ where $\lambda = 1 - (1 - \lambda_{1}) (1 - \lambda_{2})$, i.e.
    \begin{align*}
        &\phantom{=} \lambda \begin{pmatrix}
            1 & 1 \\
            0 & 0
        \end{pmatrix} + (1 - \lambda) \begin{pmatrix}
            1 & 0 \\
            0 & 1
        \end{pmatrix}
        \\ &= \left( \lambda_{1} \begin{pmatrix}
            1 & 1 \\
            0 & 0
        \end{pmatrix} + (1 - \lambda_{1}) \begin{pmatrix}
            1 & 0 \\
            0 & 1
        \end{pmatrix} \right) \left( \lambda_{2} \begin{pmatrix}
            1 & 1 \\
            0 & 0
        \end{pmatrix} + (1 - \lambda_{2}) \begin{pmatrix}
            1 & 0 \\
            0 & 1
        \end{pmatrix} \right)
    \end{align*}
    Suppose that we fix $\lambda_{2} = z \in [0, 1]$ and the total value of $\lambda$ fixed as well. Then, rearranging our expression for $\lambda$, we obtain
    \begin{align*}
        \lambda_{2} = 1 - \frac{1 - \lambda}{1 - z} \text{.}
    \end{align*}
    From this, we can rewrite our total derivative for $\mutualinformationabstract(\widetilde{\jointdistribution}(\lambda))$ with respect to $\lambda$ at non-zero values of $\lambda = z$ as
    \begin{align*}
        &\phantom{=} \frac{d(\mutualinformationabstract(\widetilde{\jointdistribution}(\lambda)))}{d\lambda} \vert_{\lambda = z}
        \\ &= \frac{d(\mutualinformationabstract( \left((1 - \frac{1 - \lambda}{1 - z}) \begin{pmatrix}
            1 & 1 \\
            0 & 0
        \end{pmatrix} + (1 - (1 - \frac{1 - \lambda}{1 - z})) \begin{pmatrix}
            1 & 0 \\
            0 & 1
        \end{pmatrix} \right) \widetilde{\jointdistribution}(z)))}{d\lambda} \vert_{\lambda = z}
        \\ &= \frac{1}{1 - z} \left[ \left[ \frac{\partial \mutualinformationabstract}{\partial \jointdistributionentry{1}{1}} (\widetilde{\jointdistribution}(z)) - \frac{\partial \mutualinformationabstract}{\partial \jointdistributionentry{2}{1}} (\widetilde{\jointdistribution}(z)) \right] \widetilde{\jointdistributionentry{2}{1}}(z) + \left[ \frac{\partial \mutualinformationabstract}{\partial \jointdistributionentry{1}{2}} (\widetilde{\jointdistribution}(z)) - \frac{\partial \mutualinformationabstract}{\partial \jointdistributionentry{2}{2}} (\widetilde{\jointdistribution}(z)) \right] \widetilde{\jointdistributionentry{2}{2}}(z) \right]
        \\ &\leq 0 \text{.}
    \end{align*}
    This directly allows us to show that the data processing inequality holds on our class of simple matrices, because
    \begin{align*}
        \mutualinformationabstract(\left( \lambda \begin{pmatrix}
            1 & 1 \\
            0 & 0
        \end{pmatrix} + (1 - \lambda) \begin{pmatrix}
            1 & 0 \\
            0 & 1
        \end{pmatrix} \right) \jointdistribution) - \mutualinformationabstract(\jointdistribution) &= \int_{0}^{\lambda} \frac{d(\mutualinformationabstract(\jointdistribution))}{d\lambda} \vert_{\lambda = z} dz \leq 0 \text{.}
    \end{align*}
    This completes the backwards direction.

    For the forwards direction, consider that for any 2 by 2 permutation matrix $\alphabetpermutation$, it is the case that $\alphabetpermutation^{2}$ is the identity matrix. Thus, if the row data processing inequality holds, then we have
    \begin{align*}
        \mutualinformationabstract(\jointdistribution) = \mutualinformationabstract(\alphabetpermutation^{2} \jointdistribution) \leq \mutualinformationabstract(\alphabetpermutation \jointdistribution) \leq \mutualinformationabstract(\jointdistribution) \text{,}
    \end{align*}
    and so it follows that $\mutualinformationabstract(\alphabetpermutation \jointdistribution) = \mutualinformationabstract(\jointdistribution)$ for any joint distribution matrix $\jointdistribution \in \Delta([2] \times [2])$. Furthermore, we observe that the data processing inequality on all matrices implies that it holds on all matrices of the form
    \begin{align*}
        \lambda \begin{pmatrix}
            1 & 1 \\
            0 & 0
        \end{pmatrix} + (1 - \lambda) \begin{pmatrix}
            1 & 0 \\
            0 & 1
        \end{pmatrix}
    \end{align*}
    for some $\lambda \in [0, 1]$, including arbitrarily small $\lambda$ used to calculate the derivative
    \begin{align*}
        \frac{d(\mutualinformationabstract(\widetilde{\jointdistribution}(\lambda)))}{d\lambda} \vert_{\lambda = 0} &= \left[ \frac{\partial \mutualinformationabstract}{\partial \jointdistributionentry{1}{1}} (\jointdistribution) - \frac{\partial \mutualinformationabstract}{\partial \jointdistributionentry{2}{1}} (\jointdistribution) \right] \jointdistributionentry{2}{1} + \left[ \frac{\partial \mutualinformationabstract}{\partial \jointdistributionentry{1}{2}} (\jointdistribution) - \frac{\partial \mutualinformationabstract}{\partial \jointdistributionentry{2}{2}} (\jointdistribution) \right] \jointdistributionentry{2}{2}
    \end{align*}
    from our proof of the backwards direction. Having shown both our desired properties of $\mutualinformationabstract$, we conclude the forwards direction of the proof.
\end{proof}

We now prove that there exists another mutual information with a 6-sample estimator. 

\dminotunique*

For completeness, we also provide an estimator for the mutual information given in \Cref{thm:dmi-is-not-unique-at-six-samples}. 

Note that can divide an estimator into estimating $\det(\jointdistribution)^{2}$, or DMI, and the quantity
\begin{align}
	1 + (\jointdistributionentry{1}{1} + \jointdistributionentry{1}{2} - \jointdistributionentry{2}{1} - \jointdistributionentry{2}{2})^{2} + (\jointdistributionentry{1}{1} + \jointdistributionentry{2}{1} - \jointdistributionentry{1}{2} - \jointdistributionentry{2}{2})^{2}.\label{eqn:6-sample-est-term}
\end{align}
We have several estimators for DMI which take four samples, so it is sufficient to provide a 2-sample estimator for the above term. 

\begin{lemma}
There is a 2-sample unbiased estimator for $1 + (\jointdistributionentry{1}{1} + \jointdistributionentry{1}{2} - \jointdistributionentry{2}{1} - \jointdistributionentry{2}{2})^{2} + (\jointdistributionentry{1}{1} + \jointdistributionentry{2}{1} - \jointdistributionentry{1}{2} - \jointdistributionentry{2}{2})^{2}$, as follows:
\begin{align*}
\estimator_2(\randomsample{1}, \randomsample{2}) &= 1 + (\indicate{\randomrow{1}=1} - \indicate{\randomrow{1} = 2})(\indicate{\randomrow{2} = 1} - \indicate{\randomrow{2} = 2})\\
&\phantom{=1}+ (\indicate{\randomcolumn{1} = 1} - \indicate{\randomcolumn{1} = 2})(\indicate{\randomcolumn{2} = 1} - \indicate{\randomcolumn{2} = 2}). 
\end{align*}
\end{lemma}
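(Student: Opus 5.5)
The plan is to compute the expectation of $\estimator_2$ directly, using linearity of expectation and the independence of the two samples $\randomsample{1} = (\randomrow{1}, \randomcolumn{1})$ and $\randomsample{2} = (\randomrow{2}, \randomcolumn{2})$, which are drawn i.i.d.\ from $\jointdistribution$. This is the same mechanism as the converse direction of \Cref{lma:mutual-informations-are-polynomials}: a product of functions of independent samples has expectation equal to the product of their expectations.

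First, I will handle the row term. Define $U_t = \indicate{\randomrow{t}=1} - \indicate{\randomrow{t}=2}$ for $t \in \{1,2\}$. Each $U_t$ depends only on sample $t$, so $U_1$ and $U_2$ are independent and identically distributed. Therefore $\expect[U_1 U_2] = \expect[U_1]^2$. Moreover,
\[
\expect[U_1] = \Pr[\randomrow{1}=1] - \Pr[\randomrow{1}=2] = \jointdistributionentry{1}{1} + \jointdistributionentry{1}{2} - \jointdistributionentry{2}{1} - \jointdistributionentry{2}{2},
\]
since the row marginal of action $i$ is $\jointdistributionentry{i}{1} + \jointdistributionentry{i}{2}$. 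So the expectation of the row term is the first squared quantity in the statement.

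Second, I will treat the column term identically. Define $V_t = \indicate{\randomcolumn{t}=1} - \indicate{\randomcolumn{t}=2}$. Then $\expect[V_1 V_2] = \expect[V_1]^2$, with $\expect[V_1] = \jointdistributionentry{1}{1} + \jointdistributionentry{2}{1} - \jointdistributionentry{1}{2} - \jointdistributionentry{2}{2}$. Adding the constant $1$ and applying linearity gives exactly the target expression, so $\estimator_2$ is unbiased.

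There is no real obstacle here. The only point needing care is that $U_1$ and $V_1$ are correlated, since they come from the same sample. This does not matter, because no product across row and column quantities appears: each product pairs a quantity from sample 1 with a quantity from sample 2.

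Afterwards, I will note how this feeds into \Cref{thm:dmi-is-not-unique-at-six-samples}. Take a $4$-sample unbiased estimator of $16\det(\jointdistribution)^2$ on samples $3$ through $6$, for example from \Cref{cor:four-sample-convex-minimal-dmi-estimator-with-binary-alphabet}. Multiply it by $\estimator_2$ applied to samples $1$ and $2$. Because the two factors use disjoint, independent samples, the expectation of the product is the product of the expectations. This yields a $6$-sample unbiased estimator of the function $\mutualinformationabstract$ in that theorem.
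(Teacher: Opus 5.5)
Your proof is correct and takes the same approach as the paper: both use linearity of expectation and the independence of $\randomsample{1}$ and $\randomsample{2}$ to factor each cross-sample product into the square of a marginal difference. Your follow-up remark is also right to insist that the DMI factor and $\estimator_2$ use disjoint samples, since that independence is what makes the product estimator unbiased.
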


\begin{proof}
We take the expectation of our estimator for some joint distribution $F$:
\begin{align*}
\E_{\randomsample{1}, \randomsample{2} \iid \jointdistribution}[\estimator_2(\randomsample{1}, \randomsample{2})] &= 1 + \E_{\randomsample{1}, \randomsample{2} \iid \jointdistribution}[(\indicate{\randomrow{1} = 1} - \indicate{\randomrow{1} = 2})(\indicate{\randomrow{2} = 1} - \indicate{\randomrow{2} = 2})]\\
&\phantom{=1}+ \E_{\randomsample{1}, \randomsample{2} \iid \jointdistribution}[(\indicate{\randomcolumn{1} = 1} - \indicate{\randomcolumn{1} = 2})(\indicate{\randomcolumn{2} = 1} - \indicate{\randomcolumn{2} = 2})]\\
&= 1 + \E_{\randomsample{1}\iid \jointdistribution}[\indicate{\randomrow{1} = 1} - \indicate{\randomrow{1} = 2}]\E_{\randomsample{2}\iid \jointdistribution}[\indicate{\randomrow{2} = 1} - \indicate{\randomrow{2} = 2}]\\
&\phantom{=1}+ \E_{\randomsample{1}\iid \jointdistribution}[\indicate{\randomcolumn{1} = 1} - \indicate{\randomcolumn{1} = 2}]\E_{\randomsample{2}\iid \jointdistribution}[\indicate{\randomcolumn{2} = 1} - \indicate{\randomcolumn{2} = 2}]\\
&= 1 + (\jointdistributionentry{1}{1} + \jointdistributionentry{1}{2} - \jointdistributionentry{2}{1} - \jointdistributionentry{2}{2})^2 + (\jointdistributionentry{1}{1} + \jointdistributionentry{2}{1} - \jointdistributionentry{1}{2} - \jointdistributionentry{2}{2})^{2},
\end{align*}
as claimed.
\end{proof}

We can combine these estimators to get an unbiased six-sample estimator for our non-DMI mutual information. 

\begin{definition}[Six-Sample Mutual Information Estimator]\label{def:six-sample-estimator}
Given samples $(\randomsample{1}, \dots, \randomsample{6})$, 
\begin{enumerate}
    \item Run a $2\times 2$ DMI estimator on samples $(\randomsample{1}, \dots, \randomsample{6})$
    \item Run $\estimator_2(\randomsample{5}, \randomsample{6})$
    \item return the product of the values from steps (1) and (2). 
\end{enumerate}
\end{definition}

We now prove that the function in \Cref{thm:dmi-is-not-unique-at-six-samples} is, in fact, a mutual information. 

\begin{proof}[Proof of \Cref{thm:dmi-is-not-unique-at-six-samples}]
    We begin by rewriting the formula for $\mutualinformationabstract(\jointdistribution)$ as
    \begin{align*}
        \mutualinformationabstract(\jointdistribution) = 16 \det(\jointdistribution)^{2} ((\jointdistributionentry{1}{1} + \jointdistributionentry{1}{2} + \jointdistributionentry{2}{1} + \jointdistributionentry{2}{2})^{2} + (\jointdistributionentry{1}{1} + \jointdistributionentry{1}{2} - \jointdistributionentry{2}{1} - \jointdistributionentry{2}{2})^{2} + (\jointdistributionentry{1}{1} + \jointdistributionentry{2}{1} - \jointdistributionentry{1}{2} - \jointdistributionentry{2}{2})^{2})
    \end{align*}
    using the fact that $\jointdistributionentry{1}{1} + \jointdistributionentry{1}{2} + \jointdistributionentry{2}{1} + \jointdistributionentry{2}{2} = 1$. This makes it clear that $\mutualinformationabstract(\jointdistribution)$ can be written as a degree $6$ \textit{homogeneous} polynomial.

    Since $\mutualinformationabstract(\jointdistribution)$ is a degree $6$ homogeneous polynomial in the entries of $\jointdistribution$, it must possess a $6$-sample unbiased estimator. All that remains to be shown is that $\mutualinformationabstract$ satisfies the mutual information axioms.

    First, we know that $\mutualinformationabstract$ satisfies the "zero on independent play" axiom, because $\det(\jointdistribution) = 0$ on independent play, and thus $\mutualinformationabstract(\jointdistribution) = 16 \det(\jointdistribution)^{2} ((\jointdistributionentry{1}{1} + \jointdistributionentry{1}{2} + \jointdistributionentry{2}{1} + \jointdistributionentry{2}{2})^{2} + (\jointdistributionentry{1}{1} + \jointdistributionentry{1}{2} - \jointdistributionentry{2}{1} - \jointdistributionentry{2}{2})^{2} + (\jointdistributionentry{1}{1} + \jointdistributionentry{2}{1} - \jointdistributionentry{1}{2} - \jointdistributionentry{2}{2})^{2}) = 0$ on independent play.
    
    Next, to show that $\mutualinformationabstract$ satisfies the data processing inequality, we consider \Cref{lma:binary-alphabet-data-processing-inequality-for-mutual-informations}. It is clear that $\mutualinformationabstract$ is invariant to permutations of its rows and columns. Therefore, we need only show that the inequality
    \begin{align*}
        0 \geq \left[ \frac{\partial \mutualinformationabstract}{\partial \jointdistributionentry{1}{1}} (\jointdistribution) - \frac{\partial \mutualinformationabstract}{\partial \jointdistributionentry{2}{1}} (\jointdistribution) \right] \jointdistributionentry{2}{1} + \left[ \frac{\partial \mutualinformationabstract}{\partial \jointdistributionentry{1}{2}} (\jointdistribution) - \frac{\partial \mutualinformationabstract}{\partial \jointdistributionentry{2}{2}} (\jointdistribution) \right] \jointdistributionentry{2}{2}
    \end{align*}
    holds for all joint distributions $\jointdistribution$.
    
    We begin by computing the partial derivatives of $\mutualinformationabstract$ with respect to the four entries of $\jointdistribution$:
    \begin{align*}
        \frac{\partial \mutualinformationabstract}{\partial \jointdistributionentry{1}{1}}(\jointdistribution) &= 32 \det(\jointdistribution) (\jointdistributionentry{2}{2})(1 + (\jointdistributionentry{1}{1} + \jointdistributionentry{1}{2} - \jointdistributionentry{2}{1} - \jointdistributionentry{2}{2})^{2} + (\jointdistributionentry{1}{1} + \jointdistributionentry{2}{1} - \jointdistributionentry{1}{2} - \jointdistributionentry{2}{2})^{2})
        \\ &+ 16 \det(\jointdistribution)^{2} (2 (\jointdistributionentry{1}{1} + \jointdistributionentry{1}{2} - \jointdistributionentry{2}{1} - \jointdistributionentry{2}{2}) + 2 (\jointdistributionentry{1}{1} + \jointdistributionentry{2}{1} - \jointdistributionentry{1}{2} - \jointdistributionentry{2}{2}))
        \\ \frac{\partial \mutualinformationabstract}{\partial \jointdistributionentry{1}{2}} &= 32 \det(\jointdistribution) (-\jointdistributionentry{2}{1})(1 + (\jointdistributionentry{1}{1} + \jointdistributionentry{1}{2} - \jointdistributionentry{2}{1} - \jointdistributionentry{2}{2})^{2} + (\jointdistributionentry{1}{1} + \jointdistributionentry{2}{1} - \jointdistributionentry{1}{2} - \jointdistributionentry{2}{2})^{2})
        \\ &+ 16 \det(\jointdistribution)^{2} (2 (\jointdistributionentry{1}{1} + \jointdistributionentry{1}{2} - \jointdistributionentry{2}{1} - \jointdistributionentry{2}{2}) - 2 (\jointdistributionentry{1}{1} + \jointdistributionentry{2}{1} - \jointdistributionentry{1}{2} - \jointdistributionentry{2}{2}))
        \\ \frac{\partial \mutualinformationabstract}{\partial \jointdistributionentry{2}{1}} &= 32 \det(\jointdistribution) (-\jointdistributionentry{1}{2})(1 + (\jointdistributionentry{1}{1} + \jointdistributionentry{1}{2} - \jointdistributionentry{2}{1} - \jointdistributionentry{2}{2})^{2} + (\jointdistributionentry{1}{1} + \jointdistributionentry{2}{1} - \jointdistributionentry{1}{2} - \jointdistributionentry{2}{2})^{2})
        \\ &+ 16 \det(\jointdistribution)^{2} (-2 (\jointdistributionentry{1}{1} + \jointdistributionentry{1}{2} - \jointdistributionentry{2}{1} - \jointdistributionentry{2}{2}) + 2 (\jointdistributionentry{1}{1} + \jointdistributionentry{2}{1} - \jointdistributionentry{1}{2} - \jointdistributionentry{2}{2}))
        \\ \frac{\partial \mutualinformationabstract}{\partial \jointdistributionentry{2}{2}} &= 32 \det(\jointdistribution) (\jointdistributionentry{1}{1})(1 + (\jointdistributionentry{1}{1} + \jointdistributionentry{1}{2} - \jointdistributionentry{2}{1} - \jointdistributionentry{2}{2})^{2} + (\jointdistributionentry{1}{1} + \jointdistributionentry{2}{1} - \jointdistributionentry{1}{2} - \jointdistributionentry{2}{2})^{2})
        \\ &+ 16 \det(\jointdistribution)^{2} (-2 (\jointdistributionentry{1}{1} + \jointdistributionentry{1}{2} - \jointdistributionentry{2}{1} - \jointdistributionentry{2}{2}) - 2 (\jointdistributionentry{1}{1} + \jointdistributionentry{2}{1} - \jointdistributionentry{1}{2} - \jointdistributionentry{2}{2}))
    \end{align*}
    
    More succinctly, these four partial derivatives can be written as the following formula:
    \begin{align*}
        \frac{\partial \mutualinformationabstract}{\partial \jointdistributionentry{i}{j}}(\jointdistribution) &= 32 \det(\jointdistribution) ((-1)^{\mathbb{1}(i \neq j)} \jointdistributionentry{i}{j})(1 + (\jointdistributionentry{1}{1} + \jointdistributionentry{1}{2} - \jointdistributionentry{2}{1} - \jointdistributionentry{2}{2})^{2} + (\jointdistributionentry{1}{1} + \jointdistributionentry{2}{1} - \jointdistributionentry{1}{2} - \jointdistributionentry{2}{2})^{2})
        \\ &+ 16 \det(\jointdistribution)^{2} ((-1)^{\mathbb{1}(i = 2)} 2 (\jointdistributionentry{1}{1} + \jointdistributionentry{1}{2} - \jointdistributionentry{2}{1} - \jointdistributionentry{2}{2}) + (-1)^{\mathbb{1}(j = 2)} 2 (\jointdistributionentry{1}{1} + \jointdistributionentry{2}{1} - \jointdistributionentry{1}{2} - \jointdistributionentry{2}{2}))
    \end{align*}
    
    Plugging these into the right-hand side of our inequality, we obtain
    \begin{align*}
        &\phantom{=} 32 \det(\widetilde{\jointdistribution}(\lambda)) (\jointdistributionentry{2}{1} \jointdistributionentry{2}{2} + \jointdistributionentry{1}{1} (-\jointdistributionentry{2}{2}) - \jointdistributionentry{2}{2} \jointdistributionentry{2}{1} - \jointdistributionentry{1}{2} (-\jointdistributionentry{2}{1}))
        \\ &\quad \cdot ((\jointdistributionentry{1}{1} + \jointdistributionentry{1}{2} + \jointdistributionentry{2}{1} + \jointdistributionentry{2}{2})^{2} + (\jointdistributionentry{1}{1} + \jointdistributionentry{1}{2} - \jointdistributionentry{2}{1} - \jointdistributionentry{2}{2})^{2} + (\jointdistributionentry{1}{1} + \jointdistributionentry{2}{1} - \jointdistributionentry{1}{2} - \jointdistributionentry{2}{2})^{2})
        \\ &+ 16 \det(\jointdistribution)^{2} (2(\jointdistributionentry{1}{1} + \jointdistributionentry{2}{1} - \jointdistributionentry{1}{2} - \jointdistributionentry{2}{2})(\jointdistributionentry{2}{1} + (-\jointdistributionentry{2}{1}) - \jointdistributionentry{2}{2} - (-\jointdistributionentry{2}{2}))
        \\ &\quad + 2(\jointdistributionentry{1}{1} + \jointdistributionentry{1}{2} - \jointdistributionentry{2}{1} - \jointdistributionentry{2}{2})(\jointdistributionentry{2}{1} + \jointdistributionentry{2}{2} - (-\jointdistributionentry{2}{1}) - (-\jointdistributionentry{2}{2})))
        \\ &= 32 \det(\jointdistribution)^{2} [ -(\jointdistributionentry{1}{1} + \jointdistributionentry{1}{2} + \jointdistributionentry{2}{1} + \jointdistributionentry{2}{2})^{2} - (\jointdistributionentry{1}{1} + \jointdistributionentry{1}{2} - \jointdistributionentry{2}{1} - \jointdistributionentry{2}{2})^{2}
        \\ &\quad - (\jointdistributionentry{1}{1} + \jointdistributionentry{2}{1} - \jointdistributionentry{1}{2} - \jointdistributionentry{2}{2})^{2} + 2(\jointdistributionentry{1}{1} + \jointdistributionentry{1}{2} - \jointdistributionentry{2}{1} - \jointdistributionentry{2}{2})(\jointdistributionentry{2}{1} + \jointdistributionentry{2}{2})]
    \end{align*}
    Let $R_{1} \triangleq \jointdistributionentry{1}{1} + \jointdistributionentry{1}{2}$ denote the first row sum of the joint distribution matrix $\jointdistribution$, and note that when restricted to the simplex, we can write the second row sum as $R_{2} = 1 - R_{1} = \jointdistributionentry{2}{1} + \jointdistributionentry{2}{2}$. Our right-hand side can then be written as
    \begin{align*}
        32 \det(\jointdistribution)^{2} [ -(1)^{2} - (R_{1} - (1 - R_{1}))^{2}
       \quad - (\jointdistributionentry{1}{1} + \jointdistributionentry{2}{1} - \jointdistributionentry{1}{2} - \jointdistributionentry{2}{2})^{2} + 2(R_{1} - (1 - R_{1}))(1 - R_{1})]
    \end{align*}
    Note that the term $- (\jointdistributionentry{1}{1} + \jointdistributionentry{2}{1} - \jointdistributionentry{1}{2} - \jointdistributionentry{2}{2})^{2}$ is bounded above by $0$, and that this inequality is tight when $\jointdistributionentry{1}{1} = \jointdistributionentry{1}{2}$ and $\jointdistributionentry{2}{1} = \jointdistributionentry{2}{2}$ for any fixed row sums $R_{1}, R_{2}$. Furthermore, the term $32 \det(\jointdistribution)^{2}$ is also non-negative. Thus, the derivative as a whole is non-positive if the term
    \begin{align*}
        -(1)^{2} - (R_{1} - (1 - R_{1}))^{2} + 2(R_{1} - (1 - R_{1}))(1 - R_{1})
    \end{align*}
    is non-positive for all first row sums $R_{1} \in [0, 1]$. This can be seen by factoring the above polynomial into the form
    \begin{align*}
        -\frac{7}{8} - 8 \left( R_{1} - \frac{5}{8} \right)^{2}
    \end{align*}
    which is non-positive. We have now shown that $\mutualinformationabstract$ is a mutual information.

    We lastly observe that the mutual information $\mutualinformationabstract$ is not equal to DMI up to a scalar multiple. Recall that when the formula for DMI is not restricted to the simplex and instead treated as a degree $6$ homogeneous polynomial, its formula becomes
    \begin{align*}
        16 \det(\jointdistribution)^{2} (\jointdistributionentry{1}{1} + \jointdistributionentry{1}{2} + \jointdistributionentry{2}{1} + \jointdistributionentry{2}{2})^{2}
    \end{align*}
    whereas our alternative mutual information is
    \begin{align*}
        \mutualinformationabstract(\jointdistribution) = 16 \det(\jointdistribution)^{2} ((\jointdistributionentry{1}{1} + \jointdistributionentry{1}{2} + \jointdistributionentry{2}{1} + \jointdistributionentry{2}{2})^{2} + (\jointdistributionentry{1}{1} + \jointdistributionentry{1}{2} - \jointdistributionentry{2}{1} - \jointdistributionentry{2}{2})^{2} + (\jointdistributionentry{1}{1} + \jointdistributionentry{2}{1} - \jointdistributionentry{1}{2} - \jointdistributionentry{2}{2})^{2}) \text{.}
    \end{align*}
    For these polynomials to be the same when restricted to the simplex, we would need
    \begin{align*}
        0 = (\jointdistributionentry{1}{1} + \jointdistributionentry{1}{2} - \jointdistributionentry{2}{1} - \jointdistributionentry{2}{2})^{2} + (\jointdistributionentry{1}{1} + \jointdistributionentry{2}{1} - \jointdistributionentry{1}{2} - \jointdistributionentry{2}{2})^{2}
    \end{align*}
    for any $\jointdistribution \in \Delta([2] \times [2])$, but since the right-hand side is not the zero polynomial, this is not the case. Therefore, the two mutual informations are distinct.
\end{proof}

\subsection{Omitted Proofs from \Cref{sec:optimal-estimator-tools} (Convex-Minimal Estimators)}

\makenewestimators*
\begin{proof}
    To show that $\estimator^{*}$ and $\estimator$ are unbiased estimators of the same mutual information, we must show that both possess the same mean for any joint distribution 
    $\jointdistribution$ over outcomes. 

    We compute the mean of $\estimator^{*}$, i.e. the MI $\mutualinformation{\estimator^{*}}$ it estimates, as follows:
    \begin{align*}
        \mutualinformation{\estimator^{*}}(\jointdistribution) &= \expect_{(\randomrow{\timeindex}, \randomcolumn{\timeindex}) \overset{\textrm{i.i.d.}}{\sim} \jointdistribution}[\estimator^{*}((\randomrow{1}, \randomcolumn{1}), ..., (\randomrow{\numsamples}, \randomcolumn{\numsamples}))]
        \\ &= \expect_{(\randomrow{\timeindex}, \randomcolumn{\timeindex}) \overset{\textrm{i.i.d.}}{\sim} \jointdistribution}[\expect_{\indexpermutation}[\estimator((\randomrow{\indexpermutation(1)}, \randomcolumn{\indexpermutation(1)}), ..., (\randomrow{\indexpermutation(\numsamples)}, \randomcolumn{\indexpermutation(\numsamples)}))]]
        \\ &= \expect_{\indexpermutation}[\expect_{(\randomrow{\timeindex}, \randomcolumn{\timeindex}) \overset{\textrm{i.i.d.}}{\sim} \jointdistribution}[\estimator((\randomrow{\indexpermutation(1)}, \randomcolumn{\indexpermutation(1)}), ..., (\randomrow{\indexpermutation(\numsamples)}, \randomcolumn{\indexpermutation(\numsamples)}))]] & & \text{(independence of $\indexpermutation$ and $\{ (\randomrow{\timeindex}, \randomcolumn{\timeindex}) \}_{\timeindex = 1}^{\numsamples}$)}
        \\ &= \expect_{\indexpermutation}[\expect_{(\randomrow{\timeindex}, \randomcolumn{\timeindex}) \overset{\textrm{i.i.d.}}{\sim} \jointdistribution}[\estimator((\randomrow{1}, \randomcolumn{1}), ..., (\randomrow{\numsamples}, \randomcolumn{\numsamples}))]] & & \text{(since $\{ (\randomrow{\timeindex}, \randomcolumn{\timeindex}) \}_{\timeindex = 1}^{\numsamples}$ are i.i.d.)}
        \\ &= \expect_{(\randomrow{\timeindex}, \randomcolumn{\timeindex}) \overset{\textrm{i.i.d.}}{\sim} \jointdistribution}[\estimator((\randomrow{1}, \randomcolumn{1}), ..., (\randomrow{\numsamples}, \randomcolumn{\numsamples}))]
        \\ &= \mutualinformation{\estimator}(\jointdistribution) \text{.}
    \end{align*}
    Thus, both $\estimator$ and $\estimator^{*}$ are unbiased estimators of the same mutual information.
\end{proof}

\sampleorderinvariance*
\begin{proof}
    
    We will show that $\estimator^{*}$ is convex-dominated by $\estimator$ by showing that for any convex function $\convexfunction$,
    \begin{align*}
        \expect_{(\randomrow{\timeindex}, \randomcolumn{\timeindex}) \overset{\textrm{i.i.d.}}{\sim} \jointdistribution}[\convexfunction(\estimator^{*}((\randomrow{1}, \randomcolumn{1}), ..., (\randomrow{\numsamples}, \randomcolumn{\numsamples})))] \leq \expect_{(\randomrow{\timeindex}, \randomcolumn{\timeindex}) \overset{\textrm{i.i.d.}}{\sim} \jointdistribution}[\convexfunction(\estimator((\randomrow{1}, \randomcolumn{1}), ..., (\randomrow{\numsamples}, \randomcolumn{\numsamples})))] \text{.}
    \end{align*}

    We upper-bound the left-had side as follows:
    \begin{align*}
        &\mathrel{\phantom{=}} \expect_{(\randomrow{\timeindex}, \randomcolumn{\timeindex}) \overset{\textrm{i.i.d.}}{\sim} \jointdistribution}[\convexfunction(\estimator^{*}((\randomrow{1}, \randomcolumn{1}), ..., (\randomrow{\numsamples}, \randomcolumn{\numsamples})))]
        \\ &= \expect_{(\randomrow{\timeindex}, \randomcolumn{\timeindex}) \overset{\textrm{i.i.d.}}{\sim} \jointdistribution}[\convexfunction(\expect_{\indexpermutation}[\estimator((\randomrow{\indexpermutation(1)}, \randomcolumn{\indexpermutation(1)}), ..., (\randomrow{\indexpermutation(\numsamples)}, \randomcolumn{\indexpermutation(\numsamples)}))])]
        \\ &\leq \expect_{(\randomrow{\timeindex}, \randomcolumn{\timeindex}) \overset{\textrm{i.i.d.}}{\sim} \jointdistribution}[\expect_{\indexpermutation}[\convexfunction(\estimator((\randomrow{\indexpermutation(1)}, \randomcolumn{\indexpermutation(1)}), ..., (\randomrow{\indexpermutation(\numsamples)}, \randomcolumn{\indexpermutation(\numsamples)})))]]  & & \text{(Jensen's Inequality)}
        \\ &= \expect_{\indexpermutation}[\expect_{(\randomrow{\timeindex}, \randomcolumn{\timeindex}) \overset{\textrm{i.i.d.}}{\sim} \jointdistribution}[\convexfunction(\estimator((\randomrow{\indexpermutation(1)}, \randomcolumn{\indexpermutation(1)}), ..., (\randomrow{\indexpermutation(\numsamples)}, \randomcolumn{\indexpermutation(\numsamples)})))]] & & \text{(independence of $\indexpermutation$ and $\{ (\randomrow{\timeindex}, \randomcolumn{\timeindex}) \}_{\timeindex = 1}^{\numsamples}$)}
        \\ &= \expect_{\indexpermutation}[\expect_{(\randomrow{\timeindex}, \randomcolumn{\timeindex}) \overset{\textrm{i.i.d.}}{\sim} \jointdistribution}[\convexfunction(\estimator((\randomrow{1}, \randomcolumn{1}), ..., (\randomrow{\numsamples}, \randomcolumn{\numsamples})))]] & & \text{(since $\{ (\randomrow{\timeindex}, \randomcolumn{\timeindex}) \}_{\timeindex = 1}^{\numsamples}$ are i.i.d.)}
        \\ &= \expect_{(\randomrow{\timeindex}, \randomcolumn{\timeindex}) \overset{\textrm{i.i.d.}}{\sim} \jointdistribution}[\convexfunction(\estimator((\randomrow{1}, \randomcolumn{1}), ..., (\randomrow{\numsamples}, \randomcolumn{\numsamples})))]
    \end{align*}

    Thus, for any convex function $\convexfunction$,
    \begin{align*}
        \expect_{(\randomrow{\timeindex}, \randomcolumn{\timeindex}) \overset{\textrm{i.i.d.}}{\sim} \jointdistribution}[\convexfunction(\estimator^{*}((\randomrow{1}, \randomcolumn{1}), ..., (\randomrow{\numsamples}, \randomcolumn{\numsamples})))] \leq \expect_{(\randomrow{\timeindex}, \randomcolumn{\timeindex}) \overset{\textrm{i.i.d.}}{\sim} \jointdistribution}[\convexfunction(\estimator((\randomrow{1}, \randomcolumn{1}), ..., (\randomrow{\numsamples}, \randomcolumn{\numsamples})))]
    \end{align*}
    and therefore $\estimator^{*}$ is convex-dominated by $\estimator$.

    We now argue that if $\estimator$ is not sample-order-invariant, $\estimator^*$ is strictly convex dominated by $\estimator$. 
    
    Let $\jointdistribution$ be the uniform matrix, so any sequence of $\numsamples$ samples has positive probability of occurring. 
    
    Note that the inequality comes entirely from Jensen's inequality. 
    The strict version of this inequality states that for strictly convex $\convexfunction$, this inequality is strict \textit{when the expectation of $\estimator$ over $\pi$ is not constant} for some sequence drawn from our uniform $\jointdistribution$, which happens with positive probability. 
    Thus, $\estimator$ strictly convex-dominates $\estimator^*$ whenever $\estimator$ is not itself sample-order-invariant.
\end{proof}

\MIsampleorderinvariant*

\begin{proof}
    Suppose that for our fixed row and column alphabet sizes $\numactions, \numactionscolumn$ and number of samples $\numsamples$, there exist two sample-order-invariant unbiased estimators, $\estimator$ and $\estimator'$, of the mutual information $\mutualinformationabstract$. In other words, assume that $\mutualinformation{\estimator} = \mutualinformation{\estimator'} = \mutualinformationabstract$. Then, we can write
    \begin{align*}
        0 &= \mutualinformation{\estimator}(\jointdistribution) - \mutualinformation{\estimator'}(\jointdistribution)
        \\ &= \expect_{\tallymatrixrandom \sim \multinomialdistribution{\numsamples}{\jointdistribution}}[\estimator(\tallymatrixrandom)] - \expect_{\tallymatrixrandom \sim \multinomialdistribution{\numsamples}{\jointdistribution}}[\estimator'(\tallymatrixrandom)]
        \\ &= \expect_{\tallymatrixrandom \sim \multinomialdistribution{\numsamples}{\jointdistribution}}[\estimator(\tallymatrixrandom) - \estimator'(\tallymatrixrandom)]
        \\ &= \sum_{\tallymatrix \in \tallyset{\numsamples}{\numactions}{\numactionscolumn}} (\estimator(\tallymatrix) - \estimator'(\tallymatrix)) \frac{\numsamples!}{\prod_{i = 1}^{\numactions} \prod_{j = 1}^{\numactionscolumn} (\tallymatrixentry{i}{j}!)} \prod_{i = 1}^{\numactions} \prod_{j = 1}^{\numactionscolumn} \jointdistributionentry{i}{j}^{\tallymatrixentry{i}{j}}
    \end{align*}
    which is a homogeneous multivariate polynomial of degree $\numsamples$. In order for this polynomial to be identically $0$ on all inputs $\jointdistribution$, its coefficients must always be $0$. The multinomial coefficient $\frac{\numsamples!}{\prod_{i = 1}^{\numactions} \prod_{j = 1}^{\numactionscolumn} (\tallymatrixentry{i}{j}!)}$ is always non-zero. Therefore, $\estimator(\tallymatrix) = \estimator'(\tallymatrix)$ for all tally matrices $\tallymatrix \in \tallyset{\numsamples}{\numactions}{\numactionscolumn}$. There can thus exist at most one sample-order-invariant estimator from $\numsamples$ samples and with row and column alphabet sizes $\numactions, \numactionscolumn$ of any mutual information.
\end{proof}

\subsection{Omitted Proofs from \Cref{sec:optimal-estimator-results} (Convex-Minimal Estimation of DMI)}

\convexminimalbinary*

\begin{proof}
    From \Cref{thm:convex-minimal-dmi-estimator-with-binary-alphabet}, we know that for $\numsamples = 4$ samples, the convex-minimal estimator $\estimator$ of DMI $16 \det(\jointdistribution)$ is
    \begin{align*}
        \estimator \left( \begin{bmatrix}
            2 & 0 \\
            0 & 2
        \end{bmatrix} \right)
        =
        \estimator \left( \begin{bmatrix}
            0 & 2 \\
            2 & 0
        \end{bmatrix} \right)
        &=
        \frac{8}{3}
        \\
        \\
        \estimator \left( \begin{bmatrix}
            1 & 1 \\
            1 & 1
        \end{bmatrix} \right)
        &=
        -\frac{4}{3}
        \\
        \\
        \estimator(\text{ anything else }) &= 0
    \end{align*}
    and we can use this estimator to construct the convex-minimal estimator when $\numsamples > 4$. We do this as follows: Consider an estimator on $\numsamples > 4$ samples that takes the first $4$ samples (ignoring the others) and returns $\estimator$ on those first $4$ samples. Then, apply the permutation variance grinder to this randomized estimator. By \Cref{cor:output-of-permutation-variance-grinder-is-unique-and-convex-minimal}, we know that this estimator is sample-order-invariant and convex-minimal.

    We can compute the explicit formula for this convex-minimal estimator of DMI. Let $\tallymatrix^{\indexpermutation}[:4]$ be the tally matrix solely considering the first $4$ samples in the order $\indexpermutation$, where the order $\indexpermutation$ is drawn uniformly at random from the set of all permutations of $\numsamples$ indices. We then compute our convex-minimal estimator $\estimator^{*}$ as
    \begin{align*}
        \estimator^{*}(\tallymatrix) \equiv \expect_{\indexpermutation}[\estimator(\tallymatrix^{\indexpermutation}[:4])] \text{.}
    \end{align*}
    Observe that the possible tally matrices $\tallymatrix^{\indexpermutation}[:4]$ that produce a non-zero value of $\estimator$ are solely
    \begin{align*}
        \begin{bmatrix}
            2 & 0 \\
            0 & 2
        \end{bmatrix}
        \text{, }
        \begin{bmatrix}
            0 & 2 \\
            2 & 0
        \end{bmatrix}
        \text{, and }
        \begin{bmatrix}
            1 & 1 \\
            1 & 1
        \end{bmatrix} \text{.}
    \end{align*}
    Furthermore, note that the probability of observing any of these three tally matrices can be written as the probability mass function of the multivariate hypergeometric distribution, i.e. the distribution induced by sampling $4$ elements without replacement from a bin that contains $\tallymatrixentry{1}{1}$ elements of type $(1, 1)$, $\tallymatrixentry{2}{1}$ elements of type $(2, 1)$, $\tallymatrixentry{1}{2}$ elements of type $(1, 2)$, and $\tallymatrixentry{2}{2}$ elements of type $(2, 2)$. The probability this multivariate hypergeometric distribution assigns to observing empirical matrix
    \begin{align*}
        \left( \begin{bmatrix}
            s_{1, 1} & s_{1, 2} \\
            s_{2, 1} & s_{2, 2}
        \end{bmatrix} \right)
    \end{align*}
    for samples $s_{1, 1} + s_{2, 1} + s_{1, 2} + s_{2, 2} = 4$ is the standard formula
    \begin{align*}
        \frac{\frac{\tallymatrixentry{1}{1}!}{s_{1, 1}! (\tallymatrixentry{1}{1} - s_{1, 1})!} \cdot \frac{\tallymatrixentry{2}{1}!}{s_{2, 1}! (\tallymatrixentry{2}{1} - s_{2, 1})!} \cdot \frac{\tallymatrixentry{1}{2}!}{s_{1, 2}! (\tallymatrixentry{1}{2} - s_{1, 2})!} \cdot \frac{\tallymatrixentry{2}{2}!}{s_{2, 2}! (\tallymatrixentry{2}{2} - s_{2, 2})!}}{\frac{\numsamples!}{4! (\numsamples - 4)!}} \text{,}
    \end{align*}
    i.e. the probability is the product of the binomial coefficients associated with each sample index, divided by the binomial coefficient for all samples.

    Thus, we can write our expectation over index permutations $\indexpermutation$ as
    \begin{align*}
        \estimator^{*}(\tallymatrix) &\equiv \expect_{\indexpermutation}[\estimator(\tallymatrix^{\indexpermutation}[:4])]
        \\ &= \estimator \left( \begin{bmatrix}
            2 & 0 \\
            0 & 2
        \end{bmatrix} \right) \frac{\frac{\tallymatrixentry{1}{1}!}{2! (\tallymatrixentry{1}{1} - 2)!} \cdot \frac{\tallymatrixentry{2}{2}!}{2! (\tallymatrixentry{2}{2} - 2)!}}{\frac{\numsamples!}{4! (\numsamples - 4)!}}
        \\ &\mathrel{\phantom{=}} + \estimator \left( \begin{bmatrix}
            0 & 2 \\
            2 & 0
        \end{bmatrix} \right) \frac{\frac{\tallymatrixentry{2}{1}!}{2! (\tallymatrixentry{2}{1} - 2)!} \cdot \frac{\tallymatrixentry{1}{2}!}{2! (\tallymatrixentry{1}{2} - 2)!}}{\frac{\numsamples!}{4! (\numsamples - 4)!}}
        \\ &\mathrel{\phantom{=}} + \estimator \left( \begin{bmatrix}
            1 & 1 \\
            1 & 1
        \end{bmatrix} \right) \frac{\tallymatrixentry{1}{1} \cdot \tallymatrixentry{2}{1} \cdot \tallymatrixentry{1}{2} \cdot \tallymatrixentry{2}{2}}{\frac{\numsamples!}{4! (\numsamples - 4)!}}
        \\ &= \left( \frac{4! (\numsamples - 4)!}{\numsamples!} \right) (\estimator \left( \begin{bmatrix}
            2 & 0 \\
            0 & 2
        \end{bmatrix} \right) \cdot \frac{\tallymatrixentry{1}{1} (\tallymatrixentry{1}{1} - 1)}{2} \cdot \frac{\tallymatrixentry{2}{2} (\tallymatrixentry{2}{2} - 1)}{2}
        \\ &\mathrel{\phantom{=}} + \estimator \left( \begin{bmatrix}
            0 & 2 \\
            2 & 0
        \end{bmatrix} \right) \cdot \frac{\tallymatrixentry{2}{1} (\tallymatrixentry{2}{1} - 1)}{2} \cdot \frac{\tallymatrixentry{1}{2} (\tallymatrixentry{1}{2} - 1)}{2}
        \\ &\mathrel{\phantom{=}} + \estimator \left( \begin{bmatrix}
            1 & 1 \\
            1 & 1
        \end{bmatrix} \right) \cdot \tallymatrixentry{1}{1} \cdot \tallymatrixentry{2}{1} \cdot \tallymatrixentry{1}{2} \cdot \tallymatrixentry{2}{2} )
    \end{align*}
    Then, inserting our values of
    \begin{align*}
        \estimator \left( \begin{bmatrix}
            2 & 0 \\
            0 & 2
        \end{bmatrix} \right)
        =
        \estimator \left( \begin{bmatrix}
            0 & 2 \\
            2 & 0
        \end{bmatrix} \right)
        &=
        \frac{8}{3}
        \\
        \\
        \text{and } \estimator \left( \begin{bmatrix}
            1 & 1 \\
            1 & 1
        \end{bmatrix} \right)
        &=
        -\frac{4}{3} \text{,}
    \end{align*}
    we obtain
    \begin{align*}
        \estimator^{*}(\tallymatrix) = \left( \frac{4! (\numsamples - 4)!}{\numsamples!} \right) (&\frac{8}{3} \cdot \frac{\tallymatrixentry{1}{1} (\tallymatrixentry{1}{1} - 1)}{2} \cdot \frac{\tallymatrixentry{2}{2} (\tallymatrixentry{2}{2} - 1)}{2}
        \\ + &\frac{8}{3} \cdot \frac{\tallymatrixentry{2}{1} (\tallymatrixentry{2}{1} - 1)}{2} \cdot \frac{\tallymatrixentry{1}{2} (\tallymatrixentry{1}{2} - 1)}{2}
        \\ - &\frac{4}{3} \cdot \tallymatrixentry{1}{1} \cdot \tallymatrixentry{2}{1} \cdot \tallymatrixentry{1}{2} \cdot \tallymatrixentry{2}{2} )
    \end{align*}
    which, after grouping together coefficients, becomes
    \begin{align*}
        \estimator^{*}(\tallymatrix) = \frac{16 (\numsamples - 4)!}{\numsamples!} (&\tallymatrixentry{1}{1} (\tallymatrixentry{1}{1} - 1) \tallymatrixentry{2}{2} (\tallymatrixentry{2}{2} - 1)
        \\ + &\tallymatrixentry{2}{1} (\tallymatrixentry{2}{1} - 1) \tallymatrixentry{1}{2} (\tallymatrixentry{1}{2} - 1)
        \\ - &2 \tallymatrixentry{1}{1} \tallymatrixentry{2}{1} \tallymatrixentry{1}{2} \tallymatrixentry{2}{2} ) \text{.}
    \end{align*}
    Splitting the term $2 \tallymatrixentry{1}{1} \tallymatrixentry{2}{1} \tallymatrixentry{1}{2} \tallymatrixentry{2}{2}$ into two copies of $\tallymatrixentry{1}{1} \tallymatrixentry{2}{1} \tallymatrixentry{1}{2} \tallymatrixentry{2}{2}$ and grouping by like coefficients, we obtain our final formula for the convex-minimal estimator of DMI:
    \begin{align*}
        \frac{16 (\numsamples - 4)!}{\numsamples!} ( &\tallymatrixentry{1}{1} \tallymatrixentry{2}{2} ((\tallymatrixentry{1}{1} - 1) (\tallymatrixentry{2}{2} - 1) - \tallymatrixentry{2}{1} \tallymatrixentry{1}{2})
        \\ + &\tallymatrixentry{2}{1} \tallymatrixentry{1}{2} ((\tallymatrixentry{2}{1} - 1) (\tallymatrixentry{1}{2} - 1) - \tallymatrixentry{1}{1} \tallymatrixentry{2}{2}) )
    \end{align*}
\end{proof}

\foursample*

\begin{proof}
    There is a finite number of inputs to the formula from \Cref{thm:convex-minimal-dmi-estimator-with-binary-alphabet-variance} on $4$ samples. Enumerating all the $2 \times 2$ tally matrices on $4$ samples and evaluating each them produces the desired lookup table.
\end{proof}

\section{Additional Search Space Reductions on Mutual Informations}\label{app:extended-search-space-reductions}

Though they are not used directly in any proofs from \Cref{sec:fixed-sample-mutual-information-estimators}, we also prove several tools for narrowing the search space for fixed-sample unbiased estimators of mutual informations. We provide these tools here.

After having restricted the search space to sample-order-invariant estimators using \Cref{cor:output-of-permutation-variance-grinder-is-unique-and-convex-minimal}, we can further reduce the search space by observing that sample-order-invariant fixed-sample estimators of a mutual information are \textit{alphabet invariant}; that is, permuting the labels of either the row or column player's actions should not change the value of the estimator. We state this property formally in \Cref{lma:sample-order-invariant-mutual-information-estimators-are-alphabet-invariant}: 

\begin{restatable}[Sample-order-invariant MI estimators are alphabet invariant]{lemma}{alphabetinvariant}
\label{lma:sample-order-invariant-mutual-information-estimators-are-alphabet-invariant}
    For any sample-order-invariant $\numsamples$-sample estimator $\estimator$ that induces a mutual information $\mutualinformation{\estimator}(\jointdistribution) := \expect_{\tallymatrixrandom \sim \multinomialdistribution{\numsamples}{\jointdistribution}}[\estimator(\tallymatrixrandom)]$, the estimator $\estimator$ is alphabet invariant, i.e. for all tally matrices $\tallymatrix \in \tallyset{\numsamples}{\numactions}{\numactionscolumn}$, $\numactions \times \numactions$ permutation matrices $\alphabetpermutation$, and $\numactionscolumn \times \numactionscolumn$ permutation matrices $\alphabetpermutation'$, we have $\estimator(\alphabetpermutation \tallymatrix) = \estimator(\tallymatrix \alphabetpermutation') = \estimator(\tallymatrix)$.
\end{restatable}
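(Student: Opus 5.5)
The plan is to reduce alphabet invariance of the estimator to alphabet invariance of the mutual information, and then invoke uniqueness of sample-order-invariant estimators (\Cref{thm:any-mutual-information-has-at-most-one-sample-order-invariant-estimator}).

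\textbf{Step 1: the mutual information is permutation invariant.} Let $\alphabetpermutation$ be an $\numactions\times\numactions$ permutation matrix. It is column-stochastic, and so is $\alphabetpermutation^{-1}=\transpose{\alphabetpermutation}$. Applying the row data processing inequality twice gives
\begin{align*}
\mutualinformation{\estimator}(\jointdistribution)=\mutualinformation{\estimator}(\alphabetpermutation^{-1}\alphabetpermutation\jointdistribution)\le \mutualinformation{\estimator}(\alphabetpermutation\jointdistribution)\le \mutualinformation{\estimator}(\jointdistribution).
\end{align*}
Hence $\mutualinformation{\estimator}(\alphabetpermutation\jointdistribution)=\mutualinformation{\estimator}(\jointdistribution)$. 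This is the same argument as the forward direction of \Cref{lma:binary-alphabet-data-processing-inequality-for-mutual-informations}. The column data processing inequality, applied to the row-stochastic $\alphabetpermutation'$ and its inverse, gives $\mutualinformation{\estimator}(\jointdistribution\alphabetpermutation')=\mutualinformation{\estimator}(\jointdistribution)$ in the same way.

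\textbf{Step 2: the relabelled estimator is again a sample-order-invariant unbiased estimator.} Define $\estimator'(\tallymatrix):=\estimator(\alphabetpermutation\tallymatrix)$ on $\tallyset{\numsamples}{\numactions}{\numactionscolumn}$. It is still a function of the tally matrix alone, so it is sample-order-invariant. Left multiplication by $\alphabetpermutation$ just relabels the cells of the $\numactions\times\numactionscolumn$ grid by a fixed bijection $\sigma$ of the row labels.

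So if $\tallymatrixrandom\sim\multinomialdistribution{\numsamples}{\jointdistribution}$, then $\alphabetpermutation\tallymatrixrandom\sim\multinomialdistribution{\numsamples}{\alphabetpermutation\jointdistribution}$. This can be checked directly from \Cref{def:multinomial-distribution}: the multinomial coefficient and the product $\prod \jointdistributionentry{i}{j}^{\tallymatrixentry{i}{j}}$ are both invariant under simultaneously relabelling $\tallymatrix$ and $\jointdistribution$. Therefore
\begin{align*}
\expect_{\tallymatrixrandom\sim\multinomialdistribution{\numsamples}{\jointdistribution}}[\estimator'(\tallymatrixrandom)]=\expect_{\tallymatrixrandom'\sim\multinomialdistribution{\numsamples}{\alphabetpermutation\jointdistribution}}[\estimator(\tallymatrixrandom')]=\mutualinformation{\estimator}(\alphabetpermutation\jointdistribution)=\mutualinformation{\estimator}(\jointdistribution),
\end{align*}
where the last equality is Step 1. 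Thus $\estimator'$ is an unbiased estimator of $\mutualinformation{\estimator}$.

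\textbf{Step 3: conclude by uniqueness.} Both $\estimator$ and $\estimator'$ are sample-order-invariant $\numsamples$-sample unbiased estimators of the same mutual information. By \Cref{thm:any-mutual-information-has-at-most-one-sample-order-invariant-estimator} they coincide, so $\estimator(\alphabetpermutation\tallymatrix)=\estimator(\tallymatrix)$ for every tally matrix. The column case $\estimator(\tallymatrix\alphabetpermutation')=\estimator(\tallymatrix)$ follows identically, using the column half of Step 1.

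\textbf{Main obstacle.} The only point needing care is the distributional identity in Step 2, namely that relabelling commutes with multinomial sampling. I would justify it directly from the pmf, or by noting that applying the relabelling $\sigma$ to each i.i.d.\ sample yields i.i.d.\ samples from $\alphabetpermutation\jointdistribution$ and then tallying. I would also remark that Step 1 genuinely uses the data processing inequality, and that the column statement needs the full (two-sided) mutual information hypothesis rather than only a row mutual information.
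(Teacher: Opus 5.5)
Your proposal is correct and follows the paper's argument. The paper notes that the lemma follows directly from \Cref{thm:any-mutual-information-has-at-most-one-sample-order-invariant-estimator}, which is your Step 3. Its written proof also uses the same two ingredients: applying the data processing inequality twice to get $\mutualinformation{\estimator}(\alphabetpermutation\jointdistribution)=\mutualinformation{\estimator}(\jointdistribution)$, and the identity $\alphabetpermutation\tallymatrixrandom\sim\multinomialdistribution{\numsamples}{\alphabetpermutation\jointdistribution}$. The only difference is that the paper writes out the coefficient comparison (a degree-$\numsamples$ homogeneous polynomial vanishing on the simplex has all coefficients zero) instead of citing the uniqueness proposition.
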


\begin{proof}
    The claim follows directly from \Cref{thm:any-mutual-information-has-at-most-one-sample-order-invariant-estimator}, but we can also prove it directly.

    We know from the data processing inequality that for any unbiased estimator $\estimator$ of a mutual information $\mutualinformation{\estimator}$ and any permutation matrix $P$, $\mutualinformation{\estimator}(\alphabetpermutation \jointdistribution) = \mutualinformation{\estimator}(\jointdistribution)$, because
    \begin{align*}
        \mutualinformation{\estimator}(\jointdistribution) = \mutualinformation{\estimator}(\alphabetpermutation^{-1} \alphabetpermutation \jointdistribution) \leq \mutualinformation{\estimator}(\alphabetpermutation \jointdistribution) \leq \mutualinformation{\estimator}(\jointdistribution) \text{.}
    \end{align*}
    Thus, we can write
    \begin{align*}
        0 &= \mutualinformation{\estimator}(\alphabetpermutation \jointdistribution) - \mutualinformation{\estimator}(\jointdistribution)
        \\ &= \expect_{\tallymatrixrandom \sim \multinomialdistribution{\numsamples}{\alphabetpermutation \jointdistribution}}[\estimator(\tallymatrixrandom)] - \expect_{\tallymatrixrandom \sim \multinomialdistribution{\numsamples}{\jointdistribution}}[\estimator(\tallymatrixrandom)]
        \\ &= \expect_{\tallymatrixrandom \sim \multinomialdistribution{\numsamples}{\jointdistribution}}[\estimator(\alphabetpermutation \tallymatrixrandom)] - \expect_{\tallymatrixrandom \sim \multinomialdistribution{\numsamples}{\jointdistribution}}[\estimator(\tallymatrixrandom)]
        \\ &= \expect_{\tallymatrixrandom \sim \multinomialdistribution{\numsamples}{\jointdistribution}}[\estimator(\alphabetpermutation \tallymatrixrandom) - \estimator(\tallymatrixrandom)]
        \\ &= \sum_{\tallymatrix \in \tallyset{\numsamples}{\numactions}{\numactionscolumn}} (\estimator(\alphabetpermutation \tallymatrix) - \estimator(\tallymatrix)) \frac{\numsamples!}{\prod_{i = 1}^{\numactions} \prod_{j = 1}^{\numactionscolumn} (\tallymatrixentry{i}{j}!)} \prod_{i = 1}^{\numactions} \prod_{j = 1}^{\numactionscolumn} \jointdistributionentry{i}{j}^{\tallymatrixentry{i}{j}}
    \end{align*}
    which is a homogeneous multivariate polynomial of degree $\numsamples$. In order for this polynomial to be identically $0$ on all inputs $\jointdistribution$, its coefficients must always be $0$. Thus, for all tally matrices $\tallymatrix \in \tallyset{\numsamples}{\numactions}{\numactionscolumn}$ and $\numactions \times \numactions$ permutation matrices $\alphabetpermutation$, we have $\estimator(\alphabetpermutation \tallymatrix) = \estimator(\tallymatrix)$. The proof is identical for showing that $\estimator(\tallymatrix \alphabetpermutation') = \estimator(\tallymatrix)$.
\end{proof}

We then turn to the mutual information axioms from \Cref{def:mutual-information} to narrow the search space even further. We begin by invoking the \textit{zero on independent play} axiom, which states that a mutual information is zero on rank one joint distributions. \Cref{thm:independence-induced-linear-constraints} proves a powerful consequence of this axiom and \Cref{lma:sample-order-invariant-mutual-information-estimators-are-alphabet-invariant}, inducing several linear constraints on any sample-order-invariant estimator of a mutual information.

\begin{restatable}[Independence-induced linear constraints]{theorem}{linearconstraints}
\label{thm:independence-induced-linear-constraints}
    For any sample-order-invariant $\numsamples$-sample unbiased estimator $\estimator$ of a mutual information and for all row sums $\rowsum{1} \geq ... \geq \rowsum{\numactions} \in \mathbb{N}$ and column sums $\columnsum{1} \geq ... \geq \columnsum{\numactionscolumn} \in \mathbb{N}$ such that $\sum_{i = 1}^{\numactions} \rowsum{i} = \numsamples = \sum_{j = 1}^{\numactionscolumn} \columnsum{j}$, we have
    \begin{align*}
        0 = \sum_{\substack{\tallymatrix \in \tallyset{\numsamples}{\numactions}{\numactionscolumn} : \\ \forall i \in [\numactions], \sum_{j = 1}^{\numactionscolumn} \tallymatrixentry{i}{j} = \rowsum{i} \\ \forall j \in [\numactionscolumn], \sum_{i = 1}^{\numactions} \tallymatrixentry{i}{j} = \columnsum{j}}} \frac{\numsamples!}{\prod_{i = 1}^{\numactions} \prod_{j = 1}^{\numactionscolumn} (\tallymatrixentry{i}{j}!)} \estimator(\tallymatrix)
    \end{align*}
\end{restatable}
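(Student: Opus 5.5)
We plan to evaluate the zero-on-independent-play identity at rank-one distributions and then read off coefficients. For a rank-one joint distribution $\jointdistribution = \probvectorone \transpose{\probvectortwo}$ we have $\jointdistributionentry{i}{j} = \probvectoroneentry{i} \probvectortwoentry{j}$. Expanding the expectation over the multinomial distribution (\Cref{def:multinomial-distribution}) gives
\begin{align*}
    0 = \mutualinformation{\estimator}(\probvectorone \transpose{\probvectortwo}) = \sum_{\tallymatrix \in \tallyset{\numsamples}{\numactions}{\numactionscolumn}} \frac{\numsamples!}{\prod_{i,j} \tallymatrixentry{i}{j}!} \estimator(\tallymatrix) \prod_{i} \probvectoroneentry{i}^{\rowsum{i}(\tallymatrix)} \prod_{j} \probvectortwoentry{j}^{\columnsum{j}(\tallymatrix)} \text{.}
\end{align*}
Here $\rowsum{i}(\tallymatrix)$ and $\columnsum{j}(\tallymatrix)$ denote the row and column sums of $\tallymatrix$. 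The monomial attached to $\tallymatrix$ depends only on its marginals. Grouping the tally matrices by their marginals $(\rowsum{}, \columnsum{})$ therefore turns the right-hand side into a polynomial in the variables $\probvectorone$ and $\probvectortwo$. The coefficient of $\prod_i \probvectoroneentry{i}^{\rowsum{i}} \prod_j \probvectortwoentry{j}^{\columnsum{j}}$ in this polynomial is exactly the sum in the statement.

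The next step is to show that this polynomial vanishes identically as a polynomial, so that every coefficient is zero. We only know that it vanishes for $\probvectorone \in \Delta_{\numactions}$ and $\probvectortwo \in \Delta_{\numactionscolumn}$, so we need to extend this. The polynomial is bihomogeneous, of degree $\numsamples$ in $\probvectorone$ and of degree $\numsamples$ in $\probvectortwo$. For any nonnegative vectors $\probvectorone, \probvectortwo$ that are not zero, we rescale them to lie on the simplices; this multiplies the polynomial's value by the positive factor $(\sum_i \probvectoroneentry{i})^{\numsamples} (\sum_j \probvectortwoentry{j})^{\numsamples}$, so the value is zero there as well. Hence the polynomial vanishes on the open positive orthant of $\R^{\numactions} \times \R^{\numactionscolumn}$, which is a nonempty open set, and a polynomial vanishing on a nonempty open set is the zero polynomial. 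All of its coefficients are therefore zero, and this gives the displayed identity for every choice of marginals.

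The main point requiring care is the homogenization step. Without it, vanishing on the simplex would only kill the polynomial modulo $(\sum_i \probvectoroneentry{i} - 1)$ and $(\sum_j \probvectortwoentry{j} - 1)$, and coefficients could then cancel against each other. Bihomogeneity is exactly what rules this out. Once the identity holds for all marginals, the ordering assumptions $\rowsum{1} \geq \dots \geq \rowsum{\numactions}$ and $\columnsum{1} \geq \dots \geq \columnsum{\numactionscolumn}$ are just the sorting allowed by \Cref{lma:sample-order-invariant-mutual-information-estimators-are-alphabet-invariant}. Permuting the alphabet labels maps the constraint for one set of marginals onto the constraint for the sorted marginals, so it suffices to state the sorted constraints. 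Sample-order-invariance is used at the start, so that the estimator can be written as a function of $\tallymatrix$ with the multinomial weights.
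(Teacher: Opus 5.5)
Your proof is correct and follows the paper's argument: substitute $\jointdistribution = \probvectorone \transpose{\probvectortwo}$, group tally matrices by their row and column sums, set each coefficient of the resulting polynomial to zero, and note that restricting to sorted marginals loses nothing (by alphabet invariance). Your bihomogeneity step supplies a justification the paper leaves implicit when it asserts that the polynomial must be identically zero: you rescale nonnegative $\probvectorone, \probvectortwo$ onto the simplices, so the polynomial vanishes on an open set.
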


Before proving Theorem \ref{thm:independence-induced-linear-constraints}, we provide a brief overview of the proof to provide intuition.
    The probability mass function of the multinomial distribution allows the expectation of a sample-order-invariant $\numsamples$-sample estimator to be written as a degree $\numsamples$ homogeneous polynomial in the entries of the joint distribution matrix $\jointdistribution \in \Delta([\numactions] \times [\numactionscolumn])$. When $\jointdistribution$ results from independent play, that expectation must be \textit{the zero polynomial}. Thus, all of its coefficients must be zero. The right-hand side of the equality constraints stated in \Cref{thm:independence-induced-linear-constraints} correspond to each of these coefficients. An explanation of the exact functional form of the constraints is provided in the full proof next.

\begin{proof}
    Let probability vectors $\probvectorone \in \Delta_{\numactions}, \probvectortwo \in \Delta_{\numactionscolumn}$ be given, and observe that the joint distribution $\probvectorone \transpose{\probvectortwo}$ results from independent play. By the \textit{zero on independent play} axiom, we have that for any sample-order-invariant $\numsamples$-sample unbiased estimator $\estimator$ of a mutual information $\mutualinformation{\estimator}$,
    \begin{align*}
        \mutualinformation{\estimator}(\probvectorone \transpose{\probvectortwo}) \triangleq \expect_{\tallymatrixrandom \sim \multinomialdistribution{\numsamples}{\probvectorone \transpose{\probvectortwo}}}[\estimator(\tallymatrixrandom)] = 0 \text{.}
    \end{align*}
    We can therefore compute as follows:
    \begin{align*}
        0 &= \expect_{\tallymatrixrandom \sim \multinomialdistribution{\numsamples}{\probvectorone \transpose{\probvectortwo}}}[\estimator(\tallymatrixrandom)]
        \\ &= \sum_{\tallymatrix \in \tallyset{\numsamples}{\numactions}{\numactionscolumn}} \estimator(\tallymatrix) \prob(\tallymatrixrandom = \tallymatrix)
        \\ &= \sum_{\tallymatrix \in \tallyset{\numsamples}{\numactions}{\numactionscolumn}} \estimator(\tallymatrix) \frac{\numsamples!}{\prod_{i = 1}^{\numactions} \prod_{j = 1}^{\numactionscolumn} (\tallymatrixentry{i}{j}!)} \prod_{i = 1}^{\numactions} \prod_{j = 1}^{\numactionscolumn} (\probvectoroneentry{i} \probvectortwoentry{j})^{\tallymatrixentry{i}{j}} & & \text{(multinomial PMF)}
        \\ &= \sum_{\tallymatrix \in \tallyset{\numsamples}{\numactions}{\numactionscolumn}} \estimator(\tallymatrix) \frac{\numsamples!}{\prod_{i = 1}^{\numactions} \prod_{j = 1}^{\numactionscolumn} (\tallymatrixentry{i}{j}!)} \left( \prod_{i = 1}^{\numactions} \prod_{j = 1}^{\numactionscolumn} \probvectoroneentry{i}^{\tallymatrixentry{i}{j}} \right) \left( \prod_{i = 1}^{\numactions} \prod_{j = 1}^{\numactionscolumn} \probvectortwoentry{j}^{\tallymatrixentry{i}{j}} \right) & & \text{(distribute exponents $\tallymatrixentry{i}{j}$)}
        \\ &= \sum_{\tallymatrix \in \tallyset{\numsamples}{\numactions}{\numactionscolumn}} \frac{\numsamples!}{\prod_{i = 1}^{\numactions} \prod_{j = 1}^{\numactionscolumn} (\tallymatrixentry{i}{j}!)} \estimator(\tallymatrix) \left( \prod_{i = 1}^{\numactions} \probvectoroneentry{i}^{\sum_{j = 1}^{\numactionscolumn} \tallymatrixentry{i}{j}} \right) \left( \prod_{j = 1}^{\numactionscolumn} \probvectortwoentry{j}^{\sum_{i = 1}^{\numactions} \tallymatrixentry{i}{j}} \right) & & \text{(group exponents by $\probvectoroneentry{i}, \probvectortwoentry{j}$)}
        \\ &= \sum_{\tallymatrix \in \tallyset{\numsamples}{\numactions}{\numactionscolumn}} \frac{\numsamples!}{\prod_{i = 1}^{\numactions} \prod_{j = 1}^{\numactionscolumn} (\tallymatrixentry{i}{j}!)} \estimator(\tallymatrix) \left( \prod_{i = 1}^{\numactions} \probvectoroneentry{i}^{\sum_{j = 1}^{\numactionscolumn} \tallymatrixentry{i}{j}} \right) \left( \prod_{j = 1}^{\numactionscolumn} \probvectortwoentry{j}^{\sum_{i = 1}^{\numactions} \tallymatrixentry{i}{j}} \right) & & \text{(swap multiplication order)}
    \end{align*}
    This is a polynomial in the entries of $\probvectorone$ and $\probvectortwo$, such that the exponent of any variable is a column sum $\sum_{i = 1}^{\numactions} \tallymatrixentry{i}{j}$ or row sum $\sum_{j = 1}^{\numactionscolumn} \tallymatrixentry{i}{j}$. We can therefore rewrite the outer summation to be over collections of tally matrices with particular row and column sums, i.e.
    \begin{align*}
        0 = \sum_{\substack{\rowsum{1}, ..., \rowsum{\numactions} \in \mathbb{N} : \\ \sum_{i = 1}^{\numactions} \rowsum{i} = \numsamples}} \sum_{\substack{\columnsum{1}, ..., \columnsum{\numactionscolumn} \in \mathbb{N} : \\ \sum_{j = 1}^{\numactionscolumn} \columnsum{i} = \numsamples}} \left( \sum_{\substack{\tallymatrix \in \tallyset{\numsamples}{\numactions}{\numactionscolumn} : \\ \forall i \in [\numactions], \sum_{j = 1}^{\numactionscolumn} \tallymatrixentry{i}{j} = \rowsum{i} \\ \forall j \in [\numactionscolumn], \sum_{i = 1}^{\numactions} \tallymatrixentry{i}{j} = \columnsum{j}}} \frac{\numsamples!}{\prod_{i = 1}^{\numactions} \prod_{j = 1}^{\numactionscolumn} (\tallymatrixentry{i}{j}!)} \estimator(\tallymatrix) \right) \left( \prod_{i = 1}^{\numactions} \probvectoroneentry{i}^{\rowsum{i}} \right) \left( \prod_{j = 1}^{\numactionscolumn} \probvectortwoentry{j}^{\columnsum{j}} \right)
    \end{align*}
    where $\rowsum{1}, ..., \rowsum{\numactions}$ and $\columnsum{1}, ..., \columnsum{\numactionscolumn}$ are different possible row and column sums for a tally matrix $\tallymatrix$.

    In order for the polynomial on the right-hand side to be the zero polynomial, all of its coefficients must be zero. This implies a collection of linear constraints on the values of $\frac{\numsamples!}{\prod_{i = 1}^{\numactions} \prod_{j = 1}^{\numactionscolumn} (\tallymatrixentry{i}{j}!)} \estimator(\tallymatrix)$ of the form
    \begin{align*}
        0 = \sum_{\substack{\tallymatrix \in \tallyset{\numsamples}{\numactions}{\numactionscolumn} : \\ \forall i \in [\numactions], \sum_{j = 1}^{\numactionscolumn} \tallymatrixentry{i}{j} = \rowsum{i} \\ \forall j \in [\numactionscolumn], \sum_{i = 1}^{\numactions} \tallymatrixentry{i}{j} = \columnsum{j}}} \frac{\numsamples!}{\prod_{i = 1}^{\numactions} \prod_{j = 1}^{\numactionscolumn} (\tallymatrixentry{i}{j}!)} \estimator(\tallymatrix).
    \end{align*}
    for every collection of row and column sums $\rowsum{1}, ..., \rowsum{\numactions}$ and $\columnsum{1}, ..., \columnsum{\numactionscolumn}$.

    However, as per \Cref{lma:sample-order-invariant-mutual-information-estimators-are-alphabet-invariant}, the estimator $\estimator$ must be alphabet invariant, i.e. unchanged by permutations of the rows or columns of its input $\tallymatrix$. Furthermore, observe that the multinomial coefficient
    \begin{align*}
        \frac{\numsamples!}{\prod_{i = 1}^{\numactions} \prod_{j = 1}^{\numactionscolumn} (\tallymatrixentry{i}{j}!)}
    \end{align*}
    is also invariant to permutations of the rows or columns of the tally matrix $\tallymatrix$. Therefore, it follows that several of these linear equalities provide redundant constraints, since any permutation of a given collection of row and column sums will produce an identical linear constraint. To only consider linear constraints that do not imply one another, we can restrict our attention to collections of row and column sums $\rowsum{1}, ..., \rowsum{\numactions}$ and $\columnsum{1}, ..., \columnsum{\numactionscolumn}$ that cannot be re-ordered to be equal to one another; without loss of generality, we will select collections of row and column sums that are ordered from largest to smallest. Thus, we conclude that
    \begin{align*}
        0 = \sum_{\substack{\tallymatrix \in \tallyset{\numsamples}{\numactions}{\numactionscolumn} : \\ \forall i \in [\numactions], \sum_{j = 1}^{\numactionscolumn} \tallymatrixentry{i}{j} = \rowsum{i} \\ \forall j \in [\numactionscolumn], \sum_{i = 1}^{\numactions} \tallymatrixentry{i}{j} = \columnsum{j}}} \frac{\numsamples!}{\prod_{i = 1}^{\numactions} \prod_{j = 1}^{\numactionscolumn} (\tallymatrixentry{i}{j}!)} \estimator(\tallymatrix)
    \end{align*}
    for every collection of ordered row and column sums $\rowsum{1} \geq ... \geq \rowsum{\numactions} \in \mathbb{N}$ and $\columnsum{1} \geq ... \geq \columnsum{\numactionscolumn} \in \mathbb{N}$ such that $\sum_{i = 1}^{\numactions} \rowsum{i} = \numsamples = \sum_{j = 1}^{\numactionscolumn} \columnsum{j}$.
\end{proof}

\Cref{thm:independence-induced-linear-constraints} reduces the search space dimension for mutual information estimators considerably. For example, when the row and column alphabet sizes are $2$, \Cref{thm:independence-induced-linear-constraints} and the theorems which precede it collectively reduce the $\numsamples = 4$ sample estimator search space dimension from $256$ to $2$, and the $\numsamples = 5$ sample estimator search space dimension from $1024$ to $5$.

A final important tool to reduce the search space is a corollary of \Cref{lma:binary-alphabet-data-processing-inequality-for-mutual-informations}, which rewrites the formula for the inequality which is equivalent to data processing inequality of \Cref{lma:binary-alphabet-data-processing-inequality-for-mutual-informations} from a formula that references a \textit{mutual information}, to instead operate on a \textit{fixed-sample unbiased estimator of a mutual information.} If one instead wishes to prove whether an \textit{estimator} has an expectation which satisfies the data processing inequality, the following corollary can be used:

\begin{corollary}
[Binary alphabet data processing inequality for estimators]
\label{cor:binary-alphabet-data-processing-inequality-for-estimators}
    Fix row and column alphabet sizes $\numactions = \numactionscolumn = 2$. Then, for any sample-order-invariant estimator $\estimator$ on $\numsamples$ samples, its expectation $\mutualinformation{\estimator}$ satisfies the row data processing inequality
    \begin{align*}
        \mutualinformation{\estimator}(S \jointdistribution) \leq \mutualinformation{\estimator}(\jointdistribution) \text{ for all joint distributions $\jointdistribution$ and column-stochastic $S$}
    \end{align*}
    if and only if it satisfies both row permutation invariance, i.e.
    \begin{align*}
        \mutualinformation{\estimator}(\alphabetpermutation \jointdistribution) = \mutualinformation{\estimator}(\jointdistribution) \text{ for all joint distributions $\jointdistribution$ and permutation matrices $\alphabetpermutation$}
    \end{align*}
    and the following inequality:
    \begin{align*}
        0 \geq \expect_{\tallymatrixrandom \sim \multinomialdistribution{\numsamples}{\jointdistribution}}[ &(\estimator\left( \tallymatrixrandom + \begin{bmatrix}
            1 & 0 \\
            -1 & 0
        \end{bmatrix} \right) - \estimator(\tallymatrixrandom)) \tallymatrixrandomentry{2}{1}
        \\ + &(\estimator\left( \tallymatrixrandom + \begin{bmatrix}
            0 & 1 \\
            0 & -1
        \end{bmatrix} \right) - \estimator(\tallymatrixrandom)) \tallymatrixrandomentry{2}{2} ] \text{ for all joint distributions $\jointdistribution$}
    \end{align*}
\end{corollary}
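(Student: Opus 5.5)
The plan is to obtain the corollary directly from \Cref{lma:binary-alphabet-data-processing-inequality-for-mutual-informations}, applied to $\mutualinformationabstract = \mutualinformation{\estimator}$. The permutation-invariance condition carries over verbatim. The only real work is to show that the gradient expression
\begin{align*}
\left[ \frac{\partial \mutualinformation{\estimator}}{\partial \jointdistributionentry{1}{1}} - \frac{\partial \mutualinformation{\estimator}}{\partial \jointdistributionentry{2}{1}} \right] \jointdistributionentry{2}{1} + \left[ \frac{\partial \mutualinformation{\estimator}}{\partial \jointdistributionentry{1}{2}} - \frac{\partial \mutualinformation{\estimator}}{\partial \jointdistributionentry{2}{2}} \right] \jointdistributionentry{2}{2}
\end{align*}
is a fixed positive multiple of the stated expectation. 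Since $\mutualinformation{\estimator}$ is the polynomial
\begin{align*}
\sum_{\tallymatrix} \estimator(\tallymatrix)\, \frac{\numsamples!}{\prod \tallymatrixentry{i}{j}!}\, \prod \jointdistributionentry{i}{j}^{\tallymatrixentry{i}{j}},
\end{align*}
it is smooth and can be differentiated termwise. We treat it as a homogeneous polynomial on the open orthant, which is the same convention used in \Cref{lma:binary-alphabet-data-processing-inequality-for-mutual-informations}.

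First I compute $\jointdistributionentry{2}{1}\, \partial/\partial \jointdistributionentry{1}{1}$ applied to a single multinomial term. This gives $\tallymatrixentry{1}{1}\, \jointdistributionentry{2}{1}/\jointdistributionentry{1}{1}$ times the term, which equals the multinomial weight of the shifted tally $\tallymatrix' = \tallymatrix - e_{11} + e_{21}$ multiplied by $\tallymatrixentry{2}{1}'$, since
\begin{align*}
\frac{\numsamples!}{\prod \tallymatrix!}\,\tallymatrixentry{1}{1} = \frac{\numsamples!}{\prod \tallymatrix'!}\,\tallymatrixentry{2}{1}'.
\end{align*}
I then reindex by $\tallymatrix'$. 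This shows that $\jointdistributionentry{2}{1}\,\partial_{11}\mutualinformation{\estimator}$ equals the expectation under $\multinomialdistribution{\numsamples}{\jointdistribution}$ of $\estimator(\tallymatrixrandom + e_{11} - e_{21})\,\tallymatrixrandomentry{2}{1}$.

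Next, $\jointdistributionentry{2}{1}\,\partial_{21}\mutualinformation{\estimator}$ is simply the expectation of $\estimator(\tallymatrixrandom)\,\tallymatrixrandomentry{2}{1}$, because $\jointdistributionentry{2}{1}\,\partial_{21}$ multiplies each monomial by its exponent $\tallymatrixentry{2}{1}$. The same two computations in the second column produce the terms with $\tallymatrixentry{2}{2}$ and the shift $e_{12} - e_{22}$.

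Subtracting yields exactly the expectation in the corollary, so each inequality holds if and only if the other does. The forward and backward directions then follow immediately from the lemma.

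The main obstacle is the bookkeeping in the reindexing step. Tallies with $\tallymatrixentry{2}{1}' = 0$ would produce a shift to a tally with a negative entry, so $\estimator$ would be evaluated off its domain. These terms carry the factor $\tallymatrixrandomentry{2}{1} = 0$, so I will adopt the convention that such products vanish. I will also check that no tally with $\tallymatrixentry{1}{1} \ge 1$ is missed in the bijection between $\tallymatrix$ and $\tallymatrix'$. A smaller point is that the partial derivatives are taken on the homogeneous extension off the simplex. This is harmless because the lemma's derivative is only ever used along the direction $e_{11} - e_{21}$ (and $e_{12} - e_{22}$), which stays within the simplex.
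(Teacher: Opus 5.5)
Your proposal is correct and is essentially the paper's argument. The paper differentiates $\mutualinformation{\estimator}$ at $\lambda=0$ along the path $T(\lambda)\diagonalmatrix$ toward $(\xparallelogram,\yparallelogram)=(0,1)$, which is exactly the gradient expression of \Cref{lma:binary-alphabet-data-processing-inequality-for-mutual-informations}, and then reindexes with the same bijections $\tallymatrix \mapsto \tallymatrix - e_{11} + e_{21}$ and $\tallymatrix \mapsto \tallymatrix - e_{12} + e_{22}$, using the same convention that terms with a zero multiplier vanish. Invoking the lemma directly, as you do, is slightly cleaner than the paper's route, which justifies the permutation condition by appealing to alphabet invariance of estimators of a mutual information.
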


\begin{proof}
    We already know that the data processing equality is satisfied by row permutations of the joint distribution, because sample-order-invariant estimators of a mutual information are invariant to row permutations according to \Cref{lma:sample-order-invariant-mutual-information-estimators-are-alphabet-invariant}. Therefore, all that remains is to show a condition under which the mutual information decreases during any move toward $(\xparallelogram, \yparallelogram) = (0, 1)$ from \Cref{lma:stochastic-matrix-decomposition}. A necessary and sufficient condition would be that a mutual information decreases for any infinitesimal move toward $(\xparallelogram, \yparallelogram) = (0, 1)$.

    More formally, consider selecting the path through column-stochastic matrices $T : [0, 1] \rightarrow \{ 2 \times 2 \text{ column-stochastic matrices} \}$ that moves from a starting point $(\xparallelograminit, \yparallelograminit)$ to the point $(0, 1)$ as
    \begin{align*}
        T(\lambda) = (1 - \lambda) \begin{pmatrix}
            1 - \xparallelograminit & \yparallelograminit \\
            \xparallelograminit & 1 - \yparallelograminit
        \end{pmatrix} + \lambda \begin{pmatrix}
            1 & 1 \\
            0 & 0
        \end{pmatrix} = \begin{pmatrix}
            1 - (1 - \lambda) \xparallelograminit & 1 - (1 - \lambda) (1 - \yparallelograminit) \\
            (1 - \lambda) \xparallelograminit & (1 - \lambda) (1 - \yparallelograminit)
        \end{pmatrix}
    \end{align*}
    and if we can show that the mutual information
    \begin{align*}
        \mutualinformation{\estimator}(T(\lambda) \diagonalmatrix) = \sum_{\tallymatrix \in \tallyset{\numsamples}{2}{2}} \estimator(\tallymatrix) & \frac{\numsamples!}{\tallymatrixentry{1}{1}! \cdot \tallymatrixentry{1}{2}! \cdot \tallymatrixentry{2}{1}! \cdot \tallymatrixentry{2}{2}!}
        \\ & \cdot (1 - (1 - \lambda) \xparallelograminit)^{\tallymatrixentry{1}{1}}
        \\ & \cdot (1 - (1 - \lambda) (1 - \yparallelograminit))^{\tallymatrixentry{1}{2}}
        \\ & \cdot ((1 - \lambda) \xparallelograminit)^{\tallymatrixentry{2}{1}}
        \\ & \cdot ((1 - \lambda) (1 - \yparallelograminit))^{\tallymatrixentry{2}{2}}
        \\ & \cdot \diagtopleft^{\tallymatrixentry{1}{1} + \tallymatrixentry{2}{1}} (1 - \diagtopleft)^{\tallymatrixentry{1}{2} + \tallymatrixentry{2}{2}}
    \end{align*}
    is decreasing in $\lambda$ for any choice of diagonal matrix $\diagonalmatrix = \begin{pmatrix}
        \diagtopleft & 0 \\
        0 & 1 - \diagtopleft \\
    \end{pmatrix}$ and any choice of $(\xparallelograminit, \yparallelograminit)$, then this condition will be equivalent to the "row" data processing inequality and invariance to permutations.

    Taking the total derivative with respect to lambda, we compute that
    \begin{align*}
        \frac{d}{d \lambda} \mutualinformation{\estimator}(T(\lambda) \diagonalmatrix) = \sum_{\tallymatrix \in \tallyset{\numsamples}{2}{2}} \estimator(\tallymatrix) & \frac{\numsamples!}{\tallymatrixentry{1}{1}! \cdot \tallymatrixentry{1}{2}! \cdot \tallymatrixentry{2}{1}! \cdot \tallymatrixentry{2}{2}!}
        \\ & \cdot \left( \frac{\xparallelograminit \tallymatrixentry{1}{1}}{1 - (1 - \lambda) \xparallelograminit} + \frac{(1 - \yparallelograminit) \tallymatrixentry{1}{2}}{1 - (1 - \lambda) (1 - \yparallelograminit)} + \frac{-\xparallelograminit \tallymatrixentry{2}{1}}{(1 - \lambda) \xparallelograminit} + \frac{-(1 - \yparallelograminit) \tallymatrixentry{2}{2}}{(1 - \lambda) (1 - \yparallelograminit)} \right)
        \\ & \cdot (1 - (1 - \lambda) \xparallelograminit)^{\tallymatrixentry{1}{1}}
        \\ & \cdot (1 - (1 - \lambda) (1 - \yparallelograminit))^{\tallymatrixentry{1}{2}}
        \\ & \cdot ((1 - \lambda) \xparallelograminit)^{\tallymatrixentry{2}{1}}
        \\ & \cdot ((1 - \lambda) (1 - \yparallelograminit))^{\tallymatrixentry{2}{2}}
        \\ & \cdot \diagtopleft^{\tallymatrixentry{1}{1} + \tallymatrixentry{2}{1}} (1 - \diagtopleft)^{\tallymatrixentry{1}{2} + \tallymatrixentry{2}{2}}
        \intertext{and setting $\lambda$ to $0$ to compute the infinitesimal change in the mutual information at the start of the path, this becomes}
        \frac{d}{d \lambda} \mutualinformation{\estimator}(T(\lambda) \diagonalmatrix) \rvert_{\lambda = 0} = \sum_{\tallymatrix \in \tallyset{\numsamples}{2}{2}} \estimator(\tallymatrix) & \frac{\numsamples!}{\tallymatrixentry{1}{1}! \cdot \tallymatrixentry{1}{2}! \cdot \tallymatrixentry{2}{1}! \cdot \tallymatrixentry{2}{2}!}
        \\ & \cdot \left( \frac{\xparallelograminit}{1 - \xparallelograminit} \tallymatrixentry{1}{1} + \frac{1 - \yparallelograminit}{\yparallelograminit} \tallymatrixentry{1}{2} - \tallymatrixentry{2}{1} - \tallymatrixentry{2}{2} \right)
        \\ & \cdot (1 - \xparallelograminit)^{\tallymatrixentry{1}{1}}
        \\ & \cdot \yparallelograminit^{\tallymatrixentry{1}{2}}
        \\ & \cdot \xparallelograminit^{\tallymatrixentry{2}{1}}
        \\ & \cdot (1 - \yparallelograminit)^{\tallymatrixentry{2}{2}}
        \\ & \cdot \diagtopleft^{\tallymatrixentry{1}{1} + \tallymatrixentry{2}{1}} (1 - \diagtopleft)^{\tallymatrixentry{1}{2} + \tallymatrixentry{2}{2}}
        \intertext{which we can further simplify by rewriting in terms of the joint distribution $\jointdistribution \triangleq T(0) \diagonalmatrix$ at the start of the path, by using the identities $\jointdistributionentry{1}{1} = (1 - \xparallelograminit) \diagtopleft$, $\jointdistributionentry{1}{2} = \yparallelograminit (1 - \diagtopleft)$, $\jointdistributionentry{2}{1} = \xparallelograminit \diagtopleft$, and $\jointdistributionentry{2}{2} = (1 - \yparallelograminit) (1 - \diagtopleft)$:}
        \sum_{\tallymatrix \in \tallyset{\numsamples}{2}{2}} \estimator(\tallymatrix) & \frac{\numsamples!}{\tallymatrixentry{1}{1}! \cdot \tallymatrixentry{1}{2}! \cdot \tallymatrixentry{2}{1}! \cdot \tallymatrixentry{2}{2}!}
        \\ & \cdot \left( \frac{\xparallelograminit \diagtopleft}{(1 - \xparallelograminit) \diagtopleft} \tallymatrixentry{1}{1} + \frac{(1 - \yparallelograminit) (1 - \diagtopleft)}{\yparallelograminit (1 - \diagtopleft)} \tallymatrixentry{1}{2} - \tallymatrixentry{2}{1} - \tallymatrixentry{2}{2} \right)
        \\ & \cdot ((1 - \xparallelograminit) \diagtopleft)^{\tallymatrixentry{1}{1}}
        \\ & \cdot (\delta (1 - \diagtopleft))^{\tallymatrixentry{1}{2}}
        \\ & \cdot (\xparallelograminit \diagtopleft)^{\tallymatrixentry{2}{1}}
        \\ & \cdot ((1 - \yparallelograminit) (1 - \diagtopleft))^{\tallymatrixentry{2}{2}}
        \\ = \sum_{\tallymatrix \in \tallyset{\numsamples}{2}{2}} \estimator(\tallymatrix) & \frac{\numsamples!}{\tallymatrixentry{1}{1}! \cdot \tallymatrixentry{1}{2}! \cdot \tallymatrixentry{2}{1}! \cdot \tallymatrixentry{2}{2}!}
        \\ & \cdot \left( \frac{\jointdistributionentry{2}{1}}{\jointdistributionentry{1}{1}} \tallymatrixentry{1}{1} + \frac{\jointdistributionentry{2}{2}}{\jointdistributionentry{1}{2}} \tallymatrixentry{1}{2} - \tallymatrixentry{2}{1} - \tallymatrixentry{2}{2} \right)
        \\ & \cdot \jointdistributionentry{1}{1}^{\tallymatrixentry{1}{1}}
        \\ & \cdot \jointdistributionentry{1}{2}^{\tallymatrixentry{1}{2}}
        \\ & \cdot \jointdistributionentry{2}{1}^{\tallymatrixentry{2}{1}}
        \\ & \cdot \jointdistributionentry{2}{2}^{\tallymatrixentry{2}{2}}
    \end{align*}

    Recall that this quantity must be negative in order for the mutual information to be decreasing as we move toward $(\xparallelogram, \yparallelogram) = (0, 1)$ on the parallelogram. Splitting the summation over each of the four terms and restricting each sum to only sum over non-zero terms, we must therefore have
    \begin{align*}
        0 &\geq \sum_{\tallymatrix \in \tallyset{\numsamples}{2}{2}} \estimator(\tallymatrix) \frac{\numsamples!}{\tallymatrixentry{1}{1}! \cdot \tallymatrixentry{1}{2}! \cdot \tallymatrixentry{2}{1}! \cdot \tallymatrixentry{2}{2}!} \left( \frac{\jointdistributionentry{2}{1}}{\jointdistributionentry{1}{1}} \tallymatrixentry{1}{1} + \frac{\jointdistributionentry{2}{2}}{\jointdistributionentry{1}{2}} \tallymatrixentry{1}{2} - \tallymatrixentry{2}{1} - \tallymatrixentry{2}{2} \right) \jointdistributionentry{1}{1}^{\tallymatrixentry{1}{1}} \jointdistributionentry{1}{2}^{\tallymatrixentry{1}{2}} \jointdistributionentry{2}{1}^{\tallymatrixentry{2}{1}} \jointdistributionentry{2}{2}^{\tallymatrixentry{2}{2}}
        \\ &= \sum_{\substack{\tallymatrix \in \tallyset{\numsamples}{2}{2}: \\ \tallymatrixentry{1}{1} > 0}} \estimator(\tallymatrix) \frac{\numsamples!}{\tallymatrixentry{1}{1}! \cdot \tallymatrixentry{1}{2}! \cdot \tallymatrixentry{2}{1}! \cdot \tallymatrixentry{2}{2}!} \left( \frac{\jointdistributionentry{2}{1}}{\jointdistributionentry{1}{1}} \tallymatrixentry{1}{1} \right) \jointdistributionentry{1}{1}^{\tallymatrixentry{1}{1}} \jointdistributionentry{1}{2}^{\tallymatrixentry{1}{2}} \jointdistributionentry{2}{1}^{\tallymatrixentry{2}{1}} \jointdistributionentry{2}{2}^{\tallymatrixentry{2}{2}}
        \\ &+ \sum_{\substack{\tallymatrix \in \tallyset{\numsamples}{2}{2}: \\ \tallymatrixentry{1}{2} > 0}} \estimator(\tallymatrix) \frac{\numsamples!}{\tallymatrixentry{1}{1}! \cdot \tallymatrixentry{1}{2}! \cdot \tallymatrixentry{2}{1}! \cdot \tallymatrixentry{2}{2}!} \left( \frac{\jointdistributionentry{2}{2}}{\jointdistributionentry{1}{2}} \tallymatrixentry{1}{2} \right) \jointdistributionentry{1}{1}^{\tallymatrixentry{1}{1}} \jointdistributionentry{1}{2}^{\tallymatrixentry{1}{2}} \jointdistributionentry{2}{1}^{\tallymatrixentry{2}{1}} \jointdistributionentry{2}{2}^{\tallymatrixentry{2}{2}}
        \\ &- \sum_{\substack{\tallymatrix \in \tallyset{\numsamples}{2}{2}: \\ \tallymatrixentry{2}{1} > 0}} \estimator(\tallymatrix) \frac{\numsamples!}{\tallymatrixentry{1}{1}! \cdot \tallymatrixentry{1}{2}! \cdot \tallymatrixentry{2}{1}! \cdot \tallymatrixentry{2}{2}!} \left( \tallymatrixentry{2}{1} \right) \jointdistributionentry{1}{1}^{\tallymatrixentry{1}{1}} \jointdistributionentry{1}{2}^{\tallymatrixentry{1}{2}} \jointdistributionentry{2}{1}^{\tallymatrixentry{2}{1}} \jointdistributionentry{2}{2}^{\tallymatrixentry{2}{2}}
        \\ &- \sum_{\substack{\tallymatrix \in \tallyset{\numsamples}{2}{2}: \\ \tallymatrixentry{2}{2} > 0}} \estimator(\tallymatrix) \frac{\numsamples!}{\tallymatrixentry{1}{1}! \cdot \tallymatrixentry{1}{2}! \cdot \tallymatrixentry{2}{1}! \cdot \tallymatrixentry{2}{2}!} \left( \tallymatrixentry{2}{2} \right) \jointdistributionentry{1}{1}^{\tallymatrixentry{1}{1}} \jointdistributionentry{1}{2}^{\tallymatrixentry{1}{2}} \jointdistributionentry{2}{1}^{\tallymatrixentry{2}{1}} \jointdistributionentry{2}{2}^{\tallymatrixentry{2}{2}}
        \\ &= \sum_{\substack{\tallymatrix \in \tallyset{\numsamples}{2}{2}: \\ \tallymatrixentry{1}{1} > 0}} \estimator(\tallymatrix) \frac{\numsamples!}{(\tallymatrixentry{1}{1} - 1)! \cdot \tallymatrixentry{1}{2}! \cdot (\tallymatrixentry{2}{1} + 1)! \cdot \tallymatrixentry{2}{2}!} \left( \tallymatrixentry{2}{1} + 1 \right) \jointdistributionentry{1}{1}^{(\tallymatrixentry{1}{1} - 1)} \jointdistributionentry{1}{2}^{\tallymatrixentry{1}{2}} \jointdistributionentry{2}{1}^{(\tallymatrixentry{2}{1} + 1)} \jointdistributionentry{2}{2}^{\tallymatrixentry{2}{2}}
        \\ &+ \sum_{\substack{\tallymatrix \in \tallyset{\numsamples}{2}{2}: \\ \tallymatrixentry{1}{2} > 0}} \estimator(\tallymatrix) \frac{\numsamples!}{\tallymatrixentry{1}{1}! \cdot (\tallymatrixentry{1}{2} - 1)! \cdot \tallymatrixentry{2}{1}! \cdot (\tallymatrixentry{2}{2} + 1)!} \left( \tallymatrixentry{2}{2} + 1 \right) \jointdistributionentry{1}{1}^{\tallymatrixentry{1}{1}} \jointdistributionentry{1}{2}^{(\tallymatrixentry{1}{2} - 1)} \jointdistributionentry{2}{1}^{\tallymatrixentry{2}{1}} \jointdistributionentry{2}{2}^{(\tallymatrixentry{2}{2} + 1)}
        \\ &- \sum_{\substack{\tallymatrix \in \tallyset{\numsamples}{2}{2}: \\ \tallymatrixentry{2}{1} > 0}} \estimator(\tallymatrix) \frac{\numsamples!}{\tallymatrixentry{1}{1}! \cdot \tallymatrixentry{1}{2}! \cdot \tallymatrixentry{2}{1}! \cdot \tallymatrixentry{2}{2}!} \left( \tallymatrixentry{2}{1} \right) \jointdistributionentry{1}{1}^{\tallymatrixentry{1}{1}} \jointdistributionentry{1}{2}^{\tallymatrixentry{1}{2}} \jointdistributionentry{2}{1}^{\tallymatrixentry{2}{1}} \jointdistributionentry{2}{2}^{\tallymatrixentry{2}{2}}
        \\ &- \sum_{\substack{\tallymatrix \in \tallyset{\numsamples}{2}{2}: \\ \tallymatrixentry{2}{2} > 0}} \estimator(\tallymatrix) \frac{\numsamples!}{\tallymatrixentry{1}{1}! \cdot \tallymatrixentry{1}{2}! \cdot \tallymatrixentry{2}{1}! \cdot \tallymatrixentry{2}{2}!} \left( \tallymatrixentry{2}{2} \right) \jointdistributionentry{1}{1}^{\tallymatrixentry{1}{1}} \jointdistributionentry{1}{2}^{\tallymatrixentry{1}{2}} \jointdistributionentry{2}{1}^{\tallymatrixentry{2}{1}} \jointdistributionentry{2}{2}^{\tallymatrixentry{2}{2}}
    \end{align*}

    In the above, we will now transform the first two summations into a form such that they can be combined with the last two summations. Observe that the mapping
    \begin{align*}
        \tallymatrix \mapsto \tallymatrix + \begin{bmatrix}
            -1 & 0 \\
            1 & 0
        \end{bmatrix}
    \end{align*}
    is a bijection from the set of tally matrices $\tallyset{\numsamples}{2}{2}$ where $\tallymatrixentry{1}{1} > 0$ to the set of tally matrices where $\tallymatrixentry{2}{1} > 0$, and that the mapping
    \begin{align*}
        \tallymatrix \mapsto \tallymatrix + \begin{bmatrix}
            0 & -1 \\
            0 & 1
        \end{bmatrix}
    \end{align*}
    is a bijection from the set of tally matrices $\tallyset{\numsamples}{2}{2}$ where $\tallymatrixentry{1}{2} > 0$ to the set of tally matrices where $\tallymatrixentry{2}{2} > 0$. Applying these bijections to the first and second summations, respectively, and the inverse of these bijections to every term in each sum, we obtain
    \begin{align*}
        0 &\geq \sum_{\substack{\tallymatrix \in \tallyset{\numsamples}{2}{2}: \\ \tallymatrixentry{2}{1} > 0}} \estimator\left( \tallymatrix + \begin{bmatrix}
            1 & 0 \\
            -1 & 0
        \end{bmatrix} \right) \frac{\numsamples!}{\tallymatrixentry{1}{1}! \cdot \tallymatrixentry{1}{2}! \cdot \tallymatrixentry{2}{1}! \cdot \tallymatrixentry{2}{2}!} \left( \tallymatrixentry{2}{1} \right) \jointdistributionentry{1}{1}^{\tallymatrixentry{1}{1}} \jointdistributionentry{1}{2}^{\tallymatrixentry{1}{2}} \jointdistributionentry{2}{1}^{\tallymatrixentry{2}{1}} \jointdistributionentry{2}{2}^{\tallymatrixentry{2}{2}}
        \\ &+ \sum_{\substack{\tallymatrix \in \tallyset{\numsamples}{2}{2}: \\ \tallymatrixentry{2}{2} > 0}} \estimator\left( \tallymatrix + \begin{bmatrix}
            0 & 1 \\
            0 & -1
        \end{bmatrix} \right) \frac{\numsamples!}{\tallymatrixentry{1}{1}! \cdot \tallymatrixentry{1}{2}! \cdot \tallymatrixentry{2}{1}! \cdot \tallymatrixentry{2}{2}!} \left( \tallymatrixentry{2}{2} \right) \jointdistributionentry{1}{1}^{\tallymatrixentry{1}{1}} \jointdistributionentry{1}{2}^{\tallymatrixentry{1}{2}} \jointdistributionentry{2}{1}^{\tallymatrixentry{2}{1}} \jointdistributionentry{2}{2}^{\tallymatrixentry{2}{2}}
        \\ &- \sum_{\substack{\tallymatrix \in \tallyset{\numsamples}{2}{2}: \\ \tallymatrixentry{2}{1} > 0}} \estimator(\tallymatrix) \frac{\numsamples!}{\tallymatrixentry{1}{1}! \cdot \tallymatrixentry{1}{2}! \cdot \tallymatrixentry{2}{1}! \cdot \tallymatrixentry{2}{2}!} \left( \tallymatrixentry{2}{1} \right) \jointdistributionentry{1}{1}^{\tallymatrixentry{1}{1}} \jointdistributionentry{1}{2}^{\tallymatrixentry{1}{2}} \jointdistributionentry{2}{1}^{\tallymatrixentry{2}{1}} \jointdistributionentry{2}{2}^{\tallymatrixentry{2}{2}}
        \\ &- \sum_{\substack{\tallymatrix \in \tallyset{\numsamples}{2}{2}: \\ \tallymatrixentry{2}{2} > 0}} \estimator(\tallymatrix) \frac{\numsamples!}{\tallymatrixentry{1}{1}! \cdot \tallymatrixentry{1}{2}! \cdot \tallymatrixentry{2}{1}! \cdot \tallymatrixentry{2}{2}!} \left( \tallymatrixentry{2}{2} \right) \jointdistributionentry{1}{1}^{\tallymatrixentry{1}{1}} \jointdistributionentry{1}{2}^{\tallymatrixentry{1}{2}} \jointdistributionentry{2}{1}^{\tallymatrixentry{2}{1}} \jointdistributionentry{2}{2}^{\tallymatrixentry{2}{2}}
        \\ &= \sum_{\substack{\tallymatrix \in \tallyset{\numsamples}{2}{2}: \\ \tallymatrixentry{2}{1} > 0}} \left( \estimator\left( \tallymatrix + \begin{bmatrix}
            1 & 0 \\
            -1 & 0
        \end{bmatrix} \right) - \estimator(\tallymatrix) \right) \tallymatrixentry{2}{1} \frac{\numsamples!}{\tallymatrixentry{1}{1}! \cdot \tallymatrixentry{1}{2}! \cdot \tallymatrixentry{2}{1}! \cdot \tallymatrixentry{2}{2}!} \jointdistributionentry{1}{1}^{\tallymatrixentry{1}{1}} \jointdistributionentry{1}{2}^{\tallymatrixentry{1}{2}} \jointdistributionentry{2}{1}^{\tallymatrixentry{2}{1}} \jointdistributionentry{2}{2}^{\tallymatrixentry{2}{2}}
        \\ &+ \sum_{\substack{\tallymatrix \in \tallyset{\numsamples}{2}{2}: \\ \tallymatrixentry{2}{2} > 0}} \left( \estimator\left( \tallymatrix + \begin{bmatrix}
            0 & 1 \\
            0 & -1
        \end{bmatrix} \right) - \estimator(\tallymatrix) \right) \tallymatrixentry{2}{2} \frac{\numsamples!}{\tallymatrixentry{1}{1}! \cdot \tallymatrixentry{1}{2}! \cdot \tallymatrixentry{2}{1}! \cdot \tallymatrixentry{2}{2}!} \jointdistributionentry{1}{1}^{\tallymatrixentry{1}{1}} \jointdistributionentry{1}{2}^{\tallymatrixentry{1}{2}} \jointdistributionentry{2}{1}^{\tallymatrixentry{2}{1}} \jointdistributionentry{2}{2}^{\tallymatrixentry{2}{2}}
        \\ &= \sum_{\substack{\tallymatrix \in \tallyset{\numsamples}{2}{2}: \\ \tallymatrixentry{2}{1} > 0}} \left( \estimator\left( \tallymatrix + \begin{bmatrix}
            1 & 0 \\
            -1 & 0
        \end{bmatrix} \right) - \estimator(\tallymatrix) \right) \tallymatrixentry{2}{1} \prob(\tallymatrixrandom = \tallymatrix)
        \\ &+ \sum_{\substack{\tallymatrix \in \tallyset{\numsamples}{2}{2}: \\ \tallymatrixentry{2}{2} > 0}} \left( \estimator\left( \tallymatrix + \begin{bmatrix}
            0 & 1 \\
            0 & -1
        \end{bmatrix} \right) - \estimator(\tallymatrix) \right) \tallymatrixentry{2}{2} \prob(\tallymatrixrandom = \tallymatrix)
    \end{align*}
    Note that even though the terms $\estimator\left( \tallymatrix + \begin{bmatrix}
        1 & 0 \\
        -1 & 0
    \end{bmatrix} \right) - \estimator(\tallymatrix)$ and $\estimator\left( \tallymatrix + \begin{bmatrix}
        0 & 1 \\
        0 & -1
    \end{bmatrix} \right) - \estimator(\tallymatrix)$ in the sums are not well-defined when $\tallymatrixentry{2}{1} = 0$ and $\tallymatrixentry{2}{2} = 0$, respectively, they are multiplied by $\tallymatrixentry{2}{1}$ and $\tallymatrixentry{2}{2}$ and so that term in each summation becomes $0$ whenever they would not be well-defined. This justifies us dropping the "$\tallymatrixentry{2}{1} > 0$" and "$\tallymatrixentry{2}{2} > 0$" conditions from each summation and allowing us to write them as expectations:
    \begin{align*}
        0 \geq \expect_{\tallymatrixrandom \sim \multinomialdistribution{\numsamples}{\jointdistribution}}[ &(\estimator\left( \tallymatrixrandom + \begin{bmatrix}
            1 & 0 \\
            -1 & 0
        \end{bmatrix} \right) - \estimator(\tallymatrixrandom)) \tallymatrixrandomentry{2}{1}
        \\ + &(\estimator\left( \tallymatrixrandom + \begin{bmatrix}
            0 & 1 \\
            0 & -1
        \end{bmatrix} \right) - \estimator(\tallymatrixrandom)) \tallymatrixrandomentry{2}{2} ]
    \end{align*}
    Recall from \Cref{prop:stochastic-diagonal-decomposition} that every joint distribution $\jointdistribution$ has a decomposition into a column-stochastic matrix and a diagonal distribution matrix. Since we have shown that the above inequality holds for any column-stochastic matrix multiplied by a diagonal distribution matrix, we have shown that the above inequality holds for all joint distributions $\jointdistribution$.

    To show the second inequality in our claim, i.e.
    \begin{align*}
        0 \geq \expect_{\tallymatrixrandom \sim \multinomialdistribution{\numsamples}{\jointdistribution}}[ &(\estimator\left( \tallymatrixrandom + \begin{bmatrix}
            1 & -1 \\
            0 & 0
        \end{bmatrix} \right) - \estimator(\tallymatrixrandom)) \tallymatrixrandomentry{1}{2}
        \\ + &(\estimator\left( \tallymatrixrandom + \begin{bmatrix}
            0 & 0 \\
            1 & -1
        \end{bmatrix} \right) - \estimator(\tallymatrixrandom)) \tallymatrixrandomentry{2}{2} ]
    \end{align*}
    and invariance to row permutations is equivalent to the "column" data processing inequality, we will repeat the above steps on the \textit{transpose} of any starting joint distribution $\jointdistribution$ and an alternative estimator $\estimator'$ defined that takes the transpose of its input and then runs it through $\estimator$. Since the "column" data processing inequality holds for the expectation of an estimator if and only if the "row" data processing inequality holds for the expectation of an estimator that operates on the transpose of its original input, it is sufficient to prove our second desired inequality and invariance to column permutations by merely repeating our original series of steps for our estimator composed with a transpose operator.
    
    More formally, we first define
    \begin{align*}
        \estimator'(\tallymatrix) := \estimator(\transpose{\tallymatrix})
    \end{align*}
    and compute that
    \begin{align*}
        0 \geq \expect_{\tallymatrixrandom \sim \multinomialdistribution{\numsamples}{\transpose{\jointdistribution}}}[ &(\estimator'\left( \tallymatrixrandom + \begin{bmatrix}
            1 & 0 \\
            -1 & 0
        \end{bmatrix} \right) - \estimator'(\tallymatrixrandom)) \tallymatrixrandomentry{2}{1}
        \\ + &(\estimator'\left( \tallymatrixrandom + \begin{bmatrix}
            0 & 1 \\
            0 & -1
        \end{bmatrix} \right) - \estimator'(\tallymatrixrandom)) \tallymatrixrandomentry{2}{2} ]
        \\ = \expect_{\tallymatrixrandom \sim \multinomialdistribution{\numsamples}{\jointdistribution}}[ &(\estimator'\left( \transpose{\tallymatrixrandom} + \begin{bmatrix}
            1 & 0 \\
            -1 & 0
        \end{bmatrix} \right) - \estimator'(\transpose{\tallymatrixrandom})) \tallymatrixrandomentry{1}{2}
        \\ + &(\estimator'\left( \transpose{\tallymatrixrandom} + \begin{bmatrix}
            0 & 1 \\
            0 & -1
        \end{bmatrix} \right) - \estimator'(\transpose{\tallymatrixrandom})) \tallymatrixrandomentry{2}{2} ]
        \\ = \expect_{\tallymatrixrandom \sim \multinomialdistribution{\numsamples}{\jointdistribution}}[ &(\estimator\left( \tallymatrixrandom + \begin{bmatrix}
            1 & -1 \\
            0 & 0
        \end{bmatrix} \right) - \estimator(\tallymatrixrandom)) \tallymatrixrandomentry{1}{2}
        \\ + &(\estimator\left( \tallymatrixrandom + \begin{bmatrix}
            0 & 0 \\
            1 & -1
        \end{bmatrix} \right) - \estimator(\tallymatrixrandom)) \tallymatrixrandomentry{2}{2} ]
    \end{align*}

    This completes our proof.
\end{proof}

It is possible to use \Cref{thm:independence-induced-linear-constraints} and \Cref{cor:binary-alphabet-data-processing-inequality-for-estimators} to show a straightforward-but-long proof of \Cref{thm:dmi-is-unique} without relying on Hilbert's nullstellensatz, i.e. \Cref{lma:mutual-informations-are-divisible-by-determinant-squared-of-joint-distribution-matrix}. We provide this appendix because these theorems may be useful for proving uniqueness in future work, where Hilbert's nullstellensatz is less directly applicable.

\section{Convex-minimal DMI estimator convergence rate}\label{app:convergence-rate}

This appendix discusses the convergence rate of the estimator of DMI from \Cref{thm:convex-minimal-dmi-estimator-with-binary-alphabet}.

The convergence rate displays qualitatively different behavior when the joint distribution is any of $\jointdistribution = \begin{pmatrix}
    1/2 & 0 \\
    0 & 1/2
\end{pmatrix}$, $\begin{pmatrix}
    0 & 1/2 \\
    1/2 & 0
\end{pmatrix}$, or $\begin{pmatrix}
    1/4 & 1/4 \\
    1/4 & 1/4
\end{pmatrix}$. Under three of these cases, the variance asympototically decreases inversely proportional to $\numsamples^{2}$, while in other cases it decreases inversely proportional to $\numsamples$.

\begin{theorem}[Variance of optimal DMI estimator on binary alphabet]\label{thm:convex-minimal-dmi-estimator-with-binary-alphabet-variance}
    For binary alphabets $\numactions = \numactionscolumn = 2$, the convex-minimal $\numsamples$-sample unbiased estimator $\estimator$ of DMI, i.e. of $16 \det(\jointdistribution)^{2}$, has variance
    \begin{align*}
        &\phantom{=} \Var(\estimator(\tallymatrixrandom))
        \\ &= \left( \frac{16 (\numsamples - 4)!}{\numsamples!} \right)^{2} [ (4 (\jointdistributionentry{1}{1} \jointdistributionentry{2}{2})^{2} + 4 \jointdistributionentry{1}{1} \jointdistributionentry{2}{2} \jointdistributionentry{2}{1} \jointdistributionentry{1}{2} + 4 (\jointdistributionentry{2}{1} \jointdistributionentry{1}{2})^{2}) \frac{\numsamples!}{(\numsamples - 4)!}
        \\ &+ (8 (\jointdistributionentry{1}{1} + \jointdistributionentry{2}{2}) (\jointdistributionentry{1}{1} \jointdistributionentry{2}{2})^{2} + 8 (\jointdistributionentry{2}{1} + \jointdistributionentry{1}{2}) (\jointdistributionentry{2}{1} \jointdistributionentry{1}{2})^{2} + 4 \jointdistributionentry{1}{1} \jointdistributionentry{2}{2} \jointdistributionentry{2}{1} \jointdistributionentry{1}{2}) \frac{\numsamples!}{(\numsamples - 5)!}
        \\ &+ (12 (\jointdistributionentry{1}{1} \jointdistributionentry{2}{2} - \jointdistributionentry{2}{1} \jointdistributionentry{1}{2})^{2} (\jointdistributionentry{1}{1} \jointdistributionentry{2}{2} + \jointdistributionentry{2}{1} \jointdistributionentry{1}{2}) + 2 ((\jointdistributionentry{1}{1} + \jointdistributionentry{2}{2}) \jointdistributionentry{1}{1} \jointdistributionentry{2}{2} + (\jointdistributionentry{2}{1} + \jointdistributionentry{1}{2}) \jointdistributionentry{2}{1} \jointdistributionentry{1}{2})^{2}) \frac{\numsamples!}{(\numsamples - 6)!}
        \\ &+ (4 (\jointdistributionentry{1}{1} \jointdistributionentry{2}{2} - \jointdistributionentry{2}{1} \jointdistributionentry{1}{2})^{2} ((\jointdistributionentry{1}{1} + \jointdistributionentry{2}{2}) (\jointdistributionentry{1}{1} \jointdistributionentry{2}{2}) + (\jointdistributionentry{2}{1} + \jointdistributionentry{1}{2}) (\jointdistributionentry{2}{1} \jointdistributionentry{1}{2}))) \frac{\numsamples!}{(\numsamples - 7)!}
        \\ &+ (\jointdistributionentry{1}{1} \jointdistributionentry{2}{2} - \jointdistributionentry{2}{1} \jointdistributionentry{1}{2})^{4} \frac{\numsamples!}{(\numsamples - 8)!} ] - 16^{2} (\jointdistributionentry{1}{1} \jointdistributionentry{2}{2} - \jointdistributionentry{2}{1} \jointdistributionentry{1}{2})^{4}
    \end{align*}
    for joint distribution $\jointdistribution \in \Delta([2] \times [2])$. It is the case that
    \begin{align*}
        \Var(\estimator(\tallymatrixrandom)) \numsamples \in O(1)
    \end{align*}
    and under the special cases where
    $\jointdistribution = \begin{pmatrix}
    1/2 & 0 \\
    0 & 1/2
    \end{pmatrix}$, $\begin{pmatrix}
        0 & 1/2 \\
        1/2 & 0
    \end{pmatrix}$, or $\begin{pmatrix}
        1/4 & 1/4 \\
        1/4 & 1/4
    \end{pmatrix}$, we also have
    \begin{align*}
        \Var(\estimator(\tallymatrixrandom)) \numsamples^{2} \in O(1) \text{.}
    \end{align*}
\end{theorem}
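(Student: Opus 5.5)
The approach is to write the estimator of \Cref{thm:convex-minimal-dmi-estimator-with-binary-alphabet} as a linear combination of falling-factorial monomials in the tally entries and use the standard multinomial factorial-moment identity
\begin{align*}
\expect\Big[\prod_{i,j} \tallymatrixrandomentry{i}{j}^{\underline{a_{ij}}}\Big] = \numsamples^{\underline{\sum a_{ij}}} \prod_{i,j} \jointdistributionentry{i}{j}^{a_{ij}},
\end{align*}
where $x^{\underline{a}} = x(x-1)\cdots(x-a+1)$. Writing $c = \frac{16(\numsamples-4)!}{\numsamples!}$, $A = \tallymatrixrandomentry{1}{1}^{\underline 2}\tallymatrixrandomentry{2}{2}^{\underline 2}$, $B=\tallymatrixrandomentry{2}{1}^{\underline 2}\tallymatrixrandomentry{1}{2}^{\underline 2}$ and $C = \tallymatrixrandomentry{1}{1}\tallymatrixrandomentry{2}{2}\tallymatrixrandomentry{2}{1}\tallymatrixrandomentry{1}{2}$, the proof of \Cref{thm:convex-minimal-dmi-estimator-with-binary-alphabet} shows $\estimator = c(A+B-2C)$. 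Unbiasedness gives $\expect[\estimator]=16\det(\jointdistribution)^2$, so it suffices to compute $\expect[\estimator^2] = c^2\,\expect[(A+B-2C)^2]$ and subtract $16^2\det(\jointdistribution)^4$.

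The main computational step expands $(A+B-2C)^2$ into the six products $A^2, B^2, 4C^2, 2AB, -4AC, -4BC$. Each is a product of falling factorials in the same variables, so I would convert it into a sum of falling-factorial monomials using $x^{\underline a}x^{\underline b} = \sum_{r} \binom{a}{r}\binom{b}{r} r!\, x^{\underline{a+b-r}}$, applied coordinatewise. For example, $(x^{\underline 2})^2 = x^{\underline 4}+4x^{\underline 3}+2x^{\underline 2}$ and $x\cdot x^{\underline 2} = x^{\underline 3}+2x^{\underline 2}$. Each resulting monomial has total degree between $4$ and $8$, so its expectation is $\numsamples^{\underline d}$ times a monomial in $\jointdistribution$, with $d\in\{4,\dots,8\}$. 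Collecting terms by $d$ gives the five coefficients multiplying $\numsamples!/(\numsamples-d)!$. I would then simplify each coefficient into the stated symmetric form, using the abbreviations $x=\jointdistributionentry{1}{1}\jointdistributionentry{2}{2}$ and $y=\jointdistributionentry{2}{1}\jointdistributionentry{1}{2}$ and factoring out $(x-y)^2$ where it appears. As sanity checks, the degree-$8$ coefficient must be $(x-y)^4$ because $\expect[A+B-2C]$ yields $(x-y)^2$ at leading order, and the degree-$4$ coefficient reduces to $4x^2+4xy+4y^2$. This bookkeeping, especially at degrees $5$ through $7$, is the main obstacle. I would organize it by tracking, for each of the six products, which coordinates overlap and how many shared indices $r$ are removed.

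For the asymptotics, note that $c^2\,\numsamples^{\underline d} = \Theta(\numsamples^{d-8})$. The degree-$8$ term gives $16^2(x-y)^4(1+O(1/\numsamples))$. More precisely, $c^2\numsamples^{\underline 8} - 16^2$ is $O(1/\numsamples)$, because the ratio $(\numsamples-4)^{\underline 4}/\numsamples^{\underline 4}$ is $1-O(1/\numsamples)$. This term therefore cancels the subtracted mean squared up to $O(1/\numsamples)$. The degree-$7$ term is $\Theta(1/\numsamples)$, and lower-degree terms are $O(1/\numsamples^2)$, so $\numsamples\Var = O(1)$.

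For the $1/\numsamples^2$ rate, it suffices that both the degree-$7$ coefficient and the $1/\numsamples$ correction of the degree-$8$ term vanish. At $\mathrm{diag}(1/2,1/2)$ and its anti-diagonal counterpart, one of $x,y$ is $1/4$ and the other is $0$. In the uniform case $x=y$, so $(x-y)^2=0$ and both terms vanish. In the diagonal cases $(x-y)^2\neq 0$, so I would instead verify directly that the combined $1/\numsamples$ coefficient cancels. That coefficient comes from $4(x-y)^2(x(\jointdistributionentry{1}{1}+\jointdistributionentry{2}{2})+y(\cdots))$ together with the expansion of $c^2\numsamples^{\underline 8}-16^2$, whose leading term is $-16^2\cdot 12(x-y)^4/\numsamples$ after writing the falling-factorial ratios to first order. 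Plugging in $x=1/4$ and $y=0$ gives $16^2\cdot 4\cdot(1/16)(1/8) - 16^2\cdot 12/256$, which I expect to vanish; this is a finite check.
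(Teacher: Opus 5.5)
Your route is sound, and it differs from the paper's in method. The paper works with the multinomial MGF $\mathbb{E}[e^{t\cdot Q}]=(\sum_{ij}F_{ij}e^{t_{ij}})^k$. It builds $\mathbb{E}[g(Q)^2e^{t\cdot Q}]$ by repeatedly differentiating in $t_{11},t_{22}$ (and $t_{21},t_{12}$), subtracting previously computed expressions, and finally setting $t=0$.

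You instead write $g=c(A+B-2C)$ in falling factorials and reduce everything to the factorial-moment identity. This computes the same factorial moments, but the bookkeeping is far more transparent. The degree structure is visible from the start, and your notation $k^{\underline d}$ handles $k<8$ automatically, whereas the paper's $k!/(k-d)!$ has to be read as $0$ there.

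Your plan does reproduce the stated coefficients. With $s_1=F_{11}+F_{22}$ and $s_2=F_{21}+F_{12}$, the only non-obvious identities you need are:
\begin{itemize}
\item at degree 7, $e_3(F)=xs_2+ys_1$, which gives $4(x-y)^2(xs_1+ys_2)$;
\item at degree 6, $e_2(F)=x+y+s_1s_2$ and $F_{11}^2+F_{22}^2=s_1^2-2x$, which give $2(xs_1+ys_2)^2+12(x+y)(x-y)^2$;
\item at degree 5, $\sum_{ij}F_{ij}=1$, which turns $4xy\,e_1$ into the $4xy$ term.
\end{itemize}

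The one step that fails as written is your final finite check, because of two arithmetic slips.
\begin{itemize}
\item First, $c^2k^{\underline 8}=16^2(k-4)^{\underline 4}/k^{\underline 4}=16^2(1-16/k+O(k^{-2}))$, since $(4+5+6+7)-(0+1+2+3)=16$. So the degree-8 correction is $-16^3(x-y)^4/k$, not $-16^2\cdot 12(x-y)^4/k$. This is exactly the $-16^3(x-y)^4$ in the paper's limit.
\item Second, at $\mathrm{diag}(1/2,1/2)$ you have $xs_1+ys_2=\tfrac14\cdot 1=\tfrac14$, not $\tfrac18$.
\end{itemize}
With your numbers the check evaluates to $8-12=-4\neq 0$, which would wrongly indicate only a $1/k$ rate at the diagonal points. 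With the corrected values it is $16^2\cdot 4\cdot\tfrac{1}{16}\cdot\tfrac14-16^2\cdot\tfrac{16}{256}=16-16=0$, as required. The anti-diagonal case is symmetric, and the uniform case has $x=y$.

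As an independent confirmation, at the diagonal point the variance formula collapses exactly to $\Var(g(Q))=(8m^2+24m+40)/k^{\underline 4}$ with $m=k-4$. At $k=4$ this gives $5/3$, which matches the $4$-sample lookup table directly, and the expression is visibly $O(k^{-2})$.
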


\begin{proof}
    This proof will proceed by using a moment-generating function. Consider the multivariate moment-generating function for a tally matrix $\tallymatrixrandom \sim \multinomialdistribution{\numsamples}{\jointdistribution}$, which we calculate as follows:
    \begin{align*}
        &\phantom{=} \expect[\exp(\mgfparam{1}{1} \tallymatrixrandomentry{1}{1} + \mgfparam{1}{2} \tallymatrixrandomentry{1}{1} +  \mgfparam{2}{1} \tallymatrixrandomentry{2}{1} +  \mgfparam{2}{2} \tallymatrixrandomentry{2}{2})]
        \\ &= \sum_{\tallymatrix \in \tallyset{\numsamples}{2}{2}} \exp(\mgfparam{1}{1} \tallymatrixentry{1}{1} + \mgfparam{1}{2} \tallymatrixentry{1}{1} + \mgfparam{2}{1} \tallymatrixentry{2}{1} + \mgfparam{2}{2} \tallymatrixentry{2}{2}) \frac{\numsamples!}{\tallymatrixentry{1}{1}! \cdot \tallymatrixentry{1}{2}! \cdot \tallymatrixentry{2}{1}! \cdot \tallymatrixentry{2}{2}!} \jointdistributionentry{1}{1}^{\tallymatrixentry{1}{1}} \jointdistributionentry{1}{2}^{\tallymatrixentry{1}{2}} \jointdistributionentry{2}{1}^{\tallymatrixentry{2}{1}} \jointdistributionentry{2}{2}^{\tallymatrixentry{2}{2}}
        \\ &= \sum_{\tallymatrix \in \tallyset{\numsamples}{2}{2}} \frac{\numsamples!}{\tallymatrixentry{1}{1}! \cdot \tallymatrixentry{1}{2}! \cdot \tallymatrixentry{2}{1}! \cdot \tallymatrixentry{2}{2}!} (\jointdistributionentry{1}{1} \exp(\mgfparam{1}{1}))^{\tallymatrixentry{1}{1}} (\jointdistributionentry{1}{2} \exp(\mgfparam{1}{2}))^{\tallymatrixentry{1}{2}} (\jointdistributionentry{2}{1} \exp(\mgfparam{2}{1}))^{\tallymatrixentry{2}{1}} (\jointdistributionentry{2}{2} \exp(\mgfparam{2}{2}))^{\tallymatrixentry{2}{2}}
        \\ &= (\jointdistributionentry{1}{1} \exp(\mgfparam{1}{1}) + \jointdistributionentry{1}{2} \exp(\mgfparam{1}{2}) + \jointdistributionentry{2}{1} \exp(\mgfparam{2}{1}) + \jointdistributionentry{2}{2} \exp(\mgfparam{2}{2}))^{\numsamples}
    \end{align*}
    This expression is a function of the four parameters $\mgfparam{1}{1}, \mgfparam{1}{2}, \mgfparam{2}{1}, \mgfparam{2}{2} \in \mathbb{R}$. As shorthand, we will write
    \begin{align*}
        \mgfexp &\triangleq \exp(\mgfparam{1}{1} \tallymatrixrandomentry{1}{1} + \mgfparam{1}{2} \tallymatrixrandomentry{1}{1} + \mgfparam{2}{1} \tallymatrixrandomentry{2}{1} + \mgfparam{2}{2} \tallymatrixrandomentry{2}{2})
        \\ \mgfsum{m} &\triangleq \frac{\numsamples!}{(\numsamples - m)!} (\jointdistributionentry{1}{1} \exp(\mgfparam{1}{1}) + \jointdistributionentry{1}{2} \exp(\mgfparam{1}{2}) + \jointdistributionentry{2}{1} \exp(\mgfparam{2}{1}) + \jointdistributionentry{2}{2} \exp(\mgfparam{2}{2}))^{\numsamples - m}
        \\ \mgfA &\triangleq \jointdistributionentry{1}{1} \exp(\mgfparam{1}{1})
        \\ \mgfB &\triangleq \jointdistributionentry{2}{1} \exp(\mgfparam{2}{1})
        \\ \mgfC &\triangleq \jointdistributionentry{1}{2} \exp(\mgfparam{1}{2})
        \\ \mgfD &\triangleq \jointdistributionentry{2}{2} \exp(\mgfparam{2}{2})
    \end{align*}
    Using this shorthand, we have
    \begin{align*}
        \expect[\mgfexp] = \mgfsum{\numsamples} \text{.}
    \end{align*}
    We will now take repeated derivatives of this function, and repeatedly subtract those derivatives from one another, in order to derive the expectation $\expect[\estimator(\tallymatrixrandom)^{2}]$.
    
    Taking the first derivative with respect to $\mgfparam{1}{1}$, the function $\expect[\mgfexp]$ becomes
    \begin{align*}
        \expect[\tallymatrixrandomentry{1}{1} \mgfexp] = \mgfA \mgfsum{\numsamples - 1} \text{.}
    \end{align*}
    Subtracting $\expect[\mgfexp]$ from both sides of the result, we obtain
    \begin{align*}
        \expect[(\tallymatrixrandomentry{1}{1} - 1) \mgfexp] = -\mgfsum{\numsamples} + \mgfA \mgfsum{\numsamples - 1} \text{.}
    \end{align*}
    Taking the first derivative with respect to $\mgfparam{2}{2}$, this becomes
    \begin{align*}
        \expect[(\tallymatrixrandomentry{1}{1} - 1) \tallymatrixrandomentry{2}{2} \mgfexp] = -\mgfD \mgfsum{\numsamples - 1} + \mgfA \mgfD \mgfsum{\numsamples - 2} \text{.}
    \end{align*}
    Subtracting $\expect[(\tallymatrixrandomentry{1}{1} - 1) \mgfexp]$ from this value, we obtain
    \begin{align*}
        \expect[(\tallymatrixrandomentry{1}{1} - 1) (\tallymatrixrandomentry{2}{2} - 1) \mgfexp] = \mgfsum{\numsamples} - (\mgfA + \mgfD) \mgfsum{\numsamples - 1} + (\mgfA \mgfD) \mgfsum{\numsamples - 2} \text{.}
    \end{align*}
    Next, we compute that
    \begin{align*}
        \expect[\tallymatrixrandomentry{2}{1} \tallymatrixrandomentry{1}{2} \mgfexp] = \mgfB \mgfC \mgfsum{\numsamples - 2}
    \end{align*}
    and subtract this expression from the previous expectation to obtain
    \begin{align*}
        \expect[((\tallymatrixrandomentry{1}{1} - 1) (\tallymatrixrandomentry{2}{2} - 1) - \tallymatrixrandomentry{2}{1} \tallymatrixrandomentry{1}{2}) \mgfexp] = \mgfsum{\numsamples} - (\mgfA + \mgfD) \mgfsum{\numsamples - 1} + (\mgfA \mgfD - \mgfB \mgfC) \mgfsum{\numsamples - 2} \text{.}
    \end{align*}
    Taking the first derivative with respect to $\mgfparam{1}{1}$, and we obtain
    \begin{align*}
        &\phantom{=} \expect[\tallymatrixrandomentry{1}{1} ((\tallymatrixrandomentry{1}{1} - 1) (\tallymatrixrandomentry{2}{2} - 1) - \tallymatrixrandomentry{2}{1} \tallymatrixrandomentry{1}{2}) \mgfexp]
        \\ &= \mgfA \mgfsum{\numsamples - 1} - \mgfA \mgfsum{\numsamples - 1} - \mgfA (\mgfA + \mgfD) \mgfsum{\numsamples - 2} + (\mgfA \mgfD) \mgfsum{\numsamples - 2} + \mgfA (\mgfA \mgfD - \mgfB \mgfC) \mgfsum{\numsamples - 3}
        \\ &= (- \mgfA (\mgfA + \mgfD) + (\mgfA \mgfD)) \mgfsum{\numsamples - 2} + \mgfA (\mgfA \mgfD - \mgfB \mgfC) \mgfsum{\numsamples - 3}
        \\ &= \mgfA [-\mgfA \mgfsum{\numsamples - 2} + (\mgfA \mgfD - \mgfB \mgfC) \mgfsum{\numsamples - 3} ] \text{.}
    \end{align*}
    Taking the first derivative with respect to $\mgfparam{2}{2}$, this then becomes
    \begin{align*}
        &\phantom{=} \expect[\tallymatrixrandomentry{1}{1} \tallymatrixrandomentry{2}{2} ((\tallymatrixrandomentry{1}{1} - 1) (\tallymatrixrandomentry{2}{2} - 1) - \tallymatrixrandomentry{2}{1} \tallymatrixrandomentry{1}{2}) \mgfexp]
        \\ &= \mgfA [-\mgfA \mgfD \mgfsum{\numsamples - 3} + (\mgfA \mgfD) \mgfsum{\numsamples - 3} + D (\mgfA \mgfD - \mgfB \mgfC) \mgfsum{\numsamples - 4} ]
        \\ &= \mgfA \mgfD (\mgfA \mgfD - \mgfB \mgfC) \mgfsum{\numsamples - 4} \text{.}
    \end{align*}
    We can thus deduce that if we had instead performed all derivative operations with respect to $\mgfparam{2}{1}$ and $\mgfparam{1}{2}$ wherever we used $\mgfparam{1}{1}$ and $\mgfparam{2}{2}$, and vice versa, we would derive
    \begin{align*}
        &\phantom{=} \expect[\tallymatrixrandomentry{2}{1} \tallymatrixrandomentry{1}{2} ((\tallymatrixrandomentry{2}{1} - 1) (\tallymatrixrandomentry{1}{2} - 1) - \tallymatrixrandomentry{1}{1} \tallymatrixrandomentry{2}{2}) \mgfexp]
        \\ &= \mgfB \mgfC (\mgfB \mgfC - \mgfA \mgfD) \mgfsum{\numsamples - 4} \text{.}
    \end{align*}
    Adding one to the other, and we obtain
    \begin{align*}
        \expect[\frac{\numsamples!}{16 (\numsamples - 4)!} \estimator(\tallymatrixrandom) \mgfexp] = (\mgfA \mgfD - \mgfB \mgfC)^{2} \mgfsum{\numsamples - 4}
    \end{align*}
    by the definition of $\estimator(\tallymatrixrandom)$.

    Next, we take the derivative of this expression with respect to $\mgfparam{1}{1}$, and obtain
    \begin{align*}
        \expect[\frac{\numsamples!}{16 (\numsamples - 4)!} \estimator(\tallymatrixrandom) \tallymatrixrandomentry{1}{1} \mgfexp] = 2 (\mgfA \mgfD) (\mgfA \mgfD - \mgfB \mgfC) \mgfsum{\numsamples - 4} + \mgfA (\mgfA \mgfD - \mgfB \mgfC)^{2} \mgfsum{\numsamples - 5} \text{.}
    \end{align*}
    Subtracting one from the other, and we obtain
    \begin{align*}
        &\phantom{=} \expect[\frac{\numsamples!}{16 (\numsamples - 4)!} \estimator(\tallymatrixrandom) (\tallymatrixrandomentry{1}{1} - 1) \mgfexp]
        \\ &= [2 (\mgfA \mgfD) (\mgfA \mgfD - \mgfB \mgfC) - (\mgfA \mgfD - \mgfB \mgfC)^{2}] \mgfsum{\numsamples - 4} + \mgfA (\mgfA \mgfD - \mgfB \mgfC)^{2} \mgfsum{\numsamples - 5}
        \\ &= [(\mgfA \mgfD)^{2} - (\mgfB \mgfC)^{2}] \mgfsum{\numsamples - 4} + \mgfA (\mgfA \mgfD - \mgfB \mgfC)^{2} \mgfsum{\numsamples - 5}
        \text{.}
    \end{align*}
    Then, taking the first derivative with respect to $\mgfparam{2}{2}$, the expression becomes
    \begin{align*}
        &\phantom{=} \expect[\frac{\numsamples!}{16 (\numsamples - 4)!} \estimator(\tallymatrixrandom) (\tallymatrixrandomentry{1}{1} - 1) \tallymatrixrandomentry{2}{2} \mgfexp]
        \\ &= 2 (\mgfA \mgfD)^{2} \mgfsum{\numsamples - 4} + \mgfD [(\mgfA \mgfD)^{2} - (\mgfB \mgfC)^{2}] \mgfsum{\numsamples - 5} + 2 \mgfA (\mgfA \mgfD) (\mgfA \mgfD - \mgfB \mgfC) \mgfsum{\numsamples - 5} + (\mgfA \mgfD) (\mgfA \mgfD - \mgfB \mgfC)^{2} \mgfsum{\numsamples - 6}
        \\ &= 2 (\mgfA \mgfD)^{2} \mgfsum{\numsamples - 4} + [\mgfD ((\mgfA \mgfD)^{2} - (\mgfB \mgfC)^{2}) + 2 \mgfA (\mgfA \mgfD) (\mgfA \mgfD - \mgfB \mgfC)] \mgfsum{\numsamples - 5} + (\mgfA \mgfD) (\mgfA \mgfD - \mgfB \mgfC)^{2} \mgfsum{\numsamples - 6} \text{.}
    \end{align*}
    Subtracting the last two expressions from each other, and we get
    \begin{align*}
        &\phantom{=} \expect[\frac{\numsamples!}{16 (\numsamples - 4)!} \estimator(\tallymatrixrandom) (\tallymatrixrandomentry{1}{1} - 1) (\tallymatrixrandomentry{2}{2} - 1) \mgfexp]
        \\ &= [(\mgfA \mgfD)^{2} + (\mgfB \mgfC)^{2}] \mgfsum{\numsamples - 4} + (\mgfA + \mgfD)[(\mgfA \mgfD)^{2} - (\mgfB \mgfC)^{2}] \mgfsum{\numsamples - 5} + (\mgfA \mgfD) (\mgfA \mgfD - \mgfB \mgfC)^{2} \mgfsum{\numsamples - 6} \text{.}
    \end{align*}
    We now set out to calculate $\expect[\frac{\numsamples!}{16 (\numsamples - 4)!} \estimator(\tallymatrixrandom) \tallymatrixrandomentry{2}{1} \tallymatrixrandomentry{1}{2} \mgfexp]$, so that we can subtract it from the previous expression. We compute that
    \begin{align*}
        &\phantom{=} \expect[\frac{\numsamples!}{16 (\numsamples - 4)!} \estimator(\tallymatrixrandom) \tallymatrixrandomentry{2}{1} \mgfexp]
        \\ &= -2 (\mgfB \mgfC) (\mgfA \mgfD - \mgfB \mgfC) \mgfsum{\numsamples - 4} + \mgfB (\mgfA \mgfD - \mgfB \mgfC)^{2} \mgfsum{\numsamples - 5}
    \end{align*}
    and then compute that
    \begin{align*}
        &\phantom{=} \expect[\frac{\numsamples!}{16 (\numsamples - 4)!} \estimator(\tallymatrixrandom) \tallymatrixrandomentry{2}{1} \tallymatrixrandomentry{1}{2} \mgfexp]
        \\ &= [4 (\mgfB \mgfC)^{2} - 2 (\mgfA \mgfD \mgfB \mgfC)] \mgfsum{\numsamples - 4} - 2 (\mgfB + \mgfC) (\mgfB \mgfC) (\mgfA \mgfD - \mgfB \mgfC) \mgfsum{\numsamples - 5} + (\mgfB \mgfC) (\mgfA \mgfD - \mgfB \mgfC)^{2} \mgfsum{\numsamples - 6} \text{.}
    \end{align*}
    Then, subtracting, we have
    \begin{align*}
        &\phantom{=} \expect[\frac{\numsamples!}{16 (\numsamples - 4)!} \estimator(\tallymatrixrandom) ((\tallymatrixrandomentry{1}{1} - 1) (\tallymatrixrandomentry{2}{2} - 1) - \tallymatrixrandomentry{2}{1} \tallymatrixrandomentry{1}{2}) \mgfexp]
        \\ &= [(\mgfA \mgfD + \mgfB \mgfC)^{2} - 4 (\mgfB \mgfC)^{2}] \mgfsum{\numsamples - 4}
        \\ &\phantom{=} + [(\mgfA + \mgfD) ((\mgfA \mgfD)^{2} - (\mgfB \mgfC)^{2}) + 2 (\mgfB + \mgfC) (\mgfB \mgfC) (\mgfA \mgfD - \mgfB \mgfC)] \mgfsum{\numsamples - 5}
        \\ &\phantom{=} + [ (\mgfA \mgfD - \mgfB \mgfC)^{3}] \mgfsum{\numsamples - 6} \text{.}
    \end{align*}
    Then, taking the first derivative with respect to $\mgfparam{1}{1}$, this becomes
    \begin{align*}
        &\phantom{=} \expect[\frac{\numsamples!}{16 (\numsamples - 4)!} \estimator(\tallymatrixrandom) \tallymatrixrandomentry{1}{1} ((\tallymatrixrandomentry{1}{1} - 1) (\tallymatrixrandomentry{2}{2} - 1) - \tallymatrixrandomentry{2}{1} \tallymatrixrandomentry{1}{2}) \mgfexp]
        \\ &= 2 (\mgfA \mgfD) (\mgfA \mgfD + \mgfB \mgfC) \mgfsum{\numsamples - 4}
        \\ &\phantom{=} + \mgfA [(\mgfA \mgfD + \mgfB \mgfC)^{2} - 4 (\mgfB \mgfC)^{2}] \mgfsum{\numsamples - 5}
        \\ &\phantom{=} + [\mgfA ((\mgfA \mgfD)^{2} - (\mgfB \mgfC)^{2}) + 2 (\mgfA + \mgfD) (\mgfA \mgfD)^{2} + 2 (\mgfB + \mgfC) (\mgfA \mgfD) (\mgfB \mgfC)] \mgfsum{\numsamples - 5}
        \\ &\phantom{=} + \mgfA [(\mgfA + \mgfD) ((\mgfA \mgfD)^{2} - (\mgfB \mgfC)^{2}) + 2 (\mgfB + \mgfC) (\mgfB \mgfC) (\mgfA \mgfD - \mgfB \mgfC)] \mgfsum{\numsamples - 6}
        \\ &\phantom{=} + [3 (\mgfA \mgfD) (\mgfA \mgfD - \mgfB \mgfC)^{2}] \mgfsum{\numsamples - 6}
        \\ &\phantom{=} + \mgfA [ (\mgfA \mgfD - \mgfB \mgfC)^{3}] \mgfsum{\numsamples - 7} \text{.}
    \end{align*}
    Taking the first derivative once again with respect to $\mgfparam{2}{2}$, this then becomes
    \begin{align*}
        &\phantom{=} \expect[\frac{\numsamples!}{16 (\numsamples - 4)!} \estimator(\tallymatrixrandom) \tallymatrixrandomentry{1}{1} \tallymatrixrandomentry{2}{2} ((\tallymatrixrandomentry{1}{1} - 1) (\tallymatrixrandomentry{2}{2} - 1) - \tallymatrixrandomentry{2}{1} \tallymatrixrandomentry{1}{2}) \mgfexp]
        \\ &= [4 (\mgfA \mgfD)^{2} + 2 (\mgfA \mgfD) (\mgfB \mgfC)] \mgfsum{\numsamples - 4}
        \\ &\phantom{=} + [8 (\mgfA + \mgfD) (\mgfA \mgfD)^{2} + 2 (\mgfA + \mgfD + \mgfB + \mgfC) \mgfA \mgfD \mgfB \mgfC] \mgfsum{\numsamples - 5}
        \\ &\phantom{=} + (\mgfA \mgfD) [ 12 (\mgfA \mgfD)^{2} - 10 \mgfA \mgfD \mgfB \mgfC - 2 (\mgfB \mgfC)^{2} + 2 (\mgfA + \mgfD)^{2} (\mgfA \mgfD) + 2 (\mgfA + \mgfD) (\mgfA \mgfD) (\mgfB \mgfC)] \mgfsum{\numsamples - 6}
        \\ &\phantom{=} + (\mgfA \mgfD) [ (\mgfA + \mgfD) ((\mgfA \mgfD)^{2} - (\mgfB \mgfC)^{2} + 3 (\mgfA \mgfD - \mgfB \mgfC)^{2}) + 2 (\mgfB + \mgfC) (\mgfB \mgfC) (\mgfA \mgfD - \mgfB \mgfC)] \mgfsum{\numsamples - 7}
        \\ &\phantom{=} + (\mgfA \mgfD) (\mgfA \mgfD - \mgfB \mgfC)^{3} \mgfsum{\numsamples - 8} \text{.}
    \end{align*}
    Since $(\mgfA \mgfD - \mgfB \mgfC)^{2} = (\mgfB \mgfC - \mgfA \mgfD)^{2}$, we can easily derive the expression for $\expect[\frac{\numsamples!}{16 (\numsamples - 4)!} \estimator(\tallymatrixrandom) \tallymatrixrandomentry{2}{1} \tallymatrixrandomentry{1}{2} ((\tallymatrixrandomentry{2}{1} - 1) (\tallymatrixrandomentry{1}{2} - 1) - \tallymatrixrandomentry{1}{1} \tallymatrixrandomentry{2}{2}) \mgfexp]$ and add it to the above, giving us
    \begin{align*}
        &\phantom{=} \expect[(\frac{\numsamples!}{16 (\numsamples - 4)!})^{2} \estimator(\tallymatrixrandom)^{2} \mgfexp]
        \\ &= [ (4 (\mgfA \mgfD)^{2} + 4 \mgfA \mgfD \mgfB \mgfC + 4 (\mgfB \mgfC)^{2}) \mgfsum{\numsamples - 4}
        \\ &+ (8 (\mgfA + \mgfD) (\mgfA \mgfD)^{2} + 8 (\mgfB + \mgfC) (\mgfB \mgfC)^{2} + 4 (\mgfA + \mgfD + \mgfB + \mgfC) \mgfA \mgfD \mgfB \mgfC) \mgfsum{\numsamples - 5}
        \\ &+ (12 (\mgfA \mgfD - \mgfB \mgfC)^{2} (\mgfA \mgfD + \mgfB \mgfC) + 2 ((\mgfA + \mgfD) \mgfA \mgfD + (\mgfB + \mgfC) \mgfB \mgfC)^{2}) \mgfsum{\numsamples - 6}
        \\ &+ (4 (\mgfA \mgfD - \mgfB \mgfC)^{2} ((\mgfA + \mgfD) (\mgfA \mgfD) + (\mgfB + \mgfC) (\mgfB \mgfC))) \mgfsum{\numsamples - 7}
        \\ &+ (\mgfA \mgfD - \mgfB \mgfC)^{4} \mgfsum{\numsamples - 8} ]
    \end{align*}
    Multiplying by $\left( \frac{16 (\numsamples - 4)!}{\numsamples!} \right)^{2}$ gives us
        \begin{align*}
        &\phantom{=} \expect[\estimator(\tallymatrixrandom)^{2} \mgfexp]
        \\ &= \left( \frac{16 (\numsamples - 4)!}{\numsamples!} \right)^{2} [ (4 (\mgfA \mgfD)^{2} + 4 \mgfA \mgfD \mgfB \mgfC + 4 (\mgfB \mgfC)^{2}) \mgfsum{\numsamples - 4}
        \\ &+ (8 (\mgfA + \mgfD) (\mgfA \mgfD)^{2} + 8 (\mgfB + \mgfC) (\mgfB \mgfC)^{2} + 4 (\mgfA + \mgfD + \mgfB + \mgfC) \mgfA \mgfD \mgfB \mgfC) \mgfsum{\numsamples - 5}
        \\ &+ (12 (\mgfA \mgfD - \mgfB \mgfC)^{2} (\mgfA \mgfD + \mgfB \mgfC) + 2 ((\mgfA + \mgfD) \mgfA \mgfD + (\mgfB + \mgfC) \mgfB \mgfC)^{2}) \mgfsum{\numsamples - 6}
        \\ &+ (4 (\mgfA \mgfD - \mgfB \mgfC)^{2} ((\mgfA + \mgfD) (\mgfA \mgfD) + (\mgfB + \mgfC) (\mgfB \mgfC))) \mgfsum{\numsamples - 7}
        \\ &+ (\mgfA \mgfD - \mgfB \mgfC)^{4} \mgfsum{\numsamples - 8} ]
    \end{align*}
    Setting $\mgfparam{1}{1} = \mgfparam{1}{2} = \mgfparam{2}{1} = \mgfparam{2}{2} = 0$ and subtracting $\expect[\estimator(\tallymatrixrandom)]^{2} = (16)^{2} (\jointdistributionentry{1}{1} \jointdistributionentry{2}{2} - \jointdistributionentry{2}{1} \jointdistributionentry{1}{2})^{4}$ gives us our claim for the variance formula.

    Several of the terms in the variance formula vanish as $\numsamples \to \infty$. For example, of the terms in the variance formula
    \begin{align*}
        &\phantom{=} \Var(\estimator(\tallymatrixrandom))
        \\ &= \left( \frac{16 (\numsamples - 4)!}{\numsamples!} \right)^{2} [ (4 (\jointdistributionentry{1}{1} \jointdistributionentry{2}{2})^{2} + 4 \jointdistributionentry{1}{1} \jointdistributionentry{2}{2} \jointdistributionentry{2}{1} \jointdistributionentry{1}{2} + 4 (\jointdistributionentry{2}{1} \jointdistributionentry{1}{2})^{2}) \frac{\numsamples!}{(\numsamples - 4)!}
        \\ &+ (8 (\jointdistributionentry{1}{1} + \jointdistributionentry{2}{2}) (\jointdistributionentry{1}{1} \jointdistributionentry{2}{2})^{2} + 8 (\jointdistributionentry{2}{1} + \jointdistributionentry{1}{2}) (\jointdistributionentry{2}{1} \jointdistributionentry{1}{2})^{2} + 4 \jointdistributionentry{1}{1} \jointdistributionentry{2}{2} \jointdistributionentry{2}{1} \jointdistributionentry{1}{2}) \frac{\numsamples!}{(\numsamples - 5)!}
        \\ &+ (12 (\jointdistributionentry{1}{1} \jointdistributionentry{2}{2} - \jointdistributionentry{2}{1} \jointdistributionentry{1}{2})^{2} (\jointdistributionentry{1}{1} \jointdistributionentry{2}{2} + \jointdistributionentry{2}{1} \jointdistributionentry{1}{2}) + 2 ((\jointdistributionentry{1}{1} + \jointdistributionentry{2}{2}) \jointdistributionentry{1}{1} \jointdistributionentry{2}{2} + (\jointdistributionentry{2}{1} + \jointdistributionentry{1}{2}) \jointdistributionentry{2}{1} \jointdistributionentry{1}{2})^{2}) \frac{\numsamples!}{(\numsamples - 6)!}
        \\ &+ (4 (\jointdistributionentry{1}{1} \jointdistributionentry{2}{2} - \jointdistributionentry{2}{1} \jointdistributionentry{1}{2})^{2} ((\jointdistributionentry{1}{1} + \jointdistributionentry{2}{2}) (\jointdistributionentry{1}{1} \jointdistributionentry{2}{2}) + (\jointdistributionentry{2}{1} + \jointdistributionentry{1}{2}) (\jointdistributionentry{2}{1} \jointdistributionentry{1}{2}))) \frac{\numsamples!}{(\numsamples - 7)!}
        \\ &+ (\jointdistributionentry{1}{1} \jointdistributionentry{2}{2} - \jointdistributionentry{2}{1} \jointdistributionentry{1}{2})^{4} \frac{\numsamples!}{(\numsamples - 8)!} ] - 16^{2} (\jointdistributionentry{1}{1} \jointdistributionentry{2}{2} - \jointdistributionentry{2}{1} \jointdistributionentry{1}{2})^{4} \text{,}
    \end{align*}
    all but
    \begin{align*}
        &\phantom{=} \left( \frac{16 (\numsamples - 4)!}{\numsamples!} \right)^{2} [(4 (\jointdistributionentry{1}{1} \jointdistributionentry{2}{2} - \jointdistributionentry{2}{1} \jointdistributionentry{1}{2})^{2} ((\jointdistributionentry{1}{1} + \jointdistributionentry{2}{2}) (\jointdistributionentry{1}{1} \jointdistributionentry{2}{2}) + (\jointdistributionentry{2}{1} + \jointdistributionentry{1}{2}) (\jointdistributionentry{2}{1} \jointdistributionentry{1}{2}))) \frac{\numsamples!}{(\numsamples - 7)!}
        \\ &+ (\jointdistributionentry{1}{1} \jointdistributionentry{2}{2} - \jointdistributionentry{2}{1} \jointdistributionentry{1}{2})^{4} \frac{\numsamples!}{(\numsamples - 8)!} ] - 16^{2} (\jointdistributionentry{1}{1} \jointdistributionentry{2}{2} - \jointdistributionentry{2}{1} \jointdistributionentry{1}{2})^{4}
    \end{align*}
    decrease faster than $1 / \numsamples$.
    
    On the remaining terms, we can compute that the ratio of the variance to $1 / \numsamples$ approaches
    \begin{align*}
        \lim_{\numsamples \to \infty} \Var(\estimator(\tallymatrixrandom)) \numsamples &= \left( 16 \right)^{2} [(4 (\jointdistributionentry{1}{1} \jointdistributionentry{2}{2} - \jointdistributionentry{2}{1} \jointdistributionentry{1}{2})^{2} ((\jointdistributionentry{1}{1} + \jointdistributionentry{2}{2}) (\jointdistributionentry{1}{1} \jointdistributionentry{2}{2}) + (\jointdistributionentry{2}{1} + \jointdistributionentry{1}{2}) (\jointdistributionentry{2}{1} \jointdistributionentry{1}{2})))]
        \\ &- 16^{3} (\jointdistributionentry{1}{1} \jointdistributionentry{2}{2} - \jointdistributionentry{2}{1} \jointdistributionentry{1}{2})^{4}
    \end{align*}
    and observe that this expression is $0$ when the joint distribution $\jointdistribution$ is either
    \begin{align*}
        \begin{pmatrix}
            1/2 & 0 \\
            0 & 1/2
        \end{pmatrix} \text{, }
        \begin{pmatrix}
            0 & 1/2 \\
            1/2 & 0
        \end{pmatrix} \text{, or }
        \begin{pmatrix}
            1/4 & 1/4 \\
            1/4 & 1/4
        \end{pmatrix} \text{,}
    \end{align*}
    this expression is $0$, meaning that we must have $\Var(\estimator(\tallymatrixrandom)) \numsamples^{2} \in O(1)$. Otherwise, our default guarantee when the expression is non-zero is that is that $\Var(\estimator(\tallymatrixrandom)) \numsamples \in O(1)$.
\end{proof}

\section{Omitted Proofs and Additional Lemmas for \Cref{sec:ex-ante-bounded-sample-mutual-information-estimators}}\label{app:ex-ante-lemmas}

\subsection{Additional Results and Omitted Proofs from \Cref{sec:scoring-rule-based} (Scoring Rule-Based Mutual Informations)}

Let $p$ denote the prior marginal distribution over $\randomcolumn{}$ and $p_{\randomrow{}}$ denote the posterior distribution over $\randomcolumn{}$ conditioned on the signal $\randomrow{}$. 

\begin{lemma}\label{lemma:nonindependent-implies-changed-posterior}
    If $\jointdistribution$ is not generated by independent play, then there exists some row sample $\randomrow{}$ such that $p_{\randomrow{}}\neq p$.
\end{lemma}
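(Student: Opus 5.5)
We argue by contrapositive: assuming $p_{i} = p$ for every row $i$ with positive marginal probability, we show that $\jointdistribution$ results from independent play in the sense of \Cref{def:independent-play}. Throughout, the rows with zero marginal probability are the only delicate point, since $p_i$ is undefined there.

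Let $\probvectoroneentry{i} = \sum_{j=1}^{\numactionscolumn} \jointdistributionentry{i}{j}$ be the row marginal, so $\probvectorone \in \Delta_{\numactions}$. Take $\probvectortwo = p$, the column marginal, with $\probvectortwoentry{j} = \sum_{i} \jointdistributionentry{i}{j}$; it lies in $\Delta_{\numactionscolumn}$ because the entries of $\jointdistribution$ sum to one. We then check entrywise that $\jointdistribution = \probvectorone \transpose{\probvectortwo}$, splitting into two cases.

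For a row $i$ with $\probvectoroneentry{i} > 0$, the posterior is defined by $(p_i)_j = \jointdistributionentry{i}{j}/\probvectoroneentry{i}$. The hypothesis $p_i = p$ then gives $\jointdistributionentry{i}{j} = \probvectoroneentry{i}\, p_j = \probvectoroneentry{i}\probvectortwoentry{j}$ for every $j$.

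For a row $i$ with $\probvectoroneentry{i} = 0$, nonnegativity of the entries forces $\jointdistributionentry{i}{j} = 0$ for every $j$. This equals $\probvectoroneentry{i}\probvectortwoentry{j} = 0$, so the identity holds here as well.

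Hence $\jointdistribution = \probvectorone\transpose{\probvectortwo}$ has rank one, i.e.\ $\jointdistribution$ is generated by independent play. This contradicts the assumption of the lemma, so some row $\randomrow{}$ with positive probability must satisfy $p_{\randomrow{}} \neq p$.

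No step is a real obstacle. The only care needed is to interpret ``some row sample'' as a row of positive marginal probability, so that $p_{\randomrow{}}$ is defined and the row occurs with positive probability, as is needed downstream in \Cref{prop:scoring-MI-strict}.
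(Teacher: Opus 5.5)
Your proof is correct and takes the same contrapositive route as the paper, which writes $\jointdistribution_{ij} = q(i)\,p_i(j) = q(i)\,p(j)$ with $q$ the row marginal and concludes $\jointdistribution = q\transpose{p}$. Your separate handling of zero-marginal rows, where $p_i$ is undefined, is extra care that the paper's proof leaves implicit.
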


\begin{proof}
    We prove the contrapositive.
    Assume that $p = p_{\randomrow{}}$ for all $\randomrow{}$. 
    We will conclude that $\jointdistribution$ must be generated by independent play. 
    Let $q\in \Delta([\numactions])$ denote the marginal distribution of the row player's action $\randomrow{}$.
    The probability of $(\randomrow{}, \randomcolumn{})$ is then 
    \begin{align*}
        \jointdistribution &= q(\randomrow{}) p_{\randomrow{}}(\randomcolumn{})\\
        &= q(\randomrow{}) p(\randomcolumn{}),
    \end{align*}
    and $\jointdistribution = q\transpose{p}$, the definition of independent play. 
\end{proof}

\begin{lemma}\label{lemma:garbling-prior}
    For a column-stochastic matrix $S$ and joint distribution $\jointdistribution$, 
    \begin{align*}
        \sum_{i=1}^\numactions \jointdistributionentry{i}{j} = \sum_{i=1}^\numactions (S\jointdistribution)_{ij} & ~~ \forall j\in[\numactionscolumn],
    \end{align*}
    i.e., garbling by the row player does not change the marginal distribution of the column player. 
\end{lemma}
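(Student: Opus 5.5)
The claim is that row garbling leaves the column marginal unchanged, so I will prove it by a direct computation on the entries of $S\jointdistribution$. The only input needed is that $S$ is column-stochastic, meaning $\sum_{i=1}^{\numactions} S_{i,\ell} = 1$ for every $\ell \in [\numactions]$.

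Fix a column $j \in [\numactionscolumn]$. By the definition of matrix multiplication, $(S\jointdistribution)_{ij} = \sum_{\ell=1}^{\numactions} S_{i,\ell}\,\jointdistributionentry{\ell}{j}$. Summing over $i$ and exchanging the two finite sums gives
\begin{align*}
    \sum_{i=1}^{\numactions} (S\jointdistribution)_{ij} = \sum_{\ell=1}^{\numactions} \jointdistributionentry{\ell}{j} \sum_{i=1}^{\numactions} S_{i,\ell} = \sum_{\ell=1}^{\numactions} \jointdistributionentry{\ell}{j} .
\end{align*}
The final step uses the column-stochastic property, since each inner sum $\sum_{i} S_{i,\ell}$ equals $1$. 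Renaming the dummy index $\ell$ to $i$ gives exactly the claimed identity for every $j$.

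The same calculation can be read as $\transpose{\mathbf{1}} S = \transpose{\mathbf{1}}$, which implies $\transpose{\mathbf{1}} S\jointdistribution = \transpose{\mathbf{1}}\jointdistribution$. That is, the vector of column sums of $S\jointdistribution$ equals the vector of column sums of $\jointdistribution$. It also shows that $S\jointdistribution$ is again a joint distribution: its total mass is unchanged, and its entries are nonnegative because both $S$ and $\jointdistribution$ have nonnegative entries.

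None of these steps is a real obstacle. The one place to be careful is the convention: row strategies are \emph{column}-stochastic matrices acting on the left, so the sums that equal $1$ are exactly the sums over the left index $i$. Those are the sums that show up after the exchange of summation above.
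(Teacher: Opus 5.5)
Your proof is correct and follows essentially the same route as the paper's. Like the paper, you expand $(S\jointdistribution)_{ij}$ as a matrix product, exchange the two finite sums, and use that every column of $S$ sums to one.
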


\begin{proof}
For a given column $j$, 
\begin{equation*}
    \sum_{i=1}^\numactions (S\jointdistribution)_{ij} = \sum_{i=1}^\numactions \sum_{k = 1}^\numactions S_{ik}\jointdistribution_{kj} 
    = \sum_{k = 1}^\numactions \jointdistribution_{kj} \sum_{i = 1}^\numactions S_{ik}
    = \sum_{k = 1}^\numactions \jointdistribution_{kj}, 
\end{equation*}
since $S$ is column-stochastic and $\sum_{i = 1}^\numactions S_{ik} = 1$ for all $k$.
\end{proof}

\begin{lemma}\label{lemma:garbling-posterior}
    For garbling $S$, distribution $\jointdistribution$, $\randomrow{}'\sim S\jointdistribution$, and $\randomrow{}\sim\jointdistribution$, 
    \begin{align*}
        p_{\randomrow{}'} &= \expect_{\randomrow{}}[p_{\randomrow{}} ~|~ \randomrow{}'].
    \end{align*}
\end{lemma}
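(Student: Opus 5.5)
The plan is to prove this by an explicit computation with Bayes' rule, conditioning on the garbled signal and expanding its joint law with the original signal. The setup matches the paper's model. Draw $(\randomrow{},\randomcolumn{})\sim\jointdistribution$ and then let $\randomrow{}'$ be produced from $\randomrow{}$ by the column-stochastic garbling $S$, independently of $\randomcolumn{}$ given $\randomrow{}$. Then $(\randomrow{}',\randomcolumn{})\sim S\jointdistribution$, and $(\randomrow{}',\randomrow{},\randomcolumn{})$ has joint law $\Pr[\randomrow{}'=i',\randomrow{}=i,\randomcolumn{}=j]=S_{i'i}\jointdistributionentry{i}{j}$. Summing over $i$ recovers $(S\jointdistribution)_{i'j}$. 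This is the same identity used in \Cref{lemma:garbling-prior}, so the coupling is consistent with the marginal statement there.

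\textbf{Step 1.} Fix $i'$ with $\Pr[\randomrow{}'=i']=\sum_j (S\jointdistribution)_{i'j}>0$; other $i'$ occur with probability zero and can be ignored. Write the left side as
\[
p_{i'}(j)=\frac{(S\jointdistribution)_{i'j}}{\sum_{k}(S\jointdistribution)_{i'k}}=\frac{\sum_i S_{i'i}\jointdistributionentry{i}{j}}{\sum_i S_{i'i} r_i},
\qquad r_i=\sum_k \jointdistributionentry{i}{k}.
\]

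\textbf{Step 2.} Compute the conditional law of $\randomrow{}$ given $\randomrow{}'=i'$, which is $\Pr[\randomrow{}=i\mid \randomrow{}'=i']=S_{i'i}r_i/\sum_{l}S_{i'l}r_l$. The right side is then
\[
\sum_{i:r_i>0}\frac{S_{i'i}r_i}{\sum_l S_{i'l}r_l}\cdot\frac{\jointdistributionentry{i}{j}}{r_i}.
\]
After the $r_i$ cancels, this equals the expression in Step 1. Rows with $r_i=0$ also have $\jointdistributionentry{i}{j}=0$, so they contribute nothing on either side.

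The key conceptual point, and really the only place needing care, is the conditional independence $\randomrow{}'\perp\randomcolumn{}\mid\randomrow{}$. It is what justifies the tower-property reading $\Pr[\randomcolumn{}=j\mid\randomrow{}']=\expect[\Pr[\randomcolumn{}=j\mid\randomrow{},\randomrow{}']\mid\randomrow{}']=\expect[p_{\randomrow{}}(j)\mid\randomrow{}']$. The other potential snag is bookkeeping for zero-probability rows, since $p_{\randomrow{}}$ is undefined when $r_{\randomrow{}}=0$; such rows have conditional probability zero and drop out. I would state both points explicitly, and the identity then follows coordinatewise in $j$.
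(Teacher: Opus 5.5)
Your proof is correct and follows the same route as the paper: a direct computation that writes the garbled posterior $p_{\randomrow{}'}$ as a mixture of the original posteriors $p_{\randomrow{}}$, under the coupling where $\randomrow{}'$ is obtained from $\randomrow{}$ via $S$. Your Step 2 is actually more careful than the paper's one-line argument. The paper writes the mixing weights as $S_{\randomrow{}'\randomrow{}}$, which is $\Pr[\randomrow{}' \mid \randomrow{}]$ and need not sum to one over $\randomrow{}$, whereas you correctly use the Bayes weights $S_{i'i}r_i/\sum_l S_{i'l}r_l$.
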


\begin{proof}
    Let $p_{\randomrow{}'}(\randomcolumn{})$ denote the probability of $\randomcolumn{}$ conditioned on observing $\randomrow{}'\sim S\jointdistribution$ and $p_{\randomrow{}}(\randomcolumn{})$ denote the same conditioned instead on observing $\randomrow{} \sim \jointdistribution$. 
    The posterior $p_{\randomrow{}'}(\randomcolumn{})$ is given by
    \begin{equation*}
        p_{\randomrow{}'}(\randomcolumn{}) = \sum_{\randomrow{}}S_{\randomrow{}'\randomrow{}} p_{\randomrow{}}(\randomcolumn{}) = \expect_{\randomrow{}} [p_{\randomrow{}}(\randomcolumn{}) ~|~ \randomrow{}'].
    \end{equation*}
\end{proof}

\begin{theorem}[\citet{M-56}, \citet{S-71}]\label{thm:convex-scoring}
    For any proper scoring rule $\scoringrule:\Delta(\mathcal{O})\times \mathcal{O}\to \mathbb{R}$, there exists a convex function $\convexfunction:\Delta(\mathcal{O})\to \mathbb{R}$ such that
    \begin{align*}
        \expect_{o\sim p}[\scoringrule(p, o)] &= \convexfunction(p).
    \end{align*}
\end{theorem}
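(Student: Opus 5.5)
We define $\convexfunction(p) := \expect_{o\sim p}[\scoringrule(p,o)] = \sum_{o\in\mathcal{O}} p(o)\,\scoringrule(p,o)$ directly, which makes the displayed identity hold by construction. The only real content of the statement is that this expected-score function is convex on $\Delta(\mathcal{O})$.

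To show convexity, we first introduce, for each fixed report $r\in\Delta(\mathcal{O})$, the function
\begin{align*}
    L_r(p) := \expect_{o\sim p}[\scoringrule(r,o)] = \sum_{o} p(o)\,\scoringrule(r,o).
\end{align*}
For fixed $r$ this is linear (affine) in $p$, because $r$ no longer varies with the belief. Properness (\Cref{def:scoring-rule}) states exactly that $L_p(p)\ge L_r(p)$ for every $r$. Hence
\begin{align*}
    \convexfunction(p) = L_p(p) = \sup_{r\in\Delta(\mathcal{O})} L_r(p),
\end{align*}
with the supremum attained at $r=p$. A pointwise supremum of affine functions is convex, so $\convexfunction$ is convex.

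If one prefers to avoid the supremum, the same argument can be run directly. Take $p = \lambda p_1 + (1-\lambda)p_2$. Linearity of $L_p$ gives $\convexfunction(p) = \lambda L_p(p_1) + (1-\lambda)L_p(p_2)$, and properness at $p_1$ and at $p_2$ gives $L_p(p_i)\le L_{p_i}(p_i)=\convexfunction(p_i)$. Together these yield $\convexfunction(p)\le\lambda\convexfunction(p_1)+(1-\lambda)\convexfunction(p_2)$.

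The only mild obstacle is a technicality: for scoring rules such as the log score, $\scoringrule(r,o)$ may equal $-\infty$ when $r(o)=0$. We handle this by allowing $\convexfunction$ to be extended-real-valued, or by noting that the terms involved are finite whenever $p(o)>0$ (using the convention $0\cdot(-\infty)=0$). Under that convention the linearity and supremum arguments go through unchanged. We would also remark that the characterization in the converse direction (every convex $\convexfunction$ induces a proper scoring rule via subgradients) is not needed here.
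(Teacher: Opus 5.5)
Your proof is correct: writing the expected score as the pointwise maximum of the linear functions $p\mapsto\sum_o p(o)R(r,o)$, with the maximum attained at $r=p$ by properness, is exactly the classical McCarthy--Savage argument, which the paper cites rather than reproves. The $-\infty$ caveat is unnecessary because \Cref{def:scoring-rule} takes $R$ to be real-valued. Your two-point argument also gives strict convexity when $R$ is strictly proper, since $\lambda p_1+(1-\lambda)p_2$ differs from both $p_i$ whenever $p_1\neq p_2$ and $\lambda\in(0,1)$; this strict form is what the paper actually relies on in \Cref{prop:scoring-MI-strict}.
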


\fixedsamplescoringruletrivial*

\begin{proof}
Let $\scoringrule$ have a fixed-sample estimator.
By the well-known characterization of proper scoring rules \citep{GR-07}, the function $\convexfunction(p) := \expect_{o \sim p}[\scoringrule(p,o)]$ is convex.
We will show this result in two parts. 
First, we will show that $\convexfunction$ is affine, i.e., can be written as $a\cdot p+b$ for vector $a$ and constant $b$.
Second, we will show that affine $\convexfunction$ implies that $\VoI{\jointdistribution}{\scoringrule}$ is identically zero, i.e. trivial.

To show $\convexfunction$ is affine, we consider $\VoI{\jointdistribution_\delta}{\scoringrule}$ on joint distributions $\jointdistribution_\delta$ defined for row vectors $u,v\in\Delta([\numactionscolumn])$ that sum to one and constant $\delta\in[0,1]$ as
\newcommand{\rowvectorrule}{\xleaders\hbox{\rule[0.45ex]{0.1em}{0.4pt}}\hfill}
\newcommand{\rowvectordisplay}[1]{\makebox[8em][c]{\rowvectorrule$\mspace{3mu}#1\mspace{3mu}$\rowvectorrule}}
\begin{align*}
    \jointdistribution_\delta
    =
    \begin{bmatrix}
        \rowvectordisplay{\delta u}\\
        \rowvectordisplay{(1-\delta)v}\\
        \rowvectordisplay{0}\\
        \vdots\\
        \rowvectordisplay{0}
    \end{bmatrix},
\end{align*}
where $0$ denotes zero row vectors and $\delta$ and $1-\delta$ are the row marginals.
Expanding out the definition of $\VoI{\jointdistribution_\delta}{\scoringrule}$:
\begin{align*}
    \VoI{\jointdistribution_\delta}{\scoringrule}
    =
    \delta\convexfunction(u)
    +
    (1-\delta)\convexfunction(v)
    -
    \convexfunction(v+\delta(u-v)).
\end{align*}

We can take the derivative of $\VoI{\jointdistribution_\delta}{\scoringrule}$ with respect to $\delta$ evaluated at $\delta=0$.
Fix vector $v$ as a constant and write this derivative, evaluated at $\delta=0$, as a function of $u$ as $D_v(u)$.
From the scoring rule definition of $\VoI{\jointdistribution_\delta}{\scoringrule}$, we have:
\begin{align}
\label{eq:Du}
    D_v(u)
    =
    \convexfunction(u)
    -
    \convexfunction(v)
    -
    \nabla\convexfunction(v)\cdot(u-v).
\end{align}

We now argue that $D_v(u)$ is an affine function of $u$.
Because the number of samples is fixed, $\VoI{\jointdistribution}{\scoringrule}$ is a polynomial.
Recall that when we take the derivative of $\VoI{\jointdistribution_{\delta}}{\scoringrule}$ with respect to $\delta$ and then set $\delta$ to zero, the only terms that survive are those with $\delta$ to the first power: The degree $0$ terms are gone
from the derivative, and the degree $2$ or higher terms are gone from
setting $\delta=0$.
Write the vector
$u=(u_1,\ldots,u_{\numactionscolumn})$ and consider the first row of $\jointdistribution_{\delta}$, which
is $\delta u = (\delta u_1,\ldots,\delta u_{\numactionscolumn}).$ Only
monomials with exactly one of these $\delta u_j$ terms can survive in
$D_v(u)$. Thus $D_v(u)$ is affine in $u$.

Because the right-hand side of \eqref{eq:Du} is $\convexfunction(u)$ plus an affine function of $u$, it must be that $\convexfunction(u)$ is affine as well.

We complete the proof by showing that if $\convexfunction(p) = a\cdot p + b$, i.e., affine, then $\VoI{\jointdistribution}{\scoringrule}=0$ identically for all $\jointdistribution$:
\begin{align*}
    \VoI{\jointdistribution}{\scoringrule} &= \E_{(\randomrow{}, \randomcolumn{})\sim F}[\scoringrule(p_{\randomrow{}}, \randomcolumn{})] - \E_{(\randomrow{}, \randomcolumn{})\sim F}[\scoringrule(p, \randomcolumn{})]\\
    &= \E_{\randomrow{}\sim F}[\E_{\randomcolumn{}\sim p_{\randomrow{}}}[\scoringrule(p_{\randomrow{}}, \randomcolumn{})]] - \convexfunction(p)\\
    &= \E_{\randomrow{}\sim F}[\convexfunction(p_{\randomrow{}})] - \convexfunction(p)\\
    &= \E_{\randomrow{}\sim F}[a\cdot p_{\randomrow{}} + b] - a\cdot p - b\\
    &= a \cdot \E_{\randomrow{}\sim F}[p_{\randomrow{}}] - a\cdot p = 0.\qedhere
\end{align*}
\end{proof}

\scoringMIstrict*

\begin{proof}
    Consider any joint distribution $\jointdistribution$ and strict row garbling $S$. Let $(\randomrow{},\randomcolumn{})\sim\jointdistribution$, let $\randomrow{}'$ be obtained by applying $S$ to $\randomrow{}$, and let $p_{\randomrow{}}$ and $p_{\randomrow{}'}$ denote the posterior distributions of $\randomcolumn{}$ conditional on $\randomrow{}$ and $\randomrow{}'$, respectively. Because $\randomrow{}'$ is a garbling of $\randomrow{}$,
    \begin{align*}
        p_{\randomrow{}'}=\expect[p_{\randomrow{}}\mid \randomrow{}'].
    \end{align*}
    Thus, the posterior before garbling is a mean-preserving spread of the posterior after garbling. Because $S$ is strict for $\jointdistribution$, this spread is strict. A strictly proper scoring rule $\scoringrule$ induces a strictly convex function $\convexfunction$ (\Cref{thm:convex-scoring}), so strict Jensen's inequality gives
    \begin{align*}
        \expect[\convexfunction(p_{\randomrow{}'})]
        < \expect[\convexfunction(p_{\randomrow{}})].
    \end{align*}
    The garbling leaves the marginal distribution $p$ of $\randomcolumn{}$ unchanged. Therefore,
    \begin{align*}
        \VoI{S\jointdistribution}{\scoringrule}
        = \expect[\convexfunction(p_{\randomrow{}'})]-\convexfunction(p)
        < \expect[\convexfunction(p_{\randomrow{}})]-\convexfunction(p)
        = \VoI{\jointdistribution}{\scoringrule}.
    \end{align*}
    Hence, $\VoI{\cdot}{\scoringrule}$ is a strict mutual information.
\end{proof}

\subsection{Additional Results and Omitted Proofs from \Cref{sec:collision-estimators} (Collision Estimators)}

\begin{lemma}\label{lemma:quad-score-expected-value}
    With beliefs $p$, the expected value of the quadratic score is
    \begin{align*}
        \expect_{O\sim p}[\quadscore(p, O)] &= ||p||_2^2. 
    \end{align*}
\end{lemma}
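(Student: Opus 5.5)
This is a direct computation from \Cref{def:quadratic-score}. First expand the square inside the score:
\begin{align*}
\quadscore(p,o) = 1 - \sum_j \indicate{j=o}^2 + 2\sum_j \indicate{j=o}\, p(j) - \sum_j p(j)^2 .
\end{align*}
Since exactly one index $j$ equals $o$, the indicator terms collapse. We have $\sum_j \indicate{j=o}^2 = 1$ and $\sum_j \indicate{j=o}\,p(j) = p(o)$. Hence
\begin{align*}
\quadscore(p,o) = 2p(o) - ||p||_2^2 .
\end{align*}
This pointwise simplification is the only step with any content.

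Next, take the expectation with $O \sim p$. The constant term stays $-||p||_2^2$. For the other term, $\expect_{O\sim p}[p(O)] = \sum_o p(o)\,p(o) = ||p||_2^2$. Combining the two gives $2||p||_2^2 - ||p||_2^2 = ||p||_2^2$, which is the claim.

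No step is a real obstacle. The only care needed is to count the indicator sum as exactly $1$, which uses the fact that $o$ lies in the outcome space, and to not lose the cross term's factor of $2$. As a consistency check, this matches \Cref{thm:convex-scoring}: the induced function $\convexfunction(p)=||p||_2^2$ is convex, and strictly so. It also matches the expression for $\VoI{\jointdistribution}{\quadscore}$ stated just after \Cref{def:quadratic-score}.
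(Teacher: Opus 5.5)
Your proof is correct and follows essentially the same route as the paper's: both reduce the score pointwise to $2p(O) - \sum_i p(i)^2$ and then take the expectation under $O \sim p$, obtaining $2||p||_2^2 - ||p||_2^2 = ||p||_2^2$. Your version is slightly more explicit, since it shows how the indicator terms collapse when the square is expanded, a step the paper leaves implicit.
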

\begin{proof}
    We expand the expectation as a sum over expected outcomes:
    \begin{align*}
        \expect_{O\sim p}[\quadscore(p, O)] &= \expect_{O\sim p}[2p(O) - \sum_i p(i)^2]\\
        &= \expect_{O\sim p}[2p(O)] - \sum_i p(i)^2\\
        &= \sum_ip(O)(2p(O))- \sum_i p(i)^2\\
        &= 2\sum_ip(O)^2 - \sum_i p(i)^2\\
        &= \sum_i p(i)^2 = ||p||_2^2. 
    \end{align*}
\end{proof}

\begin{lemma}\label{lemma:collision-estimator-expected-samples}
    The expected number of samples for the collision estimator is $n^* + 1$. 
\end{lemma}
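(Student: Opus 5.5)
The plan is to compute the expected stopping time directly by conditioning on the first row report. I will interpret $n^*$ as the number of row actions with positive marginal probability, i.e.\ $n^* = |\{ i \in [\numactions] : q(i) > 0 \}|$, where $q \in \Delta([\numactions])$ is the row marginal of $\jointdistribution$, $q(i) = \sum_{j} \jointdistributionentry{i}{j}$. This reading is consistent with \Cref{prop:collision-estimator-is-quad-VoI}, since then $n^* \le \numactions$. The collision estimator of \Cref{def:collision-estimator} always uses the first sample $(I_1,J_1)$. It then draws $K$ further samples, where $K$ is the index of the first subsequent sample whose row equals $I_1$. So the total number of samples is $1+K$, and it suffices to show $\E[K] = n^*$.

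First, I will condition on $I_1 = i$; this event has probability $q(i)$, so only $i$ with $q(i)>0$ occur. The subsequent samples are i.i.d.\ from $\jointdistribution$ and independent of the first sample. Hence each one independently has row equal to $i$ with probability $q(i)$, and $K$ given $I_1=i$ is a geometric random variable on $\{1,2,\dots\}$ with success probability $q(i)$. In particular $K$ is almost surely finite and $\E[K \mid I_1 = i] = 1/q(i)$.

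Second, by the law of total expectation,
\begin{align*}
    \E[K] = \sum_{i : q(i) > 0} q(i) \cdot \frac{1}{q(i)} = |\{ i : q(i) > 0 \}| = n^*,
\end{align*}
which gives $\E[1+K] = n^*+1$. As a corollary, the expected number of samples is at most $\numactions + 1$, which is the sample bound used in \Cref{prop:collision-estimator-is-quad-VoI}.

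There is no real obstacle. The only point needing care is that rows with zero probability never appear as $I_1$, so they contribute nothing to the sum. This is precisely why the answer is $n^*$ rather than $\numactions$, and why the geometric mean $1/q(i)$ is never taken with $q(i)=0$.
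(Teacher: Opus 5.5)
Your proposal is correct and follows essentially the same route as the paper: condition on $I_1 = i$, use the geometric waiting time until row $i$ recurs, and sum over the $n^*$ rows with positive marginal probability. The only difference is in evaluating the conditional expectation. The paper computes $\mathbb{E}[T \mid I_1 = i]$ by hand from an arithmetico-geometric series, which forces it to treat $\Pr[I = i] = 1$ as a separate case; you cite the geometric mean $1/q(i)$ directly, which is cleaner and needs no special case.
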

\begin{proof}
    Without loss of generality, assume the samples with positive probability are $1,\dots, n^*$. 
    Let $\stoppingtime$ be a random variable for the number of samples drawn.
    If $\Pr[\randomrow{} = i] = 1$ for any $i$, then $\stoppingtime = 2$ deterministically. 
    We assume that $\Pr[\randomrow{} = i] < 1$ for all $i$ for the remainder of this analysis, and let $\randomrow{}$ denote a generic row sample from $\jointdistribution$. 
    The expectation of $\stoppingtime$ is
    \begin{align*}
        \E[\stoppingtime] &= \sum_{i=1}^{n^*}\Pr[\randomrow{1} = i]\sum_{t = 2}^\infty t \Pr[\randomrow{t} = i]\Pr[\randomrow{} \neq i]^{(t-2)}\\
        &= \sum_{i=1}^{n^*}\Pr[\randomrow{} = i]^2\sum_{t = 2}^\infty t \Pr[\randomrow{} \neq i]^{(t-2)}\\
        &= \sum_{i=1}^{n^*}\frac{\Pr[\randomrow{} = i]^2}{\Pr[\randomrow{} \neq i]^2}\sum_{t = 2}^\infty t \Pr[\randomrow{} \neq i]^{t}\\
        &= \sum_{i=1}^{n^*}\left(\frac{\Pr[\randomrow{} = i]^2}{\Pr[\randomrow{} \neq i]^2}\sum_{t = 1}^\infty t \Pr[\randomrow{} \neq i]^{t} - \frac{\Pr[\randomrow{} = i]^2}{\Pr[\randomrow{} \neq i]}\right).
    \end{align*}
    Applying the infinite series summation for arithmetico-geometric sequences gives
    \begin{align*}
        \E[T] &= \sum_{i=1}^{n^*}\left(\frac{\Pr[\randomrow{} = i]^2}{\Pr[\randomrow{} \neq i]^2}\frac{\Pr[\randomrow{} \neq i]}{\Pr[\randomrow{} = i]^2} - \frac{\Pr[\randomrow{} = i]^2}{\Pr[\randomrow{} \neq i]}\right)\\
        &= \sum_{i=1}^{n^*}\frac{1 - \Pr[\randomrow{} = i]^2}{1 - \Pr[\randomrow{} = i]}\\
        &= \sum_{i=1}^{n^*}\frac{(1 - \Pr[\randomrow{} = i])(1 + \Pr[\randomrow{} = i])}{1 - \Pr[\randomrow{} = i]}\\
        &= \sum_{i=1}^{n^*}(1 + \Pr[\randomrow{} = i])\\
        &= \numactions^* + \sum_{i}\Pr[\randomrow{} = i] = \numactions^* + 1. 
    \end{align*}
\end{proof}

\collisionestimator*

\begin{proof}
    Let $p$ be the marginal distribution of $\randomcolumn{}$, and let $p_i$ be the distribution of $\randomcolumn{}$ conditional on $\randomrow{}=i$. The quadratic-score mutual information (\Cref{lemma:quad-score-expected-value}) is,
    \begin{equation}\label{eq:quadratic-score-MI-collision}
        \E_{\randomrow{}}\left[\norm{p_{\randomrow{}}}^2\right]-\norm{p}^2.
    \end{equation}

    We construct unbiased estimators for these two terms.

    First, draw two independent samples $\randomcolumn{1},\randomcolumn{2}$ from $p$ and output $1$ if they are equal. The expectation of this process is,
    \[
        \textstyle \Pr[\randomcolumn{1}=\randomcolumn{2}]
        =\sum_j \Pr[\randomcolumn{1}=j]\Pr[\randomcolumn{2}=j]
        =\sum_j p(j)^2
        =\norm{p}^2.
    \]
    Thus, $\indicate{\randomcolumn{1}=\randomcolumn{2}}$ is an unbiased estimator of $\norm{p}^2$.

    Next, we estimate $\E_{\randomrow{}}[\norm{p_{\randomrow{}}}^2]$. Draw $(\randomrow{1},\randomcolumn{1})$, and then continue drawing samples until the first time $\stoppingtime>1$ at which $\randomrow{\stoppingtime}=\randomrow{1}$. Conditional on $\randomrow{1}=i$, the variables $\randomcolumn{1}$ and $\randomcolumn{\stoppingtime}$ are independent draws from $p_i$. Hence,
\[\textstyle \Pr[\randomcolumn{1}=\randomcolumn{\stoppingtime}\mid \randomrow{1}=i]
        =\sum_j p_i(j)^2
        =\norm{p_i}^2.
    \]
    Averaging over the random value of $\randomrow{1}$ gives
\[
\textstyle \E\left[\indicate{\randomcolumn{1}=\randomcolumn{\stoppingtime}}\right]
        =\sum_i \Pr[\randomrow{1}=i]\norm{p_i}^2
        =\E_{\randomrow{}}[\norm{p_{\randomrow{}}}^2].
    \]
    Thus, $\indicate{\randomcolumn{1}=\randomcolumn{\stoppingtime}}$ is an unbiased estimator of $\E_{\randomrow{}}[\norm{p_{\randomrow{}}}^2]$.   The collision estimator outputs the second estimator minus the first and is therefore an unbiased estimator of Equation \eqref{eq:quadratic-score-MI-collision}.
\end{proof}

\bettercollision*

\begin{proof}
    First, draw two samples, $(I_1,J_1)$ and $(I_2,J_2)$.
    The expected quadratic score of the prior $p$ is $\sum_{j \in [m]} p(j)^2$, which is the probability that $J_1=J_2$, which is estimated by the indicator.
    In other words, the expected quadratic score for predicting the column player's signal based on the prior is equal to the probability that two of the column player's signals match.
   
    It only remains to unbiasedly estimate the expected score for
    predicting the column player \emph{after} seeing $I$.

    Consider the probability that these two samples are equal in both
    coordinates: 
    \begin{align*}
        \expect \left[\indicate{I_1 = I_2 \text{ and } J_1 =
        J_2}\right] &= \sum_{i \in [n]} \prob \left[I_1 = i
      \right]\prob \left[I_2 = i\right] \prob\left[J_1 = J_2 | I_1 =
          I_2 = i\right]\\
          &= \sum_{i \in [n]} \prob \left[I_1 = i
      \right]\prob \left[I_2 = i\right] \sum_{j\in [\numactionscolumn]}\prob\left[J_1 = j | I_1 = i\right]\prob\left[J_2 = j | I_2 = i\right]\\
        &= \sum_{i \in [n]} \prob \left[I_1 = i
      \right]\prob \left[I_2 = i\right] \sum_{j\in [\numactionscolumn]}\prob\left[J_1 = j | I_1 = i\right]^2\\
    \end{align*}
    This is off of the desired $\expect_{I} \left[ \sum_{j \in [m]} p_I(j)^2 \right] = \sum_{i \in [n]} \Pr[I=i] \sum_{j \in [m]} \Pr[J=j \mid I=i]^2$ by an extra term of $\prob[I_2 = i]$ inside the sum.
    This missing term can be addressed by estimating the number of draws $K$ it takes to again draw $I = i$.  \Cref{def:fast-collision-estimator} gives this estimator.

    The expected number of samples is the expectation over $i \in
    [n]$ of the probability of match on $i$ times the expected
    number of samples on match which is $1/\prob[I = i]$.  The
    probability of an exact match on $i$ is at most the probability
    that $I_1 = I_2 = i$ which is $\prob[i = I]^2$. Conditioned on
    $I_1 = i$ the probability we have a match is at most
    $\prob{I_2 = i}$ and the number f samples we draw on a match
    is $1/prob[i = I]$. Thus, the expected number of samples is at most 1.
\end{proof}

\subsection{Omitted Proof from \Cref{sec:arbitrarily-few-expected-sample-estimators} (Arbitrarily Few-Expected Sample Estimators)}

\scalingincreasesvariance*

\begin{proof}
    Let $R'$ be the random payment of the scaled mechanism. 
    $R'$ is distributed as
    \begin{align*}
        R' &= \begin{cases}
            \frac{1}{\alpha}R, & \text{w.p. }\alpha\\
            0, & \text{w.p. }1-\alpha
        \end{cases}.
    \end{align*}
    We compute the variance of $R'$:
    \begin{align*}
        \Var(R') &= \E[(R')^2] - \E[R']^2\\
        &= 0 + \alpha \E\left[\left(\frac{1}{\alpha}R\right)^2\right] - \alpha^2\E\left[\left(\frac{1}{\alpha}R\right)\right]^2\\
        &= \frac{1}{\alpha} \E\left[R^2\right] - \E\left[R\right]^2\\
        &= \frac{1-\alpha}{\alpha} \E\left[R^2\right] + \E\left[R^2\right] - \E\left[R\right]^2\\
        &= \frac{1-\alpha}{\alpha} \E\left[R^2\right] + \Var(R).
    \end{align*}
\end{proof}

\section{Log Score Mutual Information and Entropy}\label{app:entropy}

We recover the classic definition of mutual information derived from information-theoretic entropy as introduced by \citet{S-48} from the log scoring rule. 
The log scoring rule is given by $\logscore(p, o) = \ln p(o)$ where $p(o)$ is the probability $p$ places on outcome $o$. 
The Shannon entropy of the column player's sample is $H(\randomcolumn{}) = -\sum_j p(j)\ln p(j)$, and the joint entropy is $H(\randomcolumn{}, \randomrow{}) = -\sum_{i,j} \jointdistributionentry{i}{j} \ln \jointdistributionentry{i}{j}$.
The conditional entropy, $H(\randomcolumn{}~|~\randomrow{}) = - \sum_{i,j}F_{ij}(\ln(F_{ij}) - \ln(p(i)))$. 

\begin{definition}[Classical Mutual Information]\label{def:classic-MI}
    The classic definition of mutual information is 
    \begin{align*}
        \MI(\jointdistribution) &= H(\randomcolumn{}) - H(\randomcolumn{} ~|~ \randomrow{}).
    \end{align*}
\end{definition}

\begin{proposition}\label{prop:VoI-classical-MI}
    The value of information with regard to the log score $\logscore$ of a joint distribution $\jointdistribution$ is
    \begin{align*}
        \VoI{F}{\logscore} &= \MI(\jointdistribution).
    \end{align*}
\end{proposition}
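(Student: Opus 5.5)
The plan is to expand $\VoI{\jointdistribution}{\logscore}$ directly from \Cref{def:value-of-info} with $\scoringrule = \logscore$ and match it term by term with \Cref{def:classic-MI}. Let $q(i) = \sum_{j} \jointdistributionentry{i}{j}$ denote the row marginal, so that the posterior is $p_i(j) = \jointdistributionentry{i}{j}/q(i)$ whenever $q(i) > 0$.

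The first step handles the prior term. We have
\[
\E_{(\randomrow{},\randomcolumn{})\sim\jointdistribution}[\logscore(p,\randomcolumn{})] = \sum_{i,j} \jointdistributionentry{i}{j} \ln p(j) = \sum_j p(j)\ln p(j) = -H(\randomcolumn{}),
\]
where we sum over $i$ first and use that the column marginal of $\jointdistribution$ is $p$.

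The second step handles the posterior term. We have
\[
\E[\logscore(p_{\randomrow{}},\randomcolumn{})] = \sum_{i,j} \jointdistributionentry{i}{j} \bigl(\ln \jointdistributionentry{i}{j} - \ln q(i)\bigr) = -H(\randomcolumn{} \mid \randomrow{}).
\]
Here we use the standard conditional entropy in which $\ln q(i)$ is the log of the \emph{row} marginal. The paper's displayed formula writes $p(i)$; I would read this as the marginal of the conditioning variable $\randomrow{}$, and note this explicitly. Subtracting the two expressions gives $\VoI{\jointdistribution}{\logscore} = H(\randomcolumn{}) - H(\randomcolumn{}\mid\randomrow{}) = \MI(\jointdistribution)$.

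The main obstacle is not algebraic but bookkeeping at zero-probability entries. Rows with $q(i)=0$ contribute nothing to either expectation because they are weighted by $\jointdistributionentry{i}{j} = 0$. Entries with $\jointdistributionentry{i}{j} = 0$ give terms of the form $0 \cdot \ln 0$, which I would handle with the usual convention $0\ln 0 = 0$. This convention is consistent with the expectation, since such outcomes occur with probability zero. Similarly, if $p(j)=0$, then every $\jointdistributionentry{i}{j}=0$, so no infinite log score is ever realized. With these conventions stated up front, the identity holds for all $\jointdistribution \in \Delta([\numactions]\times[\numactionscolumn])$.
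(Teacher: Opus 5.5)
Your proposal is correct and follows essentially the same route as the paper: identify the prior term $\E[\logscore(p,\randomcolumn{})]$ with $-H(\randomcolumn{})$ and the posterior term $\E[\logscore(p_{\randomrow{}},\randomcolumn{})]$ with $-H(\randomcolumn{}\mid\randomrow{})$, then subtract. The only difference is that the paper reaches the posterior identity through the chain rule $H(\randomcolumn{}\mid\randomrow{}) = H(\randomcolumn{},\randomrow{}) - H(\randomrow{})$, whereas you read it off the displayed conditional-entropy formula directly (correctly taking its $p(i)$ to be the row marginal). This is slightly cleaner, and your zero-probability conventions supply details the paper leaves implicit.
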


\begin{proof}
The definition of $\MI$ gives 
\begin{align*}
    \MI(\jointdistribution) &= H(\randomcolumn{}) - H(\randomcolumn{} ~|~ \randomrow{}).
\end{align*}
We observe that $H(\randomcolumn{}) =- \expect[\logscore(p,\randomcolumn{})]$. 
We then rewrite the conditional entropy $H(\randomcolumn{} ~|~ \randomrow{})$ in the language of scoring rules:
\begin{align*}
    H(\randomcolumn{} ~|~ \randomrow{}) &= H(\randomcolumn{}, \randomrow{}) - H(\randomrow{})\\
    &= -\sum_{i,j} \jointdistributionentry{i}{j} \ln \jointdistributionentry{i}{j} + \sum_i \Pr[\randomrow{} = i]\ln \Pr[\randomrow{} = i]\\
    &= \sum_{i}\Pr[\randomrow{} = i]\ln \Pr[\randomrow{} = i] - \sum_j \jointdistributionentry{i}{j} \ln \jointdistributionentry{i}{j}\\
    &= \sum_{i}\Pr[\randomrow{} = i]\ln \Pr[\randomrow{} = i] - \Pr[\randomrow{} = i] \sum_j \Pr[\randomcolumn{} = j ~|~ \randomrow{} = i] \ln \Pr[\randomrow{} = i] \Pr[\randomcolumn{} = j ~|~ \randomrow{} = i]\\
    &= \sum_{i}\Pr[\randomrow{} = i]\ln \Pr[\randomrow{} = i] - \Pr[\randomrow{} = i] \sum_j \Pr[\randomcolumn{} = j ~|~ \randomrow{} = i] \left(\ln \Pr[\randomrow{} = i] + \ln\Pr[\randomcolumn{} = j ~|~ \randomrow{} = i]\right)\\
    &= \sum_{i}\Pr[\randomrow{} = i]\ln \Pr[\randomrow{} = i] - \Pr[\randomrow{} = i] \ln \Pr[\randomrow{} = i] - \Pr[\randomrow{} = i] \sum_j\ln\Pr[\randomcolumn{} = j ~|~ \randomrow{} = i]\\
    &= - \sum_{i} \Pr[\randomrow{} = i] \sum_j\ln\Pr[\randomcolumn{} = j ~|~ \randomrow{} = i]\\
    &= -\expect_{(\randomrow{}, \randomcolumn{})\sim \jointdistribution}[\logscore(p_{\randomrow{}}, \randomcolumn{})].
\end{align*}
We then obtain
\begin{align*}
    \MI(\jointdistribution) &= -\expect[\logscore(p,\randomcolumn{})] + \expect_{(\randomrow{}, \randomcolumn{})\sim \jointdistribution}[\logscore(p_{\randomrow{}}, \randomcolumn{})]\\
    &= \expect_{(\randomrow{}, \randomcolumn{})\sim \jointdistribution}[\logscore(p_{\randomrow{}}, \randomcolumn{})] - \expect[\logscore(p,\randomcolumn{})]\\
    &= \VoI{\jointdistribution}{\logscore}. 
\end{align*}
\end{proof}
Furthermore, as $\MI(\jointdistribution)$ is symmetric (i.e., $\MI(\jointdistribution) = \MI(\transpose{\jointdistribution})$), it must also satisfy the column data processing inequality of \Cref{def:mutual-information}, making classic Shannon mutual information a full mutual information under our definitions. 
This justifies our use of the term as a generalization of classical mutual information. 

Unfortunately, there is no unbiased estimator for entropy \citep{P-03}, leading us to consider row mutual informations derived from other scoring rules.

\end{document}